\documentclass[journal]{IEEEtran}
\usepackage[margin=0.65in]{geometry}
\usepackage[T1]{fontenc}
\usepackage{blindtext}
\usepackage{graphicx}
\usepackage[latin9]{inputenc}
\usepackage{verbatim}
\usepackage{float}
\usepackage{amsthm}
\usepackage{enumerate}
\usepackage{amsmath,epsfig,pifont,geometry,hyperref}
\usepackage{amssymb}

\let\emptyset\varnothing
\usepackage{cite}
\usepackage{tikz}
\usepackage{algorithm}
\usepackage{multirow}
\usepackage{algpseudocode}
\usepackage{array}
\usepackage{booktabs}
\usepackage[skip=2pt]{caption}
\usepackage[labelformat=simple]{subcaption}

\usepackage{tkz-euclide}
\usepackage[protrusion=true,expansion=true]{microtype}
\pdfoutput=1
\usepackage{pgfplots}
\usepackage{pgfplotstable}
\usepackage{array}
\usepackage{colortbl}
\usepackage{makecell}
\usepackage{color}
\usepackage{tkz-euclide}
\usepackage[version=3]{mhchem}
\usepackage{mathtools}
\usepackage{enumitem}\setlist[description]{font=\textendash\enskip\scshape\bfseries}
\usepackage{cleveref}
\usepackage{dsfont}
\usepackage{enumitem}
\usepackage{svg}
\usepackage{amsmath,bm}
\usepackage{amsfonts}
\usepackage{amssymb}
\usepackage{mathrsfs}

\usetikzlibrary{chains,fit,shapes}
\usetikzlibrary{arrows,calc,shapes,decorations.pathreplacing}
\usepgfplotslibrary{fillbetween}
\usepgfplotslibrary{statistics}

\newcommand{\vast}{\bBigg@{4}}
\newcommand{\Vast}{\bBigg@{5}}

\usepackage{hyperref}
\hypersetup{%
  colorlinks=false,
  urlbordercolor=red,
  pdfborder={0 0 0.6}
}
\usepackage{xcolor}
\pgfplotstableset{
  every head row/.style={before row=\toprule,after row=\midrule},
  every last row/.style={after row=\bottomrule},
  fixed,precision=2,
}
\pgfplotsset{compat=1.17}

\title{Decomposition~Theory~Meets~Reliability~Analysis: Processing~of~Computation-Intensive~Dependent Tasks~over~Vehicular~Clouds~with~Dynamic~Resources}

\author{Payam Abdisarabshali,~\IEEEmembership{Student,~IEEE},  Minghui Liwang,~\IEEEmembership{Member,~IEEE},\\ Amir Rajabzadeh, Mahmood Ahmadi, and Seyyedali Hosseinalipour,~\IEEEmembership{Member,~IEEE}}

\begin{document}

\maketitle

\newtheorem{lemma}{Lemma}

\theoremstyle{theorem}
\newtheorem{theorem}{Theorem}

\theoremstyle{proposition}
\newtheorem{proposition}{Proposition}

\theoremstyle{problem}
\newtheorem{problem}{Problem}

\theoremstyle{Assumption}
\newtheorem{assumption}{Assumption}

\theoremstyle{corollary}
\newtheorem{corollary}{Corollary}

\newtheorem{remark}{Remark}

\theoremstyle{theorem}
\newtheorem{definition}{Definition}

\newtheorem{example}{Example}

\providecommand{\keywords}[1]{\textbf{\textit{Index Terms---}}#1}

\definecolor{Gray}{gray}{0.9}
\setlength{\textfloatsep}{0.05cm}

\begin{abstract}
Vehicular cloud (VC) is a promising technology for processing computation-intensive applications (CI-Apps) on smart vehicles. Implementing VCs over the network edge faces two key challenges: (C1) On-board computing resources of a single vehicle are often insufficient to process a CI-App; (C2) The dynamics of available resources, caused by vehicles' mobility, hinder reliable CI-App processing. This work is among the first to jointly address (C1) and (C2), while considering two common CI-App graph representations, directed acyclic graph (DAG) and undirected graph (UG). To address (C1), we consider partitioning a CI-App with $m$ dependent (sub-)tasks into $k\le m$ groups, which are dispersed across vehicles. To address (C2), we introduce a generalized reliability metric called \textit{conditional mean time to failure} (C-MTTF). Subsequently, we increase the C-MTTF of dependent sub-tasks processing via introducing a general framework of \underline{r}edundancy-based \underline{p}rocessing of dependent sub-tasks over semi-dynamic \underline{VC}s ({\tt RP-VC}). We demonstrate that {\tt RP-VC} can be modeled as a non-trivial semi-Markov process (SMP). To analyze this SMP model and its reliability, we develop a novel mathematical framework, called \textit{event stochastic algebra} ($\langle e\rangle$-algebra). Based on $\langle e\rangle$-algebra, we propose \textit{decomposition theorem} (DT) to transform the presented SMP to a decomposed SMP (D-SMP). We subsequently calculate the {C-MTTF} of our methodology. We demonstrate that $\langle e\rangle$-algebra and DT are general mathematical tools that can be used to analyze other cloud-based networks. Simulation results reveal the exactness of our analytical results and the efficiency of our methodology in terms of acceptance and success rates of CI-App processing.
\end{abstract}

\begin{keywords}
Event stochastic algebra, decomposition theory, vehicular cloud, semi-Markov process, stochastic analysis, reliable service provisioning, directed acyclic graphs (DAG) tasks/applications, undirected graph (UG) tasks/applications.
\end{keywords}

\IEEEpeerreviewmaketitle
\section{Introduction}
\noindent\IEEEPARstart{R}{e}cent years have witnessed explosive growth in the number of smart vehicles equipped with powerful on-board processors. Researchers have subsequently promoted the utilization of on-board computing resources of vehicles for innovative Internet of Things (IoT) applications (e.g., self-driving, augmented reality, and crowd processing)\cite{42,43,45}, which are predominantly computation-intensive and energy-hungry. Nevertheless, the limited on-board computing and storage resources of a single vehicle may fail to meet the execution demands of computation-intensive applications (CI-App), e.g., augmented reality. One method to overcome this limitation is to offload application data from vehicles to centralized cloud  servers\cite{4,5} or edge cloud servers, enabled via technologies such as fog computing\cite{8}, through vehicle-to-infrastructure (V2I) connections. Although this can potentially alleviate the computing resource shortage of vehicles, application offloading through V2I and data relaying across the core network may result in  extra delay, network congestion, and
excessive network resource (e.g., computing and spectrum) utilization \cite{7}.\par

To alleviate the aforementioned issues, vehicular cloud (VC) has been introduced, which orchestrates the distributed and dynamic resources (e.g., processors, storage, and sensors) of smart vehicles in a cooperative manner through both vehicle-to-vehicle (V2V) and V2I communications\cite{12}. VC has attracted tremendous attentions in supporting various services such as traffic management and entertainment (e.g., VR gaming) \cite{7,14,16}. Through exploiting V2V connections and cooperative application processing over the vehicles, VC reduces the service latency and overhead on backhaul links\cite{16}.\par
\vspace{-2.6mm}
\subsection{Motivations and Challenges}
\vspace{-.15mm}
Implementing VC on the network edge faces two challenges.
\begin{enumerate}[label={(C\arabic*)}]
  \item \label{Ch2} \textit{Heterogeneity and limitation of computing resources across vehicles:} Computing resources are often varying across the vehicles. Particularly, a vehicle may suffer from resource limitations to process a CI-Apps, such as augmented reality \cite{45} and data analysis \cite{46}, which can degrade the users' quality of experience (QoE).
  \item \label{Ch1} \textit{Dynamics and volatility of available computing resources:} Vehicles may enter and leave the VC due to their mobility, imposing new challenges in CI-App processing ranging from handling the highly dynamic resources \cite{17,36} to ensuring reliable resource provisioning \cite{24}.
\end{enumerate}
 \textbf{Addressing \ref{Ch2}.} One promising approach to tackle \ref{Ch2} is partitioning a CI-App into smaller \textit{dependent sub-tasks} that can be dispersed across vehicles \cite{42}. Dependent sub-tasks structure can be used to represent a wide range of applications (also known as graph-structured applications) \cite{21,26,42,45,46}, and have been recognized and incorporated into modern computing architectures (e.g., micro-services architecture offered via IBM\cite{26}). Execution of a graph-structured application requires processing its constituent dependent sub-tasks, where processing a sub-task is contingent on the reception of data from other sub-tasks. Hence, the failure of processing of one sub-task can lead to a chain of failures of other dependent sub-tasks\cite{21}. Hence, although partitioning an CI-App into multiple dependent sub-tasks can improve the system performance in terms of CI-App execution time (e.g., due to possible parallel processing of a fraction of sub-tasks) and   computing resource utilization \cite{45,46}, reliability enhancement (e.g., reducing the chance of failure of CI-App processing) remains an open challenge.\par
\textbf{Addressing \ref{Ch1}.} Studying the reliability of application processing in the VC is still in its early development stage and only handful of research works \cite{24,29,30,31,32} have been devoted to address \ref{Ch1} to mitigate the effect of fluctuations in available resources. These works aim to increase the \textit{mean time to failure (MTTF)} of application processing, which is one of the well-known fault tolerance metrics defined as the expected time it will take for the system to encounter a failure. The main assumption of these works is that a single vehicle has sufficient computing resources to process applications of any sizes, which is unrealistic in case of having CI-Apps.\par
\textbf{Reliability Analysis under Application Partitioning.} None of the existing works have so far jointly addressed \ref{Ch2} and \ref{Ch1}. Motivated by this, we tackle the following research questions:
\begin{enumerate}[label={(Q\arabic*)}]
    \item \label{Q1} How to enhance the processing reliability (e.g., increasing the \textit{MTTF}) in a semi-dynamic VC (e.g., a VC over a parking lot) when a CI-App with $m$ dependent sub-tasks is partitioned into $k\le m$ groups and processed distributedly on different vehicles?
    \item \label{Q2} How application partitioning and dispersing the corresponding sub-tasks can affect the reliability of CI-App processing over a semi-dynamic VC?
\end{enumerate}\par
To answer \ref{Q1}, we propose  a general framework for \underline{r}edundancy-based \underline{p}rocessing of dependent sub-tasks over semi-dynamic \underline{VC}s ({\tt RP-VC}). Roughly speaking, {\tt RP-VC} (i) disperses the processing of an application with $m$ dependent sub-tasks, modeled as a directed acyclic graph (DAG) or undirected graph (UG), over a semi-dynamic VC through application partitioning and (ii) enhances the reliability of the CI-App processing via considering redundancy-based resource provisioning to mitigate the impact of vehicles' unexpected departures from the VC.\par
To answer \ref{Q2}, we present mathematical modeling to analyze the reliability of dependent sub-tasks processing over a semi-dynamic VC, which has been an open problem and is the key contribution of our work. In particular, we introduce a reliability metric, called \textit{conditional mean time to failure (C-MTTF)}, to quantify the reliability of processing  DAG- and UG-structured CI-App. We next propose a non-trivial semi-Markov process (SMP) to model {\tt RP-VC}. We then introduce a novel mathematical framework, which we refer to as \textit{event stochastic algebra ($\langle e\rangle$-algebra)}, which makes the reliability analysis of dependent sub-tasks processing tractable. We then exploit $\langle e\rangle$-algebra and introduce \textit{decomposition theorem (DT)} to transform the proposed non-trivial SMP into a decomposed SMP (D-SMP). We obtain the closed-from expression of \textit{C-MTTF} through in-depth analysis of D-SMP. We will show that $\langle e\rangle$-algebra and DT are general mathematical tools that can be utilized for other problems in cloud-based systems.\par
It is worth mentioning that the term "decomposition" has been used in other math domains, e.g., in optimization \cite{47,48}, where an optimization problem is decomposed into several simpler sub-problems. In this paper, we develop a new notion of "decomposition", where our theory enables us to disentangle the states of $\beta$-SMP into several simpler sub-states.

\vspace{-4mm}
\subsection{Related Work}\label{relatedWorks}
VC architectures considered in literature can be roughly divided into two categories: semi-dynamic (e.g., parked vehicles) and dynamic (e.g., moving vehicles). Service provisioning through semi-dynamic VCs \cite{12} has been widely studied in literature. For example, research works \cite{24,28,29,30,31,32} have considered VCs in parking lots formed via parked vehicles. Also, dynamic VC architectures for moving vehicles have also been investigated\cite{1,33,34,35,36,42}, where most of the works aim to study the impact of the mobility patterns of the vehicles on application processing. The above-mentioned works have tackled various problems in VCs, such as resource provisioning, application partitioning, and reliability, as discussed below.\par
\subsubsection{Resource Provisioning}
There exist several works dedicated to addressing resource provisioning problems in VCs. Works \cite{34,35} proposed innovative resource provisioning strategies based on semi-Markov decision processes (SMDP). The authors in \cite{36} have modeled the VC as a cluster of connected vehicles, where a head vehicle supervisions and allocates resources in each cluster. However, these studies focused on resource provisioning without considering the reliability aspects and application partitioning.\par
\subsubsection{Application Partitioning}
Several studies have tackled effective  partitioning of CI-Apps into smaller sub-tasks \cite{42,44}.
Authors in \cite{42} have studied the joint application partitioning and power control problem in a fog computing network to optimize the long-term system utility measured in terms of execution delay and energy consumption. In \cite{44}, the authors considered a multi-user application partitioning in industrial mobile edge computing (MEC) systems, where the workload of a vehicle is partitioned and offloaded to rented edge devices in a MEC platform. However, works in this literature, none of the conducted research has dealt with the effect of application partitioning on reliability metrics, e.g., \textit{MTTF}.\par
\subsubsection{Reliability}
The dynamics of VCs has made their reliability analysis a vital research topic\footnote{The departure of a vehicle from the VC leads to a failure of the applications/tasks offloaded to the vehicle.} \cite{24,29}. Although reliability of application processing over VCs has not been investigated profoundly, some prior works have taken initial steps toward this direction via proposing a variety of methods to improve the reliability of VCs, such as migration of applications (i.e., migrating applications of the imminent leaving vehicle to the nearby
vehicles) \cite{33,39}, checkpointing (i.e., storing snapshots of the application's state at multiple time-stamps, providing an opportunity to recover the application if failures occur) \cite{40} and redundancy-based execution (i.e., multiple images/replicas of the application are executed among multiple vehicles to provide robustness against vehicles departure) \cite{24,29,30,31,32}.\par
Among these methods, redundancy-based strategies are of particular interest\footnote{Designing efficient VM migration strategies faces significant challenges in the VCs with unpredictable vehicular sojourn times since predicting the optimal moment for conducting VM migration is non-trivial\cite{29}. Moreover, checkpointing adds considerable overhead to the VC\cite{40}.} to enhance the \textit{MTTF} of application execution\cite{24,29,30,31,32}. One of the primary works has proposed two strategies called {\tt J}$_2$ and {\tt J}$_3$ to mitigate the impact of  VC dynamics on application processing\cite{24}. These strategies increase the \textit{MTTF} by allocating two/three vehicles to an application. Authors in \cite{24} also provided mathematical models based on a semi-Markov process (SMP) to calculate the \textit{MTTF}. The proposed models in \cite{24} require complete information on the probability distributions of the vehicles' sojourn times (i.e., the time during which a vehicle is parked in a parking lot) and recruitment duration (i.e., the duration of time required to recruit a new vehicle). To relax these requirements, in a follow-up work, authors in \cite{30} provided a methodology to estimate the \textit{MTTF} of  strategies {\tt J}$_2$ and {\tt J}$_3$. In addition, a mathematical model is presented to estimate the completion time of an application under the supervision of {\tt J}$_2$ in~\cite{31}. Further, the authors in \cite{32} extended the framework of \cite{31} to obtain a more accurate estimation of application completion time. Finally, a generalization of {\tt J}$_2$ to {\tt J}$_n$ is proposed in \cite{29}, in which $n$ vehicles simultaneously process the same application, where each vehicle's sojourn and recruitment times are modeled via exponential random variables\cite{29}. Strategies {\tt J}$_2$ and {\tt J}$_n$ enhance the reliability of application processing as follows.\par
{\tt J}$_2$: In this technique, two vehicles ($C_1$, $C_2$) are assigned to an application, which process it independently and simultaneously. When one of the two vehicles leaves the VC coverage area, the other one pauses processing its application and starts recruiting a new vehicle. The application processing will be resumed on both vehicles (i.e., the recruiter and recruited vehicles) if the recruitment operation completes successfully.\par
{\tt J}$_n$: The procedure conducted to enhance the reliability in this strategy is almost equivalent to {\tt J}$_2$, except that it assigns $n$ different vehicles to an application. If the first vehicle leaves the VC coverage area, strategy {\tt J}$_n$ first considers one of the other $n-1$ vehicles as the recruiter (namely $C_r$) and defines a variable $h=0$. $C_r$ starts recruiting a new vehicle and increases $h$ by one. If another vehicle leaves the VC coverage area during recruitment, $C_r$ increases $h$ by one and recruits two vehicles. Likewise, if the third vehicle leaves the VC during recruitment, $C_r$ recruits three vehicles. This process continues until the recruitment operation completes by recruiting $h$ new vehicles or all of the $n$ vehicles leave the VC, in which case the processing of the application encounters a failure.\par
The only application model adopted in the literature of redundancy-based application execution in VCs is {\tt J}$_n$ class of strategies, the processing of which is different\footnote{{\tt J}$_n$ class of strategies offload the entire application to the vehicles without considering the application's internal structure.} than applications consisting of multiple dependent sub-tasks dispersing across different vehicles. In particular, upon execution of a CI-App with multiple dependent sub-tasks, failure of a single sub-task can lead to the failure of the entire application.
To the best of our knowledge, none of the conducted studies have investigated the reliability of processing dependent sub-tasks in a VC.
\vspace{-2mm}
\subsection{Outline and Summary of Contributions}
Our major contributions can be summarizes as follows:
\begin{description}[font=$\bullet$~\normalfont]
  \item We model an application using a general graph representation enfolding DAG and UG (Sec. \ref{taskModel}). We propose {\tt RP-VC}, which is the first unified framework for studying the \textit{MTTF} of redundancy-based processing of an application with $m$ dependent sub-tasks, modeled as a DAG or UG, partitioned into $k\le m$ groups, in semi-dynamic VCs (Sec.~\ref{porposedStrategy}).
  \item We show that computing the \textit{MTTF} of processing of an application modeled as a DAG is non-trivial. We then introduce an extension of \textit{MTTF}, called \textit{conditional MTTF (C-MTTF)},
  which can quantify the reliability of processing an application
  modeled as both  DAG and UG (Sec. \ref{probDef}).
  \item To model the dynamics of our system, we introduce a general model for a class of stochastic systems, which we call \textit{stochastic event system (SeS)}. We then present a new concept, called $\beta$-inhomogeneous, to characterize SeS (Sec.~\ref{BSMPR2n}). We demonstrate that a semi-dynamic VC under {\tt RP-VC} is an SeS, which we refer to as ${\mathsf{SeS}}_{\mathsf{VC}}$.
  \item We show that the execution of dependent sub-tasks over a ${\mathsf{SeS}}_{\mathsf{VC}}$ through {\tt RP-VC} can be modeled as an SMP (Sec.~\ref{RPVC_SMP}). We demonstrate that because of $\beta$-inhomogeneous property of ${\mathsf{SeS}}_{\mathsf{VC}}$, analyzing the presented SMP, referred to as \textit{$\beta$-inhomogeneous SMP ($\beta$-SMP)}, is non-trivial (Sec.~\ref{complexityOfRP}).
  \item We develop a unified mathematical framework, called \textit{event stochastic algebra ($\langle e\rangle$-algebra)}, enabling us to investigate the dynamics of an SeS. We demonstrate the generality of the proposed $\langle e\rangle$-algebra, making it suitable for studying a variety of similar problems in literature (Sec.~\ref{CSE}).
  \item Building on the $\langle e\rangle$-algebra, we develop the foundations of \textit{decomposition theorem (DT)}, used to decompose each state of $\beta$-SMP (Sec.~\ref{DecT}). By utilizing DT, we demonstrate that $\beta$-SMP can be transformed into a \textit{decomposed SMP (D-SMP)} (Sec.~\ref{DSMPModel}).
  \item Relying on D-SMP, we derive the general closed-form expression describing \textit{C-MTTF} of our methodology under general dynamics of vehicles in a VC (Sec.~\ref{MTTFCal}). Subsequently, we obtain a special closed-form formula to compute \textit{C-MTTF} of {\tt RP-VC} for a realistic scenario, in which the sojourn times and recruitment duration of the vehicles follow exponential distribution (Sec.~\ref{ERT}).
  \item We present extensive simulations to verify the exactness of our mathematical results and demonstrate the efficiency of our proposed methodology in terms of application acceptance rate and application success rate (Sec.~\ref{simulation}).
\end{description}
\vspace{-2.3mm}
\section{System Model and Preliminaries}\label{systemModel}
\noindent In this section, we first describe the network model of a semi-dynamic VC and state the problem regarding question \ref{Q1} (Sec.~\ref{dmms}). Afterward, we introduce the application and partition models (Sec.~\ref{taskModel}). We next introduce our methodology, i.e., {\tt RP-VC} (Sec.~\ref{porposedStrategy}). Finally, we present the reliability model and problem statement regarding question \ref{Q2} (Sec.~\ref{probDef}).
\vspace{-2.3mm}
\subsection{VC Network Model and First Problem Statement}\label{dmms}
Fig.~\ref{architecture} illustrates the network architecture of a semi-dynamic VC with a set of vehicles equipped with onboard computing and storage units. A centralized controller (e.g., an edge server) is responsible for organizing services and recruiting vehicles via some form of incentivization (e.g., gas credit and parking pass) \cite{49,50}. The CI-App processing requests are sent to the controller, where an application is partitioned into several groups and distributed across vehicles. We assume that there exist enough vehicles in the VC converge area, where the controller can allocate at least two processing vehicles to each sub-task of the incoming application \cite{24,29,30,31,32}. In this architecture, we are interested in addressing the following problem:\par
\begin{figure}
\centering
\includegraphics[width=\linewidth,trim=3 3 3 3,clip]{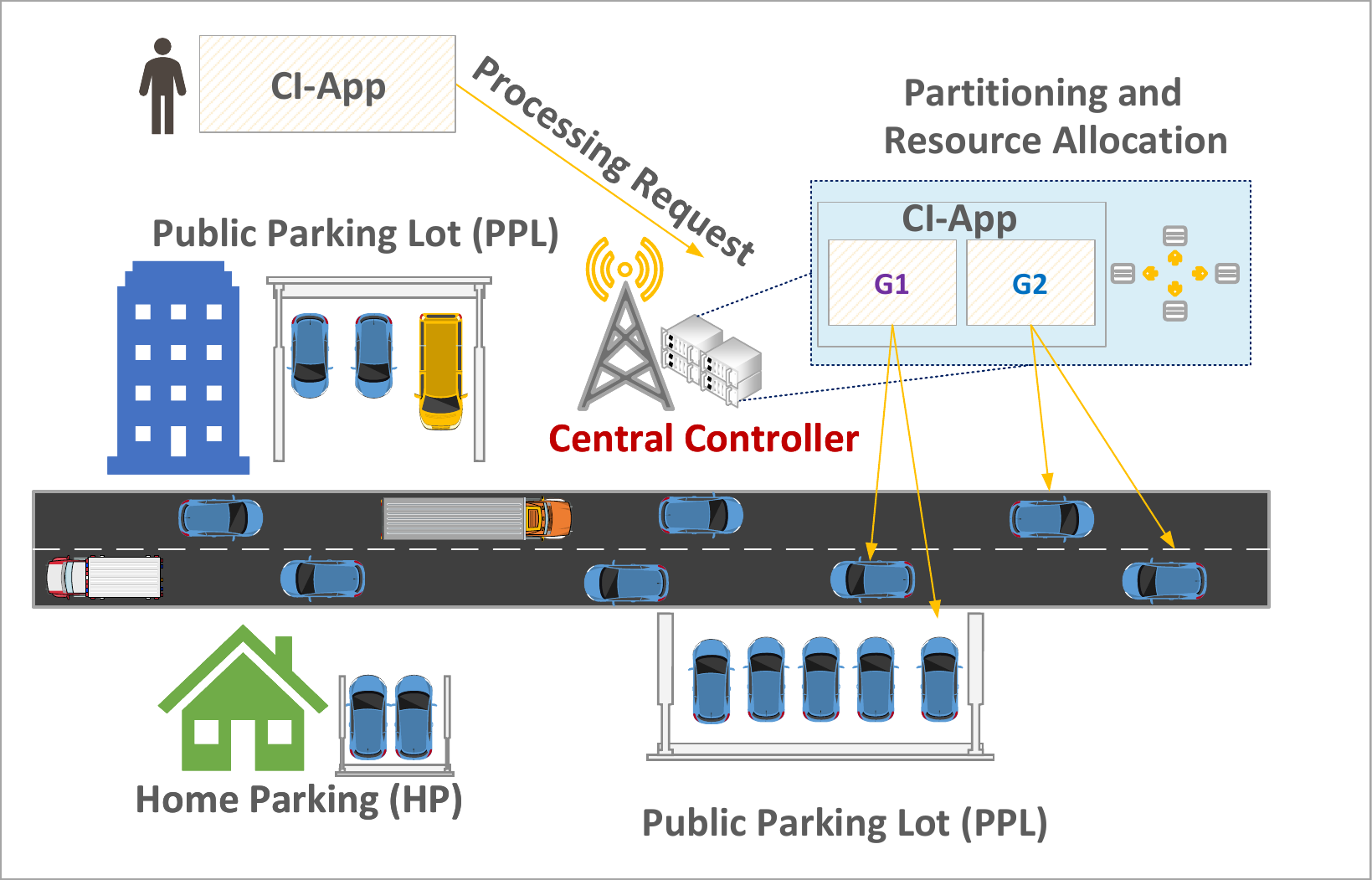}
 \caption{\small Semi-dynamic VC architecture. The vehicles enter the VC and reside for a while in one of the three types of semi-dynamic environments, i.e., (i) public parking lot (PPL), (ii) home parking (HP), and (iii) traffic jams.}\label{architecture}
\end{figure}

\vspace{-1.5mm}
\begin{problem}\label{P1}
 How to increase the \textit{MTTF} of processing a CI-App with $m$ dependent sub-tasks partitioned into $k{\le}m$ groups and processed distributedly on different vehicles in the semi-dynamic VC presented in Fig.~\ref{architecture}?
\end{problem}
In the following, we introduce {\tt RP-VC} to address Problem~\ref{P1}. We first present application and partition models.
\vspace{-2.5mm}
\subsection{CI-Apps and Partition Models}\label{taskModel}
We consider a scenario in which a CI-App is partitioned into several groups \cite{1,42,45,46,52}. We first define a  CI-App, and then we describe the partitioning procedure.
\vspace{-1mm}
\begin{definition}[Application]\label{application}
An application, denoted by $\mathcal{A}$, is a set of $m$ dependent sub-tasks, modeled by a graph:
\begin{equation}
    \mathcal{A} \triangleq \left(\mathcal{V},\mathcal{E}\right),
\end{equation}
where {\small$\displaystyle \mathcal{V}\triangleq\{T_1, T_2,\dotsc, T_m\}$} denotes the set of sub-tasks (i.e., vertices) and {\small$\mathcal{E}\triangleq\left\{(x,x')|x,x'\in\{1,2,\dotsc,m\} ,x\neq x'\right\}$} is the set of dependencies between the sub-tasks (i.e., edges).
\end{definition}
\noindent To characterize internal structures of an application, we consider the following two types of graph representation.
\begin{description}[font=$\bullet$~\normalfont]
\item \textit{Directed Acyclic Graph (DAG).} DAG describes the order of execution of sub-tasks, leading to processing the sub-tasks partially in parallel \cite{45,46}.
\item\textit{Undirected Graph (UG).} There is no order between sub-tasks and they all can be processed in parallel \cite{41,8847383}.
\end{description}
\par
Due to the resource deficiency of vehicles, a large application with $m$ dependent sub-tasks can be partitioned into $k \le m$ deployable groups and offloaded to different vehicles for processing. We define a partition of application $\mathcal{A}$ as follows.
\vspace{-1mm}
\begin{definition}[Partition]\label{partition}
A partition of an application {\small$\mathcal{A} {=} \left(\mathcal{V},\mathcal{E}\right)$} with $m$ dependent sub-tasks, referred to by $\mathcal{P}(\mathcal{A})$, is a dependency-preserving grouping of sub-tasks into $k \leq m$ mutually exclusive nonempty groups {\small$\mathcal{G}{=}\{G_1,\dotsc,G_k\}$}. That is
\begin{equation}
   \mathcal{P}(\mathcal{A})\triangleq\left(\mathcal{G}, \mathcal{E}\right),
\end{equation}
where {\small$G_{x}, G_{x'} \subset \mathcal{V}$}, {\small$\forall G_{x}, G_{x'} \in \mathcal{G}$}, and {\small$\bigcup_{x=1}^{k}G_x = \mathcal{V}$}.
\end{definition}
To clarify Definitions \ref{partition} and \ref{deployment}, Fig.~\ref{RPVCFreamwork} provides an example of a DAG of an application $\mathcal{A}{=}\big(\mathcal{V},\mathcal{E}\big)$, where  {\small$\mathcal{V}{=}\{T_1,T_2,\dotsc,T_6\}$} and {\small$\mathcal{E}{=}\{(1,2),(1,3),(2,4),(2,5),(4,6),(5,6)\}$}. {\small$\mathcal{A}$} is grouped by partition {\small$\mathcal{P}(\mathcal{A}){=}\big(\mathcal{G},\mathcal{E}\big)$} into three groups {\small$\mathcal{G}=\{G_1,G_2,G_3\}$}, where {\small$G_1{=}\{T_1,T_2\}$, $G_2{=}\{T_3,T_5\}$, and $G_3{=}\{T_4,T_6\}$}.\par
We next introduce our general reliable application execution methodology called {\tt RP-VC} to tackle Problem \ref{P1}, which is utilized for processing applications constituted of multiple dependent sub-tasks. {\tt RP-VC} extends the prior art (e.g., {\tt J}$_2$ \cite{24}, concerned with reliable resource allocation for a scenario in which an application is offloaded to the vehicles) to a scenario in
which reliable execution of application sub-tasks are of interest.
\begin{figure}
\centering
\includegraphics[width=0.9\linewidth,trim=1 1 1 1,clip]{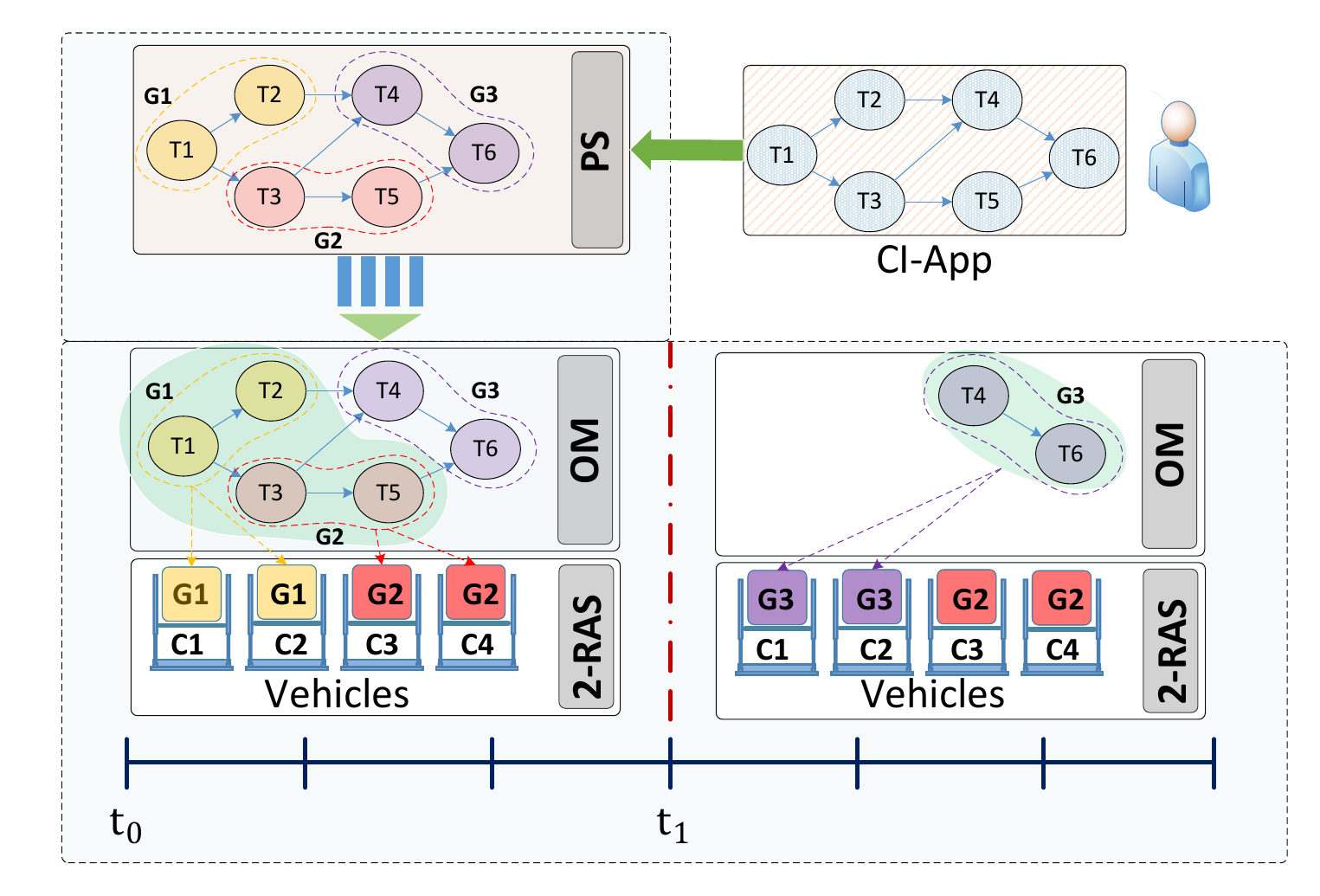}
 \caption{\small A schematic of {\tt RP-VC} methodology. A CI-App $\mathcal{A}$ is submitted to the VC. \textit{PS} partitions it into three groups $\mathcal{G}{=}\{G_1,G_2,G_3\}$. At time $t_0$, \textit{OM} specifies a subset $\overline{\mathcal{V}}_{t_0}{=}\{G_1,G_2\}$ and sends it to the next unit. \textit{2-RAS} allocates two vehicles to each group and offloads the groups to the vehicles. At time $t_1$, the processing of group $G_1$ is completed, and \textit{OM} specifies subset $\overline{\mathcal{V}}_{t_1}{=}\{G_3\}$ and sends it to the 2-RAS to be offloaded to vehicles $C_1$ and $C_2$.}
    \label{RPVCFreamwork}
\end{figure}
\vspace{-5mm}
\subsection{{\tt RP-VC}:  Redundancy-based Processing of CI-Apps}\label{porposedStrategy}
{\tt RP-VC} has three main units, (i) \textit{partition strategy (PS)}, (ii) \textit{offload manager (OM)}, and (iii) \textit{2-redundant allocation strategy (2-RAS)}, which are discussed below.
\subsubsection{Partition strategy (PS)} The \textit{PS} breaks down application $\mathcal{A}$ into smaller groups by partition {\small$\mathcal{P}(\mathcal{A})$} and sends it to the OM. A broad range of partition strategies can be utilized for this intent \cite{42,44}. In this paper, we consider application partitioning under an arbitrary strategy since we are not concerned with the efficiency of PS, which is left as future work.
\subsubsection{Offload manager (OM)}\label{OMS} Consider a partition {\small$\mathcal{P}(\mathcal{A}){=}\left(\mathcal{G}, \mathcal{E}\right)$} of an application $\mathcal{A}$. If $\mathcal{A}$ is a UG-structured application, the groups in $\mathcal{G}$ can be offloaded to different vehicles and processed in parallel. However, if $\mathcal{A}$ is a DAG-structured application, there is a definite order between the execution of the sub-tasks, and different groups should wait until the dependencies of their sub-tasks are satisfied. OM supports applications modeled by both DAG and UG structures. Consider discrete time instances $\tau=\{t_0, t_1,\dotsc\}$. At time $t_0{\in}\tau$ (i.e., the time that the processing of $\mathcal{A}$ is started), \textit{OM} specifies a subset $\overline{\mathcal{V}}_{t_0}{\subseteq}\mathcal{V}$ (the equality of the subset notation is for UG-structured applications) based on a specific strategy (e.g., \cite{52}) and sends it to \textit{2-RAS} for offloading. Likewise, at time $t_1\in\tau$, \textit{OM} specifies another subset $\overline{\mathcal{V}}_{t_1}{\subseteq}\mathcal{V}\setminus\overline{\mathcal{V}}_{t_0}$ and sends it to \textit{2-RAS}. This procedure takes place until all the groups are offload to the vehicles. Referring to Fig.~\ref{RPVCFreamwork} as an example, at time $t_0$, \textit{OM} specifies $\overline{\mathcal{V}}_{t_0}{=}\{G_1,G_2\}$ and sends it to 2-RAS for offloading. Further, at time $t_1$, \textit{OM} specifies $\overline{\mathcal{V}}_{t_1}{=}\{G_3\}$ to be offloaded to the vehicles. Same as the \textit{PS}, different strategies can be exploited for \textit{OM} \cite{52}. In this paper, we consider an arbitrary OM and dealing with different OM strategies is left for future works.
\subsubsection{2-redundant allocation strategy (2-RAS)}\label{S2RAS} \textit{2-RAS} has the responsibility of allocating $2$ different vehicles\footnote{Enhancing reliability by placing two replicas of an application/data on different servers is  common in redundancy-based strategies \cite{19,21,24}.} to each group $G_h\in\overline{\mathcal{V}}_{t}$ at each time $t\in \tau$. Note that in our mathematical analysis, we assume that each vehicle processes only one group at a specific time, which is a common assumption \cite{24,29}. In Sec. \ref{simulation}, we will investigate a scenario in which more than one group is processed by a vehicle.\par
Let $C_{\ell}$ and $C_{\ell'}$ be two vehicles assigned to group $G_h{\in}\overline{\mathcal{V}}_{t}$. As the second responsibility of \textit{2-RAS}, if $C_{\ell}$ departs the VC, \textit{2-RAS} determines $C_{\ell'}$ as the recruiter, which in turn starts recruiting a vehicle to process $G_h$. Recruitment operation for each group $G_h$ can thus be summarized as follows: the recruiter (i.e., $C_{\ell'}$) (i) finds a new vehicle from the vehicles that were in the VC at the time that processing of application $\mathcal{A}$ has been started, and (ii) transfers an image/replica of $G_h$ (i.e., the current status of processing $G_h$ and its required data) to the recruited vehicle through a migration strategy. Studying \textit{2-RAS} is our major focus in this paper.
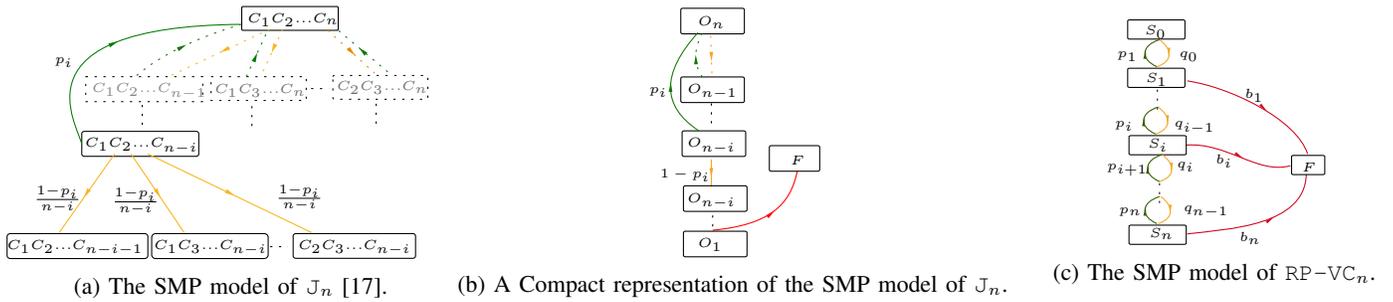
\begin{figure*}[!t]
\vspace{-5mm}
  \centering
  \tikzset{every picture/.style={line width=0.4pt}} 
  \begin{subfigure}{.33\textwidth}
\begin{tikzpicture}[x=0.5pt,y=0.45pt,yscale=-1,xscale=1]
\draw  [dash pattern={on 0.84pt off 2.51pt}]  (233.8,80) -- (247.3,80) ;
\draw   (181.3,12) .. controls (181.3,11.45) and (181.75,11) .. (182.3,11) -- (253.3,11) .. controls (253.85,11) and (254.3,11.45) .. (254.3,12) -- (254.3,30) .. controls (254.3,30.55) and (253.85,31) .. (253.3,31) -- (182.3,31) .. controls (181.75,31) and (181.3,30.55) .. (181.3,30) -- cycle ;
\draw  [dash pattern={on 0.84pt off 2.51pt}] (63.3,70) .. controls (63.3,70) and (63.3,70) .. (63.3,70) -- (151.8,70) .. controls (151.8,70) and (151.8,70) .. (151.8,70) -- (151.8,91) .. controls (151.8,91) and (151.8,91) .. (151.8,91) -- (63.3,91) .. controls (63.3,91) and (63.3,91) .. (63.3,91) -- cycle ;
\draw  [dash pattern={on 0.84pt off 2.51pt}] (158.1,70) .. controls (158.1,70) and (158.1,70) .. (158.1,70) -- (230.1,70) .. controls (230.1,70) and (230.1,70) .. (230.1,70) -- (230.1,90) .. controls (230.1,90) and (230.1,90) .. (230.1,90) -- (158.1,90) .. controls (158.1,90) and (158.1,90) .. (158.1,90) -- cycle ;
\draw  [dash pattern={on 0.84pt off 2.51pt}] (250.3,70) .. controls (250.3,69.45) and (250.75,69) .. (251.3,69) -- (321.1,69) .. controls (321.65,69) and (322.1,69.45) .. (322.1,70) -- (322.1,88) .. controls (322.1,88.55) and (321.65,89) .. (321.1,89) -- (251.3,89) .. controls (250.75,89) and (250.3,88.55) .. (250.3,88) -- cycle ;
\draw   (3.8,204.5) .. controls (3.8,203.12) and (4.92,202) .. (6.3,202) -- (107.8,202) .. controls (109.18,202) and (110.3,203.12) .. (110.3,204.5) -- (110.3,220.5) .. controls (110.3,221.88) and (109.18,223) .. (107.8,223) -- (6.3,223) .. controls (4.92,223) and (3.8,221.88) .. (3.8,220.5) -- cycle ;
\draw   (113.2,205.1) .. controls (113.2,203.39) and (114.59,202) .. (116.3,202) -- (198.2,202) .. controls (199.91,202) and (201.3,203.39) .. (201.3,205.1) -- (201.3,219.9) .. controls (201.3,221.61) and (199.91,223) .. (198.2,223) -- (116.3,223) .. controls (114.59,223) and (113.2,221.61) .. (113.2,219.9) -- cycle ;
\draw   (220.2,204.1) .. controls (220.2,202.94) and (221.14,202) .. (222.3,202) -- (310.7,202) .. controls (311.86,202) and (312.8,202.94) .. (312.8,204.1) -- (312.8,220.7) .. controls (312.8,221.86) and (311.86,222.8) .. (310.7,222.8) -- (222.3,222.8) .. controls (221.14,222.8) and (220.2,221.86) .. (220.2,220.7) -- cycle ;
\draw [color={rgb, 255:red, 0; green, 123; blue, 3 }  ,draw opacity=1 ][fill={rgb, 255:red, 0; green, 105; blue, 15 }  ,fill opacity=1 ] [dash pattern={on 0.84pt off 2.51pt}]  (291.8,66) -- (254.3,31) ;
\draw [shift={(269.98,45.63)}, rotate = 43.03] [fill={rgb, 255:red, 0; green, 123; blue, 3 }  ,fill opacity=1 ][line width=0.08]  [draw opacity=0] (7.2,-1.8) -- (0,0) -- (7.2,1.8) -- cycle    ;
\draw [color={rgb, 255:red, 229; green, 136; blue, 0 }  ,draw opacity=1 ][fill={rgb, 255:red, 0; green, 105; blue, 15 }  ,fill opacity=1 ] [dash pattern={on 0.84pt off 2.51pt}]  (246.3,31) -- (281.8,67) ;
\draw [shift={(267,51.99)}, rotate = 225.4] [fill={rgb, 255:red, 229; green, 136; blue, 0 }  ,fill opacity=1 ][line width=0.08]  [draw opacity=0] (7.2,-1.8) -- (0,0) -- (7.2,1.8) -- cycle    ;
\draw  [dash pattern={on 0.84pt off 2.51pt}]  (189,94.77) -- (189,113) ;
\draw   (60.3,118) .. controls (60.3,116.9) and (61.2,116) .. (62.3,116) -- (146.8,116) .. controls (147.9,116) and (148.8,116.9) .. (148.8,118) -- (148.8,135) .. controls (148.8,136.1) and (147.9,137) .. (146.8,137) -- (62.3,137) .. controls (61.2,137) and (60.3,136.1) .. (60.3,135) -- cycle ;
\draw [color={rgb, 255:red, 0; green, 123; blue, 3 }  ,draw opacity=1 ]   (60.3,127) .. controls (57.8,111) and (6.3,28) .. (179.8,26) ;
\draw [shift={(87.58,40.37)}, rotate = 155.2] [fill={rgb, 255:red, 0; green, 123; blue, 3 }  ,fill opacity=1 ][line width=0.08]  [draw opacity=0] (7.2,-1.8) -- (0,0) -- (7.2,1.8) -- cycle    ;
\draw  [dash pattern={on 0.84pt off 2.51pt}]  (283,92.77) -- (283,112) ;
\draw  [dash pattern={on 0.84pt off 2.51pt}]  (106,93.77) -- (106,114) ;
\draw  [dash pattern={on 0.84pt off 2.51pt}]  (202.8,213) -- (216.3,213) ;

\draw (184,14.4) node [anchor=north west][inner sep=0.75pt]  [font=\fontsize{5}{5}\selectfont]  {$C_{1} C_{2} ...C_{n}$};
\draw (66.3,73.4) node [anchor=north west][inner sep=0.75pt]  [font=\fontsize{5}{5}\selectfont,color={rgb, 255:red, 0; green, 0; blue, 0 }  ,opacity=0.48 ]  {$C_{1} C_{2} ...C_{n-1}$};
\draw (159,74.4) node [anchor=north west][inner sep=0.75pt]  [font=\fontsize{5}{5}\selectfont,color={rgb, 255:red, 0; green, 0; blue, 0 }  ,opacity=0.58 ]  {$C_{1} C_{3} ...C_{n}$};
\draw (251.3,71.4) node [anchor=north west][inner sep=0.75pt]  [font=\fontsize{5}{5}\selectfont,color={rgb, 255:red, 0; green, 0; blue, 0 }  ,opacity=0.58 ]  {$C_{2} C_{3} ...C_{n}$};
\draw (61,119.4) node [anchor=north west][inner sep=0.75pt]  [font=\fontsize{5}{5}\selectfont]  {$C_{1} C_{2} ...C_{n-i}$};
\draw (2,205.4) node [anchor=north west][inner sep=0.75pt]  [font=\fontsize{5}{5}\selectfont]  {$C_{1} C_{2} ...C_{n-i-1}$};
\draw (113.2,205.4) node [anchor=north west][inner sep=0.75pt]  [font=\fontsize{5}{5}\selectfont]  {$C_{1} C_{3} ...C_{n-i}$};
\draw (222.2,205.4) node [anchor=north west][inner sep=0.75pt]  [font=\fontsize{5}{5}\selectfont]  {$C_{2} C_{3} ...C_{n-i}$};
\draw (38,51.4) node [anchor=north west][inner sep=0.75pt]  [font=\tiny]  {$p_{i}$};
\draw (22,159.4) node [anchor=north west][inner sep=0.75pt]  [font=\tiny]  {$\frac{1-p_{i}}{n-i}$};
\draw (81,161.4) node [anchor=north west][inner sep=0.75pt]  [font=\tiny]  {$\frac{1-p_{i}}{n-i}$};
\draw (205,158.4) node [anchor=north west][inner sep=0.75pt]  [font=\tiny]  {$\frac{1-p_{i}}{n-i}$};
\draw [color={rgb, 255:red, 248; green, 181; blue, 28 }  ,draw opacity=1 ] [dash pattern={on 0.84pt off 2.51pt}]  (202.63,30) -- (127.8,69) ;
\draw [shift={(165.22,49.5)}, rotate = 332.47] [fill={rgb, 255:red, 248; green, 181; blue, 28 }  ,fill opacity=1 ][line width=0.08]  [draw opacity=0] (7.2,-1.8) -- (0,0) -- (7.2,1.8) -- cycle    ;
\draw [color={rgb, 255:red, 248; green, 181; blue, 28 }  ,draw opacity=1 ] [dash pattern={on 0.84pt off 2.51pt}]  (212.25,30) -- (195.58,70) ;
\draw [shift={(203.92,50)}, rotate = 292.62] [fill={rgb, 255:red, 248; green, 181; blue, 28 }  ,fill opacity=1 ][line width=0.08]  [draw opacity=0] (7.2,-1.8) -- (0,0) -- (7.2,1.8) -- cycle    ;
\draw [color={rgb, 255:red, 22; green, 117; blue, 5 }  ,draw opacity=1 ] [dash pattern={on 0.84pt off 2.51pt}]  (101.83,69) .. controls (119.52,52.13) and (145.91,37.76) .. (181,25.91) ;
\draw [shift={(139.32,43.22)}, rotate = 151.98] [fill={rgb, 255:red, 22; green, 117; blue, 5 }  ,fill opacity=1 ][line width=0.08]  [draw opacity=0] (7.2,-1.8) -- (0,0) -- (7.2,1.8) -- cycle    ;
\draw [color={rgb, 255:red, 0; green, 123; blue, 3 }  ,draw opacity=1 ] [dash pattern={on 0.84pt off 2.51pt}]  (184.75,70) -- (201.42,30) ;
\draw [shift={(193.08,50)}, rotate = 112.62] [fill={rgb, 255:red, 0; green, 123; blue, 3 }  ,fill opacity=1 ][line width=0.08]  [draw opacity=0] (7.2,-1.8) -- (0,0) -- (7.2,1.8) -- cycle    ;
\draw [color={rgb, 255:red, 248; green, 181; blue, 28 }  ,draw opacity=1 ]   (85.16,135) -- (43.34,201) ;
\draw [shift={(62,171.55)}, rotate = 302.36] [fill={rgb, 255:red, 248; green, 181; blue, 28 }  ,fill opacity=1 ][line width=0.08]  [draw opacity=0] (7.2,-1.8) -- (0,0) -- (7.2,1.8) -- cycle    ;
\draw [color={rgb, 255:red, 248; green, 181; blue, 28 }  ,draw opacity=1 ]   (97.57,135) -- (137.63,201) ;
\draw [shift={(119.78,171.59)}, rotate = 238.74] [fill={rgb, 255:red, 248; green, 181; blue, 28 }  ,fill opacity=1 ][line width=0.08]  [draw opacity=0] (7.2,-1.8) -- (0,0) -- (7.2,1.8) -- cycle    ;
\draw [color={rgb, 255:red, 248; green, 181; blue, 28 }  ,draw opacity=1 ]   (110.24,135) -- (233.96,201) ;
\draw [shift={(175.81,169.98)}, rotate = 208.08] [fill={rgb, 255:red, 248; green, 181; blue, 28 }  ,fill opacity=1 ][line width=0.08]  [draw opacity=0] (7.2,-1.8) -- (0,0) -- (7.2,1.8) -- cycle    ;
\end{tikzpicture}
 \caption{\small The SMP model of {\tt J}$_n$ \cite{29}.}
 \label{SMPJn}
  \end{subfigure}%
  \begin{subfigure}{.40\textwidth}
    \centering
    \begin{tikzpicture}[x=0.5pt,y=0.4pt,yscale=-1,xscale=1]
    \draw [color={rgb, 255:red, 0; green, 0; blue, 0 }  ,draw opacity=1 ] [dash pattern={on 0.84pt off 2.51pt}]  (65.83,94) -- (65.83,119) ;
    \draw [color={rgb, 255:red, 245; green, 166; blue, 35 }  ,draw opacity=1 ] [dash pattern={on 0.84pt off 2.51pt}]  (65.33,30) -- (66.33,70) ;
    \draw [shift={(65.94,54.2)}, rotate = 268.57] [fill={rgb, 255:red, 245; green, 166; blue, 35 }  ,fill opacity=1 ][line width=0.08]  [draw opacity=0] (7.2,-1.8) -- (0,0) -- (7.2,1.8) -- cycle    ;
    \draw [color={rgb, 255:red, 0; green, 0; blue, 0 }  ,draw opacity=1 ] [dash pattern={on 0.84pt off 2.51pt}]  (66.83,200) -- (66.83,214) ;
    \draw [color={rgb, 255:red, 245; green, 166; blue, 35 }  ,draw opacity=1 ]   (65.83,147) -- (65.83,173) ;
    \draw [shift={(65.83,164.2)}, rotate = 270] [fill={rgb, 255:red, 245; green, 166; blue, 35 }  ,fill opacity=1 ][line width=0.08]  [draw opacity=0] (7.2,-1.8) -- (0,0) -- (7.2,1.8) -- cycle    ;
    \draw [color={rgb, 255:red, 255; green, 0; blue, 0 }  ,draw opacity=1 ]   (66.83,214) .. controls (109.8,209) and (125.8,190) .. (130.8,158) ;
    \draw [shift={(113.65,195.56)}, rotate = 139.46] [fill={rgb, 255:red, 255; green, 0; blue, 0 }  ,fill opacity=1 ][line width=0.08]  [draw opacity=0] (7.2,-1.8) -- (0,0) -- (7.2,1.8) -- cycle    ;
    \draw   (42.7,5.5) .. controls (42.7,4.67) and (43.37,4) .. (44.2,4) -- (89.3,4) .. controls (90.13,4) and (90.8,4.67) .. (90.8,5.5) -- (90.8,26.7) .. controls (90.8,27.53) and (90.13,28.2) .. (89.3,28.2) -- (44.2,28.2) .. controls (43.37,28.2) and (42.7,27.53) .. (42.7,26.7) -- cycle ;
    \draw   (42.7,70.5) .. controls (42.7,69.67) and (43.37,69) .. (44.2,69) -- (89.3,69) .. controls (90.13,69) and (90.8,69.67) .. (90.8,70.5) -- (90.8,91.7) .. controls (90.8,92.53) and (90.13,93.2) .. (89.3,93.2) -- (44.2,93.2) .. controls (43.37,93.2) and (42.7,92.53) .. (42.7,91.7) -- cycle ;
    \draw   (43.7,121.5) .. controls (43.7,120.67) and (44.37,120) .. (45.2,120) -- (90.3,120) .. controls (91.13,120) and (91.8,120.67) .. (91.8,121.5) -- (91.8,142.7) .. controls (91.8,143.53) and (91.13,144.2) .. (90.3,144.2) -- (45.2,144.2) .. controls (44.37,144.2) and (43.7,143.53) .. (43.7,142.7) -- cycle ;
    \draw   (44.7,173.5) .. controls (44.7,172.67) and (45.37,172) .. (46.2,172) -- (91.3,172) .. controls (92.13,172) and (92.8,172.67) .. (92.8,173.5) -- (92.8,194.7) .. controls (92.8,195.53) and (92.13,196.2) .. (91.3,196.2) -- (46.2,196.2) .. controls (45.37,196.2) and (44.7,195.53) .. (44.7,194.7) -- cycle ;
    \draw   (44.7,216.5) .. controls (44.7,215.67) and (45.37,215) .. (46.2,215) -- (91.3,215) .. controls (92.13,215) and (92.8,215.67) .. (92.8,216.5) -- (92.8,237.7) .. controls (92.8,238.53) and (92.13,239.2) .. (91.3,239.2) -- (46.2,239.2) .. controls (45.37,239.2) and (44.7,238.53) .. (44.7,237.7) -- cycle ;
    \draw   (109.7,135.5) .. controls (109.7,134.67) and (110.37,134) .. (111.2,134) -- (146.3,134) .. controls (147.13,134) and (147.8,134.67) .. (147.8,135.5) -- (147.8,156.7) .. controls (147.8,157.53) and (147.13,158.2) .. (146.3,158.2) -- (111.2,158.2) .. controls (110.37,158.2) and (109.7,157.53) .. (109.7,156.7) -- cycle ;
    \draw (54,9.4) node [anchor=north west][inner sep=0.75pt]  [font=\fontsize{5}{5}\selectfont]  {$O_{n}$};
    \draw (46,73.4) node [anchor=north west][inner sep=0.75pt]  [font=\fontsize{5}{5}\selectfont]  {$O_{n-1}$};
    \draw (46,124.4) node [anchor=north west][inner sep=0.75pt]  [font=\fontsize{5}{5}\selectfont]  {$O_{n-i}$};
    \draw (54,219.4) node [anchor=north west][inner sep=0.75pt]  [font=\fontsize{5}{5}\selectfont]  {$O_{1}$};
    \draw (124,141.4) node [anchor=north west][inner sep=0.75pt]  [font=\fontsize{5}{5}\selectfont]  {$F$};
    \draw (17,76.4) node [anchor=north west][inner sep=0.75pt]  [font=\tiny]  {$p_{i}$};
    \draw (46,176.4) node [anchor=north west][inner sep=0.75pt]  [font=\fontsize{5}{5}\selectfont]  {$O_{n-i}$};
    \draw (25.2,153.4) node [anchor=north west][inner sep=0.75pt]  [font=\tiny]  {$1-p_{i}$};
    \draw [color={rgb, 255:red, 22; green, 117; blue, 5 }  ,draw opacity=1 ]   (56.4,120) .. controls (28.01,96.71) and (27.78,65.71) .. (55.73,27) ;
    \draw [shift={(35.16,74.1)}, rotate = 91.9] [fill={rgb, 255:red, 22; green, 117; blue, 5 }  ,fill opacity=1 ][line width=0.08]  [draw opacity=0] (7.2,-1.8) -- (0,0) -- (7.2,1.8) -- cycle    ;
    \draw [color={rgb, 255:red, 22; green, 117; blue, 5 }  ,draw opacity=1 ] [dash pattern={on 0.84pt off 2.51pt}]  (59.03,69) .. controls (52.21,57) and (52.27,43) .. (59.21,27) ;
    \draw [shift={(53.99,48.02)}, rotate = 88.11] [fill={rgb, 255:red, 22; green, 117; blue, 5 }  ,fill opacity=1 ][line width=0.08]  [draw opacity=0] (7.2,-1.8) -- (0,0) -- (7.2,1.8) -- cycle    ;
    \end{tikzpicture}
    \caption{\small A Compact representation of the SMP model of {\tt J}$_n$.}
\label{comSMPJn}
\end{subfigure}%
  \begin{subfigure}{.30\textwidth}
  \centering
\begin{tikzpicture}[x=0.5pt,y=0.30pt,yscale=-1,xscale=1]
\draw [line width=0.2]  [dash pattern={on 0.84pt off 2.51pt}]  (51.8,215) -- (51.8,241) ;
\draw [color={rgb, 255:red, 208; green, 2; blue, 27 }  ,draw opacity=1 ]   (72.8,86) .. controls (104.19,93.37) and (155.23,133.87) .. (163.8,178) ;
\draw [shift={(133.11,124.37)}, rotate = 224.92] [fill={rgb, 255:red, 208; green, 2; blue, 27 }  ,fill opacity=1 ][line width=0.08]  [draw opacity=0] (7.2,-1.8) -- (0,0) -- (7.2,1.8) -- cycle    ;
\draw [color={rgb, 255:red, 208; green, 2; blue, 27 }  ,draw opacity=1 ]   (71.8,167) .. controls (102.8,163) and (125.94,204.44) .. (150.8,193) ;
\draw [shift={(115.13,183.8)}, rotate = 213.42] [fill={rgb, 255:red, 208; green, 2; blue, 27 }  ,fill opacity=1 ][line width=0.08]  [draw opacity=0] (7.2,-1.8) -- (0,0) -- (7.2,1.8) -- cycle    ;
\draw [color={rgb, 255:red, 208; green, 2; blue, 27 }  ,draw opacity=1 ]   (72.8,279) .. controls (103.8,273) and (161.8,269) .. (162.8,204) ;
\draw [shift={(138.61,257.53)}, rotate = 145.2] [fill={rgb, 255:red, 208; green, 2; blue, 27 }  ,fill opacity=1 ][line width=0.08]  [draw opacity=0] (7.2,-1.8) -- (0,0) -- (7.2,1.8) -- cycle    ;
\draw [line width=0.2]  [dash pattern={on 0.84pt off 2.51pt}]  (50.2,97) -- (50.2,123) ;
\draw   (27.7,10.5) .. controls (27.7,9.67) and (28.37,9) .. (29.2,9) -- (69.5,9) .. controls (70.33,9) and (71,9.67) .. (71,10.5) -- (71,31.7) .. controls (71,32.53) and (70.33,33.2) .. (69.5,33.2) -- (29.2,33.2) .. controls (28.37,33.2) and (27.7,32.53) .. (27.7,31.7) -- cycle ;
\draw   (27.7,71.5) .. controls (27.7,70.67) and (28.37,70) .. (29.2,70) -- (69.5,70) .. controls (70.33,70) and (71,70.67) .. (71,71.5) -- (71,92.7) .. controls (71,93.53) and (70.33,94.2) .. (69.5,94.2) -- (29.2,94.2) .. controls (28.37,94.2) and (27.7,93.53) .. (27.7,92.7) -- cycle ;
\draw   (28.7,156.5) .. controls (28.7,155.67) and (29.37,155) .. (30.2,155) -- (70.5,155) .. controls (71.33,155) and (72,155.67) .. (72,156.5) -- (72,177.7) .. controls (72,178.53) and (71.33,179.2) .. (70.5,179.2) -- (30.2,179.2) .. controls (29.37,179.2) and (28.7,178.53) .. (28.7,177.7) -- cycle ;
\draw   (28.7,269.5) .. controls (28.7,268.67) and (29.37,268) .. (30.2,268) -- (70.5,268) .. controls (71.33,268) and (72,268.67) .. (72,269.5) -- (72,290.7) .. controls (72,291.53) and (71.33,292.2) .. (70.5,292.2) -- (30.2,292.2) .. controls (29.37,292.2) and (28.7,291.53) .. (28.7,290.7) -- cycle ;
\draw   (151.69,181.5) .. controls (151.69,180.67) and (152.36,180) .. (153.19,180) -- (175.3,180) .. controls (176.13,180) and (176.8,180.67) .. (176.8,181.5) -- (176.8,202.7) .. controls (176.8,203.53) and (176.13,204.2) .. (175.3,204.2) -- (153.19,204.2) .. controls (152.36,204.2) and (151.69,203.53) .. (151.69,202.7) -- cycle ;
\draw [color={rgb, 255:red, 245; green, 166; blue, 35 }  ,draw opacity=1 ]   (51.8,34) .. controls (60.8,45) and (62.8,61) .. (49.8,68) ;
\draw [shift={(58.71,56.18)}, rotate = 272.13] [fill={rgb, 255:red, 245; green, 166; blue, 35 }  ,fill opacity=1 ][line width=0.08]  [draw opacity=0] (7.2,-1.8) -- (0,0) -- (7.2,1.8) -- cycle    ;
\draw [color={rgb, 255:red, 49; green, 92; blue, 0 }  ,draw opacity=1 ]   (49.8,68) .. controls (37.07,58.91) and (37.8,41) .. (51.8,34) ;
\draw [shift={(41.43,46.19)}, rotate = 96.58] [fill={rgb, 255:red, 49; green, 92; blue, 0 }  ,fill opacity=1 ][line width=0.08]  [draw opacity=0] (7.2,-1.8) -- (0,0) -- (7.2,1.8) -- cycle    ;
\draw [color={rgb, 255:red, 245; green, 166; blue, 35 }  ,draw opacity=1 ]   (50.8,119) .. controls (59.8,130) and (61.8,146) .. (48.8,153) ;
\draw [shift={(57.71,141.18)}, rotate = 272.13] [fill={rgb, 255:red, 245; green, 166; blue, 35 }  ,fill opacity=1 ][line width=0.08]  [draw opacity=0] (7.2,-1.8) -- (0,0) -- (7.2,1.8) -- cycle    ;
\draw [color={rgb, 255:red, 49; green, 92; blue, 0 }  ,draw opacity=1 ]   (48.8,153) .. controls (36.07,143.91) and (36.8,126) .. (50.8,119) ;
\draw [shift={(40.43,131.19)}, rotate = 96.58] [fill={rgb, 255:red, 49; green, 92; blue, 0 }  ,fill opacity=1 ][line width=0.08]  [draw opacity=0] (7.2,-1.8) -- (0,0) -- (7.2,1.8) -- cycle    ;
\draw [color={rgb, 255:red, 245; green, 166; blue, 35 }  ,draw opacity=1 ]   (53.8,179) .. controls (62.8,190) and (64.8,206) .. (51.8,213) ;
\draw [shift={(60.71,201.18)}, rotate = 272.13] [fill={rgb, 255:red, 245; green, 166; blue, 35 }  ,fill opacity=1 ][line width=0.08]  [draw opacity=0] (7.2,-1.8) -- (0,0) -- (7.2,1.8) -- cycle    ;
\draw [color={rgb, 255:red, 49; green, 92; blue, 0 }  ,draw opacity=1 ]   (51.8,213) .. controls (39.07,203.91) and (39.8,186) .. (53.8,179) ;
\draw [shift={(43.43,191.19)}, rotate = 96.58] [fill={rgb, 255:red, 49; green, 92; blue, 0 }  ,fill opacity=1 ][line width=0.08]  [draw opacity=0] (7.2,-1.8) -- (0,0) -- (7.2,1.8) -- cycle    ;
\draw [color={rgb, 255:red, 245; green, 166; blue, 35 }  ,draw opacity=1 ]   (52.8,232) .. controls (61.8,243) and (63.8,259) .. (50.8,266) ;
\draw [shift={(59.71,254.18)}, rotate = 272.13] [fill={rgb, 255:red, 245; green, 166; blue, 35 }  ,fill opacity=1 ][line width=0.08]  [draw opacity=0] (7.2,-1.8) -- (0,0) -- (7.2,1.8) -- cycle    ;
\draw [color={rgb, 255:red, 49; green, 92; blue, 0 }  ,draw opacity=1 ]   (49.8,266) .. controls (37.07,256.91) and (37.8,239) .. (51.8,232) ;
\draw [shift={(41.43,244.19)}, rotate = 96.58] [fill={rgb, 255:red, 49; green, 92; blue, 0 }  ,fill opacity=1 ][line width=0.08]  [draw opacity=0] (7.2,-1.8) -- (0,0) -- (7.2,1.8) -- cycle    ;

\draw (38.77,12.45) node [anchor=north west][inner sep=0.75pt]  [font=\fontsize{5}{5}\selectfont]  {$S_{0}$};
\draw (18.04,45.55) node [anchor=north west][inner sep=0.75pt]  [font=\fontsize{5}{5}\selectfont]  {$p_{1}$};
\draw (39.69,72.29) node [anchor=north west][inner sep=0.75pt]  [font=\fontsize{5}{5}\selectfont]  {$S_{1}$};
\draw (42.26,267.99) node [anchor=north west][inner sep=0.75pt]  [font=\fontsize{5}{5}\selectfont]  {$S_{n}$};
\draw (41.26,155.67) node [anchor=north west][inner sep=0.75pt]  [font=\fontsize{5}{5}\selectfont]  {$S_{i}$};
\draw (157.87,185.56) node [anchor=north west][inner sep=0.75pt]  [font=\fontsize{5}{5}\selectfont]  {$F$};
\draw (114.55,92.37) node [anchor=north west][inner sep=0.75pt]  [font=\fontsize{5}{5}\selectfont]  {$b_{1}$};
\draw (64.83,45.27) node [anchor=north west][inner sep=0.75pt]  [font=\fontsize{5}{5}\selectfont]  {$q_{0}$};
\draw (19.31,242.38) node [anchor=north west][inner sep=0.75pt]  [font=\fontsize{5}{5}\selectfont]  {$p_{n}$};
\draw (92.55,175.37) node [anchor=north west][inner sep=0.75pt]  [font=\fontsize{5}{5}\selectfont]  {$b_{i}$};
\draw (110.55,273.37) node [anchor=north west][inner sep=0.75pt]  [font=\fontsize{5}{5}\selectfont]  {$b_{n}$};
\draw (63.26,185.09) node [anchor=north west][inner sep=0.75pt]  [font=\fontsize{5}{5}\selectfont]  {$q_{i}$};
\draw (10.31,187.38) node [anchor=north west][inner sep=0.75pt]  [font=\fontsize{5}{5}\selectfont]  {$p_{i+1}$};
\draw (16.31,133.38) node [anchor=north west][inner sep=0.75pt]  [font=\fontsize{5}{5}\selectfont]  {$p_{i}$};
\draw (61.26,134.09) node [anchor=north west][inner sep=0.75pt]  [font=\fontsize{5}{5}\selectfont]  {$q_{i-1}$};
\draw (68.26,240.09) node [anchor=north west][inner sep=0.75pt]  [font=\fontsize{5}{5}\selectfont]  {$q_{n-1}$};
\end{tikzpicture}
  \caption{\small The SMP model of {\tt RP-VC$_n$}.}
   \label{j2nMarkov1}
\end{subfigure}
  \caption{\small SMP model comparison between {\tt J}$_n$ and {\tt RP-VC$_n$}.}
  \vspace{-5mm}
\end{figure*}\par
We next present the reliability model and the key problem addressed in this paper.
\vspace{-3mm}
\subsection{Reliability Model and Second Problem Statement}\label{probDef}
Reliability is a major notion of quality of service (QoS) in cloud-based computing systems. One of the most well-known metrics to quantify reliability is \textit{mean time to failure (MTTF)}. Calculating \textit{MTTF} of processing a DAG-structured application in a semi-dynamic VC under {\tt RP-VC} faces significant challenges since, as mentioned in Sec. \ref{OMS}, different OMs may offload the groups of $\mathcal{G}$ to the vehicles progressively (see Fig.~\ref{RPVCFreamwork}). For instance, authors in \cite{52} present a strategy in which offloading a group of sub-tasks is parallelized with processing other deployed sub-tasks. Consequently, the number of deployed groups of partition $\mathcal{P}(\mathcal{A})$ on the vehicles (i.e., the groups offloaded to the vehicles for processing) varies at different time instances during processing application $\mathcal{A}$. Accordingly, we extend the \textit{MTTF} to introduce a new reliability metric, called \textit{conditional mean time to failure (C-MTTF)}, enabling us to calculate \textit{MTTF} at different time instances.

Let {\small$\mathcal{C}{=}\{C_1,\dotsc,C_r\}$} denote the set of all $r$ vehicles in the VC. Before formalizing \textit{C-MTTF}, we first introduce the following two definitions to characterize deployed groups on the vehicles at an arbitrary time instant $t$.\par
\begin{definition}[Deployment Map]\label{deployment}
At time instant t, consider partition {\small$\mathcal{P}(\mathcal{A}){=}\left(\mathcal{G}, \mathcal{E}\right)$} with $k$ groups. Deployment map of partition {\small$\mathcal{P}(\mathcal{A})$} at time $t$, referred to by {\small$\mathcal{M}\big(\mathcal{P}(\mathcal{A}),t\big){=}\mathbf{M}{=}[M_{h,\ell}]_{1\leq h\leq k, 1\leq \ell \leq r}$}, is a function outputting a $k\times r$ binary matrix {\small$\mathbf{M}$} at time $t$, where {\small$M_{h,\ell}{=}1$} implies group {\small$G_h{\in}\mathcal{G}$} is deployed on vehicle {\small$C_{\ell}\in\mathcal{C}$}.
\end{definition}
\begin{definition}[Deployment]
Consider deployment map {\small$\mathcal{M}\big(\mathcal{P}(\mathcal{A}),t\big)$}. Deployment at time $t$, shown by {\small$\mathcal{D}\Big(\mathcal{M}\big(\mathcal{P}(\mathcal{A}),t\big)\Big)$}, is a function outputting the set of all groups deployed on the vehicles at time $t$. That is
\begin{equation}
    \mathcal{D}\Big(\mathcal{M}\big(\mathcal{P}(\mathcal{A}),t\big)\Big)=\left\{G_h\Big| G_h\in \mathcal{G},1\leq \sum_{\ell=1}^{r}\mathbf{M}_{h,\ell}\right\}.
\end{equation}
\end{definition}
We next define {\small$\Theta\Big(\mathcal{M}\big(\mathcal{P}(\mathcal{A}),t\big)\Big)$} as a function outputting the set of all groups that one of their vehicles is departed from the VC. Mathematically,
\begin{equation}\label{theta}
    \Theta\Big(\mathcal{M}\big(\mathcal{P}(\mathcal{A}),t\big)\Big){=}\left\{G_h\Big| G_h{\in} \mathcal{G},\sum_{\ell=1}^{r}\mathbf{M}_{h,\ell}=1\right\}.
\end{equation}
To define \textit{C-MTTF}, we make an assumption on the applications, which holds in practical systems\cite{1,21,44,45,46}.
\begin{assumption}\label{assumption1}
Consider {\small$\mathcal{M}\big(\mathcal{P}(\mathcal{A}),t\big)$} at time $t$. If vehicle $C_{\ell}$, allocated to {\small$G_h\in \Theta\Big(\mathcal{M}\big(\mathcal{P}(\mathcal{A}),t\big)\Big)$} (i.e., the only remaining vehicle processing $G_h$), leaves the VC before completing its recruitment, the execution of $\mathcal{A}$ will encounter a failure.
\end{assumption}
To clarify Assumption \ref{assumption1}, consider the DAG presented in Fig.~\ref{RPVCFreamwork}. Assume that vehicles $C_2$ and $C_4$ are departed the VC. In this situation, the sub-tasks belonging to $G_1$ and $G_2$ are only processed by vehicles $C_1$ and $C_3$, respectively. Therefore, the departure of $C_1$ from the VC, while it processes $T_1$, results in the failure of $T_1$, leading to a chain of failures of {\small$\{T_2,T_3,\dotsc,T_6\}$}. Otherwise, if $C_1$ processes $T_1$ successfully, $T_2$ and $T_3$ will be processed by $C_1$ and $C_3$ in parallel. At this moment, if $C_1$ departs the VC, $T_2$ will be failed, leading to a chain of failures of {\small$\{T_4,T_5,T_6\}$}. Due to Assumption \ref{assumption1}, although the dependency between $T_1$ and $T_2$ is from $T_1$ to $T_2$ (i.e., vehicle $C_3$, which processes $T_2$, should wait until $C_1$ processes $T_1$), the departure of $C_3$ leads to the failure of $\mathcal{A}$, because it is costly (e.g., in terms of power consumption) to re-offload $G_2$. Hence, the failure of {\small$G_h\in\mathcal{D}\Big(\mathcal{M}\big(\mathcal{P}(\mathcal{A}),t\big)\Big)$} leads to the failure of $\mathcal{A}$.\par
Considering Assumption \ref{assumption1}, we next define \textit{C-MTTF} below.
\begin{definition}[\textit{C-MTTF}]\label{MTTFDef}
Consider {\small$\mathcal{M}\big(\mathcal{P}(\mathcal{A}),t\big)$} at time $t$. Also, let {\small$\mathcal{R}(t){=}\Big|\Theta\Big(\mathcal{M}\big(\mathcal{P}(\mathcal{A}),t\big)\Big)\Big|$}. \textit{C-MTTF} of application $\mathcal{A}$ given {\small$\mathcal{M}\big(\mathcal{P}(\mathcal{A}),t\big)$}, $\mathcal{R}(t)$, and $t$, referred to by \textit{C-MTTF}{\small$\Big(\mathcal{A}\big|\mathcal{M}\big(\mathcal{P}(\mathcal{A}),t\big),\mathcal{R}(t),t\Big)$}, is the expected time until one of the groups {\small$G_h {\in} \mathcal{D}\Big(\mathcal{M}\big(\mathcal{P}(\mathcal{A}),t\big)\Big)$} encounters a failure.
\end{definition}
It can be construed that \textit{C-MTTF} extends the definition of MTTF~\cite{24} to a scenario in which the application consists of multiple sub-tasks.
We next define the following non-trivial problem that we carefully investigate in subsequent sections.
\begin{problem}\label{p2}
Consider an application $\mathcal{A}$ partitioned by {\small$\mathcal{P}(\mathcal{A})$}. What is the impact of distributed processing of the sub-tasks of $\mathcal{A}$ according to deployment map {\small$\mathcal{M}\big(\mathcal{P}(\mathcal{A}),t\big)$} on the \textit{C-MTTF} of CI-App processing in a semi-dynamic VC? Precisely, what is \textit{C-MTTF}{\small$\Big(\mathcal{A}\big|\mathcal{M}\big(\mathcal{P}(\mathcal{A}),t\big),\mathcal{R}(t),t\Big)$} at time $t$?
\end{problem}
The main contribution of this paper is to present a mathematical framework to study Problem \ref{p2}. To this end, we conduct theoretical analysis to study the C-MTTF of {\tt RP-VC}. Note that we focus on the reliable processing of \textit{one} application given the dynamics of the VC \cite{24,29,30}. Hereafter, we use notation {\tt RP-VC$_n$} to specify the number of groups in {\small$\mathcal{D}\Big(\mathcal{M}\big(\mathcal{P}(\mathcal{A}),t\big)\Big)$} (i.e., {\small$n=\Big| \mathcal{D}\Big(\mathcal{M}\big(\mathcal{P}(\mathcal{A}),t\big)\Big)\Big|$}).
\section{Modeling CI-App Processing as an SMP}\label{mathematicalModel}
\noindent In this section, we model the dynamics of {\tt RP-VC$_n$}. To this end, we first discuss semi-Markov process (Sec.~\ref{smpDefinition}), and then overview the SMP model of {\tt J}$_n$ (Sec.~\ref{smpModel}). Finally, we introduce the SMP model of {\tt RP-VC$_n$} (Sec.~\ref{BSMPR2n}).
\vspace{-2mm}
\subsection{Semi-Markov Process: Definitions and Preliminaries}\label{smpDefinition}
A Markov process is used to describe the dynamics of stochastic systems with limited information about their past, defined as follows.
\begin{definition}[Markov Process]\label{mp}
A random process {\small$\{X_{t}:t\ge 0\}$} is a Markov process (MP) if, for any {\small$s,t\ge 0$}, the conditional probability of transition to the future state {\small$X_{s+t}$} given the present state {\small$X_{s}$} and history {\small$X_{u}$}, for $0\le u < s$, depends only on the present state. Mathematically, for non-negative integer values $i$, $j$, and $x(u)$, where $x(u)$ is an arbitrary function, we have
\begin{equation}\label{markovDef}
\begin{aligned}
    P(X_{s+t}=i &| X_{s}=j, X_{u}=x(u), 0\le u < s)\\
    &= P(X_{s+t}=i|X_{s}=j).
\end{aligned}
\end{equation}
\end{definition}
Definition~\ref{mp} is called Markov property, implying that sojourn time (i.e., the duration that SMP stays in a state before transitioning to other states) in each state has \textit{memoryless} property and thus follows an exponential distribution~\cite{53}.
A generalization of Definition~\ref{mp} where the sojourn time in a state can follow an arbitrary distribution is called semi-Markov process (SMP)\cite{53}. In short, in semi-Markov process, the transition to a future state depends not only on the present state but also on the amount of time spent in it; however, in Markov process, the transition to a future state depends only on the present state. Semi-Markov process captures the scenario of our interest in this paper, where the expected sojourn time in each state (e.g., the time that it takes to successfully recruit a new vehicle) is a function of time the system has spent in a state itself (e.g., the time spent on recruiting from when the system has entered the state).

\vspace{-2mm}
\subsection{SMP Model of {\tt J}$_n$ Overview}\label{smpModel}
Fig.~\ref{SMPJn} illustrates the SMP model of the state-of-the-art redundancy-based application execution strategy {\tt J}$_n$ for a single application $\mathcal{A}$\cite{29}. In this figure, state {\small$(C_1C_2...C_n)$} is the initial state and refers to a situation in which $n$ vehicles process application $\mathcal{A}$ simultaneously and independently. Generally, in all states {\small$\big\{(C_1C_2...C_{n-i}),(C_1C_3...C_{n-i+1}),\dotsc,(C_2C_3...C_{n-i+1})\big\}$}, {\small$n-i$} vehicles process application $\mathcal{A}$ and $i$ vehicles are departed from the VC. In this SMP model, the departure of different vehicles from the VC results in transitions to different states, captured via the links depicted in Fig.~\ref{SMPJn}. This SMP model is large and hard to handle specially since it has similar states. For instance, although in both states {\small$(C_1C_3...C_n)$} and {\small$(C_2C_3...C_n)$} one vehicle is departed from the VC, they are considered distinguished. The SMP model presented in Fig.~\ref{SMPJn} can be represented more simply and concisely by considering the symmetry of states as depicted in Fig.~\ref{comSMPJn}. For example, all states {\small$\big\{(C_1C_2...C_{n-i}),(C_1C_3...C_{n-i+1}),\dotsc,(C_2C_3...C_{n-i+1})\big\}$} are equivalent and can be wrapped into a more compact state {\small$O_{i}$}, depicted in Fig.~\ref{comSMPJn}. To elaborate on the compact model, state {\small$O_n$} on top of the figure represents having $n$ vehicles processing an application $\mathcal{A}$. Further, state {\small$O_{n-1}$} represents a state in which one vehicle departs the VC and one of the other $n-1$ vehicles recruits a new vehicle. Generally, there are $n-i$ vehicles in state {\small$O_{n-i}$}, and $i$ vehicles are departed from the VC. In state {\small$O_{n-i}$}, there is only one recruiter responsible for recruiting new vehicles to replace the departed vehicles, since all the $n-i$ vehicles execute the same application. Furthermore, completing the recruitment operation causes a transition to the initial state (i.e., {\small$O_n$}) with a probability of $p_i$. Conversely, departing a vehicle in state {\small$O_{n-i}$} results in a transition to state {\small$O_{n-i-1}$} with probability of $1-p_i$\cite{29}.
\vspace{-2mm}
\subsection{SMP Model of {\tt RP-VC$_n$}}\label{BSMPR2n}
We present a theoretical model to characterize the behavior of {\tt RP-VC$_n$}, exploited later to calculate \textit{C-MTTF}{\small$\Big(\mathcal{A}\big|\mathcal{M}\big(\mathcal{P}(\mathcal{A}),t\big),\mathcal{R}(t),t\Big)$}. We first present a general class of stochastic systems that we refer to as \textit{stochastic event systems (SeS)}, utilized to introduce the SMP model of {\tt RP-VC$_n$}.\par

\begin{definition}[Stochastic Event System (SeS)]\label{sesDef}
A SeS is a stochastic system modeled as a tuple {\small$\Delta\triangleq(\zeta,\mathcal{Q})$}, where {\small$\zeta{=}\{e_1,\dotsc,e_r\}$} is an event set and {\small$\mathcal{Q}{=}\{Q_1,\dotsc,Q_r\}$} denotes a set of random variables. Each {\small$Q_a{\in} \mathcal{Q}$} describes the time until occurring event {\small$e_a\in \zeta$}. In SeS, the following conditions hold:
\begin{enumerate}
    \item At time $t$, event {\small$e_{a}\in \zeta$} is scheduled to occur after $Q_{a}$ random time units.
    \item The occurrence of each event $e_{a}$ can lead to scheduling a set of $s$ events {\small$\{e_{a_1},\dotsc,e_{a_s}\}\subset\zeta$}.
\end{enumerate}
\end{definition}
In this paper, we consider $s=1$ for analytical simplicity (i.e., the occurrence of each event leads to scheduling of one event). One of the main characteristics of SeS is that the occurrence of the events are \textit{dependent} (i.e., an event is scheduled only at time $t_0$ or by the occurrence of another event). To characterize the events of an SeS, we present following property.
\begin{definition}[$\beta$-inhomogeneous]
Consider an SeS {\small$\Delta=(\zeta,\mathcal{Q})$}. An event {\small$e_a\in \zeta$} is $\beta$-inhomogeneous if after occurring event {\small$e_{a'\neq a}$}, the residual occurring time of event $e_a$ can follow $\beta$ different possible distributions.
\end{definition}\noindent
We next model the VC system depicted in Fig.~\ref{architecture} as an SeS.
\subsubsection{Semi-Dynamic VC as a Stochastic Event System}\label{SDSeS}
The semi-dynamic VC presented in Fig.~\ref{architecture} under the supervision of {\tt RP-VC$_n$} is an SeS, referred to by ${\mathsf{SeS}}_{\mathsf{VC}}$, defined as follows.
\begin{definition}[${\mathsf{SeS}}_{\mathsf{VC}}$]\label{SeSVC}
Let {\small$\mathcal{C}$} denote the set of all the vehicles utilized to process $\mathcal{A}$. Also, let the set of i.i.d random variables $\mathcal{Z}{=}\{Z_1,\dotsc,Z_{|\mathcal{C}|}\}$ and the set of i.i.d random variables $\mathcal{U}{=}\{U_1,\dotsc,U_{|\mathcal{C}|}\}$ denote the sojourn times and the recruitment duration of the vehicles in $\mathcal{C}$. The semi-dynamic VC presented in Fig.~\ref{architecture} under the supervision of {\tt RP-VC$_n$} is an SeS, referred to by ${\mathsf{SeS}}_{\mathsf{VC}}=(\zeta,\mathcal{Q})$, where {\small$\zeta=\{e^{\mathsf{D}}(C_\ell),e^{\mathsf{R}}(C_\ell)\}_ {C_\ell\in \mathcal{C}}$} for all vehicles and $\mathcal{Q}=\mathcal{Z}\cup \mathcal{U}$. Here, $e^{\mathsf{D}}(C_\ell)$ and $e^{\mathsf{R}}(C_\ell)$ are departure and recruitment events, defined below:
\begin{description}[font=$\bullet$~\normalfont]
  \item $e^{\mathsf{D}}(C_\ell)$: Vehicle $C_\ell$ departs the VC.
  \item $e^{\mathsf{R}}(C_\ell)$: Vehicle $C_\ell$ completes its recruitment operation.
\end{description}
\end{definition}
To clarify ${\mathsf{SeS}}_{\mathsf{VC}}$, consider an application $\mathcal{A}\hspace{-1mm}=\hspace{-1mm}\big(\mathcal{V},\mathcal{E}\big)$, where  {\small$\mathcal{V}\hspace{-1mm}=\hspace{-1mm}\{T_1,T_2,\dotsc,T_4\}$} and {\small$\mathcal{E}\hspace{-1mm}=\hspace{-1mm}\{(1,2),(1,3),(2,4)\}$}. Also, let {\small$\mathcal{A}$} be grouped by partition {\small$\mathcal{P}(\mathcal{A})\hspace{-1mm}=\hspace{-1mm}\big(\mathcal{G},\mathcal{E}\big)$} into two groups {\small$\mathcal{G}\hspace{-1mm}=\hspace{-1mm}\{G_1,G_2\}$}, where {\small$G_1\hspace{-1mm}=\hspace{-1mm}\{T_1,T_2\}$} and {\small$G_2\hspace{-1mm}=\hspace{-1mm}\{T_3,T_4\}$}. Further, assume that $G_1$ is deployed on vehicles $C_1$ and $C_2$, and $G_2$ is deployed on vehicles $C_3$ and $C_4$. Let $\mathcal{Z}{=}\{Z_1,Z_2,Z_3, Z_4\}$ and $\mathcal{U}{=}\{U_1,U_2,U_3,U_4\}$ denote the sojourn times and the recruitment duration of vehicles in $\mathcal{C}{=}\{C_1,C_2,C_3,C_4\}$. We have ${\mathsf{SeS}}_{\mathsf{VC}}{=}(\zeta,\mathcal{Q})$, where {\small$\zeta=\{e^{\mathsf{D}}(C_{1}),\dotsc,e^{\mathsf{D}}(C_{4}), e^{\mathsf{R}}(C_1),\dotsc, e^{\mathsf{R}}(C_4)\}$}. Fig.~\ref{SES} depicts ${\mathsf{SeS}}_{\mathsf{VC}}=(\zeta,\mathcal{Q})$, in which the departure events {\small$\{e^{\mathsf{D}}(C_{1}),e^{\mathsf{D}}(C_{2}),e^{\mathsf{D}}(C_{3}),e^{\mathsf{D}}(C_{4})\}\subset \zeta$} are scheduled at time $0$ to occur at times $Z_1$, $Z_2$, $Z_3$, and $Z_4$, respectively. The occurrence of events {\small$e^{\mathsf{D}}(C_{1})$} and {\small$e^{\mathsf{D}}(C_{3})$} results in scheduling recruitment events {\small$e^{\mathsf{R}}(C_{2})$} and {\small$e^{\mathsf{R}}(C_{4})$} at times $t_1$ and $t_2$, occurred at times $t_1+U_2$ and $t_2+U_4$. As can be seen from Fig.~\ref{SES}, the residual times of random variables can change after occurring each event. For instance, recruitment event {\small$e^{\mathsf{R}}(C_{2})$} is $\beta$-inhomogeneous with $\beta=3$ since it is scheduled at time $t_1$ with initial recruitment duration $U_1$. After occurring event {\small$e^{\mathsf{D}}(C_{3})$} at time $t_2$, the residual recruitment duration of {\small$e^{\mathsf{R}}(C_{2})$} is {\small$U_2-(Z_3-Z_1)$}. Finally, after occurring {\small$e^{\mathsf{R}}(C_{4})$} at time $t_3$, the residual recruitment duration of {\small$e^{\mathsf{R}}(C_{2})$} is {\small$(U_2-(Z_3-Z_1))-U_4$}.\par
\begin{figure}
    \centering
\tikzset{every picture/.style={line width=0.2pt}} 

\begin{tikzpicture}[x=0.64pt,y=0.45pt,yscale=-1,xscale=1]

\draw  [color={rgb, 255:red, 0; green, 0; blue, 0 }  ,draw opacity=0.47 ][fill={rgb, 255:red, 167; green, 195; blue, 255 }  ,fill opacity=0.16 ][dash pattern={on 0.84pt off 2.51pt}] (4.3,172.27) -- (401.95,172.27) -- (401.95,226.73) -- (4.3,226.73) -- cycle ;
\draw  [color={rgb, 255:red, 0; green, 0; blue, 0 }  ,draw opacity=0.47 ][fill={rgb, 255:red, 178; green, 178; blue, 178 }  ,fill opacity=0.16 ][dash pattern={on 0.84pt off 2.51pt}] (4.3,137.93) -- (401.22,137.93) -- (401.22,169.6) -- (4.3,169.6) -- cycle ;
\draw  [color={rgb, 255:red, 0; green, 0; blue, 0 }  ,draw opacity=0 ][fill={rgb, 255:red, 255; green, 32; blue, 32 }  ,fill opacity=0.35 ] (39.31,148.67) -- (174.99,148.67) -- (174.99,161.33) -- (39.31,161.33) -- cycle ;
\draw  [color={rgb, 255:red, 0; green, 0; blue, 0 }  ,draw opacity=0 ][fill={rgb, 255:red, 255; green, 32; blue, 32 }  ,fill opacity=0.35 ] (177.1,183.73) -- (285.93,183.73) -- (285.93,196.67) -- (177.1,196.67) -- cycle ;
\draw  [color={rgb, 255:red, 0; green, 0; blue, 0 }  ,draw opacity=0.47 ][fill={rgb, 255:red, 167; green, 195; blue, 255 }  ,fill opacity=0.16 ][dash pattern={on 0.84pt off 2.51pt}] (4.3,74.27) -- (401.22,74.27) -- (401.22,134.93) -- (4.3,134.93) -- cycle ;
\draw  [color={rgb, 255:red, 0; green, 0; blue, 0 }  ,draw opacity=0.47 ][fill={rgb, 255:red, 178; green, 178; blue, 178 }  ,fill opacity=0.16 ][dash pattern={on 0.84pt off 2.51pt}] (4.3,46.27) -- (400.68,46.27) -- (400.68,71.6) -- (4.3,71.6) -- cycle ;
\draw  [color={rgb, 255:red, 0; green, 0; blue, 0 }  ,draw opacity=0 ][fill={rgb, 255:red, 255; green, 250; blue, 32 }  ,fill opacity=0.35 ] (82.33,91.25) -- (384.37,91.25) -- (384.37,103.67) -- (82.33,103.67) -- cycle ;
\draw  [color={rgb, 255:red, 0; green, 0; blue, 0 }  ,draw opacity=0 ][fill={rgb, 255:red, 32; green, 255; blue, 115 }  ,fill opacity=0.35 ] (38.6,53.67) -- (84.9,53.67) -- (84.9,66.33) -- (38.6,66.33) -- cycle ;
\draw    (39,61) -- (83.8,61) ;
\draw [shift={(83.8,61)}, rotate = 180] [color={rgb, 255:red, 0; green, 0; blue, 0 }  ][line width=0.75]    (0,3.35) -- (0,-3.35)(-3.01,3.35) -- (-3.01,-3.35)   ;
\draw    (39,155) -- (174.3,155) ;
\draw [shift={(174.3,155)}, rotate = 180] [color={rgb, 255:red, 0; green, 0; blue, 0 }  ][line width=0.75]    (0,3.35) -- (0,-3.35)(-3.01,3.35) -- (-3.01,-3.35)   ;
\draw    (83.87,97) -- (378.88,97) ;
\draw [shift={(378.88,97)}, rotate = 180] [color={rgb, 255:red, 0; green, 0; blue, 0 }  ][line width=0.75]    (0,3.35) -- (0,-3.35)(-3.01,3.35) -- (-3.01,-3.35)   ;
\draw [color={rgb, 255:red, 255; green, 1; blue, 1 }  ,draw opacity=1 ] [dash pattern={on 0.84pt off 2.51pt}]  (86.8,32) -- (86.8,212.93) ;
\draw  [fill={rgb, 255:red, 255; green, 0; blue, 205 }  ,fill opacity=1 ] (82,154.9) .. controls (82,152.19) and (84.19,150) .. (86.9,150) .. controls (89.61,150) and (91.8,152.19) .. (91.8,154.9) .. controls (91.8,157.61) and (89.61,159.8) .. (86.9,159.8) .. controls (84.19,159.8) and (82,157.61) .. (82,154.9) -- cycle ;
\draw  [fill={rgb, 255:red, 255; green, 0; blue, 205 }  ,fill opacity=1 ] (82,96.9) .. controls (82,94.19) and (84.19,92) .. (86.9,92) .. controls (89.61,92) and (91.8,94.19) .. (91.8,96.9) .. controls (91.8,99.61) and (89.61,101.8) .. (86.9,101.8) .. controls (84.19,101.8) and (82,99.61) .. (82,96.9) -- cycle ;
\draw    (182,189.9) -- (278.52,189.9) ;
\draw [shift={(278.52,189.9)}, rotate = 180] [color={rgb, 255:red, 0; green, 0; blue, 0 }  ][line width=0.75]    (0,3.35) -- (0,-3.35)(-3.01,3.35) -- (-3.01,-3.35)   ;
\draw  [fill={rgb, 255:red, 255; green, 0; blue, 205 }  ,fill opacity=1 ] (172.2,189.9) .. controls (172.2,187.19) and (174.39,185) .. (177.1,185) .. controls (179.81,185) and (182,187.19) .. (182,189.9) .. controls (182,192.61) and (179.81,194.8) .. (177.1,194.8) .. controls (174.39,194.8) and (172.2,192.61) .. (172.2,189.9) -- cycle ;
\draw [color={rgb, 255:red, 255; green, 1; blue, 1 }  ,draw opacity=1 ] [dash pattern={on 0.84pt off 2.51pt}]  (177.1,32) -- (177.1,212.93) ;
\draw [color={rgb, 255:red, 255; green, 1; blue, 1 }  ,draw opacity=1 ] [dash pattern={on 0.84pt off 2.51pt}]  (279.8,32.87) -- (279.8,213.8) ;
\draw  [fill={rgb, 255:red, 255; green, 0; blue, 205 }  ,fill opacity=1 ] (274.9,97.9) .. controls (274.9,95.19) and (277.09,93) .. (279.8,93) .. controls (282.51,93) and (284.7,95.19) .. (284.7,97.9) .. controls (284.7,100.61) and (282.51,102.8) .. (279.8,102.8) .. controls (277.09,102.8) and (274.9,100.61) .. (274.9,97.9) -- cycle ;
\draw  [fill={rgb, 255:red, 255; green, 0; blue, 205 }  ,fill opacity=1 ] (172.2,96.9) .. controls (172.2,94.19) and (174.39,92) .. (177.1,92) .. controls (179.81,92) and (182,94.19) .. (182,96.9) .. controls (182,99.61) and (179.81,101.8) .. (177.1,101.8) .. controls (174.39,101.8) and (172.2,99.61) .. (172.2,96.9) -- cycle ;
\draw [color={rgb, 255:red, 0; green, 87; blue, 255 }  ,draw opacity=1 ]   (32.16,42) -- (394.15,42) (78.16,38) -- (78.16,46)(124.16,38) -- (124.16,46)(170.16,38) -- (170.16,46)(216.16,38) -- (216.16,46)(262.16,38) -- (262.16,46)(308.16,38) -- (308.16,46)(354.16,38) -- (354.16,46) ;
\draw [shift={(394.15,42)}, rotate = 180] [color={rgb, 255:red, 0; green, 87; blue, 255 }  ,draw opacity=1 ][line width=0.75]    (0,3.35) -- (0,-3.35)   ;
\draw [shift={(32.16,42)}, rotate = 180] [color={rgb, 255:red, 0; green, 87; blue, 255 }  ,draw opacity=1 ][line width=0.75]    (0,3.35) -- (0,-3.35)   ;
\draw [color={rgb, 255:red, 0; green, 255; blue, 66 }  ,draw opacity=1 ]   (67,60.33) .. controls (89.36,31.41) and (41.75,36.39) .. (67.17,14.1) ;
\draw [shift={(69.27,12.33)}, rotate = 140.95] [fill={rgb, 255:red, 0; green, 255; blue, 66 }  ,fill opacity=1 ][line width=0.08]  [draw opacity=0] (3.57,-1.72) -- (0,0) -- (3.57,1.72) -- cycle    ;
\draw [color={rgb, 255:red, 255; green, 0; blue, 0 }  ,draw opacity=1 ]   (230,192.17) .. controls (238.11,212.97) and (211.34,226.49) .. (189.72,236.82) ;
\draw [shift={(187.05,238.1)}, rotate = 334.49] [fill={rgb, 255:red, 255; green, 0; blue, 0 }  ,fill opacity=1 ][line width=0.08]  [draw opacity=0] (3.57,-1.72) -- (0,0) -- (3.57,1.72) -- cycle    ;
\draw [color={rgb, 255:red, 255; green, 0; blue, 0 }  ,draw opacity=1 ]   (74.3,160.8) .. controls (64.4,215.25) and (55.48,233.44) .. (149.42,237.68) ;
\draw [shift={(152.3,237.8)}, rotate = 182.36] [fill={rgb, 255:red, 255; green, 0; blue, 0 }  ,fill opacity=1 ][line width=0.08]  [draw opacity=0] (3.57,-1.72) -- (0,0) -- (3.57,1.72) -- cycle    ;
\draw [color={rgb, 255:red, 236; green, 246; blue, 21 }  ,draw opacity=1 ]   (164.45,96.17) .. controls (200.09,36.57) and (110.5,13.83) .. (219.28,9.66) ;
\draw [shift={(220.93,9.6)}, rotate = 177.94] [fill={rgb, 255:red, 236; green, 246; blue, 21 }  ,fill opacity=1 ][line width=0.08]  [draw opacity=0] (3.57,-1.72) -- (0,0) -- (3.57,1.72) -- cycle    ;
\draw    (39.93,121.33) -- (388.95,121.33) ;
\draw [shift={(388.95,121.33)}, rotate = 180] [color={rgb, 255:red, 0; green, 0; blue, 0 }  ][line width=0.75]    (0,3.35) -- (0,-3.35)(-3.01,3.35) -- (-3.01,-3.35)   ;
\draw  [fill={rgb, 255:red, 255; green, 0; blue, 205 }  ,fill opacity=1 ] (82,120.9) .. controls (82,118.19) and (84.19,116) .. (86.9,116) .. controls (89.61,116) and (91.8,118.19) .. (91.8,120.9) .. controls (91.8,123.61) and (89.61,125.8) .. (86.9,125.8) .. controls (84.19,125.8) and (82,123.61) .. (82,120.9) -- cycle ;
\draw  [fill={rgb, 255:red, 255; green, 0; blue, 205 }  ,fill opacity=1 ] (274.9,121.9) .. controls (274.9,119.19) and (277.09,117) .. (279.8,117) .. controls (282.51,117) and (284.7,119.19) .. (284.7,121.9) .. controls (284.7,124.61) and (282.51,126.8) .. (279.8,126.8) .. controls (277.09,126.8) and (274.9,124.61) .. (274.9,121.9) -- cycle ;
\draw  [fill={rgb, 255:red, 255; green, 0; blue, 205 }  ,fill opacity=1 ] (172.2,120.9) .. controls (172.2,118.19) and (174.39,116) .. (177.1,116) .. controls (179.81,116) and (182,118.19) .. (182,120.9) .. controls (182,123.61) and (179.81,125.8) .. (177.1,125.8) .. controls (174.39,125.8) and (172.2,123.61) .. (172.2,120.9) -- cycle ;
\draw    (39.93,213.33) -- (388.95,213.33) ;
\draw [shift={(388.95,213.33)}, rotate = 180] [color={rgb, 255:red, 0; green, 0; blue, 0 }  ][line width=0.75]    (0,3.35) -- (0,-3.35)(-3.01,3.35) -- (-3.01,-3.35)   ;
\draw  [fill={rgb, 255:red, 255; green, 0; blue, 205 }  ,fill opacity=1 ] (82,212.9) .. controls (82,210.19) and (84.19,208) .. (86.9,208) .. controls (89.61,208) and (91.8,210.19) .. (91.8,212.9) .. controls (91.8,215.61) and (89.61,217.8) .. (86.9,217.8) .. controls (84.19,217.8) and (82,215.61) .. (82,212.9) -- cycle ;
\draw  [fill={rgb, 255:red, 255; green, 0; blue, 205 }  ,fill opacity=1 ] (274.9,213.9) .. controls (274.9,211.19) and (277.09,209) .. (279.8,209) .. controls (282.51,209) and (284.7,211.19) .. (284.7,213.9) .. controls (284.7,216.61) and (282.51,218.8) .. (279.8,218.8) .. controls (277.09,218.8) and (274.9,216.61) .. (274.9,213.9) -- cycle ;
\draw  [fill={rgb, 255:red, 255; green, 0; blue, 205 }  ,fill opacity=1 ] (172.2,212.9) .. controls (172.2,210.19) and (174.39,208) .. (177.1,208) .. controls (179.81,208) and (182,210.19) .. (182,212.9) .. controls (182,215.61) and (179.81,217.8) .. (177.1,217.8) .. controls (174.39,217.8) and (172.2,215.61) .. (172.2,212.9) -- cycle ;

\draw (49,48.4) node [anchor=north west][inner sep=0.75pt]  [font=\tiny]  {$Z_{1}$};
\draw (50,141.4) node [anchor=north west][inner sep=0.75pt]  [font=\tiny]  {$Z_{3}$};
\draw (102,81.4) node [anchor=north west][inner sep=0.75pt]  [font=\tiny]  {$U_{2}$};
\draw (100,141.4) node [anchor=north west][inner sep=0.75pt]  [font=\tiny]  {$Z_{3}{-}Z_{1}$};
\draw (182.2,175.4) node [anchor=north west][inner sep=0.75pt]  [font=\tiny]  {$U_{4}$};
\draw (180.1,81.4) node [anchor=north west][inner sep=0.75pt]  [font=\tiny]  {$U_{2}{-}( Z_{3}{-}Z_{1})$};
\draw (280.8,79.4) node [anchor=north west][inner sep=0.75pt]  [font=\tiny]  {$U'_2{=}\big( U_{2}{-}( Z_{3}{-}Z_{1})\big){-}U_{4}$};
\draw (3,50.4) node [anchor=north west][inner sep=0.75pt]  [font=\tiny]  {$e^{\mathsf{D}}(C_1)$};
\draw (3,146.4) node [anchor=north west][inner sep=0.75pt]  [font=\tiny]  {$e^{\mathsf{D}}(C_3)$};
\draw (46,87.4) node [anchor=north west][inner sep=0.75pt]  [font=\tiny]  {$e^{\mathsf{R}}(C_2)$};
\draw (137,180.4) node [anchor=north west][inner sep=0.75pt]  [font=\tiny]  {$e^{\mathsf{R}}(C_4)$};
\draw (35,22.4) node [anchor=north west][inner sep=0.75pt]  [font=\tiny]  {$0$};
\draw (83,22.4) node [anchor=north west][inner sep=0.75pt]  [font=\tiny]  {$t_{1}$};
\draw (172,22.4) node [anchor=north west][inner sep=0.75pt]  [font=\tiny]  {$t_{2}$};
\draw (274,22.4) node [anchor=north west][inner sep=0.75pt]  [font=\tiny]  {$t_{3}$};
\draw (72.33,4.07) node [anchor=north west][inner sep=0.75pt]  [font=\tiny]  {$Pivot\ of\ U'_{2}$};
\draw (144.3,240.2) node [anchor=north west][inner sep=0.75pt]  [font=\tiny]  {$Reducer\ of\ U'_{2}$};
\draw (224.5,3.73) node [anchor=north west][inner sep=0.75pt]  [font=\tiny]  {$Duration\ of\ U'_{2}$};
\draw (3,112.73) node [anchor=north west][inner sep=0.75pt]  [font=\tiny]  {$e^{\mathsf{D}}(C_2)$};
\draw (51,109.07) node [anchor=north west][inner sep=0.75pt]  [font=\tiny]  {$Z_{2}$};
\draw (100,108.4) node [anchor=north west][inner sep=0.75pt]  [font=\tiny]  {$Z_{2}{-}Z_{1}$};
\draw (182,109.4) node [anchor=north west][inner sep=0.75pt]  [font=\tiny]  {$Z_{2}{-}Z_{3}$};
\draw (287,109.4) node [anchor=north west][inner sep=0.75pt]  [font=\tiny]  {$Z_{2}{-}Z_{3}{-}U_{4}$};
\draw (3,202.73) node [anchor=north west][inner sep=0.75pt]  [font=\tiny]  {$e^{\mathsf{D}}(C_4)$};
\draw (51,201.07) node [anchor=north west][inner sep=0.75pt]  [font=\tiny]  {$Z_{4}$};
\draw (100,200.4) node [anchor=north west][inner sep=0.75pt]  [font=\tiny]  {$Z_{4}{-}Z_{1}$};
\draw (182,201.4) node [anchor=north west][inner sep=0.75pt]  [font=\tiny]  {$Z_{4}{-}Z_{3}$};
\draw (287,201.4) node [anchor=north west][inner sep=0.75pt]  [font=\tiny]  {$Z_{4}{-}Z_{3}{-}U_{4}$};

\end{tikzpicture}

    \caption{\small A stochastic event system consisting of six events.}\label{SES}
\end{figure}
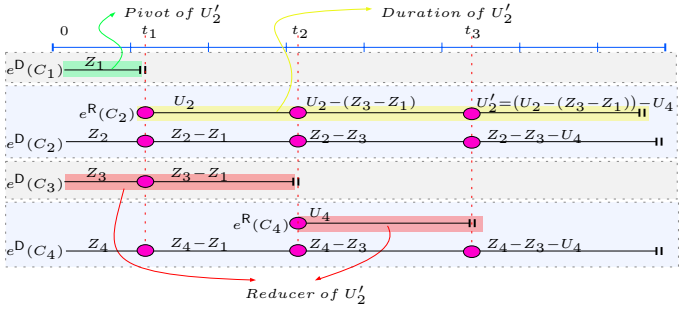
We next present an SMP model to characterize the dynamics of {\tt RP-VC$_n$} in a {\small${\mathsf{SeS}}_{\mathsf{VC}}$}.
\subsubsection{SMP Modeling of {\tt RP-VC$_n$} in an {\small${\mathsf{SeS}}_{\mathsf{VC}}$}}\label{RPVC_SMP}
Consider deployment {\small$\mathcal{D}\Big(\mathcal{M}\big(\mathcal{P}(\mathcal{A}),t\big)\Big)$} at time $t$, where {\small$\Big|\mathcal{D}\Big(\mathcal{M}\big(\mathcal{P}(\mathcal{A}),t\big)\Big)\Big|=n$}. Further, consider state set {\small$\mathcal{S}=\{S_0,S_1,\dotsc,S_n, F\}$}. We formalize the system behavior of {\tt RP-VC$_n$} in a ${\mathsf{SeS}}_{\mathsf{VC}}$ as an SMP model {\small$\{X_t\in\mathcal{S}\}$}, depicted in Fig.~\ref{j2nMarkov1}. In this model, each state {\small$S_i\in \mathcal{S}$} is defined as a set of computing vehicles utilized for processing each group {\small$G_h{\in} \mathcal{D}\Big(\mathcal{M}\big(\mathcal{P}(\mathcal{A}),t\big)\Big)$}. Let {\small$\dot{\mathcal{C}}_i$} denote the set of $i$ vehicles processing $i$ different groups, where one of the vehicles of each group has departed the VC (i.e., each group is being processed by only one vehicle).
Therefore, each vehicle {\small$C_\ell\in \dot{\mathcal{C}}_i$} is a recruiter vehicle. Also, let $\ddot{\mathcal{C}}_{2n-2i}$ denote the set of $2n-2i$ vehicles processing $n-i$ different groups, where each group is being processed by two vehicles. State {\small$S_0{=}\dot{\mathcal{C}}_0 \cup \ddot{\mathcal{C}}_{2n}$} is the initial state where each group {\small$G_h{\in} \mathcal{D}\Big(\mathcal{M}\big(\mathcal{P}(\mathcal{A}),t\big)\Big)$} has two vehicles. In state {\small$S_1{=}\dot{\mathcal{C}}_1 \cup \ddot{\mathcal{C}}_{2n-2}$}, there is a group {\small$G_{h_1}{\in} \mathcal{D}\Big(\mathcal{M}\big(\mathcal{P}(\mathcal{A}),t\big)\Big)$} with only one vehicle, and each group {\small$G_h {\in} \mathcal{D}\Big(\mathcal{M}\big(\mathcal{P}(\mathcal{A}),t\big)\Big)\setminus G_{h_1}$} has two vehicles. Generally, state {\small$S_i=\dot{\mathcal{C}}_i \cup \ddot{\mathcal{C}}_{2n-2i}$} represents a situation that $i$ groups {\small$\{G_{h_1},G_{h_2},\dotsc,G_{h_{i}}\}$} have one vehicles, and $n{-}i$ groups, i.e., each group {\small$G_h {\in} \mathcal{D}\Big(\mathcal{M}\big(\mathcal{P}(\mathcal{A}),t\big)\Big){\setminus}\{G_{h_1},G_{h_2},\dotsc,G_{h_{i}}\}$}, have one vehicle. Further, state {\small$F{\in} \mathcal{S}$} refers to the failure of {\small$\mathcal{A}$}.\par
\subsubsection{Dynamics of the SMP Model of {\tt RP-VC$_n$} in an {\small${\mathsf{SeS}}_{\mathsf{VC}}$}}\label{dynamicOfRP}
Initially, {\tt RP-VC$_n$} assigns two vehicles to each group of {\small $\mathcal{D}\Big(\mathcal{M}\big(\mathcal{P}(\mathcal{A}),t\big)\Big)$}. Let $C_{\ell_1}$ and {\small$C_{\ell'_1}$} be two vehicles assigned to a group {\small $G_{h_1}{\in} \mathcal{D}\Big(\mathcal{M}\big(\mathcal{P}(\mathcal{A}),t\big)\Big)$}. In state {\small$S_0{=}\dot{\mathcal{C}}_0 \cup \ddot{\mathcal{C}}_{2n}$}, event {\small$e^{\mathsf{D}}(C_{\ell_1})$} occurs with probability $q_0$ resulting in starting recruitment operation by {\small$C_{\ell'_1}$}. In this situation, the SMP transits to state {\small$S_1{=}\dot{\mathcal{C}}_1 {\cup} \ddot{\mathcal{C}}_{2n-2}$} since $C_{\ell_1}$ has departed the VC and should be removed from {\small$\ddot{\mathcal{C}}_{2n}$}. Besides, {\small$C_{\ell'_1}$} is now a recruiter which should be removed from {\small$\ddot{\mathcal{C}}_{2n}$} and added to $\dot{\mathcal{C}}_0$. In state $S_1$, event {\small$e^{\mathsf{D}}(C_{\ell_2})$} occurs with probability $b_1$ before completing recruitment operation of {\small$C_{\ell'_1}$} and the departure of any other vehicles, leading to the failure of the processing of {\small$\mathcal{A}$}. Conversely, in state $S_1$, event {\small$e^{\mathsf{R}}(C_{\ell'_1})$} occurs with probability $p_1$, and {\small$C_{\ell'_1}$} successfully recruits a new vehicle to allocate to {\small$G_{h_1}$}, leading to a transition of the SMP to state $S_0$ since there is no recruiter and all the groups have two vehicles. Further, let {\small$C_{\ell_2}$} and {\small$C_{\ell'_2}$} be two vehicles assigned to group {\small$G_{h_2} {\in} \mathcal{D}\Big(\mathcal{M}\big(\mathcal{P}(\mathcal{A}),t\big)\Big){\setminus} G_{h_1}$}. In state $S_1$, event {\small$e^{\mathsf{D}}(C_{\ell_2})$} occurs with probability $q_1$ before any other events. In this situation, the SMP transits to state $S_2$, and {\small$C_{\ell'_2}$} starts recruiting a new vehicle. In state $S_2$, {\small$C_{\ell'_1}$} and {\small$C_{\ell'_2}$} recruit new vehicles for groups {\small$G_{h_1}$} and {\small$G_{h_2}$}, independently. The above events can happen in the other states as well.\par
\subsubsection{Complexity of the SMP Model of {\tt RP-VC$_n$}}\label{complexityOfRP}
In addition to considering the execution of dependent sub-tasks in {\tt RP-VC$_n$}, which is fundamentally different from atomic application modeling in {\tt J}$_n$, the main differences between the SMP of {\tt RP-VC$_n$} and {\tt J}$_n$ can be outlined as follows:
\begin{description}[font=$\bullet$~\normalfont]
    \item In {\tt RP-VC$_n$}, each group $G_h$ has its own unique image/replica and is processed by two vehicles, e.g., $C_{\ell}$ and $C_{\ell'}$, independently. If $C_{\ell}$ departs the VC, only $C_{\ell'}$ can recruit a new vehicle to process sub-tasks of $G_h$. As a result, $i$ simultaneous active recruiters coexist in state $S_i$. Moreover, in state {\small$S_i$}, if a recruiter finishes its recruitment operation, the system will transit to state {\small$S_{i-1}$}. However, in {\tt J}$_n$, whenever the only existing recruiter completes its recruitment, the SMP will transit to the initial state {\small$O_n$}.
    \item In {\tt RP-VC$_n$}, processing application $\mathcal{A}$ fails with probability $b_i$ from each state $S_i$, where {\small$0 {<} i {\leq} n$} (due to the dependent sub-tasks structure of $\mathcal{A}$). While, in {\tt J}$_n$, no failure occurs until reaching the last state {\small$O_1$}, since $n$ replications of the application are processed on $n$ different vehicles.
\end{description}
Relying on the $\beta$-inhomogeneous property of the events in an {\small${\mathsf{SeS}}_{\mathsf{VC}}{=}(\zeta,\mathcal{Q})$}, we refer to the SMP presented in Sec.~\ref{BSMPR2n} by \textit{$\beta$-inhomogeneous SMP ($\beta$-SMP)}. As shown in Fig.~\ref{SES}, at each time instant {\small$t$}, residual occurring time of all {\small$e_a{\in} \zeta$} can follow different distributions.\par
To analyze {\tt RP-VC$_n$} in an {\small${\mathsf{SeS}}_{\mathsf{VC}}=(\zeta,\mathcal{Q})$} and consequently obtaining \textit{C-MTTF}{\small$\Big(\mathcal{A}\big|\mathcal{M}\big(\mathcal{P}(\mathcal{A}),t\big),\mathcal{R}(t),t\Big)$}, it is necessary to answer the following questions regarding a general SeS:
\begin{enumerate}[label={(Q\arabic*)}]
    \setcounter{enumi}{2}
    \item\label{QQ1} What are the residual occurring times of events after occurring each event {\small$e_a\in \zeta$}?
    \item\label{QQ2} What is the value of $\beta$ for an $\beta$-inhomogeneous event?
\end{enumerate}\par
To deal with \ref{QQ1}, we introduce $\langle e\rangle$-algebra as a general mathematical environment to ease conducting algebraic operations on random variables in an SeS\footnote{Note that $\langle e\rangle$-algebra is not used to calculate the subtraction of random variables, studied through convolution, and probability characteristic function\cite{54}.}. To answer question \ref{QQ2} regarding ${\mathsf{SeS}}_{\mathsf{VC}}$, we propose DT to disentangle $\beta$-SMP to a decomposed SMP (D-SMP).\par
\vspace{-2mm}
\section{Reliability Analysis via Event Stochastic Algebra and Decomposition Theorem}\label{RACSE}
\noindent We develop two new mathematical frameworks, which we refer to as \textit{event stochastic algebra ($\langle e\rangle$-algebra)} (Sec.~\ref{CSE}) and \textit{decomposition theorem (DT)} (Sec.~\ref{DecT}). These frameworks are highly effective in analyzing a variety of dynamic systems, including the semi-dynamic VC of our interest under {\tt RP-VC$_n$}.
\begin{remark}
Note that all of the subsequent mathematical modeling given in this section is among the first in the literature.
\end{remark}
\vspace{-3mm}
\subsection{Event Stochastic Algebra ($\langle e\rangle$-algebra)}\label{CSE}
The pillars of $\langle e\rangle$-algebra are built on top of two mathematical concepts that we introduce, \textit{event dynamic variable (EDV)} and \textit{event dynamic list (EDL)}, defined as follows.
\subsubsection{Event Dynamic Variable (EDV)}
An EDV represents the difference of two random variables. EDV is utilized to characterize stochastic behavior of a $\beta$-inhomogeneous event in an SeS.  The name EDV stems from the fact that in our SeS the dynamics of the occurrence of the events are represented via the difference of random variables as discussed later. In this paper, we use EDV to define EDL exploited to formulate different states of D-SMP in Sec.~\ref{DecT}.
\begin{definition}[Event Dynamic Variable]\label{event_dynamic_variable}
Consider random variables {\small$M{>}0$} and {\small$N{>}0$}. If {\small$M{>}N$}, random variable {\small$V{=}M{-}N$} is named EDV, and its distribution $\alpha$ is called e-distribution.
\end{definition}
In the above definition, {\small$M$} and {\small$N$} can themselves be EDVs. Note that in the definition of EDV, the internal components of $V$, i.e., $M$ and $N$, make $V$ an EDV; otherwise, we refer to $V$ by \textit{simple random variable (SRV)}.\par
\textbf{Characterizing an EDV.}
Consider an SeS $\Delta=(\zeta,\mathcal{Q})$. Also, let EDV {\small$V=M-N$} be the residual time until occurring event $e_{a}\in\zeta$. Further, let {\small$\alpha$ and $\alpha'$} be two e-distributions. We characterize $V$ through three notions of \textit{duration}, \textit{reducer}, and \textit{pivot}, defined as follows.
\begin{definition}[Duration]\label{duration}
The duration of {\small$V{=}M{-}N$}, shown by function {\small$\hat{\Gamma}(V)$}, is defined as follows:
\begin{equation}
\hat{\Gamma}(V)=
\begin{cases}
  \hat{\Gamma}(M), & \mbox{if } V {\sim}\alpha,\\
  V, & \mbox{if } V\mbox{ is SRV},\\
  0, & \mbox{otherwise}.
\end{cases}
\end{equation}
\end{definition}
Intuitively, {\small$\hat{\Gamma}(V)$} refers to the total time until the occurrence of event $e_{a}$. Referring to Fig.~\ref{SES}, {\small$\hat{\Gamma}(U'_2)=U_2$}.
\begin{definition}[Reducer]\label{reducer}
The reducer of {\small$V{=}M{-}N$}, referred to by function {\small$\hat{\mathbf{\delta}}(V)$}, is defined as follows:
\begin{equation}
\hat{\mathbf{\delta}}(V)=
\begin{cases}
  \hat{\mathbf{\delta}}(M)+\left(N+\hat{\mathbf{\delta}}(N)\right), & \mbox{if } V{\sim}\alpha, \\
  0 , & \mbox{if } V \mbox{ is SRV}.
\end{cases}
\end{equation}
\end{definition}
To clarify the above definition, assume that event $e_{a'}\in\zeta$ is the latest event that has occurred before $e_{a}$. The reducer of $V$ is the time passed from time $0$ until occurrence of $e_{a'}$. As can be seen from Fig.~\ref{SES}, after occurring event $e^{\mathsf{R}}(C_4)$, we have {\small$\hat{\mathbf{\delta}}(U'_2)=Z_3+U_4$}.
\begin{definition}[Pivot]\label{pivot}
The pivot of {\small$V{=}M{-}N$}, referred to by function {\small$\hat{\mathbf{\xi}}(V)$}, is defined as follows:
\begin{equation}
\hat{\mathbf{\xi}}(V)= V+\hat{\mathbf{\delta}}(V)-\hat{\Gamma}(V).
\end{equation}
\end{definition}
Intuitively, the pivot of $V$ refers to the scheduling time of event $e_{a}$. As can be seen from Fig.~\ref{SES}, we have {\small$\hat{\mathbf{\xi}}(U'_2){=}Z_1$}. Considering above definitions, if {\small$\hat{\mathbf{\delta}}(V){=}\hat{\mathbf{\xi}}(V){=}0$}, we refer to $V$ as SRV and its distribution is called \textit{simple distribution}. Next, we introduce an operator to compare different EDVs.
\begin{definition}[Pivotally Greater ({\small$\dot{\preceq}$})]\label{CG}
Consider EDVs {\small$V_1{\sim}\alpha$} and {\small$V_2{\sim}\alpha'$}, where {\small$\hat{\Gamma}(V_1){\neq} \hat{\Gamma}(V_2)$}. The relations {\small$V_1 \dot{\preceq} V_2$} or {\small$V_2 \dot{\succeq} V_1$}, read {\small$V_2$} is pivotally greater than or equal to {\small$V_1$}, are defined as
\begin{equation}
 V_2 \dot{\succeq} V_1 \equiv V_1 \dot{\preceq} V_2 \Leftrightarrow
  \left\{\hat{\mathbf{\delta}}(V_2)=\hat{\mathbf{\delta}}(V_1) \,\&\,\hat{\mathbf{\xi}}(V_1)\le\hat{\mathbf{\xi}}(V_2)\right\}.
\end{equation}
\end{definition}
\noindent By convention {\small$V_1\dot{\preceq}V_2$}, if {\small$V_1$} is an EDV and {\small$V_2$} is a SRV. \par
\textbf{Properties of EDVs.}
We define the following two properties, i.e., \textit{conditional closure} and \textit{subtraction inequality}, which are used later to do algebraic operations on EDVs.
\begin{lemma}[Conditional Closure]\label{conditional_closure}
   Consider EDVs {\small$V_1{\sim}\alpha$} and {\small$V_2{\sim}\alpha'$}, where {\small$V_1 \dot{\preceq} V_2$}. {\small$S_1=V_1-V_2$} is an EDV, calculated as follows, if {\small$S_2>0$}:
   \begin{equation}\label{eq10}
    S_1=V_1-V_2=\left(\hat{\Gamma}(V_1)-\left(\hat{\xi}(V_2)-\hat{\xi}(V_1)\right)\right)-\hat{\Gamma}(V_2),
   \end{equation}
   and {\small$S_2=V_2-V_1$} is an EDV, calculated as follows, if {\small$S_2>0$}:
   \begin{equation}\label{eq11}
    S_2=V_2-V_1=\hat{\Gamma}(V_2)-\left(\hat{\Gamma}(V_1)-\left(\hat{\xi}(V_2)-\hat{\xi}(V_1)\right)\right),
   \end{equation}
\end{lemma}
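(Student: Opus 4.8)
The plan is to reduce everything to a single reconstruction identity that expresses any EDV through its three characteristics, and then simply subtract two such representations. Rearranging the Pivot definition (Definition~\ref{pivot}), namely $\hat{\xi}(V)=V+\hat{\delta}(V)-\hat{\Gamma}(V)$, gives
\begin{equation}\label{eq:reconstruct}
V=\hat{\Gamma}(V)+\hat{\xi}(V)-\hat{\delta}(V),
\end{equation}
valid for every EDV $V$. Intuitively this just re-states that the residual time equals the scheduling instant plus the full duration minus the elapsed (reducer) time, and the worked example of Fig.~\ref{SES} confirms each bookkeeping term. Identity~\eqref{eq:reconstruct} is the only structural fact I would need about a single EDV.

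First I would write \eqref{eq:reconstruct} for both $V_1$ and $V_2$ and subtract them. The hypothesis $V_1\dot{\preceq}V_2$ supplies, by Definition~\ref{CG}, the crucial equality $\hat{\delta}(V_1)=\hat{\delta}(V_2)$, so the reducer terms cancel identically and
\begin{equation}\label{eq:subtract}
V_1-V_2=\bigl(\hat{\Gamma}(V_1)+\hat{\xi}(V_1)\bigr)-\bigl(\hat{\Gamma}(V_2)+\hat{\xi}(V_2)\bigr).
\end{equation}
Regrouping the right-hand side of \eqref{eq:subtract} as $\bigl(\hat{\Gamma}(V_1)-(\hat{\xi}(V_2)-\hat{\xi}(V_1))\bigr)-\hat{\Gamma}(V_2)$ reproduces \eqref{eq10} verbatim, and the symmetric regrouping gives \eqref{eq11} for $S_2=V_2-V_1=-S_1$. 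Thus both closed forms emerge from one cancellation; no convolution or moment-generating-function argument is involved, consistent with the stated design goal of $\langle e\rangle$-algebra.

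It remains to certify that the resulting difference is itself an EDV in the formal sense of Definition~\ref{event_dynamic_variable}, and here I would do the genuine bookkeeping. Writing $S_1=M'-N'$ with $N'=\hat{\Gamma}(V_2)$ and $M'=\hat{\Gamma}(V_1)-(\hat{\xi}(V_2)-\hat{\xi}(V_1))$, I would first check that $M'$ and $N'$ are admissible internal components and then evaluate the recursive functionals $\hat{\Gamma}(S_1)$, $\hat{\delta}(S_1)$, and $\hat{\xi}(S_1)$ from Definitions~\ref{duration}--\ref{pivot}, verifying that feeding them back into \eqref{eq:reconstruct} returns $S_1$ and hence that the recursion is self-consistent. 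The defining EDV requirement $M'>N'$ is exactly the positivity of the corresponding difference, so $S_1$ qualifies precisely when $S_1>0$ and $S_2$ precisely when $S_2>0$; since $S_2=-S_1$ these are complementary and exactly one of \eqref{eq10}--\eqref{eq11} is the active case.

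I expect this final verification to be the main obstacle, rather than the algebra. The functionals $\hat{\Gamma}$, $\hat{\delta}$, $\hat{\xi}$ in Definitions~\ref{duration}--\ref{pivot} are self-referential, so the work lies in showing the recursion on $M'$ and $N'$ terminates on simple random variables and that the induced duration, reducer, and pivot of the new difference are mutually compatible. By contrast, the analytic heart of the lemma is the single-line reducer cancellation in \eqref{eq:subtract}, which is what the relation $\dot{\preceq}$ was engineered to guarantee.
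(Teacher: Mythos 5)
Your algebraic core is exactly the paper's: write $V=\hat{\Gamma}(V)+\hat{\xi}(V)-\hat{\delta}(V)$ for each of $V_1,V_2$, cancel the reducers via $\hat{\delta}(V_1)=\hat{\delta}(V_2)$ supplied by $V_1\dot{\preceq}V_2$, and regroup. That part is fine and is precisely how the paper obtains the two closed forms.

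The gap is in your certification step. You assert that the EDV requirement reduces to $M'>N'$, ``exactly the positivity of the corresponding difference,'' so that $S_2$ qualifies precisely when $S_2>0$. Definition~\ref{event_dynamic_variable} requires \emph{three} things of $V=M-N$: $M>0$, $N>0$, and $M>N$. For \eqref{eq10} the minuend's positivity is indeed free, since $S_1>0$ gives $M'>N'=\hat{\Gamma}(V_2)>0$. But for \eqref{eq11} the subtrahend is $N''=\hat{\Gamma}(V_1)-\bigl(\hat{\xi}(V_2)-\hat{\xi}(V_1)\bigr)$, and $S_2>0$ only tells you $\hat{\Gamma}(V_2)>N''$; it does not tell you $N''>0$. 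Establishing $\hat{\Gamma}(V_1)>\hat{\xi}(V_2)-\hat{\xi}(V_1)>0$ (the side condition stated immediately after the lemma) is the actual content of the verification, and it does not follow from your bookkeeping plan about recursion termination of the functionals. The paper derives it by invoking the fact that $V_1$ and $V_2$ are themselves EDVs, so $\hat{\Gamma}(V_1)>\hat{\delta}(V_1)-\hat{\xi}(V_1)$ and $\hat{\delta}(V_2)-\hat{\xi}(V_2)>0$, and then using $\hat{\delta}(V_1)=\hat{\delta}(V_2)$ to conclude $\hat{\Gamma}(V_1)-\hat{\xi}(V_2)+\hat{\xi}(V_1)>\hat{\delta}(V_1)-\hat{\xi}(V_2)>0$; the companion bound $\hat{\xi}(V_2)>\hat{\xi}(V_1)$ comes from Definition~\ref{CG}. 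You have correctly located the cancellation as the role of $\dot{\preceq}$, but you have mislocated the ``main obstacle'': it is these component-positivity inequalities, not the self-consistency of the $\hat{\Gamma},\hat{\delta},\hat{\xi}$ recursion, and your proposal as written would leave the $S_2$ case unproved.
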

\noindent where $\hat{\Gamma}(V_1)>\left(\hat{\xi}(V_2)-\hat{\xi}(V_1)\right)>0$.
\begin{proof}
See Appendix \ref{conditional_closure_proof}.
\end{proof}
\begin{lemma}[Subtraction Inequality]\label{subtraction_inequality}
Consider EDVs {\small$V_1{\sim}\alpha$} and {\small$V_2{\sim}\alpha'$}, where {\small$V_1 \dot{\preceq} V_2$}. Also, consider {\small$V'$}, where {\small$V'$} can be EDV or SRV, and {\small$V'<V_1,V_2$}. The following inequality holds.
\begin{equation}
  V_1-V'\dot{\preceq} V_2-V',
\end{equation}
provided that
\begin{equation}
  \left\{V_1 \dot{\preceq} V' or V' \dot{\preceq} V_1\} \,\&\, \{V_2 \dot{\preceq} V' or V' \dot{\preceq} V_2\right\}.
\end{equation}
\end{lemma}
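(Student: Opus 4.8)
The plan is to reduce the claim to the two defining conditions of the pivotally-greater relation in Definition~\ref{CG}: to establish $V_1-V'\dot{\preceq}V_2-V'$ it suffices to verify (i)~$\hat{\delta}(V_1-V')=\hat{\delta}(V_2-V')$ and (ii)~$\hat{\xi}(V_1-V')\le\hat{\xi}(V_2-V')$. First I would observe that the two bracketed comparability hypotheses, together with $V'<V_1,V_2$, are exactly what is needed to invoke conditional closure (Lemma~\ref{conditional_closure}): each of $V_1-V'$ and $V_2-V'$ is then a genuine EDV, and comparability forces the common reducer $\hat{\delta}(V')=\hat{\delta}(V_1)=\hat{\delta}(V_2)$, i.e. all three residuals are read off at the same current time, so the two differences are legitimate objects to which Definition~\ref{CG} applies.

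The heart of the argument is a set of \emph{subtraction invariants} for the EDV obtained by subtracting the common, first-firing variable $V'$. I would show that, for $i\in\{1,2\}$,
\begin{equation}
\hat{\Gamma}(V_i-V')=\hat{\Gamma}(V_i),\qquad
\hat{\xi}(V_i-V')=\hat{\xi}(V_i),\qquad
\hat{\delta}(V_i-V')=\hat{\Gamma}(V')+\hat{\xi}(V').
\end{equation}
These follow by inserting the closed form of $V_i-V'$ from Lemma~\ref{conditional_closure} into Definitions~\ref{duration},~\ref{reducer}, and~\ref{pivot} and unfolding the resulting nested EDV. A two-way case split is required according to whether $V_i\dot{\preceq}V'$ (use~\eqref{eq10}) or $V'\dot{\preceq}V_i$ (use~\eqref{eq11}); in either case the outer duration factor is $\hat{\Gamma}(V_i)$, giving the first invariant at once, and a direct evaluation of Definition~\ref{pivot} gives the second. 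The substantive point is the third invariant: it asserts that the new reducer is precisely the absolute firing time $\hat{\Gamma}(V')+\hat{\xi}(V')$ of $V'$ and therefore depends on $V'$ alone.

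Granting the invariants, both conclusions fall out immediately. For (i), $\hat{\delta}(V_1-V')=\hat{\Gamma}(V')+\hat{\xi}(V')=\hat{\delta}(V_2-V')$, since the right-hand side involves only the shared variable $V'$. For (ii), the pivot-preservation invariant collapses the inequality to $\hat{\xi}(V_1)\le\hat{\xi}(V_2)$, which is granted by the hypothesis $V_1\dot{\preceq}V_2$ through Definition~\ref{CG}. Combining (i) and (ii) with Definition~\ref{CG} yields $V_1-V'\dot{\preceq}V_2-V'$, as required.

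The main obstacle I anticipate is establishing the third invariant cleanly. The reducer rule of Definition~\ref{reducer} is recursive, so unfolding $\hat{\delta}(V_i-V')$ forces one to evaluate the reducer of an inner pivot-difference such as $\hat{\xi}(V_i)-\hat{\xi}(V')$; this term must be treated as an EDV whose reducer equals $\hat{\xi}(V')$ rather than $0$, and it is exactly that contribution which makes the telescoping cancellation collapse the reducer to the firing time $\hat{\Gamma}(V')+\hat{\xi}(V')$. Getting this bookkeeping right, and checking that the two closure cases produce identical invariants (including the degenerate subcase where $V'$ is an SRV, handled by the convention in Definition~\ref{CG}), is the only delicate part; once it is done the inequality is essentially free. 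I would sanity-check the computation against the worked $\mathsf{SeS}$ of Fig.~\ref{SES}, where $\hat{\delta}(U'_2)=Z_3+U_4$ is indeed the absolute firing time of the variable subtracted last.
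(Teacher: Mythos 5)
Your proposal is correct and follows essentially the same route as the paper's proof: both arguments reduce the claim to the two conditions of Definition~\ref{CG}, split on whether $V'$ is an EDV or an SRV, use the bracketed comparability hypotheses to force $\hat{\mathbf{\delta}}(V_1)=\hat{\mathbf{\delta}}(V_2)=\hat{\mathbf{\delta}}(V')$ so that the two reducers coincide, and then collapse the pivot inequality to $\hat{\mathbf{\xi}}(V_1)\le\hat{\mathbf{\xi}}(V_2)$, which is the hypothesis $V_1\dot{\preceq}V_2$. One caveat on your intermediate ``invariants'': applying Definitions~\ref{reducer} and~\ref{pivot} at the top level, as the paper does, gives (in the EDV case) $\hat{\mathbf{\delta}}(V_i-V')=\hat{\mathbf{\delta}}(V_i)+V'+\hat{\mathbf{\delta}}(V')$ and $\hat{\mathbf{\xi}}(V_i-V')=\hat{\mathbf{\xi}}(V_i)+\hat{\mathbf{\delta}}(V')$ rather than exact pivot preservation and a reducer equal to $\hat{\Gamma}(V')+\hat{\mathbf{\xi}}(V')$, but since the discrepancies are additive terms common to $i=1,2$ (depending only on $V'$ and the shared reducer), both conditions of Definition~\ref{CG} still follow and your conclusion is unaffected.
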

\begin{proof}
See Appendix \ref{subtraction_inequality_proof}.
\end{proof}
So far, we have provided definitions and technical results related to EDVs. As the second pillar of the $\langle e\rangle$-algebra, we next aim to introduce the concept of \textit{EDL}.
\subsubsection{Event Dynamic List (EDL)}\label{CVLsub}
EDL is a list of EDVs, utilized for modeling the states of D-SMP in Sec.~\ref{DecT}.
\begin{definition}[Event Dynamic List]\label{event_dynamic_list}
A list of $n$ EDVs, each following {\small$\alpha$} distribution, referred to by {\small$\mathcal{L}^{(\alpha)}\hspace{-1mm}\left\langle n \right\rangle{=}[V_1,V_2,\dotsc,V_n]$}, is called an EDL, if  $V_i \dot{\preceq} V_{j}$, $\forall V_i,~V_j{\in}\mathcal{L}^{(\alpha)}\hspace{-1mm}\left\langle n \right\rangle$, where $1 {\le} i{<}j {\le} n$.
\end{definition}
\noindent We let {\small$\mathcal{L}^{(\alpha)}\hspace{-1mm}\left\langle n \right\rangle\hspace{-0.7mm}(i)$} denote the $i^{th}$ member of the list.\par
\textbf{Operators To Manipulate EDLs.}
We introduce three essential operators, namely, \textit{pivotally greater}, \textit{concatenation}, and \textit{subtraction}, to manipulate EDLs. First, we extend the definition of operator {\small$\dot{\preceq}$} to compare different EDLs as follows:
\begin{definition}
Consider two EDLs {\small$\mathcal{L}_1^{(\alpha)}\hspace{-1.4mm}\left\langle n \right\rangle$} and {\small$\mathcal{L}_2^{(\alpha')}\hspace{-1.4mm}\left\langle m \right\rangle$}. {\small$\mathcal{L}_1^{(\alpha)}\hspace{-1.4mm}\left\langle n \right\rangle \dot{\preceq} \mathcal{L}_2^{(\alpha')}\hspace{-1.4mm}\left\langle m \right\rangle$} if
\begin{equation}
 V_i \dot{\preceq} V_{j}, \forall V_i\in \mathcal{L}_1^{(\alpha)}\hspace{-1.4mm}\left\langle n \right\rangle, \forall V_j\in \mathcal{L}_2^{(\alpha')}\hspace{-1.4mm}\left\langle m \right\rangle.
 \end{equation}
\end{definition}
We then define the concatenation operator as follows, enabling us to concatenate EDLs.
\begin{definition}[Concatenation ($\odot$)]\label{concatentationDef}
Let {\small$\mathcal{L}_1^{(\alpha)}\hspace{-1.4mm}\left\langle n \right\rangle{=}[V_1,V_2,\dotsc,V_n]$} and {\small$\mathcal{L}_2^{(\alpha')}\hspace{-1.4mm}\left\langle m \right\rangle{=}[V'_1,V'_2,\dotsc,V'_m]$} be two EDLs. The concatenation of {\small$\mathcal{L}_1^{(\alpha)}\hspace{-1.4mm}\left\langle n \right\rangle$ and $\mathcal{L}_2^{(\alpha')}\hspace{-1.4mm}\left\langle m \right\rangle$} is defined as follows:
\begin{equation}\small
\begin{aligned}
&\mathcal{L}_1^{(\alpha)}\hspace{-1.4mm}\left\langle n \right\rangle \odot \mathcal{L}_2^{(\alpha')}\hspace{-1.4mm}\left\langle m \right\rangle =\\
&\begin{cases}
  [V_1,V_2,\dotsc,V_n,V'_1,V'_2,\dotsc,V'_m], & \mathcal{L}_1^{(\alpha)}\hspace{-1.4mm}\left\langle n \right\rangle \dot{\preceq} \mathcal{L}_2^{(\alpha')}\hspace{-1.4mm}\left\langle m \right\rangle,\\
  [V'_1,V'_2,\dotsc,V'_m,V_1,V_2,\dotsc,V_n], & \mathcal{L}_1^{(\alpha)}\hspace{-1.4mm}\left\langle n \right\rangle \dot{\succeq} \mathcal{L}_2^{(\alpha')}\hspace{-1.4mm}\left\langle m \right\rangle.
\end{cases}
\end{aligned}
\end{equation}
\end{definition}
Finally, we introduce the following subtraction operator, utilized to subtract EDLs from an EDV.
\begin{definition}[Subtraction ($\circleddash$)]
Consider an EDL {\small$\mathcal{L}^{(\alpha)}\hspace{-1.4mm}\left\langle n \right\rangle=\{V_1,V_2,\dotsc,V_n\}$} and an EDV {\small$S{\sim}\alpha'$}. The subtraction of {\small$\mathcal{L}^{(\alpha)}\hspace{-1.4mm}\left\langle n \right\rangle$} and $S$, denoted by {\small$\mathcal{L}^{(\alpha)}\hspace{-1.4mm}\left\langle n \right\rangle \circleddash S$}, is defined as follows:
\begin{equation}
\hspace{-.5mm}
\mathcal{L}^{(\alpha)}\hspace{-1.4mm}\left\langle n \right\rangle {\circleddash} S {=} [V_1{-}S,...,V_i{-}S,...,V_n{-}S].
 \end{equation}
\end{definition}\noindent
\noindent By convention, if $S=V_i$, $V_i$ will be removed from $\mathcal{L}^{(\alpha)}\hspace{-1.4mm}\left\langle n \right\rangle$.\par
\textbf{Special Class of EDL.}
By utilizing concatenation operator, we next aim to introduce a class of EDLs, which their EDVs follow two different distributions.
\begin{definition}[Heterogeneous Event Dynamic List]\label{def19}
Let {\small$\mathcal{L}_1^{(\alpha)}\hspace{-1.4mm}\left\langle n \right\rangle{=}[V_1,V_2,\dotsc,V_n]$} and {\small$\mathcal{L}_2^{(\alpha')}\hspace{-1.4mm}\left\langle m \right\rangle{=}[V'_1,V'_2,\dotsc,V'_m]$} be two EDLs, where {\small$\alpha$} and {\small$\alpha'$} are different. Heterogeneous EDL (H-EDL), shown by {\small$\mathcal{H}^{(\alpha;\alpha')}\hspace{-1.4mm}\left\langle n \hspace{-0.3mm};\hspace{-0.3mm} m \right\rangle$}, is defined as follows:
\begin{equation}
\mathcal{H}^{(\alpha;\alpha')}\hspace{-1.4mm}\left\langle n \hspace{-0.3mm};\hspace{-0.3mm} m \right\rangle{=}\mathcal{L}_1^{(\alpha)}\hspace{-1.4mm}\left\langle n \right\rangle \odot \mathcal{L}_2^{(\alpha')}\hspace{-1.4mm}\left\langle m \right\rangle,~ \mathcal{L}_1^{(\alpha)}\hspace{-1.4mm}\left\langle n \right\rangle {\dot{\preceq}} \mathcal{L}_2^{(\alpha')}\hspace{-1.4mm}\left\langle m \right\rangle,
\end{equation}
and
\begin{equation}
\mathcal{H}^{(\alpha;\alpha')}\hspace{-1.4mm}\left\langle m \hspace{-0.3mm};\hspace{-0.3mm} n \right\rangle{=}\mathcal{L}_1^{(\alpha)}\hspace{-1.4mm}\left\langle n \right\rangle \odot \mathcal{L}_2^{(\alpha')}\hspace{-1.4mm}\left\langle m \right\rangle,~ \mathcal{L}_1^{(\alpha)}\hspace{-1.4mm}\left\langle n \right\rangle {\dot{\succeq}} \mathcal{L}_2^{(\alpha')}\hspace{-1.4mm}\left\langle m \right\rangle.
\end{equation}
\end{definition}
\noindent It is worth mentioning that if {\small$\alpha$} and $\alpha'$ are identical, we get
\begin{equation} \label{e14}
\mathcal{L}_3^{(\alpha)}\hspace{-1.4mm}\left\langle {n{+}m} \right\rangle =\mathcal{H}^{(\alpha;\alpha')}\hspace{-1.4mm}\left\langle n \hspace{-0.3mm};\hspace{-0.3mm} m \right\rangle=\mathcal{L}_1^{(\alpha)}\hspace{-1.4mm}\left\langle n \right\rangle \odot \mathcal{L}_2^{(\alpha')}\hspace{-1.4mm}\left\langle m \right\rangle,
\end{equation}\par
\noindent where {\small$\mathcal{L}_3^{(\alpha)}\hspace{-1.4mm}\left\langle {n{+}m} \right\rangle$} is an EDL with {\small$m{+}n$} EDVs.\par
We next aim to utilize the notation of $\langle e\rangle$-algebra to model {\tt RP-VC$_n$}. Then we present decomposition theorem to transform $\beta$-SMP, presented in Sec.~\ref{BSMPR2n}, to D-SMP.
\vspace{-2mm}
\subsection{Analyzing the Dynamics of {\tt RP-VC$_n$} through $\langle e\rangle$-algebra}
Let {\small$\mathcal{C}$} denote the set of all the vehicles utilized to process an application $\mathcal{A}$. Also, consider ${\mathsf{SeS}}_{\mathsf{VC}}{=}(\zeta,\mathcal{Q})$, presented in Definition~\ref{SeSVC}, where {\small$\zeta{=}\{e^{\mathsf{D}}(C_\ell),e^{\mathsf{R}}(C_\ell)\}_{C_\ell{\in} \mathcal{C}}$} and $\mathcal{Q}{=}\mathcal{Z}\cup \mathcal{U}$. Let i.i.d random variable $Z_x{\in}\mathcal{Z}$ follow general distribution $\mathfrak{D}_Z(z)$. Also, let i.i.d random variable $U_y{\in} \mathcal{U}$ follow a general distribution {\small$\mathfrak{R}_U(u)$}. We next aim to answer question \ref{QQ1} in Sec.~\ref{complexityOfRP} through utilizing the framework of $\langle e\rangle$-algebra. We first introduce a set of special EDVs, utilized to characterize the residual sojourn time (i.e., the residual time until the departure of the respective vehicle from the VC) and residual recruitment duration (i.e., the residual time until completing recruitment operation by the respective vehicle) of each vehicle after the occurrence of event $e\in\zeta$.\par
\textbf{Special EDVs and Their Relationships.}
Let {\small$Z_1$}, {\small$Z_2$}, and {\small$U$} be three independent random variables, where {\small$Z_1$} and {\small$Z_2$} follow general distribution $\mathfrak{D}_Z(z)$, and {\small$U$} has a general distribution {\small$\mathfrak{R}_U(u)$}. We define the following special EDVs, utilized to obtain our main upshots in Sec.~\ref{DecT}.
\begin{definition}[$\mathfrak{X}$ Distribution]\label{expdist}
Let {\small$Z'_1=Z_1-Z_2$} and {\small$Z'_2{=}Z_1{-}(Z_2{+}U)$} be two EDVs. The distribution of {\small$Z'_1$} and {\small$Z'_2$} are called first and second orders $\mathfrak{X}$ shown by {\small$Z'_1{\sim}\dot{\mathfrak{X}}$} and {\small$Z'_2{\sim}\ddot{\mathfrak{X}}$}.
\end{definition}
\begin{definition}[$\varphi$ Distribution]\label{phidist}
Let {\small$W{=}U{-}Z'$} be an EDV, where {\small$Z'{\sim}\dot{\mathfrak{X}}$}. The distribution of {\small$W$} is called {\small$\varphi$}.
\end{definition}
\begin{definition}[$\psi$ Distribution]\label{psidist}
Let {\small$Y{=}U{-}W$} be an EDV, where {\small$W{\sim}\varphi$}. The distribution of {\small$Y$} is called {\small$\psi$}.
\end{definition}
\begin{definition}[$\gamma$ Distribution]\label{gammadist}
Let {\small$X{=}W{-}U$} be an EDV, where {\small$W{\sim}\varphi$}. The distribution of {\small$X$} is called {\small$\gamma$}.
\end{definition}
We characterize the relations between above EDVs through \textit{decomposability} and \textit{absorbency} properties, utilized to introduce \textit{decomposition theorem} later in Sec.~\ref{DecT}.
\begin{figure}
    \centering
\tikzset{every picture/.style={line width=0.2pt}} 
    \begin{subfigure}{.25\textwidth}
    \begin{tikzpicture}[x=0.45pt,y=0.45pt,yscale=-1,xscale=1]

\draw  [color={rgb, 255:red, 24; green, 29; blue, 219 }  ,draw opacity=1 ][fill={rgb, 255:red, 255; green, 193; blue, 194 }  ,fill opacity=0.37 ] (121.8,32.5) .. controls (121.8,17.31) and (134.11,5) .. (149.3,5) .. controls (164.49,5) and (176.8,17.31) .. (176.8,32.5) .. controls (176.8,47.69) and (164.49,60) .. (149.3,60) .. controls (134.11,60) and (121.8,47.69) .. (121.8,32.5) -- cycle ;
\draw    (148.8,61) -- (59.84,101) ;
\draw [shift={(101.95,82.07)}, rotate = 335.79] [fill={rgb, 255:red, 0; green, 0; blue, 0 }  ][line width=0.08]  [draw opacity=0] (5.36,-2.57) -- (0,0) -- (5.36,2.57) -- cycle    ;
\draw    (148.8,61) -- (148.8,102) ;
\draw [shift={(148.8,84.1)}, rotate = 270] [fill={rgb, 255:red, 0; green, 0; blue, 0 }  ][line width=0.08]  [draw opacity=0] (5.36,-2.57) -- (0,0) -- (5.36,2.57) -- cycle    ;
\draw    (148.8,61) -- (241.9,101.5) ;
\draw [shift={(197.73,82.29)}, rotate = 203.51] [fill={rgb, 255:red, 0; green, 0; blue, 0 }  ][line width=0.08]  [draw opacity=0] (5.36,-2.57) -- (0,0) -- (5.36,2.57) -- cycle    ;
\draw  [color={rgb, 255:red, 152; green, 0; blue, 0 }  ,draw opacity=1 ][fill={rgb, 255:red, 250; green, 255; blue, 198 }  ,fill opacity=0.49 ] (10,105) .. controls (10,103.9) and (10.9,103) .. (12,103) -- (97.8,103) .. controls (98.9,103) and (99.8,103.9) .. (99.8,105) -- (99.8,129) .. controls (99.8,130.1) and (98.9,131) .. (97.8,131) -- (12,131) .. controls (10.9,131) and (10,130.1) .. (10,129) -- cycle ;
\draw  [color={rgb, 255:red, 152; green, 0; blue, 0 }  ,draw opacity=1 ][fill={rgb, 255:red, 250; green, 255; blue, 198 }  ,fill opacity=0.49 ] (104,105) .. controls (104,103.9) and (104.9,103) .. (106,103) -- (191.8,103) .. controls (192.9,103) and (193.8,103.9) .. (193.8,105) -- (193.8,129) .. controls (193.8,130.1) and (192.9,131) .. (191.8,131) -- (106,131) .. controls (104.9,131) and (104,130.1) .. (104,129) -- cycle ;
\draw  [color={rgb, 255:red, 152; green, 0; blue, 0 }  ,draw opacity=1 ][fill={rgb, 255:red, 250; green, 255; blue, 198 }  ,fill opacity=0.49 ] (197.8,104) .. controls (197.8,102.9) and (198.7,102) .. (199.8,102) -- (285.6,102) .. controls (286.7,102) and (287.6,102.9) .. (287.6,104) -- (287.6,128) .. controls (287.6,129.1) and (286.7,130) .. (285.6,130) -- (199.8,130) .. controls (198.7,130) and (197.8,129.1) .. (197.8,128) -- cycle ;

\draw (116,10.4) node [anchor=north west][inner sep=0.75pt]  [font=\tiny]  {$ \begin{array}{l}
V_{1} {\sim} \varphi \\
V_{2} {\sim} \varphi \\
V_{3} {\sim} \varphi
\end{array}$};
\draw (-1,102.4) node [anchor=north west][inner sep=0.75pt]  [font=\tiny]  {$ \begin{array}{l}
( V_{2}{-}V_{1}) {\sim} \psi \\
( V_{3}{-}V_{1}) {\sim} \psi
\end{array}$};
\draw (95,102.4) node [anchor=north west][inner sep=0.75pt]  [font=\tiny]  {$ \begin{array}{l}
( V_{1}{-}V_{2}) {\sim} \gamma \\
( V_{3}{-}V_{2}) {\sim} \psi
\end{array}$};
\draw (187.8,101.4) node [anchor=north west][inner sep=0.75pt]  [font=\tiny]  {$ \begin{array}{l}
( V_{1}{-}V_{3}) {\sim} \gamma \\
( V_{2}{-}V_{3}) {\sim} \gamma
\end{array}$};
\draw (55,79.4) node [anchor=north west][inner sep=0.75pt]  [font=\tiny]  {$V_{1}$};
\draw (125,79.4) node [anchor=north west][inner sep=0.75pt]  [font=\tiny]  {$V_{2}$};
\draw (175,79.4) node [anchor=north west][inner sep=0.75pt]  [font=\tiny]  {$V_{3}$};
\end{tikzpicture}
      \caption{\small Decomposability}
      \label{decomposability}
    \end{subfigure}%
    \begin{subfigure}{.25\textwidth}
    \begin{tikzpicture}[x=0.45pt,y=0.45pt,yscale=-1,xscale=1]

\draw  [color={rgb, 255:red, 24; green, 29; blue, 219 }  ,draw opacity=1 ][fill={rgb, 255:red, 255; green, 193; blue, 194 }  ,fill opacity=0.37 ] (123.13,123.17) .. controls (123.13,107.98) and (135.45,95.67) .. (150.63,95.67) .. controls (165.82,95.67) and (178.13,107.98) .. (178.13,123.17) .. controls (178.13,138.35) and (165.82,150.67) .. (150.63,150.67) .. controls (135.45,150.67) and (123.13,138.35) .. (123.13,123.17) -- cycle ;
\draw    (51.2,49.67) -- (150.63,95.67) ;
\draw [shift={(103.28,73.76)}, rotate = 204.83] [fill={rgb, 255:red, 0; green, 0; blue, 0 }  ][line width=0.08]  [draw opacity=0] (5.36,-2.57) -- (0,0) -- (5.36,2.57) -- cycle    ;
\draw    (148.8,48.5) -- (148.8,95.67) ;
\draw [shift={(148.8,74.68)}, rotate = 270] [fill={rgb, 255:red, 0; green, 0; blue, 0 }  ][line width=0.08]  [draw opacity=0] (5.36,-2.57) -- (0,0) -- (5.36,2.57) -- cycle    ;
\draw    (245.2,49.67) -- (150.63,95.67) ;
\draw [shift={(195.58,73.8)}, rotate = 334.06] [fill={rgb, 255:red, 0; green, 0; blue, 0 }  ][line width=0.08]  [draw opacity=0] (5.36,-2.57) -- (0,0) -- (5.36,2.57) -- cycle    ;
\draw  [color={rgb, 255:red, 152; green, 0; blue, 0 }  ,draw opacity=1 ][fill={rgb, 255:red, 250; green, 255; blue, 198 }  ,fill opacity=0.49 ] (10,23) .. controls (10,21.9) and (10.9,21) .. (12,21) -- (97.8,21) .. controls (98.9,21) and (99.8,21.9) .. (99.8,23) -- (99.8,47) .. controls (99.8,48.1) and (98.9,49) .. (97.8,49) -- (12,49) .. controls (10.9,49) and (10,48.1) .. (10,47) -- cycle ;
\draw  [color={rgb, 255:red, 152; green, 0; blue, 0 }  ,draw opacity=1 ][fill={rgb, 255:red, 250; green, 255; blue, 198 }  ,fill opacity=0.49 ] (104,23) .. controls (104,21.9) and (104.9,21) .. (106,21) -- (191.8,21) .. controls (192.9,21) and (193.8,21.9) .. (193.8,23) -- (193.8,47) .. controls (193.8,48.1) and (192.9,49) .. (191.8,49) -- (106,49) .. controls (104.9,49) and (104,48.1) .. (104,47) -- cycle ;
\draw  [color={rgb, 255:red, 152; green, 0; blue, 0 }  ,draw opacity=1 ][fill={rgb, 255:red, 250; green, 255; blue, 198 }  ,fill opacity=0.49 ] (197.8,23) .. controls (197.8,21.9) and (198.7,21) .. (199.8,21) -- (285.6,21) .. controls (286.7,21) and (287.6,21.9) .. (287.6,23) -- (287.6,47) .. controls (287.6,48.1) and (286.7,49) .. (285.6,49) -- (199.8,49) .. controls (198.7,49) and (197.8,48.1) .. (197.8,47) -- cycle ;

\draw (125.33,115.07) node [anchor=north west][inner sep=0.75pt]  [font=\tiny]  {$S\sim \varphi $};
\draw (-1,17.4) node [anchor=north west][inner sep=0.75pt]  [font=\tiny]  {$ \begin{array}{l}
Z'_{1} \sim \dot{exp}\\
W\sim \varphi
\end{array}$};
\draw (95,17.4) node [anchor=north west][inner sep=0.75pt]  [font=\tiny]  {$ \begin{array}{l}
Z'_{2} \sim \ddot{exp}\\
V_{1} \sim \psi
\end{array}$};
\draw (187.8,17.4) node [anchor=north west][inner sep=0.75pt]  [font=\tiny]  {$ \begin{array}{l}
Z'_{2} \sim \ddot{exp}\\
V_{2} \sim \gamma
\end{array}$};
\draw (18.33,72.4) node [anchor=north west][inner sep=0.75pt]  [font=\tiny]  {$S=W-Z'_{1}$};
\draw (94.67,50.73) node [anchor=north west][inner sep=0.75pt]  [font=\tiny]  {$S=V_{1} -Z'_{2}$};
\draw (190,72.4) node [anchor=north west][inner sep=0.75pt]  [font=\tiny]  {$S=V_{2} -Z'_{2}$};
\end{tikzpicture}
      \caption{\small Absorbency}
      \label{absorbency}
    \end{subfigure}
    \caption{\small Demonstration of decomposability and absorbency.}
    \label{fig:my_label}
\end{figure}
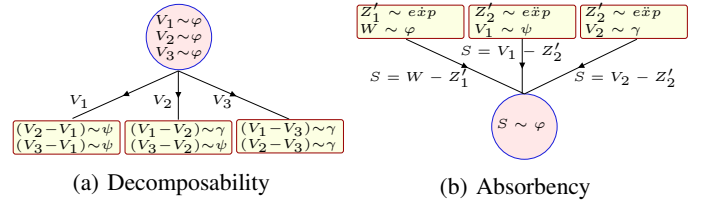
\begin{proposition}[Decomposability]\label{decomposability_props}
   Consider two EDVs {\small$V_1{\sim}\alpha$} and {\small$V_2{\sim}\alpha'$}, where {\small$\alpha$} and $\alpha'$ are
   (i) $\alpha,\alpha'=\varphi$
   or (ii) $\alpha,\alpha'=\psi$ or (iii) $\alpha,\alpha'=\gamma$ or (iv) $\alpha= \psi,\alpha'=\gamma$ or (v) $\alpha= \gamma,\alpha'=\psi$,
    and {\small$V_1 \dot{\preceq} V_2$}. EDV {\small$S_1{=}V_1{-}V_2$} follows {\small$\gamma$} distribution if {\small$S_1>0$}, and EDV {\small$S_2{=}V_2{-}V_1$} follows {\small$\psi$} distribution if {\small$S_2>0$}.
\end{proposition}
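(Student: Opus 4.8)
The plan is to collapse the five-way case distinction in (i)--(v) into a single computation, by first recording the \emph{duration} and \emph{pivot} signatures of the three e-distributions $\varphi$, $\psi$, and $\gamma$ and then invoking Conditional Closure (Lemma~\ref{conditional_closure}). The key preliminary observation, which I would establish directly from Definitions~\ref{duration}--\ref{pivot} together with Definitions~\ref{expdist}--\ref{gammadist}, is that all three distributions share the \emph{same} signature: for any EDV $V$ distributed as $\varphi$, $\psi$, or $\gamma$, its duration $\hat{\Gamma}(V)$ is a simple random variable with law $R_U$, while its pivot $\hat{\xi}(V)$ is an exponential random variable with parameter $\lambda_z$. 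For example, for $W = U - Z' \sim \varphi$ with $Z' = Z_1 - Z_2 \sim \dot{\exp}$, unwinding the recursions in Definitions~\ref{duration} and~\ref{reducer} gives $\hat{\Gamma}(W) = \hat{\Gamma}(U) = U$ and $\hat{\delta}(W) = Z_1$, whence $\hat{\xi}(W) = W + \hat{\delta}(W) - \hat{\Gamma}(W) = (U - Z') + Z_1 - U = Z_2$; the analogous computations for $Y \sim \psi$ and $X \sim \gamma$ return pivots $Z_1$ and $Z_2$ respectively, with the duration remaining an $R_U$-distributed SRV in every case. Because these signatures coincide, the admissible pairings $(\alpha, \alpha')$ become interchangeable for the purposes of the subtraction, and one argument handles all five cases.

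With the signatures established, I would apply Lemma~\ref{conditional_closure} to the pair $V_1 \dot{\preceq} V_2$. Writing $\Delta = \hat{\xi}(V_2) - \hat{\xi}(V_1)$, the lemma gives
\begin{equation}
S_1 = \bigl(\hat{\Gamma}(V_1) - \Delta\bigr) - \hat{\Gamma}(V_2), \qquad S_2 = \hat{\Gamma}(V_2) - \bigl(\hat{\Gamma}(V_1) - \Delta\bigr),
\end{equation}
under the side condition $\hat{\Gamma}(V_1) > \Delta > 0$. Since the pivots $\hat{\xi}(V_1)$ and $\hat{\xi}(V_2)$ are independent $\exp(\lambda_z)$ variables, their difference $\Delta$ is a difference of two independent exponentials, i.e.\ $\Delta \sim \dot{\exp}$ by Definition~\ref{expdist}. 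Setting $W = \hat{\Gamma}(V_1) - \Delta$, I have $W = U - Z'$ with $U \sim R_U$ an SRV and $Z' = \Delta \sim \dot{\exp}$ independent of $U$ (recruitment durations and sojourn-derived pivots are independent under the model), which is exactly the defining form of the $\varphi$ distribution (Definition~\ref{phidist}); the positivity $\hat{\Gamma}(V_1) > \Delta$ supplied by the lemma guarantees that $W$ is a genuine EDV.

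It then remains only to read off the conclusions by matching definitions, with $\hat{\Gamma}(V_2)$ playing the role of a fresh $R_U$-variable independent of $W$. For the first claim, $S_1 = W - \hat{\Gamma}(V_2) = W - U$ with $W \sim \varphi$, which is precisely the defining form of $\gamma$ (Definition~\ref{gammadist}); hence $S_1 \sim \gamma$ whenever $S_1 > 0$. Symmetrically, $S_2 = \hat{\Gamma}(V_2) - W = U - W$ with $W \sim \varphi$, matching the definition of $\psi$ (Definition~\ref{psidist}); hence $S_2 \sim \psi$ whenever $S_2 > 0$. The positivity hypotheses $S_1 > 0$ and $S_2 > 0$ are exactly what makes each difference a valid EDV.

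The step I expect to be the main obstacle is the bookkeeping behind the shared signature: correctly propagating the reducer and pivot through the nested subtractions defining $\varphi$, $\psi$, and $\gamma$ so as to confirm that the duration is invariably an $R_U$-distributed SRV and the pivot invariably $\exp(\lambda_z)$. This verification also ensures the hypothesis $V_1 \dot{\preceq} V_2$ --- in particular the reducer-equality $\hat{\delta}(V_1) = \hat{\delta}(V_2)$ it encodes --- is compatible with the distribution classes in play, so that Lemma~\ref{conditional_closure} is genuinely applicable; once that is in place the remaining manipulation is a direct substitution.
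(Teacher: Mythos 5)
Your proof is correct and follows essentially the same route as the paper's: both reduce the claim to Lemma~\ref{conditional_closure}, observe that $\hat{\Gamma}(V_1)-\big(\hat{\xi}(V_2)-\hat{\xi}(V_1)\big)$ has the defining form of a $\varphi$ EDV, and then read off $S_1\sim\gamma$ and $S_2\sim\psi$ from Definitions~\ref{gammadist} and~\ref{psidist}. The only difference is organizational --- the paper writes out the five cases (i)--(v) explicitly and notes that each yields the same durations and pivots, whereas you front-load that observation as a shared ``signature'' (duration $\sim R_U$, pivot $\sim\exp(\lambda_z)$) so that one computation covers all cases; the substance is identical.
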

\begin{proof}
See Appendix \ref{decomposability_props_proof}.
\end{proof}
To clarify the main implication of decomposability, Fig.~\ref{decomposability} illustrates an example of three {\small$\varphi$} EDVs {\small$V_1$, $V_2$}, and {\small$V_3$}, where {\small$V_1 \dot{\succeq} V_2 \dot{\succeq} V_3$}. Decomposability states that the distributions of different subtractions of these three EDVs, i.e., {\small$\{V_1{-}V_3,V_2{-}V_3\}$}, {\small$\{V_1{-}V_2,V_3{-}V_2\}$}, and {\small$\{V_2{-}V_1,V_3{-}V_1\}$}, follow three different permutations of {\small$\gamma$} and {\small$\psi$}, i.e., {\small$\{\gamma,\gamma\}$}, {\small$\{\gamma,\psi\}$}, and {\small$\{\psi,\psi\}$}.
\begin{proposition}[Absorbency]\label{absorbency_props}
    Consider four EDVs {\small$Z'_1{\sim}\dot{\mathfrak{X}}$}, {\small$Z'_2{\sim}\ddot{\mathfrak{X}}$}, {\small$W{\sim}\varphi$} and {\small$V{\sim}\alpha$}, where {\small$\alpha$} can be $\psi$ or $\gamma$, {\small$Z'_1\dot{\preceq} W$}, and {\small$Z'_2\dot{\preceq} V$}. EDVs {\small$S_1=W{-}Z'_1$} and {\small$S_2=V{-}Z'_2$} follow {\small$\varphi$} distribution if {\small$S_1>0$} and {\small$S_2>0$}. Conversely, {\small$S_3=Z'_1{-}W$} and {\small$S_4=Z'_2{-}V$} follow {\small$\ddot{\mathfrak{X}}$} distribution if {\small$S_3>0$} and {\small$S_4>0$}.
\end{proposition}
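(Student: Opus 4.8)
The plan is to reduce each of the four claims to a single application of Conditional Closure (Lemma~\ref{conditional_closure}), since that lemma already expresses any subtraction $V_2-V_1$ of two EDVs with $V_1\dot{\preceq}V_2$ purely through their durations and pivots, with the reducers cancelling. I would therefore first unfold the recursive definitions of duration, reducer, and pivot (Definitions~\ref{duration}--\ref{pivot}) for each operand $Z'_1\sim\dot{\exp}$, $Z'_2\sim\ddot{\exp}$, $W\sim\varphi$, and $V\sim\psi$ or $\gamma$, and then recognize the structural form of the output and match it against the defining construction of $\varphi$ (Definition~\ref{phidist}) or $\ddot{\exp}$ (Definition~\ref{expdist}). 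The given hypotheses $Z'_1\dot{\preceq}W$ and $Z'_2\dot{\preceq}V$ guarantee equal reducers and ordered pivots, so Conditional Closure applies directly to every subtraction.

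Carrying out the bookkeeping, the recursions give $\hat{\Gamma}(Z'_1)=\hat{\Gamma}(Z'_2)=Z$, an exponential base, whereas $\hat{\Gamma}(W)=\hat{\Gamma}(V)=U$, a general base; moreover the pivots of the $\varphi/\psi/\gamma$ operands evaluate to an exponential-valued offset while $\hat{\xi}(Z'_1)=\hat{\xi}(Z'_2)=0$. For the first claim, applying Conditional Closure to $S_1=W-Z'_1$ yields $S_1=\hat{\Gamma}(W)-\big(\hat{\Gamma}(Z'_1)-(\hat{\xi}(W)-\hat{\xi}(Z'_1))\big)=U-(Z-Z)$; here the pivot difference contributes exactly the second exponential, so the subtracted term is a difference of two exponentials, i.e.\ a $\dot{\exp}$ variable, and hence $S_1\sim\varphi$ by Definition~\ref{phidist}. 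Running the same identity in the opposite order gives $S_3=Z'_1-W=(Z-Z)-U$, which is precisely the defining form $Z-(Z+U)$ of $\ddot{\exp}$ in Definition~\ref{expdist}.

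For the remaining claims I would treat the $V\sim\psi$ and $V\sim\gamma$ cases separately, since $\psi$ is built as $U-W$ and $\gamma$ as $W-U$, so their pivot computations differ in sign structure; in both, however, the duration is the general base $U$ and the pivot reduces to an exponential offset. Feeding these into Conditional Closure against $Z'_2\sim\ddot{\exp}$ (exponential duration, zero pivot) collapses $S_2=V-Z'_2$ to $U-(Z-Z)\sim\varphi$ and $S_4=Z'_2-V$ to $(Z-Z)-U\sim\ddot{\exp}$, matching the two target distributions exactly. The positivity assumptions $S_1,\dots,S_4>0$ supply the side condition $\hat{\Gamma}(V_1)>\hat{\xi}(V_2)-\hat{\xi}(V_1)>0$ needed for Conditional Closure to certify each result as a bona fide EDV.

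The main obstacle I anticipate is the recursive pivot/reducer computation itself: one must verify that, under the shared-reducer constraint imposed by $\dot{\preceq}$, the pivot difference $\hat{\xi}(V_2)-\hat{\xi}(V_1)$ genuinely carries an exponential law and slots into the duration to regenerate a $\dot{\exp}$ structure, and that this holds consistently across the nested definitions of $\varphi$, $\psi$, and $\gamma$. Getting the signs right in the $\gamma$ case (where the general base sits on the \emph{left} of the subtraction) and confirming that both $\psi$ and $\gamma$ funnel to the same $\varphi$/$\ddot{\exp}$ output is the delicate part; the rest is routine substitution into the Conditional Closure formulas.
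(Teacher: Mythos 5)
Your proposal follows essentially the same route as the paper's own proof: both reduce each of the four subtractions to a single application of Conditional Closure, compute the durations and pivots of the operands (with $\hat{\xi}(Z'_1)=\hat{\xi}(Z'_2)=0$ and the pivot of the $\varphi/\psi/\gamma$ operand supplying the second exponential), and match the resulting forms $U-(Z_a-Z_b)$ and $(Z_a-Z_b)-U$ against the defining constructions of $\varphi$ and $\ddot{\exp}$, splitting the $\psi$ and $\gamma$ cases exactly as the paper does. The only difference is presentational — the paper instantiates explicit representatives such as $W=U-(Z_2-Z_1)$ while you work directly with the $\hat{\Gamma},\hat{\xi},\hat{\delta}$ functionals — so this is correct and not a genuinely different argument.
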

\begin{proof}
See Appendix \ref{absorbency_props_proof}.
\end{proof}
Referring to Fig.~\ref{absorbency} as an example, consider EDVs {\small$Z'_1{\sim}\dot{\mathfrak{X}}$}, {\small$Z'_2{\sim}\ddot{\mathfrak{X}}$}, {\small$W{\sim}\varphi$}, {\small$V_1{\sim}\psi$}, and {\small$V_2{\sim}\gamma$}. Absorbency implies that {\small$W{-}Z'_1$}, {\small$V_1{-}Z'_2$}, and {\small$V_2{-}Z'_2$} result in the same EDV following $\varphi$ distribution. Decomposability and absorbency are utilized to prove Proposition~\ref{proposition3} and Proposition~\ref{proposition4}, introduced below, characterizing the residual occurring times of the departure and recruitment events in ${\mathsf{SeS}}_{\mathsf{VC}}$ after the occurrence of an event.\par
\textbf{Special EDLs and their Relationships.} In the following, we present an example to motivate Proposition~\ref{proposition3} and Proposition~\ref{proposition4}. Let $\mathcal{C}$ denote the set of vehicles utilized for processing application $\mathcal{A}$. Also, let $\mathcal{Z}$ and $\mathcal{U}$ denote two sets of random variables, where $Z_{\ell}\in \mathcal{Z}$ follows a general distribution $\mathfrak{D}_Z(z)$ referring to the sojourn time of vehicle $C_{\ell}\in \mathcal{C}$ and $U_{\ell}\in \mathcal{U}$ follows general distribution $\mathfrak{R}_U(u)$ referring to the recruitment duration of vehicle $C_{\ell}$. Consider deployment {\small$\mathcal{D}\Big(\mathcal{M}\big(\mathcal{P}(\mathcal{A}),t\big)\Big)$}, where {\small$\Big|\mathcal{D}\Big(\mathcal{M}\big(\mathcal{P}(\mathcal{A}),t\big)\Big)\Big|=n$}. Further, consider time instances $\tau=\{t_0,t_1,t_2\}$, where $t_0$ is the time that the processing of application $\mathcal{A}$ is started. Assume that {\small$\mathcal{D}\Big(\mathcal{M}\big(\mathcal{P}(\mathcal{A}),t\big)\Big)$} stays unchanged for $t\in \tau$ (i.e., the number of groups is fixed). At time $t_0$, {\small$2n$} vehicles are allocated to the groups of {\small$\mathcal{D}\Big(\mathcal{M}\big(\mathcal{P}(\mathcal{A}),t\big)\Big)$}. Assume that group $G_{h_1}$ is deployed on vehicles $C_{\ell_1}\in \mathcal{C}$ and $C_{\ell'_1}\in \mathcal{C}$, and group $G_{h_2}$ is deployed on vehicles $C_{\ell_2}\in \mathcal{C}$ and $C_{\ell'_2}\in \mathcal{C}$. Further, assume that vehicle {\small$C_{\ell_1}$} departs the VC at time $t_1$. {\small$Z_{\ell_1}\in \mathcal{Z}$} should be subtracted from the sojourn times of other vehicles because {\small$Z_{\ell_1}$} time units have passed since the start of processing $\mathcal{A}$ (i.e., $t_0$). Accordingly, we model the residual sojourn times of the vehicles (i.e., $Z_{\ell_i}-Z_{\ell_1}$, where $Z_{\ell_i}\in\mathcal{Z}$) as an EDL {\small$\mathcal{L}^{(\dot{\mathfrak{X}})}\hspace{-1.4mm}\left\langle 2n{-}1\right\rangle$}. Further, at time $t_1$, {\small$C_{\ell'_1}$} starts recruiting a new vehicle. Likewise, assume that vehicle {\small$C_{\ell_2}$} departs the VC at time $t_2$. Hence, {\small$Z_{\ell_2}{-}Z_{\ell_1}$} should be subtracted from the residual sojourn times and residual recruitment duration of the other vehicles because {\small$Z_{\ell_2}{-}Z_{\ell_1}$} time units have passed since $t_1$. Consequently, we have {\small$\mathcal{L}_2^{(\dot{\mathfrak{X}})}\hspace{-1.4mm}\left\langle 2n{-}2\right\rangle=\mathcal{L}_1^{(\dot{\mathfrak{X}})}\hspace{-1.4mm}\left\langle 2n{-}1\right\rangle \circleddash (Z_{\ell_2}{-}Z_{\ell_1})$} and {\small$U_{\ell'_1}{-}(Z_{\ell_2}{-}Z_{\ell_1})$}. Also, at time $t_2$, vehicle {\small$C_{\ell'_2}$} starts recruiting a new vehicle. It is straightforward to verify that the residual recruitment times of recruiters $C_{\ell'_1}$ and $C_{\ell'_2}$ (i.e., {\small$U_{\ell'_1}{-}(Z_{\ell_2}{-}Z_{\ell_1})$} and {\small$U_{\ell'_2}$}) can be modeled as an H-EDL {\small$\mathcal{H}^{(\varphi;\mathfrak{R})}\hspace{-1.4mm}\left\langle 1 \hspace{-0.3mm};\hspace{-0.3mm} 1 \right\rangle$}.\par
In the following, we generalize the above-mentioned example. We first define the following two special H-EDLs.
\begin{enumerate}
    \item {\small$\mathcal{H}_1^{(\gamma;\psi)}\hspace{-1.4mm}\left\langle n\hspace{-0.3mm};\hspace{-0.3mm} m \right\rangle$} : An H-EDL with $n+m$ elements, in which the first $n$ elements follow {\small$\gamma$} distribution and the last $m$ elements follow {\small$\psi$} distribution.
    \item {\small$\mathcal{H}_2^{(\varphi;\mathfrak{R})}\hspace{-1.4mm}\left\langle n{-}1; 1 \right\rangle$}: An H-EDL with $n$ elements, in which the first $n-1$ elements follow {\small$\varphi$} distribution and the last one follows a general distribution {\small$R$}.
\end{enumerate}
Accordingly, we obtain the following important technical results, utilized to prove decomposition theorem later. It is worth mentioning that decomposition theorem will be utilized to define the states of D-SMP model of {\tt RP-VC$_n$}.
\begin{proposition}\label{proposition3}
Consider two EDLs $\mathcal{H}_1^{(\varphi;\mathfrak{R})}\hspace{-1.4mm}\left\langle n{-}1 \hspace{-0.3mm};\hspace{-0.3mm} 1 \right\rangle$ and $\mathcal{L}_1^{(\dot{\mathfrak{X}})}\hspace{-1.4mm}\left\langle m\right\rangle$, where $\mathcal{L}_1^{(\dot{\mathfrak{X}})}\hspace{-1.4mm}\left\langle m\right\rangle \dot{\preceq} \mathcal{H}_1^{(\varphi;\mathfrak{R})}\hspace{-1.4mm}\left\langle n{-}1 \hspace{-0.3mm};\hspace{-0.3mm} 1 \right\rangle$. The following statements hold:
\begin{enumerate}[leftmargin=5mm]
  \item If $\mathcal{H}_1^{(\varphi;\mathfrak{R})}\hspace{-1.4mm}\left\langle n{-}1 \hspace{-0.3mm};\hspace{-0.3mm} 1 \right\rangle\hspace{-0.7mm}(k) >\vspace{1mm} \mathcal{H}_1^{(\varphi;\mathfrak{R})}\hspace{-1.4mm}\left\langle n{-}1 \hspace{-0.3mm};\hspace{-0.3mm} 1 \right\rangle\hspace{-0.7mm}(i)$ for $i\neq k$ and $\mathcal{L}_1^{(\dot{\mathfrak{X}})}\hspace{-1.4mm}\left\langle m\right\rangle\hspace{-0.7mm}(r) >\mathcal{H}_1^{(\varphi;\mathfrak{R})}\hspace{-1.4mm}\left\langle n{-}1 \hspace{-0.3mm};\hspace{-0.3mm} 1 \right\rangle\hspace{-0.7mm}(i)$, then \footnote{$\mathcal{H}_1^{(\varphi;\mathfrak{R})}\hspace{-1.4mm}\left\langle n{-}1 \hspace{-0.3mm};\hspace{-0.3mm} 1 \right\rangle\hspace{-0.7mm}(k)$ refers to the $k^{th}$ element of EDL $\mathcal{H}_1^{(\varphi;\mathfrak{R})}\hspace{-1.4mm}\left\langle n{-}1 \hspace{-0.3mm};\hspace{-0.3mm} 1 \right\rangle\hspace{-0.1mm}$.}
\begin{equation}\label{eqq1}
\begin{aligned}
\hspace{-2mm}
    \mathcal{H}_1^{(\varphi;\mathfrak{R})}\hspace{-1.4mm}\left\langle n{-}1 \hspace{-0.3mm};\hspace{-0.3mm} 1 \right\rangle &\hspace{-0.5mm}\circleddash\hspace{-0.5mm} \mathcal{H}_1^{(\varphi;\mathfrak{R})}\hspace{-1.4mm}\left\langle n{-}1 \hspace{-0.3mm};\hspace{-0.3mm} 1 \right\rangle\hspace{-0.7mm}(i){=}\hspace{-0.5mm}\mathcal{H}_2^{(\gamma;\psi)}\hspace{-1.4mm}\left\langle i{-}1 \hspace{-0.3mm};\hspace{-0.3mm} n{-}i \right\rangle\hspace{-0.5mm},
\hspace{-4mm}
\end{aligned}
\end{equation}
and
\begin{equation}\label{eq21}
\hspace{-6mm}
    \mathcal{L}_1^{(\dot{\mathfrak{X}})}\hspace{-1.4mm}\left\langle m\right\rangle \hspace{-0.6mm}\circleddash\hspace{-0.6mm} \mathcal{H}_1^{(\varphi;\mathfrak{R})}\hspace{-1.4mm}\left\langle n{-}1 \hspace{-0.3mm};\hspace{-0.3mm} 1 \right\rangle\hspace{-0.7mm}(i) \hspace{-0.6mm}{=}\hspace{-0.6mm} \mathcal{L}_2^{(\ddot{\mathfrak{X}})}\hspace{-1.4mm}\left\langle m\right\rangle\hspace{-0.5mm},
\hspace{-6mm}
\end{equation}
where $\mathcal{H}_2^{(\gamma;\psi)}\hspace{-1.4mm}\left\langle i{-}1 \hspace{-0.3mm};\hspace{-0.3mm} n{-}i \right\rangle \dot{\succeq} \mathcal{L}_2^{(\ddot{\mathfrak{X}})}\hspace{-1.4mm}\left\langle m\right\rangle$.
\vspace{1mm}
\item If $\mathcal{H}_1^{(\varphi;\mathfrak{R})}\hspace{-1.4mm}\left\langle n{-}1 \hspace{-0.3mm};\hspace{-0.3mm} 1 \right\rangle\hspace{-0.7mm}(k)>\vspace{1mm} \mathcal{L}_1^{(\dot{\mathfrak{X}})}\hspace{-1.4mm}\left\langle m\right\rangle\hspace{-0.7mm}(r)$, then
\begin{equation}\label{eq22}
\hspace{-9mm}
    \mathcal{H}_1^{(\varphi;\mathfrak{R})}\hspace{-1.4mm}\left\langle n{-}1 \hspace{-0.3mm};\hspace{-0.3mm} 1 \right\rangle \hspace{-0.6mm}\circleddash\hspace{-0.6mm} \mathcal{L}_1^{(\dot{\mathfrak{X}})}\hspace{-1.4mm}\left\langle m\right\rangle\hspace{-0.7mm}(r)\hspace{-0.6mm}=\hspace{-0.6mm} \mathcal{L}_3^{(\varphi)}\hspace{-1.4mm}\left\langle n\right\rangle\hspace{-0.6mm},
\hspace{-6mm}
\end{equation}
and
\begin{equation}\label{eq23}
\hspace{-9mm}
    \mathcal{L}_1^{(\dot{\mathfrak{X}})}\hspace{-1.4mm}\left\langle m\right\rangle \hspace{-0.6mm}\circleddash\hspace{-0.6mm} \mathcal{L}_1^{(\dot{\mathfrak{X}})}\hspace{-1.4mm}\left\langle m\right\rangle\hspace{-0.7mm}(r)\hspace{-0.6mm}= \hspace{-0.6mm}\mathcal{L}_4^{(\dot{\mathfrak{X}})}\hspace{-1.4mm}\left\langle m{-}1\right\rangle\hspace{-0.6mm},
\hspace{-6mm}
\end{equation}
where $\mathcal{L}_3^{(\varphi)}\hspace{-1.4mm}\left\langle n\right\rangle {\dot{\succeq}} \mathcal{L}_4^{(\dot{\mathfrak{X}})}\hspace{-1.4mm}\left\langle m{-}1\right\rangle$.
\end{enumerate}
\end{proposition}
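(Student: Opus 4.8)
The plan is to exploit the fact that the subtraction operator $\circleddash$ acts entry-by-entry, so each of the two claimed identities reduces to (a) determining the $\langle e\rangle$-distribution of a single difference $V_j-S$, where $S$ is the subtracted element, and (b) a bookkeeping argument for the length of the resulting list and its pivotal order. In both items the subtracted element is the one whose \emph{realized} occurrence time is smallest: a completing recruiter $\mathcal{H}_1^{(\varphi;R)}\langle n{-}1;1\rangle(i)$ in item~1, and a departing vehicle $\mathcal{L}_1^{(\dot{\exp})}\langle m\rangle(r)$ in item~2. Since $S-S=0$, the subtracted entry is annihilated and the list containing it loses exactly one element, which already accounts for the index shifts $n\mapsto n-1$ in $\mathcal{H}_2^{(\gamma;\psi)}\langle i{-}1;n{-}i\rangle$ and $m\mapsto m-1$ in $\mathcal{L}_4^{(\dot{\exp})}\langle m{-}1\rangle$. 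The hypotheses ($\mathcal{H}_1(k)>\mathcal{H}_1(i)$ and $\mathcal{L}_1(r)>\mathcal{H}_1(i)$ in item~1; $\mathcal{H}_1(k)>\mathcal{L}_1(r)$ in item~2) guarantee that $S$ is the realized minimum, so every surviving difference is positive and is therefore a genuine EDV by Lemma~\ref{conditional_closure}.

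For item~1 I would classify the $n-1$ surviving entries of $\mathcal{H}_1^{(\varphi;R)}\langle n{-}1;1\rangle \circleddash \mathcal{H}_1^{(\varphi;R)}\langle n{-}1;1\rangle(i)$ by their pivotal position relative to $S$. Recalling that the H-EDL construction places the $\varphi$-block before the pivotally largest $R$-entry, the $i-1$ pivotally smaller entries give (smaller)$-$(larger) differences, hence $\gamma$ by Proposition~\ref{decomposability_props} (case~(i)), while the pivotally larger $\varphi$-entries give (larger)$-$(smaller) differences, hence $\psi$; the lone $R$-entry contributes $U-W$ with $W\sim\varphi$, which is $\psi$ directly by Definition~\ref{psidist}. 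This produces exactly the $\langle i{-}1;n{-}i\rangle$ split of $\mathcal{H}_2^{(\gamma;\psi)}$. For the companion identity, subtracting the same $\varphi$-element from each $\dot{\exp}$-entry of $\mathcal{L}_1$ is precisely the situation $Z'_1-W$ of Proposition~\ref{absorbency_props}, so every entry becomes $\ddot{\exp}$ with the length preserved, giving $\mathcal{L}_2^{(\ddot{\exp})}\langle m\rangle$. Finally $\mathcal{H}_2\dot{\succeq}\mathcal{L}_2$ follows from the hypothesis $\mathcal{L}_1\dot{\preceq}\mathcal{H}_1$ by applying Lemma~\ref{subtraction_inequality} entry-by-entry, since the same $\mathcal{H}_1(i)$ is subtracted from both lists and pivotal order is preserved.

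Item~2 is handled symmetrically, now subtracting the departing sojourn element $\mathcal{L}_1^{(\dot{\exp})}\langle m\rangle(r)$. Each $\varphi$-entry of $\mathcal{H}_1$ yields $W-Z'_1$, which remains $\varphi$ by Proposition~\ref{absorbency_props}, and the single $R$-entry yields $U-Z'_1$ with $Z'_1\sim\dot{\exp}$, which is $\varphi$ directly by Definition~\ref{phidist}; hence all $n$ entries become $\varphi$ and we obtain $\mathcal{L}_3^{(\varphi)}\langle n\rangle$. The remaining $m-1$ sojourn entries give $\dot{\exp}-\dot{\exp}$ differences, which I would argue stay $\dot{\exp}$ by the memorylessness of the exponential sojourn times: concretely, $(Z_{\ell_i}-Z_{\ell_1})-(Z_{\ell_2}-Z_{\ell_1})$ collapses to $Z_{\ell_i}-Z_{\ell_2}$, again a difference of i.i.d.\ exponentials, yielding $\mathcal{L}_4^{(\dot{\exp})}\langle m{-}1\rangle$. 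The ordering $\mathcal{L}_3\dot{\succeq}\mathcal{L}_4$ again follows from $\mathcal{L}_1\dot{\preceq}\mathcal{H}_1$ via Lemma~\ref{subtraction_inequality}.

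The main obstacle I anticipate is the bookkeeping around the single general-distribution entry $R$, which is \emph{not} covered by Propositions~\ref{decomposability_props}--\ref{absorbency_props}: one must verify that its interactions are captured exactly by the defining relations $U-\dot{\exp}\sim\varphi$ (Definition~\ref{phidist}) and $U-\varphi\sim\psi$ (Definition~\ref{psidist}), and that the pivotally largest position of the $R$-entry forces it into the $\psi$-block in item~1 and lets it be absorbed into the $\varphi$-block in item~2 (with the analogous edge case, where $S$ is itself the $R$-entry, handled by $W-U\sim\gamma$). A secondary technical point is justifying the $\dot{\exp}-\dot{\exp}\to\dot{\exp}$ closure through exponential memorylessness rather than through the $\langle e\rangle$-algebra, and checking throughout that the positivity and pivotal-comparability side conditions of Lemmas~\ref{conditional_closure} and~\ref{subtraction_inequality} are satisfied so that each entry-wise difference is a well-defined EDV.
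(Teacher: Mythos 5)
Your proposal is correct and follows essentially the same route as the paper's own proof: an entry-wise reduction in which the surviving differences are classified by pivotal position using the decomposability proposition (for the $\varphi/\psi/\gamma$ pairs), the absorbency proposition (for the $\dot{\exp}$ and $\ddot{\exp}$ interactions), the defining relations of $\varphi$, $\psi$, $\gamma$, and $\ddot{\exp}$ for the single $R$-entry (including the edge case where the subtracted element is itself the $R$-entry), and Lemma~\ref{subtraction_inequality} applied entry-by-entry to preserve the pivotal order and hence the EDL structure and the final $\dot{\succeq}$ relations. The only point where you go slightly beyond the paper is the $\dot{\exp}-\dot{\exp}$ closure in \eqref{eq23}, which the paper merely asserts and you justify concretely by collapsing $(Z_{\ell_i}-Z_{\ell_1})-(Z_{\ell_2}-Z_{\ell_1})$ to $Z_{\ell_i}-Z_{\ell_2}$, again a difference of i.i.d.\ exponentials and hence $\dot{\exp}$ by definition.
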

\begin{proof}
See Appendix \ref{proposition3_proof}.
\end{proof}
\begin{proposition}\label{proposition4}
Consider two EDLs $\mathcal{H}_1^{(\gamma;\psi)}\hspace{-1.4mm}\left\langle n \hspace{-0.3mm};\hspace{-0.3mm} m \right\rangle$ and $\mathcal{L}_1^{(\ddot{\mathfrak{X}})}\hspace{-1.4mm}\left\langle h\right\rangle$, where $\mathcal{L}_1^{(\ddot{\mathfrak{X}})}\hspace{-1.4mm}\left\langle h\right\rangle \dot{\preceq} \mathcal{H}_1^{(\gamma;\psi)}\hspace{-1.4mm}\left\langle n \hspace{-0.3mm};\hspace{-0.3mm} m \right\rangle$. The following statements hold:
\begin{enumerate}[leftmargin=5mm]
  \item If $\mathcal{H}_1^{(\gamma;\psi)}\hspace{-1.4mm}\left\langle n \hspace{-0.3mm};\hspace{-0.3mm} m \right\rangle\hspace{-0.7mm}(k)>\vspace{1mm} \mathcal{H}_1^{(\gamma;\psi)}\hspace{-1.4mm}\left\langle n \hspace{-0.3mm};\hspace{-0.3mm} m \right\rangle\hspace{-0.7mm}(i)$ for $i \neq k$ and $\mathcal{L}_1^{(\ddot{\mathfrak{X}})}\hspace{-1.4mm}\left\langle h\right\rangle\hspace{-0.7mm}(r)> \mathcal{H}_1^{(\gamma;\psi)}\hspace{-1.4mm}\left\langle n \hspace{-0.3mm};\hspace{-0.3mm} m \right\rangle\hspace{-0.7mm}(i)$, we have
\begin{equation}\label{pro4_1}
\hspace{-6mm}
    \mathcal{H}_1^{(\gamma;\psi)}\hspace{-1.4mm}\left\langle n \hspace{-0.3mm};\hspace{-0.3mm} m \right\rangle \hspace{-0.5mm}{\circleddash} \mathcal{H}_1^{(\gamma;\psi)}\hspace{-1.4mm}\left\langle n \hspace{-0.3mm};\hspace{-0.3mm} m \right\rangle\hspace{-0.7mm}(i){=} \mathcal{H}_2^{(\gamma;\psi)} \hspace{-1.4mm}\left\langle i{-}1 \hspace{-0.3mm};\hspace{-0.3mm} m{+}n{-}i \right\rangle\hspace{-1mm},
    \hspace{-6mm}
\end{equation}
and
\begin{equation}\label{pro4_2}
\hspace{-6mm}
    \mathcal{L}_1^{(\ddot{\mathfrak{X}})}\hspace{-1.4mm}\left\langle h\right\rangle \circleddash \mathcal{H}_1^{(\gamma;\psi)}\hspace{-1.4mm}\left\langle n \hspace{-0.3mm};\hspace{-0.3mm} m \right\rangle\hspace{-0.7mm}(i)= \mathcal{L}_2^{(\ddot{\mathfrak{X}})}\hspace{-1.4mm}\left\langle h\right\rangle \hspace{-.6mm},
\hspace{-6mm}
\end{equation}
where $\mathcal{H}_2^{(\gamma;\psi)} \hspace{-1.4mm}\left\langle i{-}1 \hspace{-0.3mm};\hspace{-0.3mm} m{+}n{-}i \right\rangle \dot{\succeq} \mathcal{L}_2^{(\ddot{\mathfrak{X}})}\hspace{-1.4mm}\left\langle h\right\rangle$.
\vspace{1mm}
\item If $\mathcal{H}_1^{(\gamma;\psi)}\hspace{-1.4mm}\left\langle n \hspace{-0.3mm};\hspace{-0.3mm} m \right\rangle\hspace{-0.7mm}(k)>\vspace{1mm} \mathcal{L}_1^{(\ddot{\mathfrak{X}})}\hspace{-1.4mm}\left\langle h\right\rangle\hspace{-0.7mm}(r)$, we have
\begin{equation}\label{pro4_3}
\hspace{-6mm}
    \mathcal{H}_1^{(\gamma;\psi)}\hspace{-1.4mm}\left\langle n \hspace{-0.3mm};\hspace{-0.3mm} m \right\rangle \circleddash \mathcal{L}_1^{(\ddot{\mathfrak{X}})}\hspace{-1.4mm}\left\langle h\right\rangle\hspace{-0.7mm}(r)\hspace{-.5mm}=\hspace{-.5mm} \mathcal{L}_3^{(\varphi)}\hspace{-1.4mm}\left\langle m+n\right\rangle\hspace{-.5mm},
\hspace{-6mm}
\end{equation}
and
\begin{equation}\label{pro4_4}
\hspace{-6mm}
    \mathcal{L}_1^{(\ddot{\mathfrak{X}})}\hspace{-1.4mm}\left\langle h\right\rangle \circleddash \mathcal{L}_1^{(\ddot{\mathfrak{X}})}\hspace{-1.4mm}\left\langle h\right\rangle\hspace{-0.7mm}(r) =\mathcal{L}_4^{(\dot{\mathfrak{X}})}\hspace{-1.4mm}\left\langle h-1\right\rangle,
\hspace{-6mm}
\end{equation}
where $\mathcal{L}_3^{(\varphi)}\hspace{-1.4mm}\left\langle m+n\right\rangle \dot{\succeq} \mathcal{L}_4^{(\dot{\mathfrak{X}})}\hspace{-1.4mm}\left\langle h-1\right\rangle$.
\end{enumerate}
\end{proposition}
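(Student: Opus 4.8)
The plan is to mirror the argument used for Proposition~\ref{proposition3}, since the two statements are structurally identical: in each case we subtract one distinguished element of a heterogeneous list from both lists and track how every resulting difference is ``relabeled'' by the algebraic rules established earlier. The key observation is that every element of $\mathcal{H}_1^{(\gamma;\psi)}\langle n;m\rangle$ follows either $\gamma$ or $\psi$, every element of $\mathcal{L}_1^{(\ddot{\exp})}\langle h\rangle$ follows $\ddot{\exp}$, and the two lists are pivotally ordered with $\mathcal{L}_1^{(\ddot{\exp})}\langle h\rangle \dot{\preceq}\mathcal{H}_1^{(\gamma;\psi)}\langle n;m\rangle$. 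I would therefore reduce the whole proposition to element-wise applications of Decomposability (Proposition~\ref{decomposability_props}) and Absorbency (Proposition~\ref{absorbency_props}), followed by an ordering check via Subtraction Inequality (Lemma~\ref{subtraction_inequality}); the stated realization inequalities (e.g.\ $\mathcal{L}_1^{(\ddot{\exp})}\langle h\rangle(r)>\mathcal{H}_1^{(\gamma;\psi)}\langle n;m\rangle(i)$) are exactly the $S>0$ positivity hypotheses those propositions require.

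For Part~1, I would subtract the distinguished element $\mathcal{H}_1^{(\gamma;\psi)}\langle n;m\rangle(i)$, which is $\gamma$ when $i\le n$ and $\psi$ otherwise. For the self-subtraction \eqref{pro4_1}, every remaining element of $\mathcal{H}_1$ is a $\gamma$/$\psi$ difference, so Decomposability applies case by case (covering combinations (ii)--(v)): the $i-1$ elements pivotally below index $i$ become $\gamma$ and the $n+m-i$ elements above become $\psi$, yielding exactly $\mathcal{H}_2^{(\gamma;\psi)}\langle i-1;m+n-i\rangle$. For the cross-subtraction \eqref{pro4_2}, each $\ddot{\exp}$ element minus the pivotally-larger $\gamma$/$\psi$ element is precisely the $S_4=Z'_2-V$ branch of Absorbency, so every difference returns $\ddot{\exp}$, giving $\mathcal{L}_2^{(\ddot{\exp})}\langle h\rangle$. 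The residual ordering $\mathcal{H}_2\dot{\succeq}\mathcal{L}_2$ then follows because the same element is subtracted from both pivotally-ordered lists, so Lemma~\ref{subtraction_inequality} preserves the order.

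For Part~2, the distinguished element is now the $\ddot{\exp}$ term $\mathcal{L}_1^{(\ddot{\exp})}\langle h\rangle(r)$. Subtracting it from each $\gamma$/$\psi$ element of $\mathcal{H}_1$ is the $S_2=V-Z'_2$ branch of Absorbency, which collapses every difference to $\varphi$, producing the homogeneous list $\mathcal{L}_3^{(\varphi)}\langle m+n\rangle$ of \eqref{pro4_3}; the ordering $\mathcal{L}_3\dot{\succeq}\mathcal{L}_4$ is again handled by Lemma~\ref{subtraction_inequality}.

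The hard part will be \eqref{pro4_4}, the self-subtraction $\mathcal{L}_1^{(\ddot{\exp})}\langle h\rangle\circleddash\mathcal{L}_1^{(\ddot{\exp})}\langle h\rangle(r)=\mathcal{L}_4^{(\dot{\exp})}\langle h-1\rangle$: a difference of two $\ddot{\exp}$ EDVs is covered by neither Decomposability nor Absorbency, yet the claim is that it \emph{downgrades} to $\dot{\exp}$. I would attack this directly through Conditional Closure (Lemma~\ref{conditional_closure}), writing each $\ddot{\exp}$ element as $Z_1-(Z_2+U)$ so that both participants share an exponential duration $\hat{\Gamma}(\cdot)$; subtracting the pivotally-smallest such element and invoking the memorylessness of the exponential sojourn times should cancel the common recruitment component and leave a plain difference of two exponentials, i.e.\ a $\dot{\exp}$ EDV. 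This is exactly the analogue of \eqref{eq23} in Proposition~\ref{proposition3} (where $\dot{\exp}-\dot{\exp}=\dot{\exp}$), and verifying that the composite structure truly collapses one ``order'' under subtraction --- rather than producing a new two-term EDV --- is where the delicate bookkeeping with $\hat{\Gamma}$, $\hat{\delta}$, and $\hat{\xi}$ must be carried out with care.
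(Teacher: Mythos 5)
Your proposal is correct and, for \eqref{pro4_1}, \eqref{pro4_2}, and \eqref{pro4_3}, it is essentially the paper's own proof: the paper likewise establishes the pivotal ordering from Definition~\ref{event_dynamic_list}, applies Decomposability (Proposition~\ref{decomposability_props}) element by element to split the self-subtracted heterogeneous list into the $\gamma$-block below index $i$ and the $\psi$-block above it, invokes the two branches of Absorbency (Proposition~\ref{absorbency_props}) for the cross-subtractions exactly as you describe ($Z'_2-V\sim\ddot{\exp}$ for \eqref{pro4_2}, $V-Z'_2\sim\varphi$ for \eqref{pro4_3}), and uses the Subtraction Inequality (Lemma~\ref{subtraction_inequality}) to carry the orderings through.

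The one place you diverge is \eqref{pro4_4}, and there your instinct is sharper than the paper's. The paper disposes of the $\ddot{\exp}\circleddash\ddot{\exp}$ step by citing Proposition~\ref{absorbency_props} again, but, as you observe, Absorbency only lists the four cases $W-Z'_1$, $V-Z'_2$, $Z'_1-W$, $Z'_2-V$ and contains no $\ddot{\exp}-\ddot{\exp}$ branch; the citation is a gloss (mirroring the equally terse ``it can be shown'' used for $\dot{\exp}-\dot{\exp}$ in the proof of \eqref{eq23}). Your plan --- apply Conditional Closure (Lemma~\ref{conditional_closure}) to write the difference as $\bigl(\hat{\Gamma}(V_1)-(\hat{\xi}(V_2)-\hat{\xi}(V_1))\bigr)-\hat{\Gamma}(V_2)$, note that both durations are exponential, and use memorylessness of the shifted duration to land in the form $Z_1-Z_2$ of Definition~\ref{expdist} --- supplies exactly the justification the paper omits, so you should carry it out rather than treat it as optional. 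The only caution is to keep the bookkeeping consistent with the paper's convention that the conclusion is a distributional relabeling conditioned on positivity of the difference; with that in place, your route proves the same statement with one step made explicit that the paper leaves implicit.
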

\begin{proof}
See Appendix \ref{proposition4_proof}.
\end{proof}
To understand the implications of the results of the above propositions, we consider an example of the result given by (\ref{eqq1}) in Proposition~\ref{proposition3} (the same argument holds for the other parts of Proposition~\ref{proposition3} and Proposition~\ref{proposition4}). Assume that there are $m$ computing vehicles in a semi-dynamic VC, from which $n{<}m$ vehicles are busy recruiting new vehicles. Further, assume that {\small$\mathcal{H}_1^{(\varphi;\mathfrak{R})}\hspace{-1.4mm}\left\langle n-1 \hspace{-0.3mm};\hspace{-0.3mm} 1 \right\rangle$} is the list of residual recruitment times of recruiters. The result of (\ref{eqq1}) states that if the $i^{th}$ recruiter completes its recruitment operation successfully, the residual recruitment times of other vehicles is {\small$\mathcal{H}_2^{(\gamma;\psi)}\hspace{-1.4mm}\left\langle i-1 \hspace{-0.3mm};\hspace{-0.3mm} n-i \right\rangle$}, which represents a direct result of decomposability (Proposition~\ref{decomposability_props}).
\vspace{-2mm}
\subsection{Decomposition Theorem}\label{DecT}
In this section we propose the \textit{decomposition theorem (DT)} based on the notations of EDV and EDL developed in Sec.~\ref{RACSE}. Specifically, DT enables us to disentangle $\beta$-SMP to a decomposed SMP model (D-SMP). The key advantage of D-SMP compared to $\beta$-SMP is that, in D-SMP, calculating transition probabilities and expected sojourn time of the process in each state are much more straightforward. We first present the following definition.
\begin{definition}[Order of Recruiter]
    For $0{\le} i{\le} n$, let {\small$\dot{\mathcal{C}}_{i}$} denote the list of all $i$ recruiter vehicles in state $S_{i}\in\mathcal{S}$, defined in Sec.~\ref{BSMPR2n}. Further, let EDV $V_{j}$ be residual recruitment time of vehicle $C_{j}\in \dot{\mathcal{C}}_{i}$. The order of recruiter $C_{\ell}\in \dot{\mathcal{C}}_{i}$, shown by $O(C_{\ell})$, is defined as follows:
    \begin{equation}
        O(C_{\ell})=\sum_{j=1, j\neq \ell}^{i} \mathds{1}_{V_{j} \preceq V_{\ell}}.
    \end{equation}
\end{definition}
\noindent We next present \textit{decomposition theorem (DT)}.
\begin{theorem}[Decomposition]\label{decompositionTheorem}
Consider ${\mathsf{SeS}}_{\mathsf{VC}}{=}(\zeta,\mathcal{Q})$ presented in Sec.~\ref{SDSeS}, and deployment {\small$\mathcal{D}\Big(\mathcal{M}\big(\mathcal{P}(\mathcal{A}),t\big)\Big)$} at time $t$, where {\small$\Big|\mathcal{D}\Big(\mathcal{M}\big(\mathcal{P}(\mathcal{A}),t\big)\Big)\Big|{=}n$}. Further, consider $\beta$-SMP presented in Fig. \ref{j2nMarkov1} with state space {\small$\mathcal{S}=\{S_0,S_1,\dotsc,S_n, F\}$}. Let $X\in\mathcal{S}$ and $X'\in\mathcal{S}$ denote the current state and next state of $\beta$-SMP, respectively. Further, let $e_X\in\zeta$ denote the event that occurred at current state $X$. Each state $S_i$, except for the $S_n$, can be decomposed into $i{+}2$ states, referred to as H-state. Mathematically,
\begin{equation}
 S_0\equiv
    \begin{cases}
     S_{0,0} & \text{initial state},
    \end{cases}
\end{equation}
\begin{equation}
 X'{=}S_0\equiv
    \begin{cases}
     S_{0,1} & \text{if  } X{=}S_{1},e_X{=}e^{\mathsf{R}}(C_{\ell}), C_{\ell} {\in} \dot{\mathcal{C}}_{1},
    \end{cases}
\end{equation}
\begin{equation}\label{decompositionTheoremEq1}
   X'{=}S_i\equiv
    \begin{cases}
        S_{i,0} & \text{if  } X=S_{i-1},e_X{=}e^{\mathsf{D}}(C_{\ell}),~ \\
            &~\,\,\,\,\,\,\,\,\,\,\,\forall C_{\ell} {\in} \ddot{\mathcal{C}}_{2n-2(i-1)}, \\
        S_{i,1} & \text{if  } X=S_{i+1},e_X{=}e^{\mathsf{R}}(C_{\ell}),\\
        &~\,\,\,\,\,\,\,\,\,\,\, O(C_{\ell})=1, \forall C_{\ell} {\in} \dot{\mathcal{C}}_{i{+}1},\\
        \dots &\dots\\
        S_{i,j} &\text{if  } X=S_{i+1},e_X{=}e^{\mathsf{R}}(C_{\ell}),\\
        &~\,\,\,\,\,\,\,\,\,\,\, O(C_{\ell})=j, \forall C_{\ell} {\in} \dot{\mathcal{C}}_{i{+}1},\\
        \dots & \dots\\
        S_{i,i+1} &\text{if  } X=S_{i+1},e_X{=}e^{\mathsf{R}}(C_{\ell}),\\
        &~\,\,\,\,\,\,\,\,\,\,\, O(C_{\ell})=i+1, \forall C_{\ell} {\in} \dot{\mathcal{C}}_{i{+}1},
    \end{cases}
  \end{equation}
  and
  \begin{equation}
 X'{=}S_n\equiv
    \begin{cases}
    S_{n,0} & \text{if  } X{=}S_{n-1},e_X{=}e^{\mathsf{D}}(C_{\ell}),~ \forall C_{\ell} {\in} \ddot{\mathcal{C}}_{2}, \\
    \end{cases}
\end{equation}
where $S_{0,0}$ is
\begin{equation}\label{S00}
      S_{0,0}=\left\{[Z_1,Z_2,\dots,Z_{2n}], [\,] \right\},
\end{equation}
$S_{0,1}$ is
\begin{equation}\label{S000}
      S_{0,1}=\left\{\mathcal{L}_2^{(\ddot{\mathfrak{X}})}\hspace{-1.4mm}\left\langle 2n\right\rangle, [\,] \right\},
\end{equation}
$S_{i,0}$, for {\small$1 \le i \le n$}, is
  \begin{equation}\label{eqqq1}
      S_{i,0}=\left\{\mathcal{L}_1^{(\dot{\mathfrak{X}})}\hspace{-1.4mm}\left\langle {2n{-}i}\right\rangle,\mathcal{H}_1^{(\varphi;\mathfrak{R})}\hspace{-1.4mm}\left\langle {i{-}1} \hspace{-0.3mm};\hspace{-0.3mm} 1 \right\rangle\right\},
  \end{equation}
and $S_{i,j}$, for {\small$1\le j\le i+1$} and {\small$1 \le i \le n-1$}, is
    \begin{equation}\label{eqqq2}
      S_{i,j}=\left\{\mathcal{L}_2^{(\ddot{\mathfrak{X}})}\hspace{-1.4mm}\left\langle 2n{-}i\right\rangle,\mathcal{H}_2^{(\gamma;\psi)}\hspace{-1.4mm}\left\langle j{-}1 \hspace{-0.3mm};\hspace{-0.3mm} i{-}(j{-}1) \right\rangle\right\}.
  \end{equation}
  Moreover, we have $\mathcal{L}_1^{(\dot{\mathfrak{X}})}\hspace{-1.4mm}\left\langle {2n{-}i}\right\rangle  {\dot{\preceq}}  \mathcal{H}_1^{(\varphi;\mathfrak{R})}\hspace{-1.4mm}\left\langle {i{-}1} \hspace{-0.3mm};\hspace{-0.3mm} 1 \right\rangle$ and $\mathcal{L}_2^{(\ddot{\mathfrak{X}})}\hspace{-1.4mm}\left\langle 2n{-}i\right\rangle   {\dot{\preceq}}  \mathcal{H}_2^{(\gamma;\psi)}\hspace{-1.4mm}\left\langle j{-}1 \hspace{-0.3mm};\hspace{-0.3mm} i{-}(j{-}1) \right\rangle$.

In \eqref{S00}, $Z_1, Z_2, \dots, Z_{2n}$ are i.i.d random variables following general distribution $\mathfrak{D}_Z(z)$ referring to the residency time of  vehicles, and $[\,]$ refers to the empty list. In~\eqref{eqqq1} and~\eqref{eqqq2},  $\mathcal{L}_1^{(\dot{\mathfrak{X}})}\hspace{-1.4mm}\left\langle {2n{-}i}\right\rangle$ and $\mathcal{L}_2^{(\ddot{\mathfrak{X}})}\hspace{-1.4mm}\left\langle 2n{-}i\right\rangle$ are two lists of EDVs referring to the residual sojourn times of $2n-i$ vehicles processing the groups of $\mathcal{P}(\mathcal{A})$ in states $S_{i,0}$ and $S_{i,j}$, respectively. Further, $\mathcal{H}_1^{(\varphi;\mathfrak{R})}\hspace{-1.4mm}\left\langle {i{-}1} \hspace{-0.3mm};\hspace{-0.3mm} 1 \right\rangle$ and $\mathcal{H}_2^{(\gamma;\psi)}\hspace{-1.4mm}\left\langle j{-}1 \hspace{-0.3mm};\hspace{-0.3mm} i{-}(j{-}1) \right\rangle$ are lists of EDVs referring to the residual recruitment duration of $i$ recruiters in states $S_{i,0}$ and $S_{i,j}$, respectively.
\end{theorem}
\begin{proof}
See Appendix \ref{decompositionTheorem_proof}.
\end{proof}
Decomposition theorem implies that, for all H-states $S_{i,j}$, where {\small$0 \le i \le n$} and {\small$0\le j\le i+1$}, the departure of a vehicle leads to the transition to H-state $S_{i+1,0}$, which is a direct result of absorbency property presented in Proposition~\ref{absorbency_props}. Intuitively, in this situation, we say that $S_{i+1,0}$ absorbs the process from H-states $S_{i,j}$. Further, due to decomposability in Proposition~\ref{decomposability_props}, if recruiter $C_k$, for $1\le k \le i$, completes its recruitment, D-SMP will transit to H-state $S_{i-1,k}$. In other words, from each H-state $S_{i,j}$, considering which recruiter completes its recruitment before any other event, D-SMP will transit to $k$ different H-states. Further, in all of the H-states, the departure of a recruiter leads to a transition to the failure state.\par
In the reminder of this paper, we refer to the first element of $S_{i,0}$ and $S_{i,j}$ by $S_{i,0}^{\mathsf{Soj}}{=}\mathcal{L}_1^{(\dot{\mathfrak{X}})}\hspace{-1.4mm}\left\langle {2n{-}i}\right\rangle$ and $S_{i,j}^{\mathsf{Soj}}{=}\mathcal{L}_2^{(\ddot{\mathfrak{X}})}\hspace{-1.4mm}\left\langle 2n{-}i\right\rangle$, respectively. Likewise, we refer to the second element of $S_{i,0}$ and $S_{i,j}$ by $S_{i,0}^{\mathsf{Rec}}{=}\mathcal{H}_1^{(\varphi;\mathfrak{R})}\hspace{-1.4mm}\left\langle {i{-}1} \hspace{-0.3mm};\hspace{-0.3mm} 1 \right\rangle$ and $S_{i,j}^{\mathsf{Rec}}{=}\mathcal{H}_2^{(\gamma;\psi)}\hspace{-1.4mm}\left\langle j{-}1 \hspace{-0.3mm};\hspace{-0.3mm} i{-}(j{-}1) \right\rangle$, respectively. Using DT, we next aim to introduce decomposed SMP model (D-SMP) of {\tt RP-VC$_n$}.
\vspace{-2mm}
\subsection{Decomposed Semi-Markov Process (D-SMP) of {\tt RP-VC$_n$}}\label{DSMPModel}
According to DT, each state $S_i$ of $\beta$-SMP presented in Fig.~\ref{j2nMarkov1} can be decomposed into $i+2$ H-states resulting in the  D-SMP model of {\tt RP-VC$_n$} depicted in Fig.~\ref{j2nMarkov}. In D-SMP model, there are {\small$\frac{(n+1)(n+2)}{2}+1$} different H-states, each of which is denoted by {\small$S_{i,j}$}, as defined in Theorem~\ref{decompositionTheorem}. We group all the H-states indexed with $i$ into a state set denoted by {\small$\hat{\mathbf{S}}_i$} (i.e., {\small$\hat{\mathbf{S}}_i=\left\{S_{i,0},S_{i,1},\dots,S_{i,j}, \dots,S_{i,i+1}\right\}$} for {\small$0\le i < n$, $0\le j \le i+1$})\footnote{$\hat{\mathbf{S}}_i$ is equivalent to $S_i$ of $\beta$-SMP presented in Fig.~\ref{j2nMarkov1}.}. Also, state set {\small$\hat{\mathbf{S}}_n=\left\{S_{n,0}\right\}$} has only one H-state. We denote the overall set of all the states by {\small$\Psi=\left\{\bigcup_{i=0}^n \hat{\mathbf{S}}_i \right\}\cup \{F\}$}, where {\small$F$} indicates the failure of processing application {\small$\mathcal{A}$}. We next aim to model the dynamics of D-SMP below.\par
\textbf{Transition Probabilities of D-SMP.} To characterize the dynamics of D-SMP, we obtain transition probabilities below.
Let $p_{i,j}^{k}$, $q_{i,j}$, and $b_{i,j}$ be the transition probabilities from state {\small$S_{i,j}$} to {\small$S_{i-1,k}$}, from state $S_{i,j}$ to $S_{i+1,0}$, and from state $S_{i,j}$ to $F$, respectively. We first specify $p_{i,j}^{k}$, which is equal to the probability that the recruiter $k$ completes its recruitment operation before any other events. Considering $S_{i,j}\in\hat{\mathbf{S}}_i$, presented in (\ref{decompositionTheoremEq1}), the general expression for $p_{i,j}^{k}$ is as follows:
\begin{equation}\label{pijk_main}
\begin{aligned}
p_{i,j}^{k} {=}\textrm{Pr}\Big[S_{i,j}^{\mathsf{Rec}}(k) {<} \min\Big\{&\min_{1\le r \le 2n-i}\left\{S_{i,j}^{\mathsf{Soj}}(r)\right\}\\
&,\min_{\substack{1\le h \le i,\\h\neq k}}\left\{S_{i,j}^{\mathsf{Rec}}(h)\right\}\Big\}\Big],
\end{aligned}
\end{equation}
where {\small$0\le i \le n$, $0\le j \le i+1$, and $1\le k \le i$}. Further, $S_{i,j}^{\mathsf{Rec}}(k)$, $S_{i,j}^{\mathsf{Rec}}(h)$, and $S_{i,j}^{\mathsf{Soj}}(r)$ refer to the residual recruitment time of vehicle $k$, residual recruitment time of vehicle $h$, and residual residency time of vehicle $r$, respectively. Further, since there are {\small$2n-i$} vehicles in state {\small$S_{i,j}$}, of which $i$ vehicles are recruiter, and also {\small$q_{i,j}{+}b_{i,j}{=}1{-}\sum_{k=1}^{i}p_{i,j}^{k}$}, we have
\begin{equation}\label{bij}
  b_{i,j}=\frac{i}{2n-i}\times \left(1-\sum_{k=1}^{i}p_{i,j}^{k}\right),
\end{equation}
and
\begin{equation}\label{qij}
  q_{i,j}=\frac{2(n-i)}{2n-i}\times \left(1-\sum_{k=1}^{i}p_{i,j}^{k}\right),
\end{equation}
where $p_{0,0}^k=p_{0,1}^k=0$ because all of the groups have two vehicles in states $S_{0,0}$ and $S_{0,1}$.
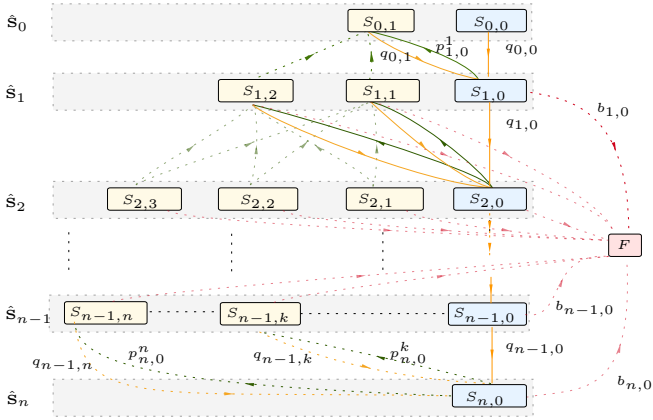
\begin{figure}[!t]
  \centering
\tikzset{every picture/.style={line width=0.2pt}} 
\begin{tikzpicture}[x=0.5pt,y=0.35pt,yscale=-1,xscale=1]

\draw  [color={rgb, 255:red, 161; green, 161; blue, 161 }  ,draw opacity=1 ][fill={rgb, 255:red, 237; green, 237; blue, 237 }  ,fill opacity=0.59 ][dash pattern={on 0.84pt off 2.51pt}] (35.77,4) -- (398.82,4) -- (398.82,44) -- (35.77,44) -- cycle ;
\draw  [color={rgb, 255:red, 161; green, 161; blue, 161 }  ,draw opacity=1 ][fill={rgb, 255:red, 237; green, 237; blue, 237 }  ,fill opacity=0.59 ][dash pattern={on 0.84pt off 2.51pt}] (36.75,79) -- (399.8,79) -- (399.8,119) -- (36.75,119) -- cycle ;
\draw  [color={rgb, 255:red, 161; green, 161; blue, 161 }  ,draw opacity=1 ][fill={rgb, 255:red, 237; green, 237; blue, 237 }  ,fill opacity=0.59 ][dash pattern={on 0.84pt off 2.51pt}] (36.75,196) -- (399.8,196) -- (399.8,236) -- (36.75,236) -- cycle ;
\draw  [color={rgb, 255:red, 161; green, 161; blue, 161 }  ,draw opacity=1 ][fill={rgb, 255:red, 237; green, 237; blue, 237 }  ,fill opacity=0.59 ][dash pattern={on 0.84pt off 2.51pt}] (33.8,319) -- (396.85,319) -- (396.85,359) -- (33.8,359) -- cycle ;
\draw  [color={rgb, 255:red, 161; green, 161; blue, 161 }  ,draw opacity=1 ][fill={rgb, 255:red, 237; green, 237; blue, 237 }  ,fill opacity=0.59 ][dash pattern={on 0.84pt off 2.51pt}] (36.75,410) -- (399.8,410) -- (399.8,450) -- (36.75,450) -- cycle ;
\draw [color={rgb, 255:red, 49; green, 92; blue, 0 }  ,draw opacity=1 ]   (369.04,203.04) .. controls (361.71,183.77) and (307.06,119.75) .. (276.88,109.94) ;
\draw [shift={(328.75,150.01)}, rotate = 46.76] [fill={rgb, 255:red, 49; green, 92; blue, 0 }  ,fill opacity=1 ][line width=0.08]  [draw opacity=0] (7.2,-1.8) -- (0,0) -- (7.2,1.8) -- cycle    ;
\draw [color={rgb, 255:red, 49; green, 92; blue, 0 }  ,draw opacity=1 ]   (369.04,203.04) .. controls (361.71,183.77) and (217.89,123.02) .. (187.72,113.21) ;
\draw [shift={(280.83,151.77)}, rotate = 24.71] [fill={rgb, 255:red, 49; green, 92; blue, 0 }  ,fill opacity=1 ][line width=0.08]  [draw opacity=0] (7.2,-1.8) -- (0,0) -- (7.2,1.8) -- cycle    ;
\draw [color={rgb, 255:red, 65; green, 117; blue, 5 }  ,draw opacity=1 ]   (358.33,85.43) .. controls (350.99,66.16) and (304.49,42.93) .. (274.31,33.13) ;
\draw [shift={(316.35,51.03)}, rotate = 28.44] [fill={rgb, 255:red, 65; green, 117; blue, 5 }  ,fill opacity=1 ][line width=0.08]  [draw opacity=0] (7.2,-1.8) -- (0,0) -- (7.2,1.8) -- cycle    ;
\draw [color={rgb, 255:red, 65; green, 117; blue, 5 }  ,draw opacity=1 ] [dash pattern={on 0.84pt off 2.51pt}]  (276.02,86.25) .. controls (279.45,71.54) and (277.56,50.14) .. (274.31,33.13) ;
\draw [shift={(277.28,54.66)}, rotate = 86.22] [fill={rgb, 255:red, 65; green, 117; blue, 5 }  ,fill opacity=1 ][line width=0.08]  [draw opacity=0] (8.4,-2.1) -- (0,0) -- (8.4,2.1) -- cycle    ;
\draw [color={rgb, 255:red, 65; green, 117; blue, 5 }  ,draw opacity=1 ] [dash pattern={on 0.84pt off 2.51pt}]  (186,87.06) .. controls (204.01,75.62) and (232.98,59.13) .. (274.31,33.13) ;
\draw [shift={(234.9,57.42)}, rotate = 148.86] [fill={rgb, 255:red, 65; green, 117; blue, 5 }  ,fill opacity=1 ][line width=0.08]  [draw opacity=0] (8.4,-2.1) -- (0,0) -- (8.4,2.1) -- cycle    ;
\draw [color={rgb, 255:red, 245; green, 166; blue, 35 }  ,draw opacity=1 ]   (274.31,33.13) .. controls (287.85,50.96) and (332.82,78.93) .. (358.33,85.43) ;
\draw [shift={(317.2,66.78)}, rotate = 211.24] [fill={rgb, 255:red, 245; green, 166; blue, 35 }  ,fill opacity=1 ][line width=0.08]  [draw opacity=0] (7.2,-1.8) -- (0,0) -- (7.2,1.8) -- cycle    ;
\draw [color={rgb, 255:red, 49; green, 92; blue, 0 }  ,draw opacity=0.52 ] [dash pattern={on 0.84pt off 2.51pt}]  (278.64,201.44) .. controls (282.93,188.36) and (282.71,125.32) .. (276.88,109.94) ;
\draw [shift={(281.54,155.4)}, rotate = 89.49] [fill={rgb, 255:red, 49; green, 92; blue, 0 }  ,fill opacity=0.52 ][line width=0.08]  [draw opacity=0] (7.2,-1.8) -- (0,0) -- (7.2,1.8) -- cycle    ;
\draw [color={rgb, 255:red, 49; green, 92; blue, 0 }  ,draw opacity=0.52 ] [dash pattern={on 0.84pt off 2.51pt}]  (278.64,201.44) .. controls (282.93,188.36) and (193.55,128.59) .. (187.72,113.21) ;
\draw [shift={(234.26,155.97)}, rotate = 39.37] [fill={rgb, 255:red, 49; green, 92; blue, 0 }  ,fill opacity=0.52 ][line width=0.08]  [draw opacity=0] (7.2,-1.8) -- (0,0) -- (7.2,1.8) -- cycle    ;
\draw [color={rgb, 255:red, 49; green, 92; blue, 0 }  ,draw opacity=0.52 ] [dash pattern={on 0.84pt off 2.51pt}]  (185.84,199.84) .. controls (196.99,187.58) and (250.99,125.32) .. (276.88,109.94) ;
\draw [shift={(229.66,152.07)}, rotate = 133.77] [fill={rgb, 255:red, 49; green, 92; blue, 0 }  ,fill opacity=0.52 ][line width=0.08]  [draw opacity=0] (7.2,-1.8) -- (0,0) -- (7.2,1.8) -- cycle    ;
\draw [color={rgb, 255:red, 49; green, 92; blue, 0 }  ,draw opacity=0.52 ] [dash pattern={on 0.84pt off 2.51pt}]  (99.44,199.84) .. controls (110.59,187.58) and (250.99,125.32) .. (276.88,109.94) ;
\draw [shift={(188.6,153.21)}, rotate = 154.45] [fill={rgb, 255:red, 49; green, 92; blue, 0 }  ,fill opacity=0.52 ][line width=0.08]  [draw opacity=0] (7.2,-1.8) -- (0,0) -- (7.2,1.8) -- cycle    ;
\draw [color={rgb, 255:red, 49; green, 92; blue, 0 }  ,draw opacity=0.52 ] [dash pattern={on 0.84pt off 2.51pt}]  (185.84,199.84) .. controls (190.13,186.76) and (193.55,128.59) .. (187.72,113.21) ;
\draw [shift={(190.61,156.3)}, rotate = 92.37] [fill={rgb, 255:red, 49; green, 92; blue, 0 }  ,fill opacity=0.52 ][line width=0.08]  [draw opacity=0] (7.2,-1.8) -- (0,0) -- (7.2,1.8) -- cycle    ;
\draw [color={rgb, 255:red, 49; green, 92; blue, 0 }  ,draw opacity=0.52 ] [dash pattern={on 0.84pt off 2.51pt}]  (99.44,199.84) .. controls (103.73,186.76) and (171.43,136.91) .. (187.72,113.21) ;
\draw [shift={(143.78,156.39)}, rotate = 138.8] [fill={rgb, 255:red, 49; green, 92; blue, 0 }  ,fill opacity=0.52 ][line width=0.08]  [draw opacity=0] (7.2,-1.8) -- (0,0) -- (7.2,1.8) -- cycle    ;
\draw [color={rgb, 255:red, 245; green, 166; blue, 35 }  ,draw opacity=1 ]   (276.88,109.94) .. controls (290.42,127.77) and (343.53,196.54) .. (369.04,203.04) ;
\draw [shift={(319.02,161.35)}, rotate = 228.3] [fill={rgb, 255:red, 245; green, 166; blue, 35 }  ,fill opacity=1 ][line width=0.08]  [draw opacity=0] (7.2,-1.8) -- (0,0) -- (7.2,1.8) -- cycle    ;
\draw [color={rgb, 255:red, 245; green, 166; blue, 35 }  ,draw opacity=1 ]   (187.72,113.21) .. controls (201.26,131.04) and (313.84,197.44) .. (369.04,203.04) ;
\draw [shift={(273.92,168.51)}, rotate = 207.19] [fill={rgb, 255:red, 245; green, 166; blue, 35 }  ,fill opacity=1 ][line width=0.08]  [draw opacity=0] (7.2,-1.8) -- (0,0) -- (7.2,1.8) -- cycle    ;
\draw [color={rgb, 255:red, 49; green, 92; blue, 0 }  ,draw opacity=1 ][line width=0.2]  [dash pattern={on 0.84pt off 2.51pt}]  (368,414.8) .. controls (348.12,406.63) and (273.8,381) .. (192.8,358.8) ;
\draw [shift={(280.66,384.77)}, rotate = 17.61] [fill={rgb, 255:red, 49; green, 92; blue, 0 }  ,fill opacity=1 ][line width=0.08]  [draw opacity=0] (7.2,-1.8) -- (0,0) -- (7.2,1.8) -- cycle    ;
\draw [color={rgb, 255:red, 245; green, 166; blue, 35 }  ,draw opacity=1 ][line width=0.2]  [dash pattern={on 0.84pt off 2.51pt}]  (192.8,358.8) .. controls (221.04,379.04) and (281.8,409) .. (368,414.8) ;
\draw [shift={(277.22,398.41)}, rotate = 197.4] [fill={rgb, 255:red, 245; green, 166; blue, 35 }  ,fill opacity=1 ][line width=0.08]  [draw opacity=0] (7.2,-1.8) -- (0,0) -- (7.2,1.8) -- cycle    ;
\draw [line width=0.2]  [dash pattern={on 0.84pt off 2.51pt}]  (367,237.94) -- (367,254.29) ;
\draw [line width=0.2]  [dash pattern={on 0.84pt off 2.51pt}]  (286.2,251) -- (286.2,293.7) ;
\draw [color={rgb, 255:red, 208; green, 2; blue, 27 }  ,draw opacity=1 ] [dash pattern={on 0.84pt off 2.51pt}]  (397.8,99.8) .. controls (429.19,107.17) and (472,119.29) .. (472.24,251.04) ;
\draw [shift={(462.61,163.8)}, rotate = 253.24] [fill={rgb, 255:red, 208; green, 2; blue, 27 }  ,fill opacity=1 ][line width=0.08]  [draw opacity=0] (7.2,-1.8) -- (0,0) -- (7.2,1.8) -- cycle    ;
\draw [color={rgb, 255:red, 208; green, 2; blue, 27 }  ,draw opacity=0.5 ] [dash pattern={on 0.84pt off 2.51pt}]  (276.88,109.94) .. controls (308.27,117.31) and (451.67,205.31) .. (460.24,249.44) ;
\draw [shift={(383.22,170.21)}, rotate = 215.23] [fill={rgb, 255:red, 208; green, 2; blue, 27 }  ,fill opacity=0.5 ][line width=0.08]  [draw opacity=0] (7.2,-1.8) -- (0,0) -- (7.2,1.8) -- cycle    ;
\draw [color={rgb, 255:red, 208; green, 2; blue, 27 }  ,draw opacity=0.5 ] [dash pattern={on 0.84pt off 2.51pt}]  (187.72,113.21) .. controls (219.11,120.58) and (451.67,205.31) .. (460.24,249.44) ;
\draw [shift={(336.32,169.2)}, rotate = 203.72] [fill={rgb, 255:red, 208; green, 2; blue, 27 }  ,fill opacity=0.5 ][line width=0.08]  [draw opacity=0] (7.2,-1.8) -- (0,0) -- (7.2,1.8) -- cycle    ;
\draw [color={rgb, 255:red, 208; green, 2; blue, 27 }  ,draw opacity=0.5 ] [dash pattern={on 0.84pt off 2.51pt}]  (394.8,227.2) .. controls (429.04,247.04) and (444.24,252.64) .. (456.76,259.61) ;
\draw [shift={(429.39,245.98)}, rotate = 206.96] [fill={rgb, 255:red, 208; green, 2; blue, 27 }  ,fill opacity=0.5 ][line width=0.08]  [draw opacity=0] (7.2,-1.8) -- (0,0) -- (7.2,1.8) -- cycle    ;
\draw [color={rgb, 255:red, 208; green, 2; blue, 27 }  ,draw opacity=0.5 ] [dash pattern={on 0.84pt off 2.51pt}]  (394.64,429.44) .. controls (432.24,419.04) and (477.04,436.64) .. (470.64,279.04) ;
\draw [shift={(466.03,371.2)}, rotate = 102.97] [fill={rgb, 255:red, 208; green, 2; blue, 27 }  ,fill opacity=0.5 ][line width=0.08]  [draw opacity=0] (7.2,-1.8) -- (0,0) -- (7.2,1.8) -- cycle    ;
\draw [line width=0.2]  [dash pattern={on 0.84pt off 2.51pt}]  (109.8,337) -- (163.24,337) ;
\draw  [fill={rgb, 255:red, 231; green, 243; blue, 255 }  ,fill opacity=1 ] (341.8,12) .. controls (341.8,10.9) and (342.7,10) .. (343.8,10) -- (392.8,10) .. controls (393.9,10) and (394.8,10.9) .. (394.8,12) -- (394.8,32.2) .. controls (394.8,33.3) and (393.9,34.2) .. (392.8,34.2) -- (343.8,34.2) .. controls (342.7,34.2) and (341.8,33.3) .. (341.8,32.2) -- cycle ;
\draw  [fill={rgb, 255:red, 255; green, 250; blue, 231 }  ,fill opacity=1 ] (259.8,12) .. controls (259.8,10.9) and (260.7,10) .. (261.8,10) -- (307.8,10) .. controls (308.9,10) and (309.8,10.9) .. (309.8,12) -- (309.8,32.2) .. controls (309.8,33.3) and (308.9,34.2) .. (307.8,34.2) -- (261.8,34.2) .. controls (260.7,34.2) and (259.8,33.3) .. (259.8,32.2) -- cycle ;
\draw  [fill={rgb, 255:red, 231; green, 243; blue, 255 }  ,fill opacity=1 ] (340.8,87.8) .. controls (340.8,86.7) and (341.7,85.8) .. (342.8,85.8) -- (392.8,85.8) .. controls (393.9,85.8) and (394.8,86.7) .. (394.8,87.8) -- (394.8,108) .. controls (394.8,109.1) and (393.9,110) .. (392.8,110) -- (342.8,110) .. controls (341.7,110) and (340.8,109.1) .. (340.8,108) -- cycle ;
\draw  [fill={rgb, 255:red, 255; green, 250; blue, 231 }  ,fill opacity=1 ] (258.55,87.8) .. controls (258.55,86.7) and (259.45,85.8) .. (260.55,85.8) -- (311.8,85.8) .. controls (312.9,85.8) and (313.8,86.7) .. (313.8,87.8) -- (313.8,108) .. controls (313.8,109.1) and (312.9,110) .. (311.8,110) -- (260.55,110) .. controls (259.45,110) and (258.55,109.1) .. (258.55,108) -- cycle ;
\draw  [fill={rgb, 255:red, 255; green, 250; blue, 231 }  ,fill opacity=1 ] (161.8,87.8) .. controls (161.8,86.7) and (162.7,85.8) .. (163.8,85.8) -- (216.28,85.8) .. controls (217.38,85.8) and (218.28,86.7) .. (218.28,87.8) -- (218.28,108) .. controls (218.28,109.1) and (217.38,110) .. (216.28,110) -- (163.8,110) .. controls (162.7,110) and (161.8,109.1) .. (161.8,108) -- cycle ;
\draw  [fill={rgb, 255:red, 231; green, 243; blue, 255 }  ,fill opacity=1 ] (339.8,205) .. controls (339.8,203.9) and (340.7,203) .. (341.8,203) -- (392.8,203) .. controls (393.9,203) and (394.8,203.9) .. (394.8,205) -- (394.8,225.2) .. controls (394.8,226.3) and (393.9,227.2) .. (392.8,227.2) -- (341.8,227.2) .. controls (340.7,227.2) and (339.8,226.3) .. (339.8,225.2) -- cycle ;
\draw  [fill={rgb, 255:red, 255; green, 250; blue, 231 }  ,fill opacity=1 ] (258.55,205) .. controls (258.55,203.9) and (259.45,203) .. (260.55,203) -- (314.8,203) .. controls (315.9,203) and (316.8,203.9) .. (316.8,205) -- (316.8,225.2) .. controls (316.8,226.3) and (315.9,227.2) .. (314.8,227.2) -- (260.55,227.2) .. controls (259.45,227.2) and (258.55,226.3) .. (258.55,225.2) -- cycle ;
\draw  [fill={rgb, 255:red, 255; green, 250; blue, 231 }  ,fill opacity=1 ] (161.8,205) .. controls (161.8,203.9) and (162.7,203) .. (163.8,203) -- (219.8,203) .. controls (220.9,203) and (221.8,203.9) .. (221.8,205) -- (221.8,225.2) .. controls (221.8,226.3) and (220.9,227.2) .. (219.8,227.2) -- (163.8,227.2) .. controls (162.7,227.2) and (161.8,226.3) .. (161.8,225.2) -- cycle ;
\draw  [fill={rgb, 255:red, 255; green, 250; blue, 231 }  ,fill opacity=1 ] (77.89,205) .. controls (77.89,203.9) and (78.78,203) .. (79.89,203) -- (131.8,203) .. controls (132.9,203) and (133.8,203.9) .. (133.8,205) -- (133.8,225.2) .. controls (133.8,226.3) and (132.9,227.2) .. (131.8,227.2) -- (79.89,227.2) .. controls (78.78,227.2) and (77.89,226.3) .. (77.89,225.2) -- cycle ;
\draw  [fill={rgb, 255:red, 231; green, 243; blue, 255 }  ,fill opacity=1 ] (338.8,417.8) .. controls (338.8,416.7) and (339.7,415.8) .. (340.8,415.8) -- (392.8,415.8) .. controls (393.9,415.8) and (394.8,416.7) .. (394.8,417.8) -- (394.8,438) .. controls (394.8,439.1) and (393.9,440) .. (392.8,440) -- (340.8,440) .. controls (339.7,440) and (338.8,439.1) .. (338.8,438) -- cycle ;
\draw  [fill={rgb, 255:red, 255; green, 250; blue, 231 }  ,fill opacity=1 ] (163.24,329.1) .. controls (163.24,327.99) and (164.14,327.1) .. (165.24,327.1) -- (221.8,327.1) .. controls (222.9,327.1) and (223.8,327.99) .. (223.8,329.1) -- (223.8,349.3) .. controls (223.8,350.4) and (222.9,351.3) .. (221.8,351.3) -- (165.24,351.3) .. controls (164.14,351.3) and (163.24,350.4) .. (163.24,349.3) -- cycle ;
\draw  [fill={rgb, 255:red, 255; green, 250; blue, 231 }  ,fill opacity=1 ] (45.33,328.1) .. controls (45.33,326.99) and (46.22,326.1) .. (47.33,326.1) -- (105.8,326.1) .. controls (106.9,326.1) and (107.8,326.99) .. (107.8,328.1) -- (107.8,348.3) .. controls (107.8,349.4) and (106.9,350.3) .. (105.8,350.3) -- (47.33,350.3) .. controls (46.22,350.3) and (45.33,349.4) .. (45.33,348.3) -- cycle ;
\draw  [fill={rgb, 255:red, 255; green, 226; blue, 226 }  ,fill opacity=1 ] (457,255) .. controls (457,253.9) and (457.9,253) .. (459,253) -- (479.8,253) .. controls (480.9,253) and (481.8,253.9) .. (481.8,255) -- (481.8,275.2) .. controls (481.8,276.3) and (480.9,277.2) .. (479.8,277.2) -- (459,277.2) .. controls (457.9,277.2) and (457,276.3) .. (457,275.2) -- cycle ;
\draw  [fill={rgb, 255:red, 231; green, 243; blue, 255 }  ,fill opacity=1 ] (335.8,329.41) .. controls (335.8,328.3) and (336.7,327.41) .. (337.8,327.41) -- (392.8,327.41) .. controls (393.9,327.41) and (394.8,328.3) .. (394.8,329.41) -- (394.8,349.61) .. controls (394.8,350.71) and (393.9,351.61) .. (392.8,351.61) -- (337.8,351.61) .. controls (336.7,351.61) and (335.8,350.71) .. (335.8,349.61) -- cycle ;
\draw [line width=0.2]  [dash pattern={on 0.84pt off 2.51pt}]  (223.8,339) -- (335.8,339) ;
\draw [color={rgb, 255:red, 255; green, 151; blue, 0 }  ,draw opacity=1 ]   (368.8,353.6) -- (368.8,414.4) ;
\draw [shift={(368.8,388.2)}, rotate = 270] [fill={rgb, 255:red, 255; green, 151; blue, 0 }  ,fill opacity=1 ][line width=0.08]  [draw opacity=0] (7.2,-1.8) -- (0,0) -- (7.2,1.8) -- cycle    ;
\draw [color={rgb, 255:red, 255; green, 151; blue, 0 }  ,draw opacity=1 ]   (368,299.29) -- (368,327.8) ;
\draw [shift={(368,317.75)}, rotate = 270] [fill={rgb, 255:red, 255; green, 151; blue, 0 }  ,fill opacity=1 ][line width=0.08]  [draw opacity=0] (7.2,-1.8) -- (0,0) -- (7.2,1.8) -- cycle    ;
\draw [color={rgb, 255:red, 255; green, 151; blue, 0 }  ,draw opacity=1 ] [dash pattern={on 0.84pt off 2.51pt}]  (367,254.29) -- (367,282.8) ;
\draw [shift={(367,272.75)}, rotate = 270] [fill={rgb, 255:red, 255; green, 151; blue, 0 }  ,fill opacity=1 ][line width=0.08]  [draw opacity=0] (7.2,-1.8) -- (0,0) -- (7.2,1.8) -- cycle    ;
\draw [color={rgb, 255:red, 255; green, 151; blue, 0 }  ,draw opacity=1 ] [dash pattern={on 0.84pt off 2.51pt}]  (367,230.43) -- (367,258.94) ;
\draw [shift={(367,248.88)}, rotate = 270] [fill={rgb, 255:red, 255; green, 151; blue, 0 }  ,fill opacity=1 ][line width=0.08]  [draw opacity=0] (7.2,-1.8) -- (0,0) -- (7.2,1.8) -- cycle    ;
\draw [color={rgb, 255:red, 255; green, 151; blue, 0 }  ,draw opacity=1 ]   (367,110.29) -- (367,201.44) ;
\draw [shift={(367,160.07)}, rotate = 270] [fill={rgb, 255:red, 255; green, 151; blue, 0 }  ,fill opacity=1 ][line width=0.08]  [draw opacity=0] (7.2,-1.8) -- (0,0) -- (7.2,1.8) -- cycle    ;
\draw [color={rgb, 255:red, 255; green, 151; blue, 0 }  ,draw opacity=1 ]   (366,34.29) -- (366,82.8) ;
\draw [shift={(366,62.75)}, rotate = 270] [fill={rgb, 255:red, 255; green, 151; blue, 0 }  ,fill opacity=1 ][line width=0.08]  [draw opacity=0] (7.2,-1.8) -- (0,0) -- (7.2,1.8) -- cycle    ;
\draw [color={rgb, 255:red, 208; green, 2; blue, 27 }  ,draw opacity=0.5 ] [dash pattern={on 0.84pt off 2.51pt}]  (393.8,340) .. controls (425.19,347.37) and (409,272.4) .. (457,277.2) ;
\draw [shift={(422.63,300.35)}, rotate = 111.34] [fill={rgb, 255:red, 208; green, 2; blue, 27 }  ,fill opacity=0.5 ][line width=0.08]  [draw opacity=0] (7.2,-1.8) -- (0,0) -- (7.2,1.8) -- cycle    ;
\draw [color={rgb, 255:red, 208; green, 2; blue, 27 }  ,draw opacity=0.5 ] [dash pattern={on 0.84pt off 2.51pt}]  (81.71,326.1) .. controls (93.75,315.74) and (427.4,280.4) .. (457,277.2) ;
\draw [shift={(273.41,297.7)}, rotate = 173.27] [fill={rgb, 255:red, 208; green, 2; blue, 27 }  ,fill opacity=0.5 ][line width=0.08]  [draw opacity=0] (7.2,-1.8) -- (0,0) -- (7.2,1.8) -- cycle    ;
\draw [color={rgb, 255:red, 208; green, 2; blue, 27 }  ,draw opacity=0.5 ] [dash pattern={on 0.84pt off 2.51pt}]  (203.67,327.1) .. controls (215.71,316.74) and (427.4,280.4) .. (457,277.2) ;
\draw [shift={(334.22,297.53)}, rotate = 169.6] [fill={rgb, 255:red, 208; green, 2; blue, 27 }  ,fill opacity=0.5 ][line width=0.08]  [draw opacity=0] (7.2,-1.8) -- (0,0) -- (7.2,1.8) -- cycle    ;
\draw [color={rgb, 255:red, 208; green, 2; blue, 27 }  ,draw opacity=0.5 ] [dash pattern={on 0.84pt off 2.51pt}]  (123.27,227.2) .. controls (128.3,241.48) and (442.36,259.61) .. (456.76,259.61) ;
\draw [shift={(293.79,249.32)}, rotate = 184.41] [fill={rgb, 255:red, 208; green, 2; blue, 27 }  ,fill opacity=0.5 ][line width=0.08]  [draw opacity=0] (7.2,-1.8) -- (0,0) -- (7.2,1.8) -- cycle    ;
\draw [color={rgb, 255:red, 208; green, 2; blue, 27 }  ,draw opacity=0.5 ] [dash pattern={on 0.84pt off 2.51pt}]  (210.63,227.2) .. controls (215.66,241.48) and (442.36,259.61) .. (456.76,259.61) ;
\draw [shift={(337.28,249.51)}, rotate = 186.01] [fill={rgb, 255:red, 208; green, 2; blue, 27 }  ,fill opacity=0.5 ][line width=0.08]  [draw opacity=0] (7.2,-1.8) -- (0,0) -- (7.2,1.8) -- cycle    ;
\draw [color={rgb, 255:red, 208; green, 2; blue, 27 }  ,draw opacity=0.5 ] [dash pattern={on 0.84pt off 2.51pt}]  (303.93,227.2) .. controls (308.96,241.48) and (442.36,259.61) .. (456.76,259.61) ;
\draw [shift={(383.28,249.88)}, rotate = 189.8] [fill={rgb, 255:red, 208; green, 2; blue, 27 }  ,fill opacity=0.5 ][line width=0.08]  [draw opacity=0] (7.2,-1.8) -- (0,0) -- (7.2,1.8) -- cycle    ;
\draw [line width=0.2]  [dash pattern={on 0.84pt off 2.51pt}]  (171.8,252.6) -- (171.8,295.3) ;
\draw [line width=0.2]  [dash pattern={on 0.84pt off 2.51pt}]  (48.6,251) -- (48.6,293.7) ;
\draw [color={rgb, 255:red, 49; green, 92; blue, 0 }  ,draw opacity=1 ][line width=0.2]  [dash pattern={on 0.84pt off 2.51pt}]  (337.8,428) .. controls (97.8,415) and (56.24,390.44) .. (52.8,347.8) ;
\draw [shift={(181.49,414.52)}, rotate = 7.77] [fill={rgb, 255:red, 49; green, 92; blue, 0 }  ,fill opacity=1 ][line width=0.08]  [draw opacity=0] (7.2,-1.8) -- (0,0) -- (7.2,1.8) -- cycle    ;
\draw [color={rgb, 255:red, 245; green, 166; blue, 35 }  ,draw opacity=1 ][line width=0.2]  [dash pattern={on 0.84pt off 2.51pt}]  (52.8,347.8) .. controls (61.8,443) and (109.8,422) .. (337.8,428) ;
\draw [shift={(173.45,425.94)}, rotate = 181.47] [fill={rgb, 255:red, 245; green, 166; blue, 35 }  ,fill opacity=1 ][line width=0.08]  [draw opacity=0] (7.2,-1.8) -- (0,0) -- (7.2,1.8) -- cycle    ;

\draw (324.04,36.55) node [anchor=north west][inner sep=0.75pt]  [font=\tiny]  {$p_{1,0}^{1}$};
\draw (289.6,367.6) node [anchor=north west][inner sep=0.75pt]  [font=\tiny]  {$p_{n,0}^{k}$};
\draw (446.35,108.37) node [anchor=north west][inner sep=0.75pt]  [font=\tiny]  {$b_{1,0}$};
\draw (377.26,126.09) node [anchor=north west][inner sep=0.75pt]  [font=\tiny]  {$q_{1,0}$};
\draw (376.83,45.27) node [anchor=north west][inner sep=0.75pt]  [font=\tiny]  {$q_{0,0}$};
\draw (281.9,53.17) node [anchor=north west][inner sep=0.75pt]  [font=\tiny]  {$q_{0,1}$};
\draw (461.6,404.4) node [anchor=north west][inner sep=0.75pt]  [font=\tiny]  {$b_{n,0}$};
\draw (186,378.6) node [anchor=north west][inner sep=0.75pt]  [font=\tiny]  {$q_{n-1,k}$};
\draw (377,368.4) node [anchor=north west][inner sep=0.75pt]  [font=\tiny]  {$q_{n-1,0}$};
\draw (337.8,330.81) node [anchor=north west][inner sep=0.75pt]  [font=\tiny]  {$S_{n-1,0}$};
\draw (268.8,15.47) node [anchor=north west][inner sep=0.75pt]  [font=\tiny]  {$S_{0,1}$};
\draw (352.7,15.4) node [anchor=north west][inner sep=0.75pt]  [font=\tiny]  {$S_{0,0}$};
\draw (178.08,90.54) node [anchor=north west][inner sep=0.75pt]  [font=\tiny]  {$S_{1,2}$};
\draw (268.53,89.91) node [anchor=north west][inner sep=0.75pt]  [font=\tiny]  {$S_{1,1}$};
\draw (350.39,91.09) node [anchor=north west][inner sep=0.75pt]  [font=\tiny]  {$S_{1,0}$};
\draw (87.19,208.54) node [anchor=north west][inner sep=0.75pt]  [font=\tiny]  {$S_{2,3}$};
\draw (176.38,208.21) node [anchor=north west][inner sep=0.75pt]  [font=\tiny]  {$S_{2,2}$};
\draw (265.56,207.6) node [anchor=north west][inner sep=0.75pt]  [font=\tiny]  {$S_{2,1}$};
\draw (349.56,207.41) node [anchor=north west][inner sep=0.75pt]  [font=\tiny]  {$S_{2,0}$};
\draw (47.33,329.5) node [anchor=north west][inner sep=0.75pt]  [font=\tiny]  {$S_{n-1,n}$};
\draw (166.29,330.5) node [anchor=north west][inner sep=0.75pt]  [font=\tiny]  {$S_{n-1,k}$};
\draw (346.96,421.79) node [anchor=north west][inner sep=0.75pt]  [font=\tiny]  {$S_{n,0}$};
\draw (414.8,322.4) node [anchor=north west][inner sep=0.75pt]  [font=\tiny]  {$b_{n-1,0}$};
\draw (0,13.4) node [anchor=north west][inner sep=0.75pt]  [font=\tiny]  {$\hat{\mathbf{S}}_{0}$};
\draw (-1,89.4) node [anchor=north west][inner sep=0.75pt]  [font=\tiny]  {$\hat{\mathbf{S}}_{1}$};
\draw (-1,207.4) node [anchor=north west][inner sep=0.75pt]  [font=\tiny]  {$\hat{\mathbf{S}}_{2}$};
\draw (-1,328.4) node [anchor=north west][inner sep=0.75pt]  [font=\tiny]  {$\hat{\mathbf{S}}_{n-1}$};
\draw (0,420.4) node [anchor=north west][inner sep=0.75pt]  [font=\tiny]  {$\hat{\mathbf{S}}_{n}$};
\draw (91.6,371.6) node [anchor=north west][inner sep=0.75pt]  [font=\tiny]  {$p_{n,0}^{n}$};
\draw (19,385.6) node [anchor=north west][inner sep=0.75pt]  [font=\tiny]  {$q_{n-1,n}$};
\draw (461.87,256.56) node [anchor=north west][inner sep=0.75pt]  [font=\tiny]  {$F$};
\end{tikzpicture}
 \caption{\small D-SMP model of {\tt RP-VC$_n$}.}
  \label{j2nMarkov}
\end{figure}
The final derivation of {\small$\displaystyle p_{i,j}^{k}$} is presented later upon assuming exponential distribution of the sojourn times and recruitment duration of the vehicles (Sec.~\ref{ERT}), which is a common assumption~\cite{24,29}. Nevertheless, general expressions for {\small$\displaystyle p_{i,j}^{k}$} is obtained in Appendix~\ref{app:generalization} under arbitrary distributions. Let {\small$\mathcal{R}(t)=\Big|\Theta\Big(\mathcal{M}\big(\mathcal{P}(\mathcal{A}),t\big)\Big)\Big|$}, (see (\ref{theta})). Relying on D-SMP presented in Fig.~\ref{j2nMarkov}, we calculate \textit{C-MTTF}{\small$\Big(\mathcal{A}\big|\mathcal{M}\big(\mathcal{P}(\mathcal{A}),t\big),\mathcal{R}(t),t\Big)$} in the next section.

\vspace{-3mm}
\section{Calculating \textit{C-MTTF}{\small$\Big(\mathcal{A}\big|\mathcal{M}\big(\mathcal{P}(\mathcal{A}),t\big),\mathcal{R}(t),t\Big)$}}\label{MTTFCal}
\noindent In this section, we derive the closed-form expression of \textit{C-MTTF}{\small$\Big(\mathcal{A}\big|\mathcal{M}\big(\mathcal{P}(\mathcal{A}),t\big),\mathcal{R}(t),t\Big)$}, when $\mathcal{R}(t)=0$ and $t=0$ (i.e., all the groups have two vehicles), resembling the start of application execution (Sec.~\ref{closedForm}) while {\small$t>0$} is left as future work. Finally, we study a special case of {\tt RP-VC$_n$}, where sojourn times and recruitment duration of the vehicles follow the exponential distribution (Sec.~\ref{ERT}).
\vspace{-2.5mm}
\subsection{Closed-Form of \textit{C-MTTF}{\small$\big(\mathcal{A}|\mathcal{M}\big(\mathcal{P}(\mathcal{A}),t\big),\mathcal{R}(t){=}0,t{=}0\big)$}}\label{closedForm}
We first present the following technical result to calculate {\small$\mathbb{E}\left[\widehat{\mathrm{Q}}_{i,j}\right]$}, utilized to obtain \textit{C-MTTF}{\small$\big(\mathcal{A}|\mathcal{M}\big(\mathcal{P}(\mathcal{A}),t\big),\mathcal{R}(t){=}0,t{=}0\big)$}.
\begin{lemma}\label{expecteTimeToFailure}
Let D-SMP be in state {\small$S_{i,j}$}. Also, let random variable $\widehat{\mathrm{W}}_{i,j}$ denote the sojourn time in state $S_{i,j}$. The expected value of the time until failure from state {\small$S_{i,j}$} is given as follows:
\begin{equation}
\hspace{-4mm}
\begin{aligned}
\mathbb{E}\left[\widehat{\mathrm{Q}}_{i,j}\right] &=\mathbb{E}\left[\widehat{\mathrm{W}}_{i,j}\right] +\mathbb{E}\left[\widehat{\mathrm{Q}}_{i+1,0}\right] q_{i,j} \\ &+\boldsymbol{\sum}_{k=1}^{i}\mathbb{E} \left[\widehat{\mathrm{Q}}_{i-1,k}\right]p_{i,j}^{k}, ~{\small 0\leq i\leq n}, {\small 0\leq j\leq i{+}1, }
\end{aligned}
\hspace{-7mm}
\end{equation}
where {\small$\mathbb{E}\left[\widehat{\mathrm{Q}}_{n{+}1,0}\right]{\triangleq}0$}, $p_{i,j}^{k}$ is given by \eqref{pijk_main}, and $q_{i,j}$ can be calculated by ~\eqref{qij}.
\end{lemma}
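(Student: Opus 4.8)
The plan is to establish the recursion by a \emph{first-transition} (first-step) analysis of the embedded jump chain of D-SMP, exploiting the fact that, by the decomposition theorem (Theorem~\ref{decompositionTheorem}), D-SMP is time-homogeneous and hence enjoys the regenerative (semi-Markov) property at its jump epochs. Concretely, starting from $S_{i,j}$ the process waits a sojourn time $\widehat{\mathrm{W}}_{i,j}$ and then jumps to exactly one of the neighbouring states identified in the theorem: it moves to $S_{i+1,0}$ (a further vehicle departs) with probability $q_{i,j}$, to $S_{i-1,k}$ (recruiter $k$ finishes its recruitment first) with probability $p_{i,j}^{k}$, or directly to the absorbing failure state $F$ with probability $b_{i,j}$. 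First I would record that these exhaust the outgoing transitions and that their probabilities sum to one, which follows from the identity $q_{i,j}+b_{i,j}=1-\sum_{k=1}^{i}p_{i,j}^{k}$ together with \eqref{bij} and \eqref{qij}.

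Next I would write the pathwise decomposition $\widehat{\mathrm{Q}}_{i,j}=\widehat{\mathrm{W}}_{i,j}+\widehat{\mathrm{Q}}^{\mathrm{next}}$, where $\widehat{\mathrm{Q}}^{\mathrm{next}}$ is the residual first-passage time to $F$ measured from the instant of the first jump, with the convention $\widehat{\mathrm{Q}}^{\mathrm{next}}=0$ when the jump lands in $F$. Taking expectations and conditioning on the identity $J$ of the jump target via the tower property yields $\mathbb{E}[\widehat{\mathrm{Q}}_{i,j}]=\mathbb{E}[\widehat{\mathrm{W}}_{i,j}]+\sum_{S'}\Pr[J=S']\,\mathbb{E}[\widehat{\mathrm{Q}}^{\mathrm{next}}\mid J=S']$. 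The crucial step is to invoke time-homogeneity: conditioned on $J=S'$, the post-jump trajectory is a fresh copy of D-SMP initialised in $S'$, so $\mathbb{E}[\widehat{\mathrm{Q}}^{\mathrm{next}}\mid J=S']=\mathbb{E}[\widehat{\mathrm{Q}}_{S'}]$. Substituting the three transition types and using $\mathbb{E}[\widehat{\mathrm{Q}}_{S'}]=0$ for $S'=F$ collapses the sum to the stated expression. The boundary case $i=n$ is handled by the convention $\mathbb{E}[\widehat{\mathrm{Q}}_{n+1,0}]\triangleq 0$: from $S_{n,0}$ every group already runs on a single vehicle, so the transition nominally directed to $S_{n+1,0}$ is in fact a failure and correctly contributes zero residual time.

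I expect the main obstacle to be justifying that $\mathbb{E}[\widehat{\mathrm{W}}_{i,j}]$ enters \emph{additively and unconditionally}, even though in a general semi-Markov process the holding time in a state and the choice of the next state may be statistically dependent. The resolution is that only expectations are required: the tower property cleanly separates $\mathbb{E}[\widehat{\mathrm{W}}_{i,j}]$ from the conditional expectation of the remainder, while the semi-Markov regeneration established through DT guarantees that $\widehat{\mathrm{Q}}^{\mathrm{next}}$, conditioned on $J=S'$, is independent of everything that occurred during the current holding interval. A secondary point I would address is finiteness: because $b_{i,j}>0$ from every transient state, $F$ is reached in finite expected time, so all the $\mathbb{E}[\widehat{\mathrm{Q}}_{i,j}]$ are well defined and the linear system they satisfy is consistent, legitimising the manipulations above.
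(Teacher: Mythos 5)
Your proposal is correct and follows essentially the same route as the paper's proof: a first-step analysis conditioning on the first transition out of $S_{i,j}$ (to $F$ with probability $b_{i,j}$, to $S_{i+1,0}$ with probability $q_{i,j}$, or to $S_{i-1,k}$ with probability $p_{i,j}^{k}$), followed by collapsing the holding-time contributions using $\sum_{k=1}^{i}p_{i,j}^{k}+q_{i,j}+b_{i,j}=1$. The only difference is that you explicitly justify, via the tower property and the regenerative structure of D-SMP, why $\mathbb{E}\left[\widehat{\mathrm{W}}_{i,j}\right]$ enters additively and unconditionally despite the possible dependence between the holding time and the jump target --- a point the paper's proof handles implicitly by writing the unconditional expectation in each conditional branch before summing.
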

\begin{proof}
See Appendix \ref{expecteTimeToFailure_proof}.
\end{proof}
Using Lemma~\ref{expecteTimeToFailure}, we obtain the closed-form of \textit{C-MTTF}{\small$\big(\mathcal{A}|\mathcal{M}\big(\mathcal{P}(\mathcal{A}),t\big),\mathcal{R}(t){=}0,t{=}0\big)$} in the following key theorem.
\begin{theorem}[\textit{C-MTTF}{\small$\big(\mathcal{A}|\mathcal{M}\big(\mathcal{P}(\mathcal{A}),t\big),\mathcal{R}(t){=}0,t{=}0\big)$}]\label{MTTFJ2dn}
Under {\tt RP-VC$_n$}, \textit{C-MTTF}{\small$\big(\mathcal{A}|\mathcal{M}\big(\mathcal{P}(\mathcal{A}),t\big),\mathcal{R}(t){=}0,t{=}0\big)=\mathbb{E}\left[\widehat{\mathrm{Q}}_{0,0}\right]$}, where
\begin{equation}\label{F53}
\begin{aligned}
\mathbb{E}\left[\widehat{\mathrm{Q}}_{0,0}\right]
  &=\mathbb{E}\left[\widehat{\mathrm{W}}_{0,0}\right]+\boldsymbol{\sum} _{i=1}^{n}\beta(i) \mathbb{E}\left[\widehat{\mathrm{W}}_{i,0}\right]\\
  &+\boldsymbol{\sum} _{i=1}^{n} \boldsymbol{\sum} _{j=1}^{i}\beta(i)\mathbb{E}\left[\widehat{\mathrm{W}}_{i-1,j}\right] A(i+1,j).
\end{aligned}
\end{equation}
In (\ref{F53}), $\beta(i)$ is given by
\begin{equation}
  \beta(i)=\boldsymbol{\prod} _{k=1}^{i}\alpha(k),
\end{equation}
where
\begin{equation}\label{alphaEq}
  \alpha(k)= \frac{q_{k-1,0}}{1-\left(\boldsymbol{\sum} _{j=1}^{k} q_{k-1,j} A(k+1,j)\right)}.
\end{equation}
Also, function $A(\cdot,\cdot)$ is a recursive function defined as
\begin{equation}\label{eqq85}
\displaystyle A(i,k) =\alpha(i)\boldsymbol{\sum} _{j=1}^{i}p_{i-1,j}^{k} A(i+1,j) +p_{i-1,0}^{k}.
\end{equation}
where $A(n+1,k)=p_{n,0}^k$,
\end{theorem}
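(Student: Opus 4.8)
The plan is to solve the linear recurrence supplied by Lemma~\ref{expecteTimeToFailure},
\[
\mathbb{E}[\widehat{\mathrm{Q}}_{i,j}]=\mathbb{E}[\widehat{\mathrm{W}}_{i,j}]+q_{i,j}\,\mathbb{E}[\widehat{\mathrm{Q}}_{i+1,0}]+\sum_{k=1}^{i}p_{i,j}^{k}\,\mathbb{E}[\widehat{\mathrm{Q}}_{i-1,k}],
\]
together with the boundary values $\mathbb{E}[\widehat{\mathrm{Q}}_{n+1,0}]=0$, $q_{0,0}=1$, and $q_{n,0}=0$. The structure to exploit is that D-SMP (Fig.~\ref{j2nMarkov}) is a layered birth--death chain: transitions connect only adjacent layers $\hat{\mathbf{S}}_i$, every upward step lands on the entry state $S_{i+1,0}$, and every downward step lands on some $S_{i-1,k}$ with $k\ge 1$. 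Since $F$ is absorbing and $\mathbb{E}[\widehat{\mathrm{Q}}_{0,0}]$ is the mean time to absorption, I would write it as $\sum_{(i,j)}N_{i,j}\,\mathbb{E}[\widehat{\mathrm{W}}_{i,j}]$, where $N_{i,j}$ is the expected number of visits to $S_{i,j}$ started from $S_{0,0}$, and then identify $N_{i,0}=\beta(i)$ and $N_{i,j}=\beta(i{+}1)A(i{+}2,j)$ for $j\ge 1$; re-indexing these two families yields exactly the three sums of \eqref{F53}.

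First I would pin down the meaning of $A$. I claim $A(i{+}1,k)$ is the (substochastic) probability that, started from the entry state $S_{i,0}$, the first return of the process to layer $i{-}1$ occurs at $S_{i-1,k}$, the deficit $1-\sum_k A(i{+}1,k)$ being the probability of hitting $F$ first. Its recurrence \eqref{eqq85} follows by first-step analysis inside layers $\ge i$: a direct downward step contributes the base term $p_{i,0}^{k}$ (with the top-layer case $A(n{+}1,k)=p_{n,0}^{k}$ coming from $q_{n,0}=0$), while an upward step to $S_{i+1,0}$ returns to $S_{i,j}$ with probability $A(i{+}2,j)$ and then recurses. Summing the geometric series over the number of up--down excursions within layer $i$ produces the factor $\bigl(1-\sum_{j}q_{i,j}A(i{+}2,j)\bigr)^{-1}$, and multiplying by the probability $q_{i,0}$ of the initiating upward step gives precisely $\alpha(i{+}1)$ of \eqref{alphaEq}, reproducing \eqref{eqq85}. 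Carried out rigorously, this amounts to substituting the already-established representation $\mathbb{E}[\widehat{\mathrm{Q}}_{i+1,0}]=\Phi_{i+1,0}+\sum_{j}A(i{+}2,j)\,\mathbb{E}[\widehat{\mathrm{Q}}_{i,j}]$ into the recurrence and solving the resulting $(i{+}2)$-dimensional linear system for $\{\mathbb{E}[\widehat{\mathrm{Q}}_{i,j}]\}_j$ by downward induction on $i$.

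Next I would compute the visit counts. The same geometric argument, now counting visits rather than passage probabilities, shows that one excursion started at $S_{i-1,0}$ generates on average $\alpha(i)$ up-crossings into $S_{i,0}$; since each visit to $S_{i-1,0}$ initiates exactly one such excursion and $S_{i-1,0}$ is never revisited within it, $N_{i,0}=\alpha(i)N_{i-1,0}$, whence $N_{i,0}=\beta(i)=\prod_{k=1}^{i}\alpha(k)$ with $N_{0,0}=1$. Because a state $S_{i,j}$ with $j\ge 1$ is entered only by a downward crossing from $S_{i+1,0}$, its count is $N_{i,j}=N_{i+1,0}\,A(i{+}2,j)=\beta(i{+}1)A(i{+}2,j)$. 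Substituting into $\mathbb{E}[\widehat{\mathrm{Q}}_{0,0}]=\sum_{(i,j)}N_{i,j}\mathbb{E}[\widehat{\mathrm{W}}_{i,j}]$ and relabelling the second family via $i\mapsto i{-}1$ delivers \eqref{F53}.

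The main obstacle is the bidirectional, non-triangular coupling: each $\mathbb{E}[\widehat{\mathrm{Q}}_{i,j}]$ depends both on a higher layer (through $\mathbb{E}[\widehat{\mathrm{Q}}_{i+1,0}]$) and on the lower layer, and once the higher-layer term is expanded, all states of layer $i$ become mutually coupled through the common sum $\sum_{j}A(i{+}2,j)\,\mathbb{E}[\widehat{\mathrm{Q}}_{i,j}]$. Resolving this requires summing the excursions up to layer $i{+}1$ and back, i.e.\ inverting $I-R_i$ where $R_i$ is the substochastic ``up-and-return'' operator of layer $i$; invertibility (equivalently $\sum_{j}q_{i,j}A(i{+}2,j)<1$, the source of every denominator in \eqref{alphaEq}) holds because $F$ is reachable with positive probability from every transient state, so $R_i$ has spectral radius below one. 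Care is also needed at the boundaries ($i=0$, where there is no downward term and $q_{0,0}=1$, and $i=n$, where $q_{n,0}=0$), which are exactly what make the downward induction close.
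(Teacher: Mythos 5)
Your argument is correct, but it reaches \eqref{F53} by a genuinely different route than the paper. The paper's proof is a direct algebraic elimination on the linear system of Lemma~\ref{expecteTimeToFailure}: it writes out the recurrences layer by layer, solves for $\mathbb{E}[\widehat{\mathrm{Q}}_{n,0}]$ in terms of layer $n{-}1$, substitutes downward, and lets $\alpha$, $A$, and $\beta$ emerge as the bookkeeping devices of that elimination; the quantities carry no stated probabilistic meaning, and the nonvanishing of the denominators $1-\sum_j q_{i,j}A(i{+}1,j)$ is used without comment. You instead expand $\mathbb{E}[\widehat{\mathrm{Q}}_{0,0}]$ as $\sum_{(i,j)}N_{i,j}\mathbb{E}[\widehat{\mathrm{W}}_{i,j}]$ over expected visit counts and identify $A(i{+}1,k)$ as a taboo first-passage probability, $\alpha(i)$ as the expected number of up-crossings per excursion, and $\beta(i)=N_{i,0}$; your excursion decomposition then reproduces the recursions \eqref{alphaEq} and \eqref{eqq85} with semantic content attached, and your re-indexing check does recover \eqref{F53} exactly. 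What your approach buys is an interpretation of every factor in the closed form and an explicit justification (spectral radius of the up-and-return operator below one, since $F$ is reachable from every transient state) for the invertibility that the paper's divisions silently assume; what the paper's approach buys is that it never needs the visit-count identity for semi-Markov processes or the regenerative structure of excursions, only the recurrence already proved in Lemma~\ref{expecteTimeToFailure}. One phrasing in your write-up should be tightened: a state $S_{i,j}$ with $j\ge 1$ is \emph{not} entered only from $S_{i+1,0}$ --- it can be entered by a recruiter completion from any $S_{i+1,j'}$ --- but your conclusion $N_{i,j}=N_{i+1,0}A(i{+}2,j)$ survives because every excursion into layers $\ge i{+}1$ begins at $S_{i+1,0}$ and terminates in at most one descent, which lands at $S_{i,j}$ with probability exactly $A(i{+}2,j)$ by your own definition of $A$.
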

\begin{proof}
See Appendix \ref{MTTFJ2dn_proof}.
\end{proof}
\pgfplotscreateplotcyclelist{customlist}{
             {blue!80!yellow,mark=*},
            {red!60!yellow,mark=square,dashed},
            {green!50!black,mark=star},
            {black!80!blue,mark=diamond,dashed},
             {black!70!yellow,mark=otimes},
            {red!80!blue,mark=triangle,dashed}
}
\begin{figure*}[!t]
    \begin{subfigure}{.33\textwidth}
      \begin{tikzpicture}
          \begin{axis}[
              label style={inner sep=1pt},
tick label style={inner sep=1pt},
          height=0.8in,
          width=2.0in,
          scale only axis,
          xtick={1,2,3,4,5,6},
          xmajorgrids,
          ymin=0.0,
          ylabel=C-MTTF (Hours),
          xlabel=Group Size ($n$),
          ymajorgrids,
           label style={font=\fontsize{5}{5}\selectfont},
          tick label style={font=\fontsize{5}{5}\selectfont},
          grid=both,
          minor grid style={gray!20},
          major grid style={gray!50},
          mark size=1.3pt,
          minor tick num=2,
          legend columns=2
          legend style ={ at={(0.5,1.0)},
            anchor=north west,
            font=\fontsize{2}{3}\selectfont,
            fill=white,align=left},cycle list name=customlist]
        \addplot table [x=size,y=PMTTF10] {results/h1.txt};
        \addlegendentry{\fontsize{4}{5}\selectfont PMTTF};
        \addplot  table [x=size,y=SMTTF10] {results/h1.txt};
        \addlegendentry{\fontsize{4}{5}\selectfont SMTTF ($\lambda_u=6$)};
        \addplot  table [x=size,y=PMTTF15] {results/h1.txt};
        \addlegendentry{\fontsize{4}{5}\selectfont PMTTF};
        \addplot  table [x=size,y=SMTTF15] {results/h1.txt};
        \addlegendentry{\fontsize{4}{5}\selectfont SMTTF ($\lambda_u=4$)};
         \addplot  table [x=size,y=PMTTF20] {results/h1.txt};
        \addlegendentry{\fontsize{4}{5}\selectfont PMTTF};
        \addplot  table [x=size,y=SMTTF20] {results/h1.txt};
        \addlegendentry{\fontsize{4}{5}\selectfont SMTTF ($\lambda_u=3$)};
        \end{axis}
      \end{tikzpicture}
      \caption{\footnotesize $\lambda_z=1$.}
      \label{g1AR}
    \end{subfigure}%
    \begin{subfigure}{.33\textwidth}
      \begin{tikzpicture}
          \begin{axis}[
              label style={inner sep=1pt},
tick label style={inner sep=1pt},
           height=0.8in,
          width=2.0in,
          scale only axis,
          xtick={1,2,3,4,5,6},
          xmajorgrids,
          ymin=0.0,
          ylabel=C-MTTF (Hours),
          xlabel=Group Size ($n$),
          ymajorgrids,
           label style={font=\fontsize{5}{5}\selectfont},
          tick label style={font=\fontsize{5}{5}\selectfont},
          grid=both,
          minor grid style={gray!20},
          major grid style={gray!50},
          mark size=1.3pt,
          minor tick num=2,
          legend columns=2
          legend style ={ at={(0.5,1.0)},
            anchor=north west,
            font=\fontsize{2}{3}\selectfont,
            fill=white,align=left},cycle list name=customlist]
        \addplot table [x=size,y=PMTTF10] {results/h2.txt};
        \addlegendentry{\fontsize{4}{5}\selectfont PMTTF};
        \addplot  table [x=size,y=SMTTF10] {results/h2.txt};
        \addlegendentry{\fontsize{4}{5}\selectfont SMTTF ($\lambda_u=6$)};
        \addplot  table [x=size,y=PMTTF15] {results/h2.txt};
        \addlegendentry{\fontsize{4}{5}\selectfont PMTTF};
        \addplot  table [x=size,y=SMTTF15] {results/h2.txt};
        \addlegendentry{\fontsize{4}{5}\selectfont SMTTF ($\lambda_u=4$)};
         \addplot  table [x=size,y=PMTTF20] {results/h2.txt};
        \addlegendentry{\fontsize{4}{5}\selectfont PMTTF};
        \addplot  table [x=size,y=SMTTF20] {results/h2.txt};
        \addlegendentry{\fontsize{4}{5}\selectfont SMTTF ($\lambda_u=3$)};
        \end{axis}
      \end{tikzpicture}
      \caption{\footnotesize $\lambda_z=\frac{1}{2}$.}
    \label{g1AR}
    \end{subfigure}%
      \begin{subfigure}{.33\textwidth}
      \begin{center}
      \begin{tikzpicture}
          \begin{axis}[
              label style={inner sep=1pt},
tick label style={inner sep=1pt},
           height=0.8in,
          width=2.0in,
          scale only axis,
          xtick={1,2,3,4,5,6},
          xmajorgrids,
          ymin=0.0,
          ylabel=C-MTTF (Hours),
          xlabel=Group Size ($n$),
          ymajorgrids,
           label style={font=\fontsize{5}{5}\selectfont},
          tick label style={font=\fontsize{5}{5}\selectfont},
          grid=both,
          minor grid style={gray!20},
          major grid style={gray!50},
          mark size=1.3pt,
          minor tick num=2,
          legend columns=2
          legend style ={ at={(0.5,1.0)},
            anchor=north west,
            font=\fontsize{2}{3}\selectfont,
            fill=white,align=left},cycle list name=customlist]
        \addplot table [x=size,y=PMTTF10] {results/h3.txt};
        \addlegendentry{\fontsize{4}{5}\selectfont PMTTF};
        \addplot  table [x=size,y=SMTTF10] {results/h3.txt};
        \addlegendentry{\fontsize{4}{5}\selectfont SMTTF ($\lambda_u=6$)};
        \addplot  table [x=size,y=PMTTF15] {results/h3.txt};
        \addlegendentry{\fontsize{4}{5}\selectfont PMTTF};
        \addplot  table [x=size,y=SMTTF15] {results/h3.txt};
        \addlegendentry{\fontsize{4}{5}\selectfont SMTTF ($\lambda_u=4$)};
         \addplot  table [x=size,y=PMTTF20] {results/h3.txt};
        \addlegendentry{\fontsize{4}{5}\selectfont PMTTF};
        \addplot  table [x=size,y=SMTTF20] {results/h3.txt};
        \addlegendentry{\fontsize{4}{5}\selectfont SMTTF ($\lambda_u=3$)};
        \end{axis}
      \end{tikzpicture}
    \caption{\footnotesize $\lambda_z=\frac{1}{3}$.}
    \end{center}
    \end{subfigure}
    \caption{\small Exactness of \textit{C-MTTF}{\footnotesize$\big(\mathcal{A}|\mathcal{M}\big(\mathcal{P}(\mathcal{A}),t\big),\mathcal{R}(t){=}0,t{=}0\big)$} result in Theorem~\ref{mttfEXPTheorem}.}\label{MathCorrectness}
    \vspace{-5mm}
\end{figure*}
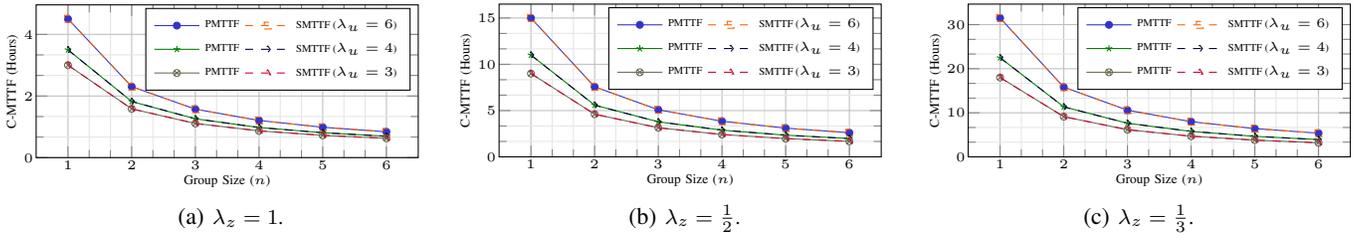
As can be seen from Theorem~\ref{MTTFJ2dn}, calculation of \textit{C-MTTF}{\small$\big(\mathcal{A}|\mathcal{M}\big(\mathcal{P}(\mathcal{A}),t\big),\mathcal{R}(t){=}0,t{=}0\big)$} requires computing $\mathbb{E}[\widehat{\mathrm{W}}_{i,j}]$, which is carried out as follows. The expected time that D-SMP resides in H-state $S_{i,j}\in\hat{\mathbf{S}}_i$ is equivalent to the expected value of {\small$\min\left\{\min_{1\le r \le 2n-i}\left\{S_{i,j}^{\mathsf{Soj}}(r)\right\},\min_{\substack{1\le h \le i}}\left\{S_{i,j}^{\mathsf{Rec}}(h)\right\}\right\}$}, capturing a situation where a vehicle departs the VC or a recruiter completes its recruitment. As a result, we have
\begin{equation}\label{ephi1_main}
\begin{aligned}
\mathbb{E}\left[\widehat{\mathrm{W}}_{i,j}\right] =\boldsymbol{\int}_{0}^{\infty } \textrm{Pr}\Big[\min\Big\{&\min_{1\le r \le 2n-i}\left\{S_{i,j}^{\mathsf{Soj}}(r)\right\},\\
&\min_{\substack{1\le h \le i}}\left\{S_{i,j}^{\mathsf{Rec}}(h)\right\}\Big\}>t\Big] dt.
\end{aligned}
\end{equation}
In next section, we compute {\small $\mathbb{E}\left[\widehat{\mathrm{W}}_{i,j}\right]$} (see \eqref{eqqexp2}) for a version of {\tt RP-VC$_n$}, where sojourn times and recruitment duration of vehicles follow exponential distribution, which is a common assumption \cite{29}. We further generalize our results to the case, where sojourn times and recruitment duration of vehicles follow arbitrary distributions in Appendix \ref{app:generalization}.

\vspace{-2mm}
\subsection{{\tt RP-VC$_n$} with Exponentially Distributed Sojourn Time and Recruitment Duration}\label{ERT}
We present the final theoretical results of this paper for the specific methodology {\tt RP-VC$_n$}${\sim}e$, which refers to {\tt RP-VC$_n$} when the recruitment duration follows exponential distribution with parameter {\small$\lambda_u$}. We obtain the following result, which demonstrates that under {\tt RP-VC$_n$}${\sim}e$, the residual sojourn and recruitment times of the vehicles follow exponential distribution.
\begin{corollary}\label{expj2nCorollary}
Let two random variables {\small$Z'_k$} and {\small$U'_r$}, for {\small$1\le k\le 2n-i$} and {\small$1\le r\le i$}, denote the residual sojourn and recruitment times of the vehicles in H-state {\small$S_{i,j}$}. Under {\tt RP-VC$_n$}${\sim}e$, if D-SMP is in H-state {\small$S_{i,j}$}, random variables {\small$Z'_{k}$} and {\small$U'_{r}$} follow exponential distribution with parameters {\small$\lambda_z$} and {\small$\lambda_u$}, respectively.
\end{corollary}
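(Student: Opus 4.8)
The plan is to reduce the entire statement to the memorylessness of the exponential law, extended to allow a random but independent threshold. First I would isolate the following elementary lemma: if $X\sim\exp(\mu)$ and $T\ge 0$ is any nonnegative random variable independent of $X$, then the conditional law of $X-T$ given $X>T$ is again $\exp(\mu)$. This is immediate, since for $s\ge 0$ we have $\Pr[X-T>s\mid X>T]=\Pr[X>T+s]/\Pr[X>T]=\mathbb{E}[e^{-\mu(T+s)}]/\mathbb{E}[e^{-\mu T}]=e^{-\mu s}$. Every e-distribution introduced in Definitions~\ref{expdist}--\ref{gammadist} has the form ``base exponential minus an independent elapsed time, conditioned on positivity,'' so this single lemma does all the work once the required independence is in place.

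Next I would show, by induction along the definitional chain, that under {\tt RP-VC$_n$}${\sim}e$ (so that $R_U=\exp(\lambda_u)$) every composite e-distribution collapses to one of the two base exponentials. Concretely, $\dot{\exp}$ arises as $Z_1-Z_2$ conditioned on $Z_1>Z_2$, so the lemma with $X=Z_1\sim\exp(\lambda_z)$ and $T=Z_2$ gives $\dot{\exp}=\exp(\lambda_z)$; and $\ddot{\exp}$ arises as $Z_1-V$ with $V=Z_2+U$, so the lemma with $X=Z_1$, $T=V$ gives $\ddot{\exp}=\exp(\lambda_z)$. On the recruitment side, $\varphi=U-Z'$ with $Z'\sim\dot{\exp}$ gives $\varphi=\exp(\lambda_u)$ via $X=U$, $T=Z'$; then $\psi=U-W$ with $W\sim\varphi$ gives $\psi=\exp(\lambda_u)$; and finally $\gamma=W-U$ gives $\gamma=\exp(\lambda_u)$, this last step using the \emph{already established} fact that $W\sim\exp(\lambda_u)$ is itself memoryless, now with $X=W$ and $T=U$. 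Thus all residual-sojourn distributions equal $\exp(\lambda_z)$ and all residual-recruitment distributions equal $\exp(\lambda_u)$.

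The result then follows by reading off the structure of the H-states from the Decomposition Theorem (Theorem~\ref{decompositionTheorem}). In state $S_{i,j}$ the first list $S_{i,j}^{(1)}$ (namely $\mathcal{L}_1^{(\dot{\exp})}\langle 2n{-}i\rangle$ or $\mathcal{L}_2^{(\ddot{\exp})}\langle 2n{-}i\rangle$) holds exactly the residual sojourn times $Z'_k$, while the second list $S_{i,j}^{(2)}$ (namely $\mathcal{H}_1^{(\varphi;R)}\langle i{-}1;1\rangle$ or $\mathcal{H}_2^{(\gamma;\psi)}\langle j{-}1;i{-}(j{-}1)\rangle$) holds the residual recruitment times $U'_r$; under {\tt RP-VC$_n$}${\sim}e$ the trailing general component $R$ is itself $\exp(\lambda_u)$. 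Combining this with the collapse established above yields $Z'_k\sim\exp(\lambda_z)$ for $1\le k\le 2n{-}i$ and $U'_r\sim\exp(\lambda_u)$ for $1\le r\le i$, as claimed.

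The main obstacle is verifying the independence hypothesis of the lemma at each step, and in particular for $\gamma=W-U$, where the base variable $W$ is itself composite and already contains a recruitment duration. I would discharge this using the $\langle e\rangle$-algebra bookkeeping: the reducer of an EDV (Definition~\ref{reducer}) records only the already-elapsed times of \emph{other} vehicles, and since $\mathcal{Z}$ and $\mathcal{U}$ are i.i.d.\ and mutually independent, the residual time whose memorylessness is invoked is always independent of the quantity subtracted from it. Confirming that the $U$ subtracted in $\gamma$ belongs to a distinct vehicle, hence is independent of the $W$ it is subtracted from, is the one place where genuine care is needed; everything else reduces to a mechanical application of the lemma.
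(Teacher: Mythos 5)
Your proposal is correct and follows essentially the same route as the paper: the paper's proof also rests on the single fact that an exponential variable minus an independent nonnegative threshold, conditioned on positivity, remains exponential with the same rate (its Lemma~\ref{expLem}, quoted from prior art), applied to each of the five e-distributions and then combined with the H-state structure given by Theorem~\ref{decompositionTheorem}. The only cosmetic difference is that the paper expands each of $\dot{\exp},\ddot{\exp},\varphi,\psi,\gamma$ directly into the underlying i.i.d.\ base variables $Z_1,Z_2,U_1,U_2$ and identifies its duration $\hat{\Gamma}(\cdot)$, which makes the independence you flag as the delicate point for $\gamma$ manifest, whereas you chain the lemma inductively along the definitional hierarchy.
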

\begin{proof}
See Appendix \ref{expj2nCorollary_proof}.
\end{proof}
Corollary \ref{expj2nCorollary} states that if recruitment duration of recruiters follow exponential distribution, $\dot{exp}$, $\ddot{exp}$, $\varphi$, $\psi$, and $\gamma$ are all exponential. Considering Corollary \ref{expj2nCorollary}, in the following theorem, we obtain the closed-form expression of \textit{C-MTTF} of application processing under {\tt RP-VC$_n$}${\sim}e$.
\setlength{\belowdisplayskip}{3.8pt}
\begin{theorem}\label{mttfEXPTheorem}
Considering methodology {\tt RP-VC$_n$}${\sim}e$, \textit{C-MTTF}{\small$\big(\mathcal{A}|\mathcal{M}\big(\mathcal{P}(\mathcal{A}),t\big),\mathcal{R}(t)=0,t{=}0\big)=\mathbb{E}\left[\widehat{\mathrm{Q}}_{0,0}\right]$}, where
\begin{equation}\label{MTTFEXP}
\begin{aligned}
&\mathbb{E}\left[\widehat{\mathrm{Q}}_{0,0}\right]
{=} \mathbb{E}\left[\widehat{\mathrm{W}}_{0,0}\right] \left(1{+}\beta'(1)A'(2)\right){+}\beta'(n)\mathbb{E}\left[\widehat{\mathrm{W}}_{n,0}\right]\\
 &+\sum_{i=1}^{n-1}\mathbb{E}\left[\widehat{\mathrm{W}}_{i,0}\right] \Big((i{+}1)\beta'(i{+}1) A'(i{+}2){+}\beta'(i)\Big).
\end{aligned}
\end{equation}
In \eqref{MTTFEXP}, $\beta'(i)$ is given by
\begin{equation}\label{eqq52}
  \beta'(i)=\boldsymbol{\prod} _{k=1}^{i}\alpha'(k),
\end{equation}
where
\begin{equation}\label{eq96}
\begin{aligned}
  \alpha'(k)=\frac{q_{k-1,0}}{1-k\times q_{k-1,0}\times A'(k+1)}.
\end{aligned}
\end{equation}
Also, $A'(\cdot)$ is calculated by
\begin{equation}\label{eq99}
\begin{aligned}
A'(i)=\frac{p_{i-1,0}^{1}}{1-i\times q_{i-1,0}A'(i+1)},
\end{aligned}
\end{equation}
and $A'(n+1)=p_{n,0}^1$.
\end{theorem}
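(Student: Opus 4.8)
The plan is to prove Theorem~\ref{mttfEXPTheorem} by specializing the general closed form of Theorem~\ref{MTTFJ2dn} to the exponential regime, where Corollary~\ref{expj2nCorollary} guarantees that every residual sojourn time and every residual recruitment time is again exponential, with fixed rates $\lambda_z$ and $\lambda_u$. The payoff of Corollary~\ref{expj2nCorollary} is that the second index $j$ of each H-state $S_{i,j}\in\hat{\mathbf{S}}_i$ becomes statistically irrelevant: for a fixed $i$, all of the $S_{i,j}$ carry the identical multiset of residual-time distributions, so the process restricted to $\hat{\mathbf{S}}_i$ behaves like a single aggregated state. First I would make this precise by collapsing the per-state quantities, and only then read off the coefficients of \eqref{F53}.

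Concretely, in any $S_{i,j}$ there are $2n-i$ vehicles with $\exp(\lambda_z)$ residual sojourn times and $i$ recruiters with $\exp(\lambda_u)$ residual recruitment times, so the sojourn time is the minimum of these independent exponentials and, by evaluating \eqref{ephi1}, $\mathbb{E}[\widehat{\mathrm{W}}_{i,j}]=1/\big((2n-i)\lambda_z+i\lambda_u\big)$ is independent of $j$. By the competing-exponentials evaluation of \eqref{pijk}, each recruiter wins first with the same probability, so $p_{i,j}^{k}=p_{i}:=\lambda_u/\big((2n-i)\lambda_z+i\lambda_u\big)$ independently of $j$ and $k$; substituting into \eqref{bij}--\eqref{qij} gives $q_{i,j}=q_{i}$ and $b_{i,j}=b_{i}$, again $j$-free. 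These identities are exactly what makes the general machinery collapse.

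Next I would reduce the auxiliary functions. Because $A(i+1,j)$ inherits $j$-independence and $p_{i-1,j}^{k}=p_{i-1,0}^{0}$, the recursion \eqref{eqq85} loses its $k$-dependence; writing $A'(i):=A(i,k)$, the inner sum contributes a multiplicity $i$, so substituting $\alpha(i)=\alpha'(i)$ from the reduced \eqref{alphaEq} into $A'(i)=p_{i-1,0}^{0}\big(1+i\,\alpha'(i)A'(i+1)\big)$ and clearing the common denominator $1-i\,q_{i-1,0}A'(i+1)$ collapses it to the stated $A'(i)=p_{i-1,0}^{0}/\big(1-i\,q_{i-1,0}A'(i+1)\big)$, with boundary value $A'(n+1)=p_{n,0}^{0}$. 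The same reduction of \eqref{alphaEq} yields $\alpha'(k)$ and hence $\beta'(i)=\prod_{k=1}^{i}\alpha'(k)$. A useful cross-check is to re-derive everything directly from Lemma~\ref{expecteTimeToFailure}: the $j$-independence forces $\mathbb{E}[\widehat{\mathrm{Q}}_{i,j}]=:\mathrm{Q}_i$ to solve the three-term boundary-value recurrence $\mathrm{Q}_i=\mathbb{E}[\widehat{\mathrm{W}}_{i,0}]+q_i\mathrm{Q}_{i+1}+i\,p_i\mathrm{Q}_{i-1}$ with $\mathrm{Q}_{n+1}=0$, whose solution must coincide with the specialized \eqref{F53}.

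Finally I would substitute $\mathbb{E}[\widehat{\mathrm{W}}_{i-1,j}]=\mathbb{E}[\widehat{\mathrm{W}}_{i-1,0}]$, $A(i+1,j)=A'(i+1)$, and $\beta(i)=\beta'(i)$ into \eqref{F53}, replace each inner $j$-sum by its multiplicity, reindex, and gather the coefficient of every $\mathbb{E}[\widehat{\mathrm{W}}_{i,0}]$ to obtain \eqref{MTTFEXP}, handling the boundary contributions at $i=0$ (which produce the factor $1+\beta'(1)A'(2)$) and at $i=n$ (which produces $\beta'(n)$) separately. The main obstacle is precisely this last bookkeeping: tracking how the $i+1$ statistically-equivalent H-states of $\hat{\mathbf{S}}_i$ enter as multiplicities, and regrouping the resulting $\beta'$ and $A'$ factors into the compact coefficient attached to each $\mathbb{E}[\widehat{\mathrm{W}}_{i,0}]$. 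Verifying the collapse on the small cases $n=1,2$ against the collapsed recurrence is the quickest way to pin down the constants before asserting the general pattern.
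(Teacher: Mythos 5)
Your proposal is correct and follows essentially the same route as the paper's proof: specialize Theorem~\ref{MTTFJ2dn} using Corollary~\ref{expj2nCorollary} to obtain $j$-independence of $\mathbb{E}[\widehat{\mathrm{W}}_{i,j}]$, $p_{i,j}^k$, and $q_{i,j}$, collapse $A(i,k)$, $\alpha(k)$, $\beta(i)$ to their primed forms by replacing inner sums with multiplicities, and regroup coefficients of each $\mathbb{E}[\widehat{\mathrm{W}}_{i,0}]$ with the boundary terms at $i=0$ and $i=n$ handled separately. The only cosmetic difference is the order of operations (the paper reindexes the sums in \eqref{F53} before invoking the $j$-independence), and your Lemma~\ref{expecteTimeToFailure} cross-check is a sensible but inessential addition.
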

\begin{proof}
See Appendix \ref{mttfEXPTheorem_proof}.
\end{proof}
\noindent To obtain the final result of (\ref{MTTFEXP}), we calculate {\small$\mathbb{E}\left[\widehat{\mathrm{W}}_{i,0}\right]$} below.
\vspace{-2mm}
\begin{corollary}\label{expected_sojourn_time}
Considering methodology {\tt RP-VC$_n$}${\sim}e$, the expected sojourn time in state {\small$S_{i,j}$} is given by
\begin{equation}\label{eqqexp2}
   \mathbb{E}\left[\widehat{\mathrm{W}}_{i,0}\right] =\frac{1}{( 2n-i) \lambda_{z} +i\lambda _{u}}.
\end{equation}
\end{corollary}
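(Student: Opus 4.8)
The plan is to identify $\widehat{\mathrm{W}}_{i,0}$ as the minimum of a family of independent exponential random variables and then invoke the elementary fact that such a minimum is exponential with rate equal to the sum of the constituent rates. I would begin from the tail-probability representation already recorded in \eqref{ephi1}, which for $j=0$ reads
\begin{equation}
\mathbb{E}\left[\widehat{\mathrm{W}}_{i,0}\right]{=}\int_{0}^{\infty } \textrm{Pr}\Big[\min\Big\{\min_{1\le r \le 2n-i}\left\{S_{i,0}^{(1)}(r)\right\},\min_{1\le h \le i}\left\{S_{i,0}^{(2)}(h)\right\}\Big\}{>}t\Big]\, dt,
\end{equation}
and observe that the first inner minimum runs over the $2n-i$ residual sojourn times collected in $S_{i,0}^{(1)}{=}\mathcal{L}_1^{(\dot{\exp})}\langle 2n{-}i\rangle$, whereas the second runs over the $i$ residual recruitment times collected in $S_{i,0}^{(2)}{=}\mathcal{H}_1^{(\varphi;R)}\langle i{-}1;1\rangle$. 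These two collections exhaust every clock whose expiry drives the process out of the H-state: a departure of any of the $2n-i$ present vehicles (leading to $S_{i+1,0}$ or $F$) or the completion of any of the $i$ recruitments (leading to some $S_{i-1,k}$).

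Next I would invoke Corollary~\ref{expj2nCorollary}, which guarantees that under {\tt RP-VC$_n$}${\sim}e$ each residual sojourn time $S_{i,0}^{(1)}(r)$ is exponential with parameter $\lambda_z$ and each residual recruitment time $S_{i,0}^{(2)}(h)$ is exponential with parameter $\lambda_u$, thereby collapsing the composite $\dot{\exp}$, $\ddot{\exp}$, $\varphi$, $\psi$, and $\gamma$ distributions to ordinary exponentials. I would then argue that, conditioned on the process occupying $S_{i,0}$, these $2n-i$ sojourn residuals and $i$ recruitment residuals are mutually independent: the sojourn and recruitment clocks live on the disjoint i.i.d.\ families $\mathcal{Z}$ and $\mathcal{U}$ of Definition~\ref{SeSVC}, and although each residual is formally an EDV carrying a common elapsed-time term, the memoryless property of the exponential renders each residual independent of the elapsed history once the state is fixed. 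With independence and exponentiality in hand, the survival function factorizes,
\begin{equation}
\textrm{Pr}\left[\widehat{\mathrm{W}}_{i,0}{>}t\right]{=}\prod_{r=1}^{2n-i}e^{-\lambda_z t}\prod_{h=1}^{i}e^{-\lambda_u t}{=}e^{-\left((2n-i)\lambda_z+i\lambda_u\right)t},
\end{equation}
so $\widehat{\mathrm{W}}_{i,0}$ is itself exponential with rate $(2n-i)\lambda_z+i\lambda_u$, and integrating the tail yields the claimed $\mathbb{E}\left[\widehat{\mathrm{W}}_{i,0}\right]{=}1/\left((2n-i)\lambda_z+i\lambda_u\right)$. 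The same count of $2n-i$ sojourn clocks and $i$ recruitment clocks holds in every $S_{i,j}$, so the identity is in fact independent of $j$.

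The genuinely hard content is not in this corollary itself but has already been discharged upstream by Corollary~\ref{expj2nCorollary}: reducing the stochastically intricate $\langle e\rangle$-algebra distributions to plain exponentials is what makes the minimum computable in closed form. The only point requiring care here is the independence justification, since the EDVs share subtracted elapsed-time terms and are therefore not independent as raw differences; the argument must lean explicitly on memorylessness conditioned on arrival in $S_{i,0}$, rather than on independence of the unconditional EDVs. Once that is stated cleanly, the remainder is the routine minimum-of-exponentials calculation.
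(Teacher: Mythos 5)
Your proposal is correct and follows essentially the same route as the paper's own proof: start from the tail-probability representation in \eqref{ephi1}, use Corollary~\ref{expj2nCorollary} to reduce every residual clock to an exponential with rate $\lambda_z$ or $\lambda_u$, factorize the survival function by independence, and integrate the resulting exponential tail. Your added care about why the residual EDVs may be treated as independent (via memorylessness conditioned on occupying the state, rather than independence of the raw differences) is a point the paper passes over with a single clause, but it does not change the argument.
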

\begin{proof}
See Appendix \ref{expected_sojourn_time_proof}.
\end{proof}
We next aim to compute {\small$p_{i,0}^{1}$}, utilized to obtain (\ref{eq99}), (\ref{eqq52}), and (\ref{eq96}). $q_{i-1,0}$ in (\ref{eq99}) and (\ref{eq96}) is computed based on (\ref{qij}).
\begin{corollary}\label{transitionProbCorollary}
Considering methodology {\tt RP-VC$_n$}${\sim}e$, the probability of transition from {\small$S_{i,0}$} to {\small$S_{i-1,0}$} is given by
\begin{equation}\label{eqqexp1}
 p_{i,0}^{1}=\frac{\lambda _{u}}{( 2n-i) \lambda_{z} +i\lambda _{u}}.
\end{equation}
\end{corollary}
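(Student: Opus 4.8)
The plan is to reduce the claim to the elementary \emph{competing exponentials} principle, with essentially the entire analytical burden absorbed into Corollary~\ref{expj2nCorollary}. First I would recall the definition of the transition probability in \eqref{pijk}: $p_{i,0}^{0}$ is the probability that the designated recruiter completes its recruitment before \emph{any} other event occurs, i.e., before any of the $2n-i$ residual sojourn times $S_{i,0}^{(1)}(r)$ expire and before any of the remaining recruitment times fire. By Corollary~\ref{expj2nCorollary}, once D-SMP is in H-state $S_{i,0}$ the $2n-i$ residual sojourn times are i.i.d.\ exponential with rate $\lambda_z$ and the $i$ residual recruitment times are i.i.d.\ exponential with rate $\lambda_u$; the memoryless property collapses each EDV back to a fresh exponential carrying its original rate, so the intricate $\langle e\rangle$-algebra bookkeeping plays no role in the final distributions.

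Next I would invoke mutual independence of these $2n$ competing clocks and apply the standard identity for the minimum of independent exponentials: the probability that a particular clock of rate $\lambda_u$ is strictly smallest equals $\lambda_u$ divided by the sum of all rates. Here there are $2n-i$ departure clocks of rate $\lambda_z$ and $i$ recruitment clocks of rate $\lambda_u$, so the total rate is $(2n-i)\lambda_z + i\lambda_u$ and
\[
p_{i,0}^{0}=\frac{\lambda_u}{(2n-i)\lambda_z + i\lambda_u},
\]
exactly as claimed. The argument makes no use of which recruiter is singled out nor of the secondary index $j$, which is precisely why the formula holds $\forall k,j$; as a consistency check, the relation $\sum_k p_{i,j}^k = i\lambda_u / \big((2n-i)\lambda_z+i\lambda_u\big)$ together with \eqref{bij}--\eqref{qij} recovers $b_{i,j}=i\lambda_z/\big((2n-i)\lambda_z+i\lambda_u\big)$ and $q_{i,j}=2(n-i)\lambda_z/\big((2n-i)\lambda_z+i\lambda_u\big)$, the aggregate departure probabilities of the $i$ recruiter vehicles and the $2n-2i$ doubly-covered vehicles, respectively.

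The only genuinely delicate point is the \emph{mutual independence} of the residual clocks, since the recruitment EDVs and sojourn EDVs are assembled from overlapping differences of the underlying $Z_\ell$ and $U_\ell$ and therefore appear correlated at the symbolic level. I expect this to be the main obstacle, and I would resolve it by appealing to the construction behind Corollary~\ref{expj2nCorollary}: each residual is the conditional overshoot of a distinct underlying variable given that it has not yet fired, and because the original $\{Z_\ell\}$ and $\{U_\ell\}$ are independent, memorylessness renders these overshoots independent exponentials conditioned on the current H-state. Once independence and the exponential marginals are granted, the remaining computation is the one-line minimum-of-exponentials evaluation above.
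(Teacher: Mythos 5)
Your proposal is correct and follows essentially the same route as the paper: both rely on Corollary~\ref{expj2nCorollary} to reduce the residual clocks in $S_{i,0}$ to $2n-i$ exponentials of rate $\lambda_z$ and $i$ exponentials of rate $\lambda_u$, and then evaluate $\Pr[U_0 < V_{i,0,0}^{\mathsf{Min}}]$; the paper simply writes out the conditioning integral $\int_0^\infty \lambda_u e^{-((2n-i)\lambda_z + i\lambda_u)u}\,du$ explicitly where you invoke the minimum-of-exponentials identity in closed form. Your consistency check against \eqref{bij}--\eqref{qij} and your remark on the independence of the residual overshoots are both sound and consistent with the paper's (implicit) use of the product form in the companion proof of Corollary~\ref{expected_sojourn_time}.
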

\begin{proof}
See Appendix \ref{transitionProbCorollary_proof}.
\end{proof}
Replacing (\ref{eqqexp2}) and (\ref{eqqexp1}) back in Theorem~\ref{mttfEXPTheorem} provides the value of \textit{C-MTTF}{\small$\big(\mathcal{A}|\mathcal{M}\big(\mathcal{P}(\mathcal{A}),t\big),\mathcal{R}(t)=0,t{=}0\big)$} and concludes the mathematical derivations of this paper. For a comprehensive case study of {\tt RP-VC$_n$} refer to Appendix~\ref{app:case_study}.

\vspace{-5mm}
\subsection{Generalizability of $\langle e\rangle$-algebra and DT}
Our mathematical derivations can be utilized in similar cloud-based computing paradigms  for the purpose of reliability analysis and fault-tolerant design. For example, authors in \cite{21} have presented a strategy, referred to by \textit{delay-sensitive and reliable (DSR) placement}, aiming at improving \textit{application placement availability} in cloud-based systems. DSR aims to find an optimal sub-tasks placement $\mathcal{P}$ (i.e., placing virtual machines that process the application sub-tasks on the cloud servers) to minimize the number of utilized servers for processing application,  while satisfying the application availability (i.e., decreasing the probability of failure) and delay constraints. To satisfy availability, DSR places two/three replications of each sub-task on two/three different servers. However, DSR considers the availability of each server and utilizes the inclusion-exclusion principle to compute the availability of the application, which has been proved to have exponential computation complexity. A similar problem can be found in \cite{55,56}.
By exploiting $\langle e\rangle$-algebra and DT, one can model application placement in cloud-based systems as a D-SMP and obtain MTTF of application processing, which can be further used as an optimization metric to design a fault-tolerant system.

 \vspace{-3mm}
\section{Numerical Evaluation}\label{simulation}
\noindent In this section, we carry out an extensive numerical analysis to verify the exactness of our mathematical investigations (Sec.~\ref{EOMA}) and prove the efficiency of {\tt RP-VC$_n$}, as compared to the current art method (Sec.~\ref{ERD}). The default parameters used for modeling a VC are borrowed from \cite{24,29}.
\vspace{-5mm}
\subsection{Exactness of the Mathematical Analysis}\label{EOMA}
We verify the result of Theorem~\ref{mttfEXPTheorem} in a scenario where the sojourn times (i.e., $Z_x$) of the vehicles are i.i.d exponential random variables with parameter $\lambda_z$, the value of which is chosen between $\{1,\frac{1}{2},\frac{1}{3}\}$ (i.e., the average sojourn time of each vehicle is one, two, or three hours). The duration of the recruitment operations are also chosen to be i.i.d exponential random variables with parameter $\lambda_u\in \{6,4,3\}$ (i.e., the average recruitment times are 10, 15, and 20 minutes). The figures are obtained via Monte-Carlo method, where each result is an average of $10^6$ independent runs.\par
Fig.~\ref{MathCorrectness} illustrates how partitioning a large application into $n$ smaller dependent groups (ranging from $n=1$ to $n=6$) can affect \textit{C-MTTF} of {\tt RP-VC$_n$}${\sim}e$. The figure demonstrates a match between the simulation results, shown by simulated MTTF ($SMTTF$) and that of the mathematical model obtained via Theorem~\ref{mttfEXPTheorem}, reflected by predicted MTTF ($PMTTF$). Also, as can be seen from Fig.~\ref{MathCorrectness}, for a fixed value of $\lambda_u$, as the number of groups increases, \textit{C-MTTF} declines because more vehicles are needed to process application $\mathcal{A}$, and thus the chance one of them departing the VC increases.\par
In real situations, due to computing resource deficiency of the vehicles, a single vehicle may not be able to meet the computing requirements of a CI-App. Consequently, if {\tt J}$_2$ could find two vehicles with enough computing resources to start processing a CI-App, when one of the vehicles allocated to the CI-App departs the VC, the other one may need to wait for a long time until a vehicle with enough computing resources arrives at the VC (demonstrated numerically in the Sec. \ref{ERD}). However, {\tt RP-VC$_n$} breaks down a CI-App into smaller groups, through partitioning. Consequently, more vehicles can satisfy the requirements of the CI-App groups. Besides, the time of transferring images/replicas of the groups among vehicles will be decreased. Accordingly, the value of $\lambda_u$ is not fixed in real situations. Fig.~\ref{LU4} indicates that how changing the value of $\lambda_u$ can alter the value of \textit{C-MTTF}. For this experiment, we define a function $\lambda'_u(\alpha)=\lambda_u\times\alpha$, where $\alpha$ is a decline factor. We then calculate \textit{C-MTTF} for $\lambda'_u(\alpha)$ (i.e., recruitment duration of the vehicles follow an exponential distribution with parameter $\lambda'_u(\alpha)$). As can be seen from the figure, as the value of $\alpha$ decreases (i.e., the value of $\lambda'_u(\alpha)$ declines) C-MTTF of {\tt RP-VC$_n$} increases since a smaller $\alpha$ implies a faster vehicle recruitment. Below, we will compare the efficiency of {\tt J}$_2$ and {\tt RP-VC$_n$} in processing CI-App.

\pgfplotscreateplotcyclelist{customlist3}{
             {red!80!yellow,mark=*,dashed},
            {blue!80!yellow,mark=square},
            {red!50!yellow,mark=star},
            {red!50!blue,mark=diamond},
}
\begin{figure}[t!]
\centering
\includegraphics[width=0.7\linewidth,trim=3 3 3 3,clip]{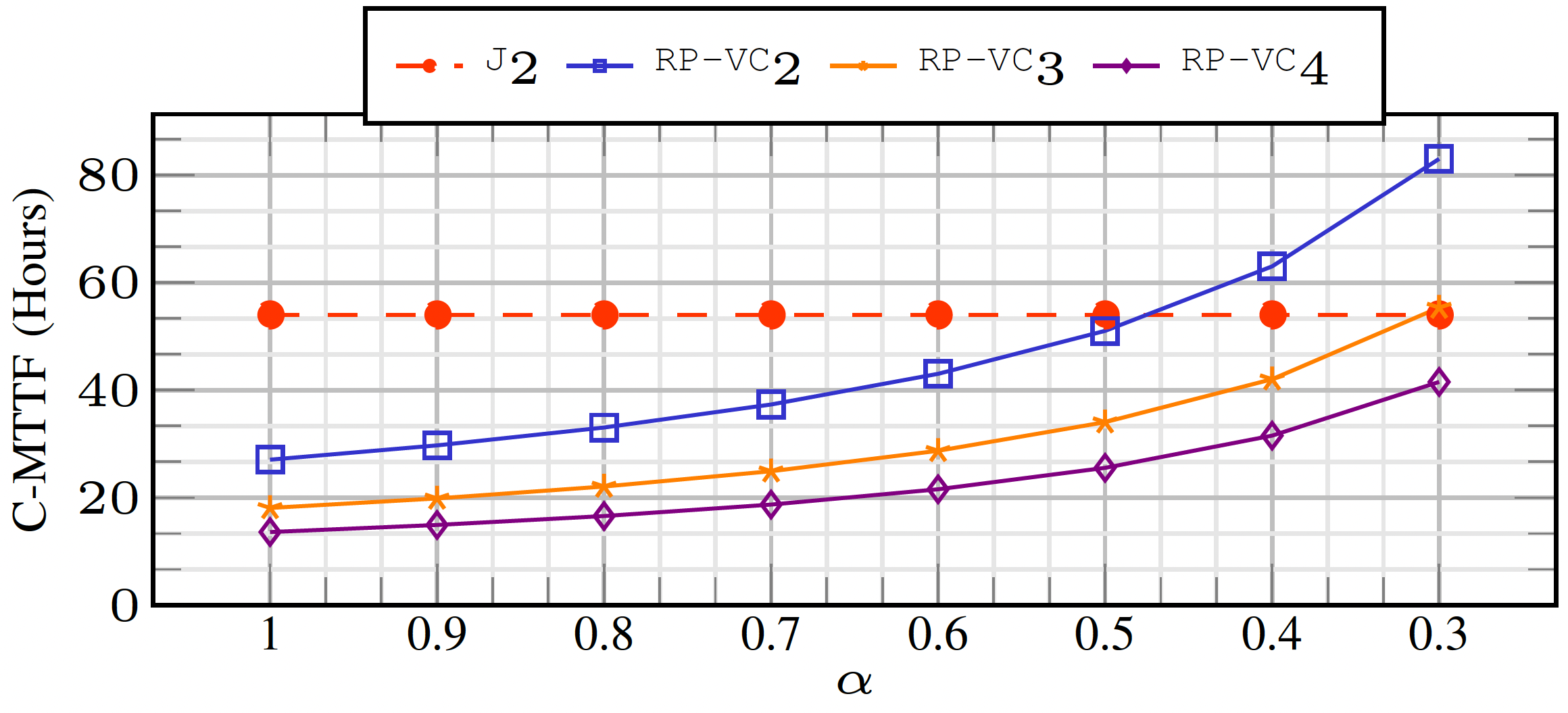}
 \caption{\textit{C-MTTF}{\footnotesize$\big(\mathcal{A}|\mathcal{M}\big(\mathcal{P}(\mathcal{A}),t\big),\mathcal{R}(t){=}0,t{=}0\big)$} for {\footnotesize$\lambda'_u(\alpha){=}\lambda_u{\times}\alpha$}.}\label{LU4}
\end{figure}

\vspace{-2mm}
\subsection{Efficiency of {\tt RP-VC$_n$}}\label{ERD}
In the following, we show how partitioning an application into smaller groups can improve the system performance. We first compare the performance of {\tt RP-VC$_n$} with {\tt J}$_2$ \cite{24}, where both methods consider the execution of applications over two vehicles, while {\tt J} $_2$ does not consider application partitioning. Afterward, {\tt RP-VC$_n$} and {\tt J}$_2$ are compared with their simple versions with no additional replicas referred to by {\tt No-R}. These versions operate the same as the original strategies with the difference that one vehicle is allocated to each group for {\tt RP-VC$_n$} and each application for {\tt J}$_2$.
\subsubsection{Simulation Setup}
To verify the efficiency of {\tt RP-VC$_n$}, we consider the following simulation environment. Vehicles enter the VC based on a Poisson distribution with parameter $\lambda_{ve}$. Users send request for processing CI-App to the system based on a Poisson distribution with parameter $\lambda_{app}$. For simplicity, we use a number, drawn uniformly from interval $[l_a, h_a]$, to refer to the resource requirement (i.e., computing, memory, and network resources) of an application. Similarly, the resource capacity (i.e., computing, memory, and network resources) of each vehicle is represented through a number, drawn uniformly from interval $[l_v, h_v]$. The duration of each application follows exponential distribution with parameter $\lambda_d$. The vehicle's sojourn time follows exponential distribution with parameter $\lambda_z$. Minimum recruitment duration follows exponential distribution with parameter $\lambda_u$.
\textit{Default values of the aforementioned parameters are presented in Table~\ref{simParameters}}.\par
We compare {\tt RP-VC$_n$} with {\tt J}$_2$ and their {\tt No-R} versions in terms of the following quality of service (QoS) metrics:
\begin{equation}\label{AR}
    \textrm{Acceptance Rate (AR)}=\frac{\#\{\textrm{Accepted Applications}\}}{\#\{\textrm{Applications}\}},
\end{equation}
\begin{equation}\label{SR}
    \textrm{Success Rate (SR)}=\frac{\#\{\textrm{Successful Applications}\}}{\#\{\textrm{Applications}\}},
\end{equation}
 \noindent where $\#\{.\}$ reads "number of". The acceptance rate (AR) in (\ref{AR}) captures the number of accepted applications divided by the total number of applications. Similarly, (\ref{SR}) refers to the success rate (SR), which is the number of successfully executed applications divided by the total number of applications.
 \setlength{\tabcolsep}{15pt}
\renewcommand{\arraystretch}{1}
\begin{table}
  \centering
   \caption{\small Default values of simulation parameters}
  \begin{tabular}{|c|c|c|c|}
  \hline
Parameter & Value & Parameter & Value\\
\hline \hline
$\lambda_{ve}$ & 20 & $\lambda_{d}$ & 1 \\
\hline
$\lambda_{app}$ & 2 & $\lambda_{z}$ & 1 \\
\hline
$[l_a,h_a]$ & $[20,30]$ & $\lambda_{u}$ & 6 \\
\hline
$[l_v,h_v]$ & $[20,26]$ & - & - \\
\hline
\end{tabular}
 \label{simParameters}
\end{table}
\pgfplotscreateplotcyclelist{customlist2}{
             {red!80!yellow,mark=*},
            {blue!80!yellow,mark=square},
            {red!50!yellow,mark=star},
            {red!50!blue,mark=diamond},
             {black!70!yellow,mark=otimes},
            {red!80!blue,mark=triangle}
}
\begin{figure}[t!]
\centering
\includegraphics[width=\linewidth,trim=3 3 3 3,clip]{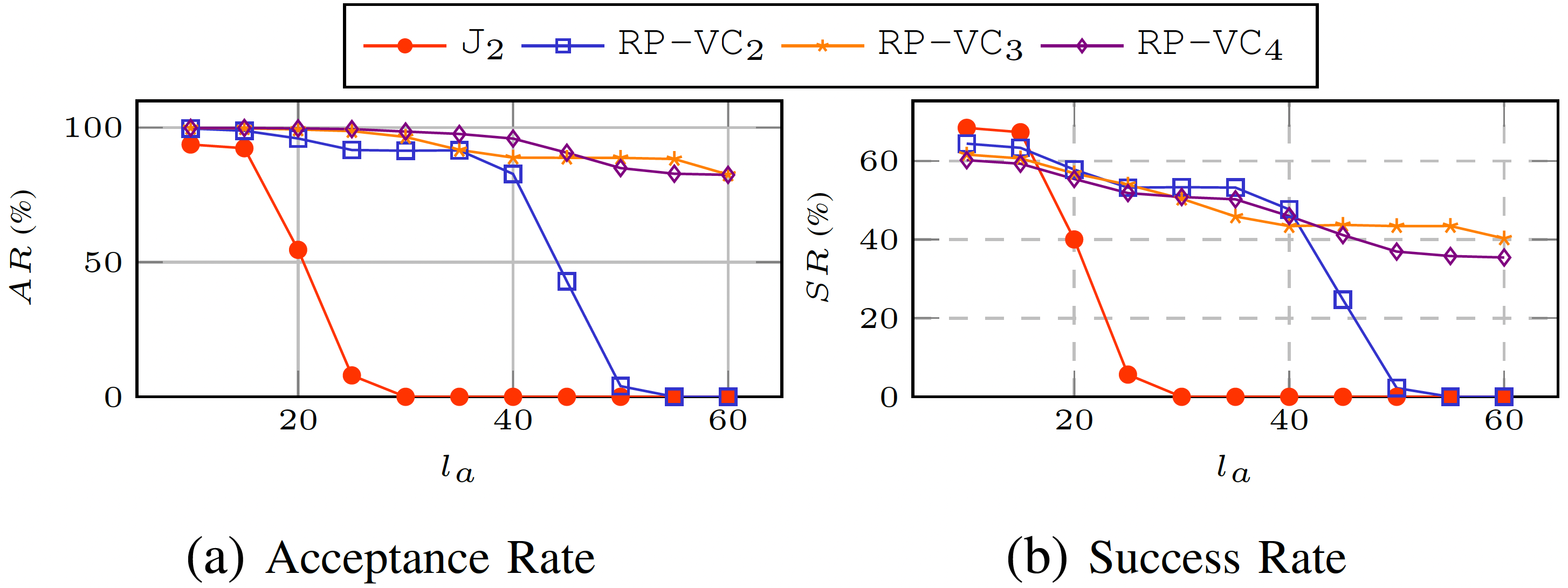}
 \caption{\small Comparison of {\tt RP-VC$_n$} with {\tt J}$_2$ for different application sizes. The $x$-axis shows the lower bound $l_a$ of the application size and the upper bound is calculated by $h_a=l_a+10$.}\label{applicationSize}
 \end{figure}
\subsubsection{Comparison of {\tt RP-VC$_n$} with {\tt J}$_2$ for different application sizes}
Fig.~\ref{applicationSize} illustrates the behavior of {\tt J}$_2$ and {\tt RP-VC$_n$}, for $n{\in}\{2,3,4\}$. It can be observed that for light applications, i.e., $l_a{=}10$ and $l_a{=}15$, {\tt J}$_2$ has a reasonable $AR$ (Fig.~\ref{applicationSize}(a)) with slightly lower value than that of {\tt RP-VC$_n$}, which is $100\%$. However, there is a dramatic decrease in the $AR$ of {\tt J}$_2$ starting from $l_a{=}20$ until $l_a{=}30$, where the $AR$ reaches almost $0\%$. {\tt RP-VC$_n$}, for $n{\in}\{2,3\}$, maintains its high AR for $l_a{=}10$ to $l_a{=}40$ with a moderate decline from $100\%$ to (on average) $85\%$, while {\tt RP-VC$_4$} has an $AR$ close to $100\%$ in this range. From $l_a{>}40$, the $AR$ of {\tt RP-VC$_2$} starts dropping and reaches $0\%$ for $l_a>55$. {\tt RP-VC$_3$} and {\tt RP-VC$_4$} have almost equal $AR$ and (on average) they reach the AR of $85\%$ for $l_a{=}60$.\par
We next study the SR metric in Fig.~\ref{applicationSize}(b). As can be observed from the figure, from $l_a=10$ to $l_a=15$, {\tt J}$_2$ has $80\%$ $SR$, on average. However, $SR$ for {\tt J}$_2$ drops sharply to reach the value of zero from $l_a=15$ to $l_a=30$ and stays zero for $l_a>30$ since acceptance rate is zero in this range. Likewise, for {\tt RP-VC$_2$}, $SR$ experiences a moderate decline from the value of $65\%$ at $l_a=10$ to reach around $50\%$ at $l_a=40$ and drops quickly to zero from $l_a=40$ to $l_a=50$ and remains flat onward. Finally, for {\tt RP-VC$_3$} and {\tt RP-VC$_4$}, $SR$ gradually declines from $65\%$ at $l_a=10$ to reside around $40\%$ at $l_a=60$.
\subsubsection{Comparison of {\tt RP-VC$_n$} with {\tt J}$_2$ for different values of $\lambda_{d}$}
In this section, we study $AR$ (Fig.~\ref{lambdaD}(a)) and $SR$ (Fig.~\ref{lambdaD}(b)) for different application's execution time ranging from two hours to 17 minutes (i.e., $\lambda_d \in\{0.5,1,1.5,2,2.5,3,3.5\}$). The comparison is conducted between {\tt J}$_2$, {\tt RP-VC$_2$}, {\tt RP-VC$_3$}, and {\tt RP-VC$_4$}. Overall, the acceptance rate of all the versions of {\tt RP-VC$_n$} is considerably higher than that of {\tt J}$_2$. This is because {\tt J}$_2$ tries to find two vehicles to fit the entire application. As can be seen from the plots, $AR$ of {\tt J}$_2$ increases from around $25\%$ at $\lambda_d=0.5$ to just below $60\%$ at $\lambda_d=1.5$ and remains flat onward. However, considering {\tt RP-VC$_n$}, the acceptance rate is almost $100\%$ for all the specified values of $\lambda_d$ except for $\lambda_d\in \{0.5,1\}$, where, due to the long execution time of applications for $\lambda_d=0.5$ and $\lambda_d=1$, the value of $AR$ of {\tt RP-VC$_2$} is around $85\%$ and $95\%$, respectively.\par
\begin{figure}[t!]
\centering
\includegraphics[width=\linewidth,trim=3 3 3 3,clip]{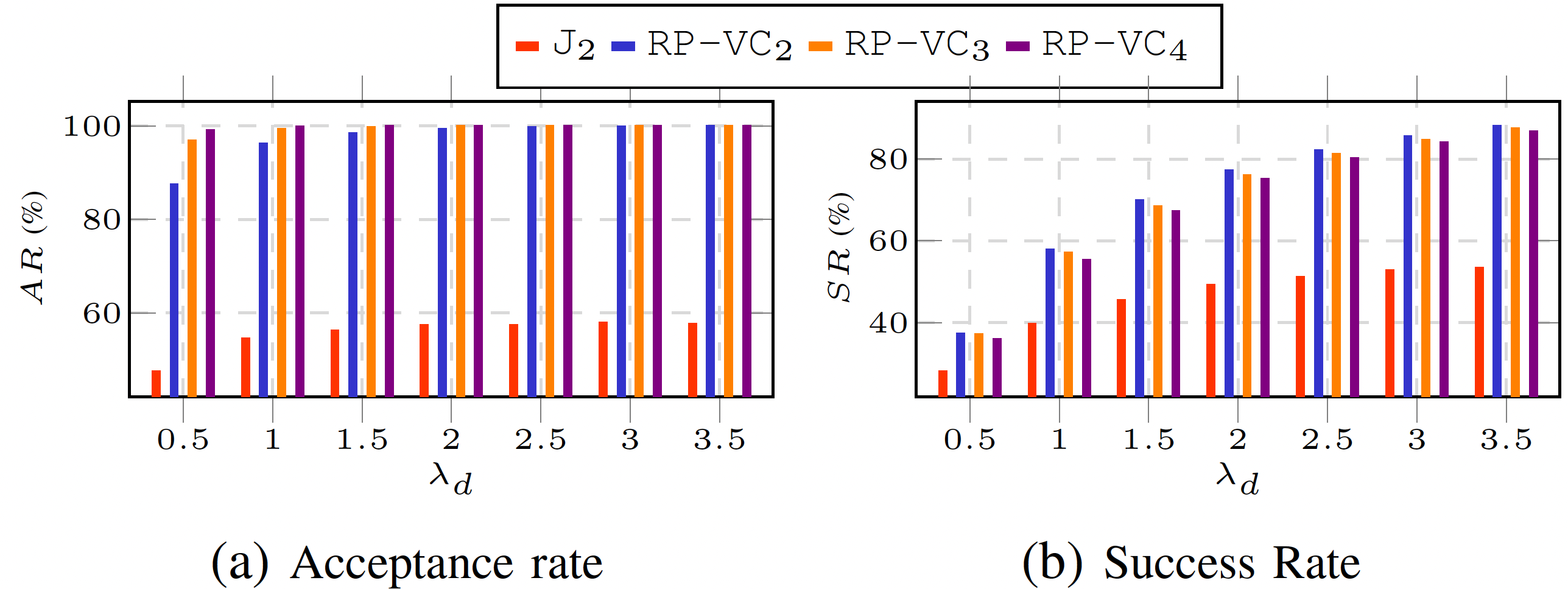}
 \caption{\small Comparison of {\tt RP-VC$_n$} with {\tt J}$_2$\cite{24} for different $\lambda_d$.}
 \label{lambdaD}
 \end{figure}

Considering the SR metric (Fig.~\ref{lambdaD}(b)), it can be seen that the value of SR tends to increase gradually for all strategies. However, upon having very long execution time (e.g., $\lambda_d=0.5$), although the acceptance rate of {\tt RP-VC$_n$} is high (almost $100\%$), the $SR$ is around $40\%$ because the longer the execution of an application takes, the more likely it is that allocated vehicles to the application depart from the VC before finishing its processing. It can be verified from the figure that, for all the values of execution time, $SR$ of {\tt J}$_2$ is lower than all versions of {\tt RP-VC$_n$} (i.e., almost $30\%$ lower than {\tt RP-VC$_n$}).
\subsubsection{Comparison of {\tt RP-VC$_n$} and {\tt J}$_2$ with {\tt NO-R} versions of them}
Fig.~\ref{NOR} presents the comparison of {\tt RP-VC$_n$} and {\tt J}$_2$ with their corresponding {\tt NO-R} versions. According to the histogram of the acceptance rate (Fig.~\ref{NOR}(a)), the {\tt NO-R} versions of strategies {\tt J}$_2$ and {\tt RP-VC$_2$} have slightly higher $AR$ since in {\tt NO-R} versions only one vehicle is allocated to each job (i.e., application for {\tt J}$_2$ and groups for {\tt RP-VC$_2$}) and more vehicles are free. However, for {\tt RP-VC$_3$} and {\tt RP-VC$_4$}, the acceptance rate of the original strategies and their {\tt NO-R} versions are equal to $100\%$. Nevertheless, as can be seen from Fig.~\ref{NOR}(b), all the original strategies have almost $20\%$ greater $SR$ compared to their corresponding {\tt NO-R} versions. This signifies the importance of redundancy-based methodologies in improving the reliability of CI-App processing over VC.

\vspace{-3mm}
\section{Conclusion and Future Work}\label{conclusion}
\noindent We introduced {\tt RP-VC$_n$}, which is a general methodology for reliable processing of
CI-Apps consisting of several dependent sub-tasks in a semi-dynamic VC.
{\tt RP-VC$_n$} is a unified framework for reliable processing of both DAG- and UG-structured CI-Apps through partitioning them into several groups dispersed across vehicles to battle the resource deficiency of a single vehicle
and enhance the \textit{C-MTTF}.
We introduced stochastic event systems (SeS) to specify a class of stochastic systems and demonstrated that a semi-dynamic VC under {\tt RP-VC$_n$} is an SeS. We then modeled {\tt RP-VC} via a non-trivial semi-Markov process ($\beta$-SMP) and characterized the dynamics of $\beta$-SMP through developing $\langle e\rangle$-algebra. Using $\langle e\rangle$-algebra, we relaxed the innate complexities of $\beta$-SMP via proposing \textit{decomposition theorem} and transforming $\beta$-SMP into a decomposed  semi-Markov process (i.e., D-SMP). Relying on D-SMP, we then calculated the \textit{C-MTTF} of our methodology. In addition to multiple future work directions discussed in the paper, we intend to extend the analysis of {\tt RP-VC$_n$} to scenarios with more replications of groups dispersed across vehicles. Also, efficient recruitment of new vehicles (e.g., considering outage probability of links) is an interesting future work.

 \pgfplotscreateplotcyclelist{customlist3}{
             {red!80!yellow,mark=*,dashed},
            {blue!80!yellow,mark=square},
            {red!50!yellow,mark=star},
            {red!50!blue,mark=diamond},
}
\begin{figure}[t!]
\centering
\includegraphics[width=\linewidth,trim=3 3 3 3,clip]{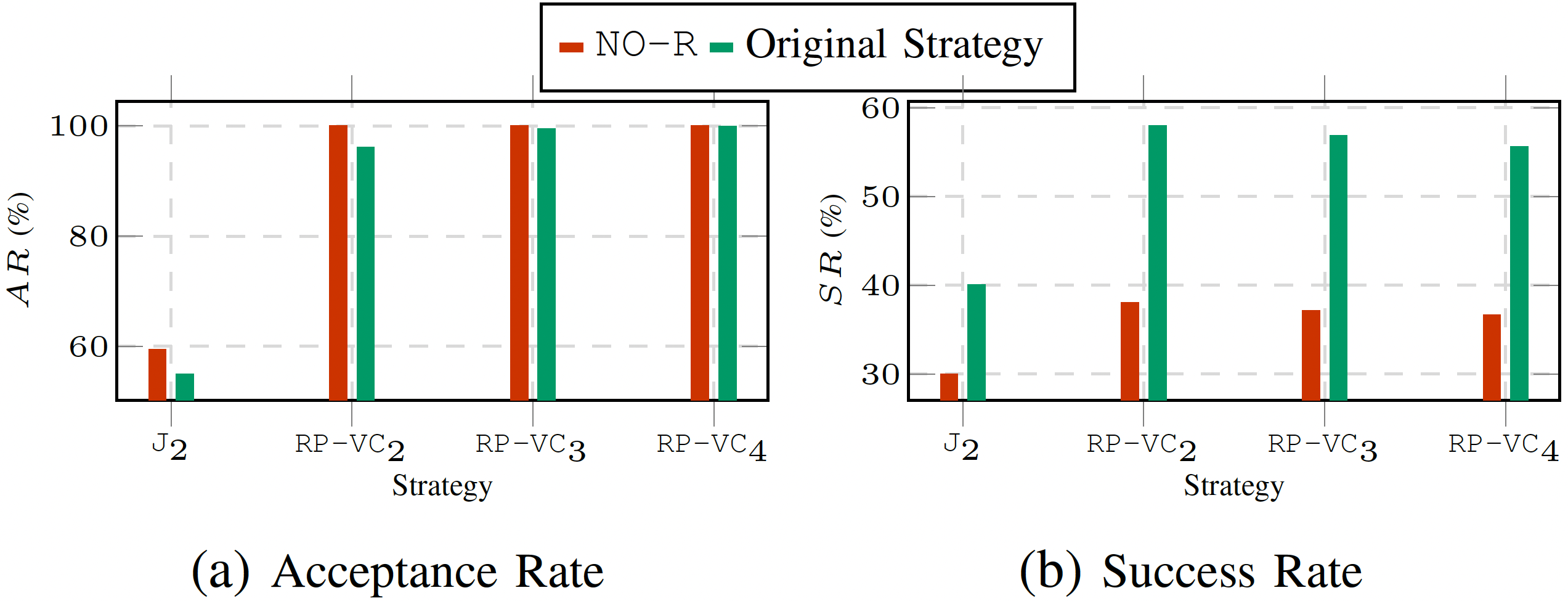}
 \caption{\small Comparison of {\tt RP-VC$_n$} and {\tt J}$_2$ with {\tt NO-R} version of them.}\label{NOR}
 \end{figure}
\vspace{-2mm}
\bibliographystyle{IEEEtran}

\bibliography{j2nRef}
\vspace{-10mm}
\begin{IEEEbiography}
    [{\includegraphics[width=1in,height=1.25in,clip,keepaspectratio]{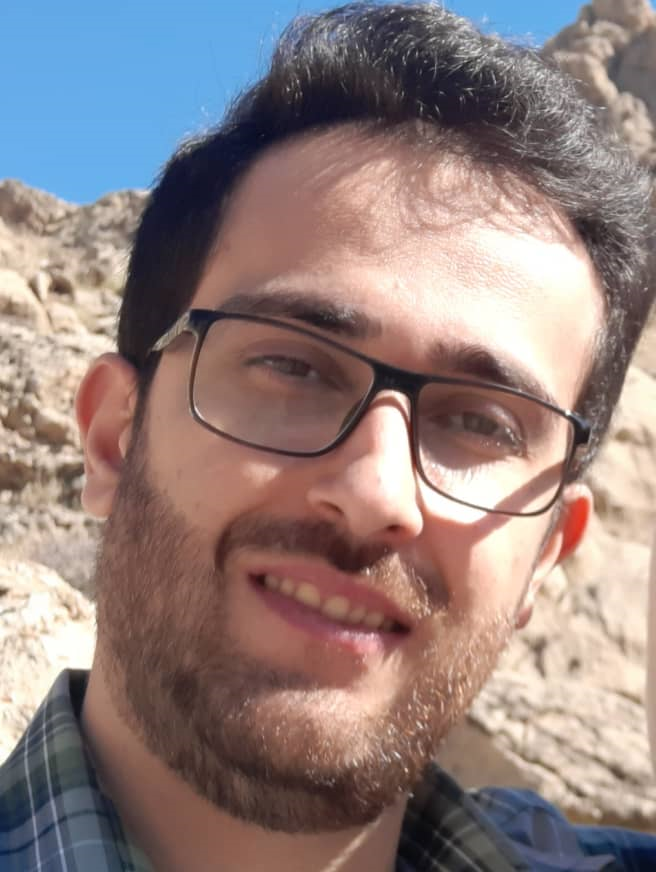}}]{Payam Abdisarabshali}
received the M.Sc. degree in Computer Engineering from Razi University, Iran, with top-rank recognition in 2018. He was a teaching assistant professor (lecturer) with Razi University from 2018 to 2022. He is currently a Ph.D. student at University at Buffalo (SUNY), NY, USA. His research interests include mathematical modeling, mobile computing, distributed machine learning, optimization of next-generation intelligent networks, reliability analysis of computing systems.
\end{IEEEbiography}
\vspace{-10mm}
\begin{IEEEbiography}
    [{\includegraphics[width=1in,height=1.25in,clip,keepaspectratio]{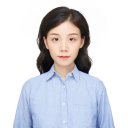}}]{Minghui LiWang} [M'19]
 received the Ph.D. degree from the School of Informatics, Xiamen University, China. She was a Visiting Scholar at North Carolina State University, USA; and a Post Doctoral Fellow with the ECE Department, Western University, Canada, She is currently an assistant professor with the School of Informatics, Xiamen University. Her research interests are wireless communication systems, mobile edge computing, resource optimization management and Internet of Vehicles.
\end{IEEEbiography}
\vspace{-10mm}
\begin{IEEEbiography}
    [{\includegraphics[width=1in,height=1.25in,clip,keepaspectratio]{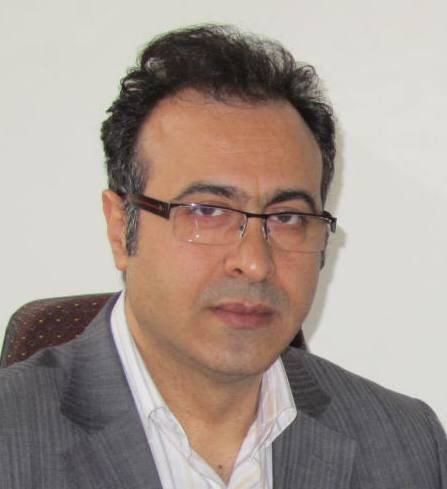}}]{Amir Rajabzadeh}
received the M.S. and Ph.D. degrees in Computer Engineering from Sharif University of Technology, Iran. He is working as an associate professor of Computer Engineering at Razi University, Iran. His main areas of interests are computer architecture, High performance Computing and fault-tolerant systems design. Dr. Rajabzadeh has earned one world, six international, and five national awards in robotic competition, and one national award in mobile computing.
\end{IEEEbiography}
\vspace{-10mm}
\begin{IEEEbiography}
    [{\includegraphics[width=1in,height=1.25in,clip,keepaspectratio]{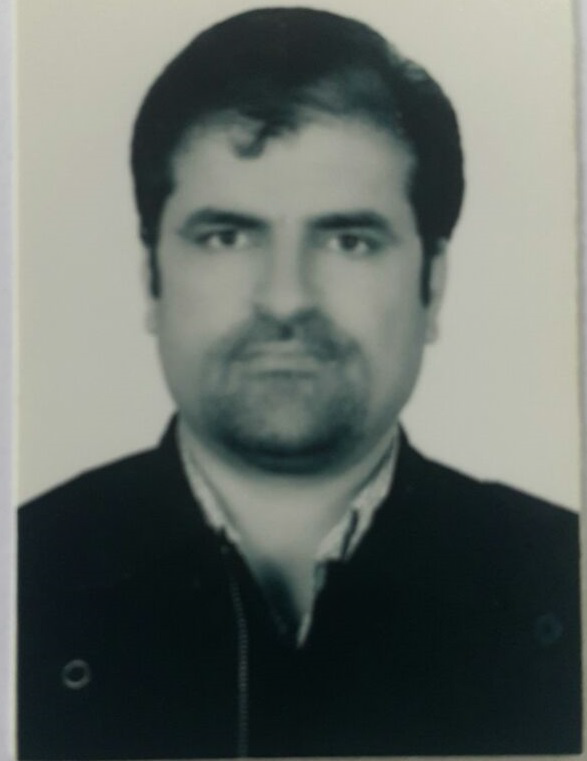}}]{Mahmood Ahmadi}
received the M.S. degree in Computer Architecture from Tehran Polytechnique University, Iran. He received Ph.D. degree from faculty of Electrical Engineering, Mathematics, and Computer Science at Delft University of Technology, Netherlands. He is working as an associate professor of Computer Engineering at Razi University, Iran. His research interests include Computer architecture, network processing, Bloom filters, software-defined networking, and high-performance computing.
\end{IEEEbiography}
\vspace{-10mm}
\begin{IEEEbiography}
    [{\includegraphics[width=1in,height=1.25in,clip,keepaspectratio]{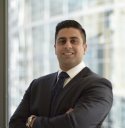}}]{Seyyedali Hosseinalipour}  [M'20]
received the Ph.D. degree in Electrical Engineering from North Carolina State University, NC, USA, in 2020. He was a postdoctoral researcher at Purdue University, IN, USA, from 2020 to 2022. He is currently an assistant professor in the department of Electrical Engineering at University at Buffalo (SUNY), NY, USA. His research interests include analysis of modern wireless networks, machine learning, and fog/edge computing.
\end{IEEEbiography}

 \vfill

\definecolor{Gray}{gray}{0.9}
\setlength{\textfloatsep}{0.05cm}

\appendices

\allowdisplaybreaks
\begingroup
\onecolumn
\setcounter{page}{1}
\section{Proof of Lemma \ref{conditional_closure}}\label{conditional_closure_proof}
\noindent In the following, we present the proof of Lemma \ref{conditional_closure} for \eqref{eq10} and \eqref{eq11}, respectively.\par
\textbf{Proof of \eqref{eq10}.}
Based on Definition \ref{pivot}, $V_1=\hat{\Gamma}(V_1)+\hat{\xi}(V_1)-\hat{\mathbf{\delta}}(V_1)$ and $V_2=\hat{\Gamma}(V_2)+\hat{\xi}(V_2)-\hat{\mathbf{\delta}}(V_2)$. Therefore, $S_1$ is given by
\begin{equation}
\begin{aligned}
  S_1=V_1-V_2&=\hat{\Gamma}(V_1)+\hat{\mathbf{\xi}}(V_1)-\hat{\mathbf{\delta}}(V_1)- \hat{\Gamma}(V_2)-\hat{\mathbf{\xi}}(V_2)+\hat{\mathbf{\delta}}(V_2).
\end{aligned}
\end{equation}
Considering $V_1\dot{\preceq}V_2$, we have $\hat{\mathbf{\delta}}(V_2)=\hat{\mathbf{\delta}}(V_1)$. As a result, we can rewrite $S_1$ as follows:
\begin{equation}
  S_1=V_1-V_2=\hat{\Gamma}(V_1)+\hat{\mathbf{\xi}}(V_1)- \hat{\Gamma}(V_2)-\hat{\mathbf{\xi}}(V_2).
\end{equation}
Furthermore, since $S_1>0$ we have
\begin{equation}\label{eq59}
  \begin{aligned}
  S_1>0 &\Rightarrow \hat{\Gamma}(V_1)+\hat{\mathbf{\xi}}(V_1)- \hat{\Gamma}(V_2)-\hat{\mathbf{\xi}}(V_2)>0\\
  &\Rightarrow \underbrace{\hat{\Gamma}(V_1)+\hat{\mathbf{\xi}}(V_1)-\hat{\mathbf{\xi}}(V_2)}_{(a)}> \hat{\Gamma}(V_2).
  \end{aligned}
\end{equation}
Using Definition \ref{event_dynamic_variable} and taking into account $V_1\dot{\preceq}V_2$, we have $\hat{\Gamma}(V_2)>0$ and $\hat{\mathbf{\xi}}(V_2)>\hat{\mathbf{\xi}}(V_1)$. Accordingly, considering $(a)$ in \eqref{eq59}, $\hat{\Gamma}(V_1)$ is lower bounded by zero as follows:
\begin{equation}
  \begin{aligned}
  \hat{\Gamma}(V_2){>}0, \hat{\mathbf{\xi}}(V_2){>}\hat{\mathbf{\xi}}(V_1) &\Rightarrow
  \hat{\Gamma}(V_1){+}\hat{\mathbf{\xi}}(V_1){-}\hat{\mathbf{\xi}}(V_2){>}0\\
  &\Rightarrow \hat{\Gamma}(V_1){>}\hat{\mathbf{\xi}}(V_2){-}\hat{\mathbf{\xi}}(V_1){>}0.
  \end{aligned}
\end{equation}
Consequently, based on Definition \ref{event_dynamic_variable}, $S_1$ is an EDV, which can be written as follows:
\begin{equation}
    S_1=V_1-V_2=\left(\hat{\Gamma}(V_1)-\left(\hat{\xi}(V_2)-\hat{\xi}(V_1)\right)\right)-\hat{\Gamma}(V_2),
\end{equation}
where $\left(\hat{\Gamma}(V_1)-\left(\hat{\xi}(V_2)-\hat{\xi}(V_1)\right)\right)>\hat{\Gamma}(V_2)$ and $\hat{\Gamma}(V_1)>\hat{\mathbf{\xi}}(V_2)-\hat{\mathbf{\xi}}(V_1)>0$.\par

\textbf{Proof of \eqref{eq11}.} In the following, we conduct similar operations as above to prove \eqref{eq11}. From Definition \ref{pivot}, we know that $S_2$ is given by
\begin{equation}
\begin{aligned}
  S_2=V_2-V_1&=\hat{\Gamma}(V_2)+\hat{\mathbf{\xi}}(V_2)-\hat{\mathbf{\delta}}(V_2)- \hat{\Gamma}(V_1)-\hat{\mathbf{\xi}}(V_1)+\hat{\mathbf{\delta}}(V_1).
\end{aligned}
\end{equation}
Since $V_1\dot{\preceq}V_2$, we have $\hat{\mathbf{\delta}}(V_2)=\hat{\mathbf{\delta}}(V_1)$. Accordingly, we can rewrite $S_2$ as follows:
\begin{equation}
  S_2=V_2-V_1=\hat{\Gamma}(V_2)+\hat{\mathbf{\xi}}(V_2)- \hat{\Gamma}(V_1)-\hat{\mathbf{\xi}}(V_1).
\end{equation}
Further, using $S_2>0$, $\hat{\Gamma}(V_2)$ is lower bounded as
\begin{equation}\label{eq64}
  \begin{aligned}
  S_2>0 &\Rightarrow \hat{\Gamma}(V_2)+\hat{\mathbf{\xi}}(V_2)- \hat{\Gamma}(V_1)-\hat{\mathbf{\xi}}(V_1)>0\\
  &\Rightarrow \hat{\Gamma}(V_2)>\underbrace{\hat{\Gamma}(V_1)-\hat{\mathbf{\xi}}(V_2) +\hat{\mathbf{\xi}}(V_1)}_{(b)}.
  \end{aligned}
\end{equation}
We next aim to bound $(b)$ in \eqref{eq64}, based on which we further bound $\hat{\Gamma}(V_1)$ in terms of $\hat{\mathbf{\xi}}(V_2) -\hat{\mathbf{\xi}}(V_1)$. Based on Definition \ref{event_dynamic_variable} and taking into account that $V_1$ is an EDV, we have $\hat{\Gamma}(V_1)>\hat{\mathbf{\delta}}(V_1)-\hat{\mathbf{\xi}}(V_1)>0$, which leads to the following inequality
\begin{equation}\label{eq65}
  \begin{aligned}
  &\hat{\Gamma}(V_1)>\hat{\mathbf{\delta}}(V_1)-\hat{\mathbf{\xi}}(V_1)\\
  &\Rightarrow  \hat{\Gamma}(V_1)+\hat{\mathbf{\xi}}(V_1)> \hat{\mathbf{\delta}}(V_1)\\
  &\Rightarrow \underbrace{\hat{\Gamma}(V_1)-\hat{\mathbf{\xi}}(V_2)+\hat{\mathbf{\xi}}(V_1)> \hat{\mathbf{\delta}}(V_1)-\hat{\mathbf{\xi}}(V_2)}_{(C)}.\\
  \end{aligned}
\end{equation}
Because $\hat{\mathbf{\delta}}(V_1)=\hat{\mathbf{\delta}}(V_2)$, we have $\hat{\mathbf{\delta}}(V_1)-\hat{\mathbf{\xi}}(V_2)= \hat{\mathbf{\delta}}(V_2)-\hat{\mathbf{\xi}}(V_2)$. Further, $\hat{\mathbf{\delta}}(V_2)-\hat{\mathbf{\xi}}(V_2)>0$ since $V_2$ is an EDV, resulting in $\hat{\mathbf{\delta}}(V_1)-\hat{\mathbf{\xi}}(V_2)>0$. Hence, considering $(c)$ in \eqref{eq65}, $\hat{\Gamma}(V_1)-\hat{\mathbf{\xi}}(V_2)+\hat{\mathbf{\xi}}(V_1)$ is lower bounded by zero as follows:
\begin{equation}\label{eq66}
  \begin{aligned}
   &\hat{\Gamma}(V_1)-\hat{\mathbf{\xi}}(V_2)+\hat{\mathbf{\xi}}(V_1)> \hat{\mathbf{\delta}}(V_1)-\hat{\mathbf{\xi}}(V_2)>0.
  \end{aligned}
\end{equation}
Using $V_1\dot{\preceq}V_2$ and \eqref{eq66}, we have $\hat{\mathbf{\xi}}(V_2)>\hat{\mathbf{\xi}}(V_1)$ and $\hat{\Gamma}(V_1)>\hat{\mathbf{\xi}}(V_2) -\hat{\mathbf{\xi}}(V_1)$, respectively. As a result, we have the following lower bounds for $\hat{\Gamma}(V_1)$
\begin{equation}
  \hat{\Gamma}(V_1)>\hat{\mathbf{\xi}}(V_2) -\hat{\mathbf{\xi}}(V_1)>0.
\end{equation}
Consequently, $S_2$ is an EDV, which can be calculated as follows:
\begin{equation}
    S_2=V_2-V_1=\hat{\Gamma}(V_2)-\left(\hat{\Gamma}(V_1)-\left(\hat{\xi}(V_2)-\hat{\xi}(V_1)\right)\right),
\end{equation}
where $\hat{\Gamma}(V_2)>\left(\hat{\Gamma}(V_1)-\left(\hat{\xi}(V_2)-\hat{\xi}(V_1)\right)\right)$ and $\hat{\Gamma}(V_1)>\left(\hat{\xi}(V_2)-\hat{\xi}(V_1)\right)>0$.

\section{Proof of Lemma \ref{subtraction_inequality}}\label{subtraction_inequality_proof}
\noindent According to Definition \ref{CG}, to prove $(V_1-V') \dot{\preceq} (V_2-V')$, we must prove
\begin{equation}\label{eq69}
\hat{\mathbf{\delta}}(V_1-V')=\hat{\mathbf{\delta}}(V_2-V'),
\end{equation}
and
\begin{equation}\label{eq70}
\hat{\mathbf{\xi}}(V_1-V')<\hat{\mathbf{\xi}}(V_2-V').
\end{equation}
In the following, we prove Lemma \ref{subtraction_inequality} for two cases: (i) if $V'$ is an EDV and (ii) if $V'$ is an SRV. \par
\textbf{(i) $V'$ is an EDV.} Below, we first aim to prove \eqref{eq69}. Using Definition \ref{reducer}, we have
\begin{equation}\label{eq71}
  \hat{\mathbf{\delta}}(V_1-V')=\hat{\mathbf{\delta}}(V_1)+\left(V'+\hat{\mathbf{\delta}}(V')\right),
\end{equation}
and
\begin{equation}\label{eq72}
  \hat{\mathbf{\delta}}(V_2-V')=\hat{\mathbf{\delta}}(V_2)+\left(V'+\hat{\mathbf{\delta}}(V')\right).
\end{equation}
If $\{V_1 \dot{\preceq} V' or V' \dot{\preceq} V_1\}$ and $\{V_2 \dot{\preceq} V' or V' \dot{\preceq} V_2\}$, we have $\hat{\mathbf{\delta}}(V_1){=}\hat{\mathbf{\delta}}(V')$ and $\hat{\mathbf{\delta}}(V_2){=}\hat{\mathbf{\delta}}(V')$. Therefore, we can rewrite \eqref{eq71} and \eqref{eq72} as follows:
\begin{equation}
  \hat{\mathbf{\delta}}(V_1-V')=2\hat{\mathbf{\delta}}(V')+V',
\end{equation}
and
\begin{equation}
  \hat{\mathbf{\delta}}(V_2-V')=2\hat{\mathbf{\delta}}(V')+V'.
\end{equation}
Consequently,
\begin{equation}\label{RC}
\hat{\mathbf{\delta}}(V_1-V')=\hat{\mathbf{\delta}}(V_2-V').
\end{equation}
We next take the following steps to prove \eqref{eq70}. Based on Definition \ref{pivot}, $\hat{\mathbf{\xi}}(V_1-V')$ and $\hat{\mathbf{\xi}}(V_2-V')$ are given as follows:
\begin{equation}\label{eq76}
\begin{aligned}
  \hat{\mathbf{\xi}}(V_1-V')&=V_2-V'- \hat{\Gamma}(V_1-V')+\hat{\mathbf{\delta}}(V_1-V')\\
  &=V_1-V'-\hat{\Gamma}(V_1)+2\hat{\mathbf{\delta}}(V')+V'\\
  &=V_1-\hat{\Gamma}(V_1)+2\hat{\mathbf{\delta}}(V')\\
  &=\hat{\mathbf{\xi}}(V_1)+\hat{\mathbf{\delta}}(V'),
\end{aligned}
\end{equation}
and
\begin{equation}\label{eq77}
\begin{aligned}
  \hat{\mathbf{\xi}}(V_2-V')&=V_2-V'- \hat{\Gamma}(V_2-V')+\hat{\mathbf{\delta}}(V_2-V')\\
  &=V_2-V'-\hat{\Gamma}(V_2)+2\hat{\mathbf{\delta}}(V')+V'\\
  &=V_2-\hat{\Gamma}(V_2)+2\hat{\mathbf{\delta}}(V')\\
  &=\hat{\mathbf{\xi}}(V_2)+\hat{\mathbf{\delta}}(V').
\end{aligned}
\end{equation}
Give that {\small$V_1 \dot{\preceq} V_2$}, we have $\hat{\mathbf{\xi}}(V_1)<\hat{\mathbf{\xi}}(V_2)$, which implies
\begin{equation}\label{IC}
\begin{aligned}
  \hat{\mathbf{\xi}}(V_1)<\hat{\mathbf{\xi}}(V_2) &\Rightarrow \hat{\mathbf{\xi}}(V_1)+\hat{\mathbf{\delta}}(V') <\hat{\mathbf{\xi}}(V_2)+\hat{\mathbf{\delta}}(V')\\
  &\Rightarrow \hat{\mathbf{\xi}}(V_1-V') < \hat{\mathbf{\xi}}(V_2-V').
\end{aligned}
\end{equation}
Finally, taking into consideration (\ref{RC}) and (\ref{IC}), we have
\begin{equation}
  V_1-V'\dot{\preceq} V_2-V',
\end{equation}
which concludes the proof of Lemma \ref{subtraction_inequality} for the case when $V'$ is an EDV. Next, we aim to prove Lemma \ref{subtraction_inequality} when $V'$ is an SRV.\par
\textbf{(ii) $V'$ is an SRV.}
We first present the proof of \eqref{eq69}. According to Definition \ref{pivot}, for an SRV $V'$, we have $\hat{\mathbf{\delta}}(V')=0$. Replacing $\hat{\mathbf{\delta}}(V')=0$ back in \eqref{eq71} and \eqref{eq72} yields
\begin{equation}
  \hat{\mathbf{\delta}}(V_1-V')=\hat{\mathbf{\delta}}(V_1)+V',
\end{equation}
and
\begin{equation}
  \hat{\mathbf{\delta}}(V_2-V')=\hat{\mathbf{\delta}}(V_2)+V'.
\end{equation}
From $V_1\dot{\preceq} V_2$, we have $\hat{\mathbf{\delta}}(V_1)=\hat{\mathbf{\delta}}(V_2)$.
Consequently,
\begin{equation}\label{RC2}
\hat{\mathbf{\delta}}(V_1-V')=\hat{\mathbf{\delta}}(V_2-V').
\end{equation}
Next, we aim to prove \eqref{eq70}. Replacing $\hat{\mathbf{\delta}}(V')=0$ back in \eqref{eq76} and \eqref{eq77} and performing some algebraic manipulations leads to
\begin{equation}
\begin{aligned}
  \hat{\mathbf{\xi}}(V_1-V')=V_1-\hat{\Gamma}(V_1),
\end{aligned}
\end{equation}
and
\begin{equation}
\begin{aligned}
  \hat{\mathbf{\xi}}(V_2-V')=V_2-\hat{\Gamma}(V_2).
\end{aligned}
\end{equation}
From {\small$V_1 \dot{\preceq} V_2$}, we have $\hat{\mathbf{\delta}}(V_1)=\hat{\mathbf{\delta}}(V_2)$ and
\begin{equation}
V_1+\hat{\mathbf{\delta}}(V_1)-\hat{\Gamma}(V_1)<V_2+\hat{\mathbf{\delta}}(V_2)-\hat{\Gamma}(V_2),
\end{equation}
which leads to $V_1-\hat{\Gamma}(V_1)<V_2-\hat{\Gamma}(V_2)$. As a result, we get
\begin{equation}\label{IC2}
\begin{aligned}
  \hat{\mathbf{\xi}}(V_1-V') < \hat{\mathbf{\xi}}(V_2-V').
\end{aligned}
\end{equation}
Consequently, considering \eqref{RC2} and \eqref{IC2}, we have
\begin{equation}
  V_1-V'\dot{\preceq} V_2-V',
\end{equation}
which conclude the proof of Lemma \ref{subtraction_inequality} when $V'$ is an SRV.

\newpage
\section{Proof of Proposition \ref{decomposability_props}}\label{decomposability_props_proof}
\noindent In the following, we present the proof of Proposition \ref{decomposability_props} for (i) $\alpha,\alpha'=\varphi$, (ii) $\alpha,\alpha'=\psi$, (iii) $\alpha,\alpha'=\gamma$, (iv) $\alpha= \psi,\alpha'=\gamma$, and (v) $\alpha= \gamma,\alpha'=\psi$. Let $Z_1$, $Z_2$, $Z_3$, and $Z_4$ be four i.i.d exponential random variables and $U_1$, $U_2$, $U_3$, and $U_4$ be four i.i.d random variables following general distribution $\mathfrak{R}$. Further, consider following five different combinations of e-distributions for EDVs $V_1$ and $V_2$:
\hspace{-5mm}
\begin{enumerate}
    \item $V_1{\sim}\varphi$ and $V_2{\sim}\varphi$:
\begin{equation}\label{pro85}
    \begin{cases}
        V_1{\sim}\varphi{=}U_1{-}Z_2{+}Z_1,~\hat{\Gamma}(V_1){=}U_1,\hat{\xi}(V_1){=}Z_1,\\
        V_2{\sim}\varphi{=}U_2{-}Z_4{+}Z_3,~\hat{\Gamma}(V_2){=}U_2,\hat{\xi}(V_2){=}Z_3.
    \end{cases}
\end{equation}
\item $V_1{\sim}\psi$ and $V_2{\sim}\psi$:
\begin{equation}\label{pro86}
\hspace{-4mm}
    \begin{cases}
        V_1{\sim}\psi{=}U_1{-}(U_3{-}Z_1{+}Z_2),~\hat{\Gamma}(V_1){=}U_1,\hat{\xi}(V_1){=}Z_1,\\
        V_2{\sim}\psi{=}U_2{-}(U_4{-}Z_3{+}Z_4),~\hat{\Gamma}(V_2){=}U_2,\hat{\xi}(V_2){=}Z_3.
    \end{cases}
\hspace{-4mm}
\end{equation}
\item $V_1{\sim}\gamma$ and $V_2{\sim}\gamma$:
\begin{equation}\label{pro87}
\hspace{-4mm}
    \begin{cases}
        V_1{\sim}\gamma{=}(U_1{-}Z_2{+}Z_1){-}U_3,~\hat{\Gamma}(V_1){=}U_1,\hat{\xi}(V_1){=}Z_1,\\
        V_2{\sim}\gamma{=}(U_2{-}Z_4{+}Z_3){-}U_4,~\hat{\Gamma}(V_2){=}U_2,\hat{\xi}(V_2){=}Z_3.
    \end{cases}
 \hspace{-4mm}
\end{equation}
\item $V_1{\sim}\psi$ and $V_2{\sim}\psi$:
\begin{equation}\label{pro88}
 \hspace{-4mm}
    \begin{cases}
        V_1{\sim}\psi{=}U_1{-}(U_3{-}Z_1{+}Z_2),~\hat{\Gamma}(V_1){=}U_1,\hat{\xi}(V_1){=}Z_1,\\
        V_2{\sim}\gamma{=}(U_2{-}Z_4{+}Z_3){-}U_4,~\hat{\Gamma}(V_2){=}U_2,\hat{\xi}(V_2){=}Z_3.
    \end{cases}
 \hspace{-4mm}
\end{equation}
\item $V_1{\sim}\gamma$ and $V_2{\sim}\gamma$:
\begin{equation}\label{pro89}
 \hspace{-4mm}
    \begin{cases}
        V_1{\sim}\gamma{=}(U_1{-}Z_2{+}Z_1){-}U_3,~\hat{\Gamma}(V_1){=}U_1,\hat{\xi}(V_1){=}Z_1,\\
        V_2{\sim}\gamma{=}U_2{-}(U_4{-}Z_3{+}Z_4),~\hat{\Gamma}(V_2){=}U_2,\hat{\xi}(V_2){=}Z_3.
    \end{cases}
 \hspace{-4mm}
\end{equation}
\end{enumerate}
Based on Lemma \ref{conditional_closure}, given that $V_1\dot{\preceq}V_2$, $S_1$ and $S_2$ are given as follows. If $S_1>0$, we have
\begin{equation}\label{pro90}
\begin{aligned}
S_1&=V_1-V_2 =\left(\hat{\Gamma}(V_1)-\left(\hat{\xi}(V_2)- \hat{\xi}(V_1)\right)\right)-\hat{\Gamma}(V_2).
\end{aligned}
\end{equation}
Replacing $\hat{\Gamma}(V_1)=U_1$, $\hat{\Gamma}(V_2)=U_2$, $\hat{\xi}(V_1)=Z_1$, and $\hat{\xi}(V_2)=Z_3$ from \eqref{pro85}, \eqref{pro86}, \eqref{pro87}, \eqref{pro88}, or \eqref{pro89} in \eqref{pro90} yields
\begin{equation}
    S_1=\big(U_1-(Z_3-Z_1)\big)-U_2.
\end{equation}
Based on Lemma \ref{conditional_closure}, $\big(U_1-(Z_3-Z_1)\big)>U_2$, $U_1>(Z_3-Z_1)$, and $Z_3>Z_1$. Consequently, based on Definition \ref{gammadist}, $S_1{\sim}\gamma$.\par
Likewise, if $S_2>0$, we get
\begin{equation}\label{proeq92}
\begin{aligned}
S_2&=V_2-V_1 =\hat{\Gamma}(V_2)-\left(\hat{\Gamma}(V_1)- \left(\hat{\xi}(V_2)-\hat{\xi}(V_1)\right)\right).
\end{aligned}
\end{equation}
By replacing $\hat{\Gamma}(V_1)=U_1$, $\hat{\Gamma}(V_2)=U_2$, $\hat{\xi}(V_1)=Z_1$, and $\hat{\xi}(V_2)=Z_3$ from replacing the result of each of \eqref{pro85}, \eqref{pro86}, \eqref{pro87}, \eqref{pro88}, or \eqref{pro89} in \eqref{proeq92}, $S_2$ is given by
\begin{equation}
    S_2=U_2-(U_1-(Z_3-Z_1)).
\end{equation}
Based on Lemma \ref{conditional_closure}, $U_2>\left(U_1-(Z_3-Z_1)\right)$, $U_1>(Z_3-Z_1)$, and $Z_3>Z_1$. Consequently, based on Definition \ref{psidist}, $S_2{\sim}\psi$.\par

\newpage
\section{Proof of Proposition \ref{absorbency_props}}\label{absorbency_props_proof}
\noindent In the following, the proof of Proposition \ref{absorbency_props} is presented in three parts.
\subsection{Part 1: Proof of $S_1$ and $S_3$}
Let $W{\sim}\varphi=U-(Z_2-Z_1)$ and $Z'_1{\sim}\dot{\mathfrak{X}}=Z_4-Z_3$. Moreover, given that $Z'_1\dot{\preceq}W$, if $S_1>0$, based on Lemma \ref{conditional_closure}, we have
\begin{equation}
 \hspace{-4mm}
\begin{aligned}
S_1=W-Z'_1&=\hat{\Gamma}(W)-\left(\hat{\Gamma}(Z') -\left(\hat{\xi}(W)-\hat{\xi}(Z'_1)\right)\right)\\
&=U-(Z_4-(Z_1-0))\\
&=U-(Z_4-Z_1),
\end{aligned}
 \hspace{-4mm}
\end{equation}
where $U>Z_4-Z_1$ and $Z_4>Z_1$. Consequently, based on definition \ref{phidist}, $S_1{\sim}\varphi$.\par
Likewise, if $S_3>0$, Lemma \ref{conditional_closure} states that
\begin{equation}
 \hspace{-4mm}
\begin{aligned}
S_3=Z'_1-W&=\left(\hat{\Gamma}(Z'_1)-\left(\hat{\xi}(W)- \hat{\xi}(Z'_1)\right)\right) -\hat{\Gamma}(W)\\
&=(Z_4-(Z_1-0))-U\\
&=Z_4-Z_1-U,
\end{aligned}
 \hspace{-4mm}
\end{equation}
where $Z_4>Z_1+U$. Hence, based on Definition \ref{expdist}, $S_3{\sim}\ddot{\mathfrak{X}}$.

\subsection{Part 2: Proof of $S_2$ and $S_4$ when $V{\sim}\gamma$}
Let $V{\sim}\gamma=U_3-(U_1-(Z_2-Z_1))$ and $Z'_2{\sim}\ddot{\mathfrak{X}}=Z_4-Z_3-U_2$. Given that $Z'_2\dot{\preceq}V$, if $S_2>0$, based on Lemma \ref{conditional_closure}, we get
\begin{equation}
\begin{aligned}
S_2=V-Z'_2&= \hat{\Gamma}(V)-\left(\hat{\Gamma}(Z'_2)-\left(\hat{\xi}(V)-\hat{\xi}(Z'_2)\right)\right)\\
&=U_3-(Z_4-(Z_2-0))\\
&=U_3-(Z_4-Z_2),
\end{aligned}
\end{equation}
where $U_3>Z_4-Z_2$ and $Z_4>Z_2$. Consequently, based on Definition \ref{phidist}, $S_2{\sim}\varphi$.\par
Further, if $S_4>0$, according to Lemma \ref{conditional_closure}, the following result is concluded
\begin{equation}
\begin{aligned}
S_4=Z'_2-V&=\left(\hat{\Gamma}(Z'_2)- \left(\hat{\xi}(V)-\hat{\xi}(Z'_2)\right)\right)-\hat{\Gamma}(V)\\
&=(Z_4-(Z_2-0))-U_3\\
&=Z_4-Z_2-U_3,
\end{aligned}
\end{equation}
where $Z_4>U_3+Z_2$. Consequently, based on Definition \ref{expdist}, $S_4{\sim}\ddot{\mathfrak{X}}$.

\subsection{Part 3: Proof of $S_2$ and $S_4$ when $V{\sim}\psi$}
Let $V{\sim}\psi=(U_1-(Z_2-Z_1))-U_3$ and $Z'_2{\sim}\ddot{\mathfrak{X}}=Z_4-Z_3-U_2$. Given that $Z'_2 \dot{\preceq} V$, if $S_2>0$, Lemma \ref{conditional_closure} implies
\begin{equation}
\begin{aligned}
S_2=V-Z'_2&=\hat{\Gamma}(V)-\left(\hat{\Gamma}(Z'_2)-\left(\hat{\xi}(V)-\hat{\xi}(Z'_2)\right)\right)\\
&=U_1-(Z_4-(Z_1-0))\\
&=U_1-(Z_4-Z_1),
\end{aligned}
\end{equation}
where $U_1>Z_4-Z_1$ and $Z_4>Z_1$. Consequently, based on Definition \ref{phidist}, $S_2{\sim}\varphi$.\par
Similarly, if $S_4>0$, Lemma \ref{conditional_closure} yields
\begin{equation}
\begin{aligned}
S_4=Z'_2-V&=\left(\hat{\Gamma}(Z'_2)- \left(\hat{\xi}(V)-\hat{\xi}(Z'_2)\right)\right)-\hat{\Gamma}(V)\\
&=(Z_4-(Z_1-0))-U_1\\
&=Z_4-Z_1-U_1,
\end{aligned}
\end{equation}
where $Z_4>U_1+Z_1$. Consequently, based on Definition \ref{expdist}, $S_4{\sim}\ddot{\mathfrak{X}}$.

\newpage
\section{Proof of Proposition \ref{proposition3}}\label{proposition3_proof}
\noindent We prove Proposition \ref{proposition3} in the following two parts. We first prove \eqref{eqq1} and \eqref{eq21} in Sec. \ref{part1Prop3}. Next, we prove \eqref{eq22} and \eqref{eq23} in Sec. \ref{part2Prop3}.
\subsection{Part 1: Proofs of \eqref{eqq1} and \eqref{eq21}}\label{part1Prop3}
We first aim to prove \eqref{eqq1}. According to Definition \ref{event_dynamic_list}, for $1\le k\le i-1$, we have
\begin{equation}
\begin{aligned}
  \mathcal{H}_1^{(\varphi;\mathfrak{R})}\hspace{-1.4mm}\left\langle n{-}1 \hspace{-0.3mm};\hspace{-0.3mm} 1 \right\rangle(k)\dot{\preceq}\mathcal{H}_1^{(\varphi;\mathfrak{R})}\hspace{-1.4mm}\left\langle n{-}1 \hspace{-0.3mm};\hspace{-0.3mm} 1 \right\rangle(i),
\end{aligned}
\end{equation}
and for $i+1\le k\le n$, we have
\begin{equation}
    \mathcal{H}_1^{(\varphi;\mathfrak{R})}\hspace{-1.4mm}\left\langle n{-}1 \hspace{-0.3mm};\hspace{-0.3mm} 1 \right\rangle(i) \dot{\preceq} \mathcal{H}_1^{(\varphi;\mathfrak{R})}\hspace{-1.4mm}\left\langle n{-}1 \hspace{-0.3mm};\hspace{-0.3mm} 1 \right\rangle(k).
\end{equation}
If $\mathcal{H}_1^{(\varphi;\mathfrak{R})}\hspace{-1.4mm}\left\langle n{-}1 \hspace{-0.3mm};\hspace{-0.3mm} 1 \right\rangle(i)<\mathcal{H}_1^{(\varphi;\mathfrak{R})}\hspace{-1.4mm}\left\langle n{-}1 \hspace{-0.3mm};\hspace{-0.3mm} 1 \right\rangle(k)$, based on Definition \ref{psidist}, $\mathcal{H}_1^{(\varphi;\mathfrak{R})}\hspace{-1.4mm}\left\langle n{-}1 \hspace{-0.3mm};\hspace{-0.3mm} 1 \right\rangle(n)-\mathcal{H}_1^{(\varphi;\mathfrak{R})}\hspace{-1.4mm}\left\langle n{-}1 \hspace{-0.3mm};\hspace{-0.3mm} 1 \right\rangle(i)$ follows $\psi$ distribution for $i\neq n$. Here, $n$ refers to the last element of $\mathcal{H}_1^{(\varphi;\mathfrak{R})}\hspace{-1.4mm}\left\langle n{-}1 \hspace{-0.3mm};\hspace{-0.3mm} 1 \right\rangle$, which follows general distribution $\mathfrak{R}$. Moreover, based on Proposition \ref{decomposability_props}, for $1\le k \le i-1$, we have
\begin{equation}
    \left(\mathcal{H}_1^{(\varphi;\mathfrak{R})}\hspace{-1.4mm}\left\langle n{-}1 \hspace{-0.3mm};\hspace{-0.3mm} 1 \right\rangle(k)-\mathcal{H}_1^{(\varphi;\mathfrak{R})}\hspace{-1.4mm}\left\langle n{-}1 \hspace{-0.3mm};\hspace{-0.3mm} 1 \right\rangle(i)\right){\sim}\gamma,
\end{equation}
and for $i+1\le k \le n$, we have
\begin{equation}
    \left(\mathcal{H}_1^{(\varphi;\mathfrak{R})}\hspace{-1.4mm}\left\langle n{-}1 \hspace{-0.3mm};\hspace{-0.3mm} 1 \right\rangle(k)-\mathcal{H}_1^{(\varphi;\mathfrak{R})}\hspace{-1.4mm}\left\langle n{-}1 \hspace{-0.3mm};\hspace{-0.3mm} 1 \right\rangle(i)\right){\sim}\psi.
\end{equation}
For the case that $i=n$, based on Definition \ref{gammadist}, $\mathcal{H}_1^{(\varphi;\mathfrak{R})}\hspace{-1.4mm}\left\langle n{-}1 \hspace{-0.3mm};\hspace{-0.3mm} 1 \right\rangle(k)-\mathcal{H}_1^{(\varphi;\mathfrak{R})}\hspace{-1.4mm}\left\langle n{-}1 \hspace{-0.3mm};\hspace{-0.3mm} 1 \right\rangle(i)$ follows $\gamma$ distribution for all $1\le k \le n{-}1$. Given that, for $1\le k< x\le n$, $\mathcal{H}_1^{(\varphi;\mathfrak{R})}\hspace{-1.4mm}\left\langle n{-}1 \hspace{-0.3mm};\hspace{-0.3mm} 1 \right\rangle(k)\preceq \mathcal{H}_1^{(\varphi;\mathfrak{R})}\hspace{-1.4mm}\left\langle n{-}1 \hspace{-0.3mm};\hspace{-0.3mm} 1 \right\rangle(x)$, Lemma \ref{subtraction_inequality} yields
\begin{equation}
\begin{aligned}
   \left(\mathcal{H}_1^{(\varphi;\mathfrak{R})}\hspace{-1.4mm}\right.&\left.\left\langle n{-}1 \hspace{-0.3mm};\hspace{-0.3mm} 1 \right\rangle(k)-\mathcal{H}_1^{(\varphi;\mathfrak{R})}\hspace{-1.4mm}\left\langle n{-}1 \hspace{-0.3mm};\hspace{-0.3mm} 1 \right\rangle(i)\right)\dot{\preceq} \left(\mathcal{H}_1^{(\varphi;\mathfrak{R})}\hspace{-1.4mm}\left\langle n{-}1 \hspace{-0.3mm};\hspace{-0.3mm} 1 \right\rangle(x)-\mathcal{H}_1^{(\varphi;\mathfrak{R})}\hspace{-1.4mm}\left\langle n{-}1 \hspace{-0.3mm};\hspace{-0.3mm} 1 \right\rangle(i)\right).
\end{aligned}
\end{equation}
Accordingly, the following two EDLs are concluded. For $1\le k \le i-1$, we have:
\begin{equation}
\begin{aligned}
  \mathcal{L}_1^{(\gamma)}\langle i{-}1 \rangle =\Big[&\mathcal{H}_1^{(\varphi;\mathfrak{R})}\hspace{-1.4mm}\left\langle n{-}1 \hspace{-0.3mm};\hspace{-0.3mm} 1 \right\rangle(1){-}\mathcal{H}_1^{(\varphi;\mathfrak{R})}\hspace{-1.4mm}\left\langle n{-}1 \hspace{-0.3mm};\hspace{-0.3mm} 1 \right\rangle(i),\dots\\
  &,\mathcal{H}_1^{(\varphi;\mathfrak{R})}\hspace{-1.4mm}\left\langle n{-}1 \hspace{-0.3mm};\hspace{-0.3mm} 1 \right\rangle(k){-}\mathcal{H}_1^{(\varphi;\mathfrak{R})}\hspace{-1.4mm}\left\langle n{-}1 \hspace{-0.3mm};\hspace{-0.3mm} 1 \right\rangle(i),\dots\\
  &,\mathcal{H}_1^{(\varphi;\mathfrak{R})}\hspace{-1.4mm}\left\langle n{-}1 \hspace{-0.3mm};\hspace{-0.3mm} 1 \right\rangle(i{-}1){-}\mathcal{H}_1^{(\varphi;\mathfrak{R})}\hspace{-1.4mm}\left\langle n{-}1 \hspace{-0.3mm};\hspace{-0.3mm} 1 \right\rangle(i)\Big],
\end{aligned}
\end{equation}
and for $i+1\le k\le n$, we have
\begin{equation}
\begin{aligned}
  \mathcal{L}_2^{(\psi)}\langle n{-}i \rangle=\Big[&\mathcal{H}_1^{(\varphi;\mathfrak{R})}\hspace{-1.4mm}\left\langle n{-}1 \hspace{-0.3mm};\hspace{-0.3mm} 1 \right\rangle(i+1){-}\mathcal{H}_1^{(\varphi;\mathfrak{R})}\hspace{-1.4mm}\left\langle n{-}1 \hspace{-0.3mm};\hspace{-0.3mm} 1 \right\rangle(i),\dots\\
  &,\mathcal{H}_1^{(\varphi;\mathfrak{R})}\hspace{-1.4mm}\left\langle n{-}1 \hspace{-0.3mm};\hspace{-0.3mm} 1 \right\rangle(k){-}\mathcal{H}_1^{(\varphi;\mathfrak{R})}\hspace{-1.4mm}\left\langle n{-}1 \hspace{-0.3mm};\hspace{-0.3mm} 1 \right\rangle(i),\dots\\
  &,\mathcal{H}_1^{(\varphi;\mathfrak{R})}\hspace{-1.4mm}\left\langle n{-}1 \hspace{-0.3mm};\hspace{-0.3mm} 1 \right\rangle(n){-}\mathcal{H}_1^{(\varphi;\mathfrak{R})}\hspace{-1.4mm}\left\langle n{-}1 \hspace{-0.3mm};\hspace{-0.3mm} 1 \right\rangle(i)\Big],
\end{aligned}
\end{equation}
where $\mathcal{L}_1^{(\gamma)}\langle i{-}1 \rangle \dot{\preceq} \mathcal{L}_2^{(\psi)}\langle n{-}i \rangle$. Let EDL $\mathcal{H}_2^{(\gamma;\psi)}\hspace{-1.4mm}\left\langle i{-}1 \hspace{-0.3mm};\hspace{-0.3mm} n{-}i \right\rangle\hspace{-0.5mm}=\mathcal{L}_1^{(\gamma)}\langle i{-}1 \rangle \odot \mathcal{L}_2^{(\psi)}\langle n{-}i \rangle$. Consequently,
\begin{equation}
\begin{aligned}
   \mathcal{H}_1^{(\varphi;\mathfrak{R})}\hspace{-1.4mm}\left\langle n{-}1 \hspace{-0.3mm};\hspace{-0.3mm} 1 \right\rangle &\circleddash \mathcal{H}_1^{(\varphi;\mathfrak{R})}\hspace{-1.4mm}\left\langle n{-}1 \hspace{-0.3mm};\hspace{-0.3mm} 1 \right\rangle(i)\\
   &=\mathcal{L}_1^{(\gamma)}\langle i{-}1 \rangle \odot \mathcal{L}_2^{(\psi)}\langle n{-}i \rangle\\
   &=\mathcal{H}_2^{(\gamma;\psi)}\hspace{-1.4mm}\left\langle i{-}1 \hspace{-0.3mm};\hspace{-0.3mm} n{-}i \right\rangle\hspace{-0.5mm},
\end{aligned}
\end{equation}
which concludes the proof of \eqref{eqq1}.\par
We take a similar approach conducted above to prove \eqref{eq21}. Since $\mathcal{L}_1^{(\dot{\mathfrak{X}})}\hspace{-1.4mm}\left\langle m\right\rangle \dot{\preceq} \mathcal{H}_1^{(\varphi;\mathfrak{R})}\hspace{-1.4mm}\left\langle n{-}1 \hspace{-0.3mm};\hspace{-0.3mm} 1 \right\rangle$, if $\mathcal{H}_1^{(\varphi;\mathfrak{R})}\hspace{-1.4mm}\left\langle n{-}1 \hspace{-0.3mm};\hspace{-0.3mm} 1 \right\rangle(i)< \mathcal{L}_1^{(\dot{\mathfrak{X}})}\hspace{-1.4mm}\left\langle m\right\rangle \hspace{-0.6mm}(r)$, based on Definition \ref{expdist}, for $i=n$, $\mathcal{L}_1^{(\dot{\mathfrak{X}})}\hspace{-1.4mm}\left\langle m\right\rangle \hspace{-0.6mm}(r)-\mathcal{H}_1^{(\varphi;\mathfrak{R})}\hspace{-1.4mm}\left\langle n{-}1 \hspace{-0.3mm};\hspace{-0.3mm} 1 \right\rangle(i)$ follows $\ddot{\mathfrak{X}}$. In the case that $i\neq n$, based on Proposition \ref{absorbency_props}, we have
\begin{equation}
    \left(\mathcal{L}_1^{(\dot{\mathfrak{X}})}\hspace{-1.4mm}\left\langle m\right\rangle \hspace{-0.6mm}(r){-}\mathcal{H}_1^{(\varphi;\mathfrak{R})}\hspace{-1.4mm}\left\langle n{-}1 \hspace{-0.3mm};\hspace{-0.3mm} 1 \right\rangle(i)\right){\sim}\ddot{\mathfrak{X}}, 1\le r \le m.
\end{equation}
Further, based on Lemma \ref{subtraction_inequality}, for $1\le r< x\le n$, we have
\begin{equation}
\begin{aligned}
   \left(\mathcal{L}_1^{(\dot{\mathfrak{X}})}\hspace{-1.4mm}\left\langle m\right\rangle \hspace{-0.6mm}(r)\right.&\left.-\mathcal{H}_1^{(\varphi;\mathfrak{R})}\hspace{-1.4mm}\left\langle n{-}1 \hspace{-0.3mm};\hspace{-0.3mm} 1 \right\rangle(i)\right) \dot{\preceq} \left(\mathcal{L}_1^{(\dot{\mathfrak{X}})}\hspace{-1.4mm}\left\langle m\right\rangle \hspace{-0.6mm}(x){-}\mathcal{H}_1^{(\varphi;\mathfrak{R})}\hspace{-1.4mm}\left\langle n{-}1 \hspace{-0.3mm};\hspace{-0.3mm} 1 \right\rangle(i)\right).
\end{aligned}
\end{equation}
Consequently, we have
\begin{equation}
\begin{aligned}
  \mathcal{L}_1^{(\dot{\mathfrak{X}})}\hspace{-1.4mm}\left\langle m\right\rangle \hspace{-0.6mm} &\circleddash \mathcal{H}_1^{(\varphi;\mathfrak{R})}\hspace{-1.4mm}\left\langle n{-}1 \hspace{-0.3mm};\hspace{-0.3mm} 1 \right\rangle(i)=\mathcal{L}_2^{(\ddot{\mathfrak{X}})}\hspace{-1.4mm}\left\langle m\right\rangle\hspace{-0.5mm},
\end{aligned}
\end{equation}
where $\mathcal{L}_2^{(\ddot{\mathfrak{X}})}\hspace{-1.4mm}\left\langle m\right\rangle\hspace{-0.5mm}$ is an EDL. Moreover, Lemma \ref{subtraction_inequality} implies $\ddot{\mathrm{Y}}\langle \gamma_{i-1}| \psi_{n-i}\rangle \dot{\succeq} \mathcal{L}_2^{(\ddot{\mathfrak{X}})}\hspace{-1.4mm}\left\langle m\right\rangle\hspace{-0.5mm}$, which concludes the proof.

\subsection{Part 2: Proofs of \eqref{eq22} and \eqref{eq23}}\label{part2Prop3}
In this section, we first present the proof of \eqref{eq22}. If $\mathcal{L}_1^{(\dot{\mathfrak{X}})}\hspace{-1.4mm}\left\langle m\right\rangle \hspace{-0.6mm}(r)<\mathcal{H}_1^{(\varphi;\mathfrak{R})}\hspace{-1.4mm}\left\langle n{-}1 \hspace{-0.3mm};\hspace{-0.3mm} 1 \right\rangle(k)$ for $1\le k \le n-1$, Proposition \ref{absorbency_props} implies that
\begin{equation}
    \left(\mathcal{H}_1^{(\varphi;\mathfrak{R})}\hspace{-1.4mm}\left\langle n{-}1 \hspace{-0.3mm};\hspace{-0.3mm} 1 \right\rangle(k)-\mathcal{L}_1^{(\dot{\mathfrak{X}})}\hspace{-1.4mm}\left\langle m\right\rangle \hspace{-0.6mm}(r)\right){\sim}\varphi.
\end{equation}
Regarding $\mathcal{H}_1^{(\varphi;\mathfrak{R})}\hspace{-1.4mm}\left\langle n{-}1 \hspace{-0.3mm};\hspace{-0.3mm} 1 \right\rangle(n)$, based on Definition \ref{phidist}, $\mathcal{H}_1^{(\varphi;\mathfrak{R})}\hspace{-1.4mm}\left\langle n{-}1 \hspace{-0.3mm};\hspace{-0.3mm} 1 \right\rangle(n)-\mathcal{L}_1^{(\dot{\mathfrak{X}})}\hspace{-1.4mm}\left\langle m\right\rangle \hspace{-0.6mm}(r)$ follows $\varphi$ distribution. Moreover, based on Lemma \ref{subtraction_inequality}, for $1\le k < x \le n$, we have
\begin{equation}
\begin{aligned}
   \left(\mathcal{H}_1^{(\varphi;\mathfrak{R})}\right.&\left.\hspace{-1.4mm}\left\langle n{-}1 \hspace{-0.3mm};\hspace{-0.3mm} 1 \right\rangle(k)-\mathcal{L}_1^{(\dot{\mathfrak{X}})}\hspace{-1.4mm}\left\langle m\right\rangle \hspace{-0.6mm}(r)\right)\dot{\preceq} \left(\mathcal{H}_1^{(\varphi;\mathfrak{R})}\hspace{-1.4mm}\left\langle n{-}1 \hspace{-0.3mm};\hspace{-0.3mm} 1 \right\rangle(x)-\mathcal{L}_1^{(\dot{\mathfrak{X}})}\hspace{-1.4mm}\left\langle m\right\rangle \hspace{-0.6mm}(r)\right).
\end{aligned}
\end{equation}
Accordingly, we have
\begin{equation}
\begin{aligned}
  \mathcal{H}_1^{(\varphi;\mathfrak{R})}&\hspace{-1.4mm}\left\langle n{-}1 \hspace{-0.3mm};\hspace{-0.3mm} 1 \right\rangle \circleddash \mathcal{L}_1^{(\dot{\mathfrak{X}})}\hspace{-1.4mm}\left\langle m\right\rangle \hspace{-0.6mm}(r)= \mathcal{L}_3^{(\varphi)}\hspace{-1.4mm}\left\langle n\right\rangle\hspace{-0.6mm},
\end{aligned}
\end{equation}
where $\mathcal{L}_3^{(\varphi)}\hspace{-1.4mm}\left\langle n\right\rangle\hspace{-0.6mm}$ is an EDL.\par
Next, we aim to prove \eqref{eq23}. It can be shown that for $1\le k \le m$ and $k\ne r$, we have
\begin{equation}
    \left(\mathcal{L}_1^{(\dot{\mathfrak{X}})}\hspace{-1.4mm}\left\langle m\right\rangle \hspace{-0.6mm}(k)-\mathcal{L}_1^{(\dot{\mathfrak{X}})}\hspace{-1.4mm}\left\langle m\right\rangle \hspace{-0.6mm}(r)\right){\sim}\dot{\mathfrak{X}}.
\end{equation}
Further, based on Lemma \ref{subtraction_inequality}, for $1\le k< x\le n$, we have
\begin{equation}
\begin{aligned}
   \left(\mathcal{L}_1^{(\dot{\mathfrak{X}})}\hspace{-1.4mm}\left\langle m\right\rangle \hspace{-0.6mm}(k) \right.&\left.-\mathcal{L}_1^{(\dot{\mathfrak{X}})}\hspace{-1.4mm}\left\langle m\right\rangle \hspace{-0.6mm}(r)\right) \dot{\preceq} \left(\mathcal{L}_1^{(\dot{\mathfrak{X}})}\hspace{-1.4mm}\left\langle m\right\rangle \hspace{-0.6mm}(x)-\mathcal{L}_1^{(\dot{\mathfrak{X}})}\hspace{-1.4mm}\left\langle m\right\rangle \hspace{-0.6mm}(r)\right),
   \end{aligned}
\end{equation}
where $k\ne x \ne r$. Accordingly,
\begin{equation}
\begin{aligned}
  \mathcal{L}_1^{(\dot{\mathfrak{X}})}\hspace{-1.4mm}\left\langle m\right\rangle \hspace{-0.6mm} &- \mathcal{L}_1^{(\dot{\mathfrak{X}})}\hspace{-1.4mm}\left\langle m\right\rangle \hspace{-0.6mm}(r)=\mathcal{L}_4^{(\dot{\mathfrak{X}})}\hspace{-1.4mm}\left\langle m{-}1\right\rangle\hspace{-0.6mm},
\end{aligned}
\end{equation}
where $\mathcal{L}_4^{(\dot{\mathfrak{X}})}\hspace{-1.4mm}\left\langle m{-}1\right\rangle\hspace{-0.6mm}$ is an EDL. Moreover, based on Lemma \ref{subtraction_inequality}, we have
\begin{equation}
 \mathcal{L}_3^{(\varphi)}\hspace{-1.4mm}\left\langle n\right\rangle\hspace{-0.6mm} \dot{\succeq} \mathcal{L}_4^{(\dot{\mathfrak{X}})}\hspace{-1.4mm}\left\langle m{-}1\right\rangle\hspace{-0.6mm},
\end{equation}
which concludes the proof.

\newpage
\section{Proof of Proposition \ref{proposition4}}\label{proposition4_proof}
\noindent In this section, we first prove \eqref{pro4_1} and \eqref{pro4_2} of Proposition \ref{proposition4} in Sec. \ref{part1Prop4}. Next, we prove \eqref{pro4_3} and \eqref{pro4_4} of Proposition \ref{proposition4} in Sec. \ref{part2Prop4}.
\subsection{Part 1: the proofs of \eqref{pro4_1} and \eqref{pro4_2}}\label{part1Prop4}
We first aim to present the proof of \eqref{pro4_1}. Based on Definition \ref{event_dynamic_list}, for $1\le k\le i-1$, we have
\begin{equation}
\begin{aligned}
  \mathcal{H}_1^{(\gamma;\psi)}\hspace{-1.4mm}\left\langle n \hspace{-0.3mm};\hspace{-0.3mm} m \right\rangle(k)\dot{\preceq}\mathcal{H}_1^{(\gamma;\psi)}\hspace{-1.4mm}\left\langle n \hspace{-0.3mm};\hspace{-0.3mm} m \right\rangle(i),
\end{aligned}
\end{equation}
and for $i+1\le k\le n+m$, we have
\begin{equation}
\begin{aligned}
   \mathcal{H}_1^{(\gamma;\psi)}\hspace{-1.4mm}\left\langle n \hspace{-0.3mm};\hspace{-0.3mm} m \right\rangle(i)\dot{\preceq} \mathcal{H}_1^{(\gamma;\psi)}\hspace{-1.4mm}\left\langle n \hspace{-0.3mm};\hspace{-0.3mm} m \right\rangle(k).
\end{aligned}
\end{equation}
If $\mathcal{H}_1^{(\gamma;\psi)}\hspace{-1.4mm}\left\langle n \hspace{-0.3mm};\hspace{-0.3mm} m \right\rangle(i)<\mathcal{H}_1^{(\gamma;\psi)}\hspace{-1.4mm}\left\langle n \hspace{-0.3mm};\hspace{-0.3mm} m \right\rangle(k)$, based on Proposition~\ref{decomposability_props}, for $1\le k \le i-1$, we have
\begin{equation}
\begin{aligned}
    \left(\mathcal{H}_1^{(\gamma;\psi)}\hspace{-1.4mm}\left\langle n \hspace{-0.3mm};\hspace{-0.3mm} m \right\rangle(k)-\mathcal{H}_1^{(\gamma;\psi)}\hspace{-1.4mm}\left\langle n \hspace{-0.3mm};\hspace{-0.3mm} m \right\rangle(i)\right){\sim}\gamma,
\end{aligned}
\end{equation}
and for $i+1\le k \le n+m$, we have
\begin{equation}
\begin{aligned}
    \left(\mathcal{H}_1^{(\gamma;\psi)}\hspace{-1.4mm}\left\langle n \hspace{-0.3mm};\hspace{-0.3mm} m \right\rangle(k)-\mathcal{H}_1^{(\gamma;\psi)}\hspace{-1.4mm}\left\langle n \hspace{-0.3mm};\hspace{-0.3mm} m \right\rangle(i)\right){\sim}\psi.
\end{aligned}
\end{equation}
Moreover, based on Lemma \ref{subtraction_inequality}, for $1\le k< x\le n+m$
\begin{equation}
\begin{aligned}
   \left(\mathcal{H}_1^{(\gamma;\psi)}\hspace{-1.4mm}\left\langle n \hspace{-0.3mm};\hspace{-0.3mm} m \right\rangle(k) -\mathcal{H}_1^{(\gamma;\psi)}\hspace{-1.4mm}\left\langle n \hspace{-0.3mm};\hspace{-0.3mm} m \right\rangle(i)\right)\dot{\preceq} \left(\mathcal{H}_1^{(\gamma;\psi)}\hspace{-1.4mm}\left\langle n \hspace{-0.3mm};\hspace{-0.3mm} m \right\rangle(x)-\mathcal{H}_1^{(\gamma;\psi)}\hspace{-1.4mm}\left\langle n \hspace{-0.3mm};\hspace{-0.3mm} m \right\rangle(i)\right),
\end{aligned}
\end{equation}
where $x\neq k \neq i$. Accordingly, the following two EDLs are concluded. For $1\le k \le i-1$, we have
\begin{equation}
\begin{aligned}
  \mathcal{L}_1^{(\gamma)}\hspace{-1.4mm}\left\langle i{-}1\right\rangle =\Big[&\mathcal{H}_1^{(\gamma;\psi)}\hspace{-1.4mm}\left\langle n \hspace{-0.3mm};\hspace{-0.3mm} m \right\rangle(1)-\mathcal{H}_1^{(\gamma;\psi)}\hspace{-1.4mm}\left\langle n \hspace{-0.3mm};\hspace{-0.3mm} m \right\rangle(i)\dots\\
  &,\mathcal{H}_1^{(\gamma;\psi)}\hspace{-1.4mm}\left\langle n \hspace{-0.3mm};\hspace{-0.3mm} m \right\rangle(k)-\mathcal{H}_1^{(\gamma;\psi)}\hspace{-1.4mm}\left\langle n \hspace{-0.3mm};\hspace{-0.3mm} m \right\rangle(i)\dots\\
  &,\mathcal{H}_1^{(\gamma;\psi)}\hspace{-1.4mm}\left\langle n \hspace{-0.3mm};\hspace{-0.3mm} m \right\rangle(i{-}1)-\mathcal{H}_1^{(\gamma;\psi)}\hspace{-1.4mm}\left\langle n \hspace{-0.3mm};\hspace{-0.3mm} m \right\rangle(i)\Big],
\end{aligned}
\end{equation}
and for $i+1\le k\le n+m$, we have
\begin{equation}
\begin{aligned}
   \mathcal{L}_1^{(\psi)}\hspace{-1.4mm}\left\langle n{+}m{-}i\right\rangle =\Big[&\mathcal{H}_1^{(\gamma;\psi)}\hspace{-1.4mm}\left\langle n \hspace{-0.3mm};\hspace{-0.3mm} m \right\rangle(i{+}1)-\mathcal{H}_1^{(\gamma;\psi)}\hspace{-1.4mm}\left\langle n \hspace{-0.3mm};\hspace{-0.3mm} m \right\rangle(i),\dots\\
  &,\mathcal{H}_1^{(\gamma;\psi)}\hspace{-1.4mm}\left\langle n \hspace{-0.3mm};\hspace{-0.3mm} m \right\rangle(k)-\mathcal{H}_1^{(\gamma;\psi)}\hspace{-1.4mm}\left\langle n \hspace{-0.3mm};\hspace{-0.3mm} m \right\rangle(i),\dots\\
  &,\mathcal{H}_1^{(\gamma;\psi)}\hspace{-1.4mm}\left\langle n \hspace{-0.3mm};\hspace{-0.3mm} m \right\rangle(n{+}m)-\mathcal{H}_1^{(\gamma;\psi)}\hspace{-1.4mm}\left\langle n \hspace{-0.3mm};\hspace{-0.3mm} m \right\rangle(i)\Big],
\end{aligned}
\end{equation}
where, $ \mathcal{L}_1^{(\gamma)}\hspace{-1.4mm}\left\langle i{-}1\right\rangle \dot{\preceq}  \mathcal{L}_1^{(\psi)}\hspace{-1.4mm}\left\langle n{+}m{-}i\right\rangle$. Let $\mathcal{H}_2^{(\gamma;\psi)} \hspace{-1.4mm}\left\langle i{-}1 \hspace{-0.3mm};\hspace{-0.3mm} m{+}n{-}i \right\rangle=\mathcal{L}_1^{(\gamma)}\hspace{-1.4mm}\left\langle i{-}1\right\rangle \odot  \mathcal{L}_1^{(\psi)}\hspace{-1.4mm}\left\langle n{+}m{-}i\right\rangle$. Consequently,
\begin{equation}
\begin{aligned}
   \mathcal{H}_1^{(\gamma;\psi)}\hspace{-1.4mm}\left\langle n \hspace{-0.3mm};\hspace{-0.3mm} m \right\rangle &\circleddash \mathcal{H}_1^{(\gamma;\psi)}\hspace{-1.4mm}\left\langle n \hspace{-0.3mm};\hspace{-0.3mm} m \right\rangle(i)\\
   &=\mathcal{L}_1^{(\gamma)}\hspace{-1.4mm}\left\langle i{-}1\right\rangle \odot  \mathcal{L}_1^{(\psi)}\hspace{-1.4mm}\left\langle n{+}m{-}i\right\rangle\\
   &=\mathcal{H}_2^{(\gamma;\psi)} \hspace{-1.4mm}\left\langle i{-}1 \hspace{-0.3mm};\hspace{-0.3mm} m{+}n{-}i \right\rangle,
\end{aligned}
\end{equation}
which concludes the proof of \eqref{pro4_1}.\par
We next aim to prove \eqref{pro4_2}. If $\mathcal{H}_1^{(\gamma;\psi)}\hspace{-1.4mm}\left\langle n \hspace{-0.3mm};\hspace{-0.3mm} m \right\rangle(i)< \mathcal{L}_1^{(\ddot{\mathfrak{X}})}\hspace{-1.4mm}\left\langle h\right\rangle(r)$, based on Proposition \ref{absorbency_props}, for $1\le r \le h$, we have
\begin{equation}
    \left(\mathcal{L}_1^{(\ddot{\mathfrak{X}})}\hspace{-1.4mm}\left\langle h\right\rangle(r)-\mathcal{H}_1^{(\gamma;\psi)}\hspace{-1.4mm}\left\langle n \hspace{-0.3mm};\hspace{-0.3mm} m \right\rangle(i)\right){\sim}\ddot{\mathfrak{X}}.
\end{equation}
Based on Lemma \ref{subtraction_inequality}, for $1\le r< x\le h$, we have
\begin{equation}
\begin{aligned}
   \left(\mathcal{L}_1^{(\ddot{\mathfrak{X}})}\hspace{-1.4mm}\left\langle h\right\rangle(r)-\mathcal{H}_1^{(\gamma;\psi)}\hspace{-1.4mm}\left\langle n \hspace{-0.3mm};\hspace{-0.3mm} m \right\rangle(i)\right) \dot{\preceq} \left(\mathcal{L}_1^{(\ddot{\mathfrak{X}})}\hspace{-1.4mm}\left\langle h\right\rangle(x)-\mathcal{H}_1^{(\gamma;\psi)}\hspace{-1.4mm}\left\langle n \hspace{-0.3mm};\hspace{-0.3mm} m \right\rangle(i)\right).
\end{aligned}
\end{equation}
Accordingly,
\begin{equation}
\begin{aligned}
  \mathcal{L}_1^{(\ddot{\mathfrak{X}})}\hspace{-1.4mm}\left\langle h\right\rangle \circleddash \mathcal{H}_1^{(\gamma;\psi)}\hspace{-1.4mm}\left\langle n \hspace{-0.3mm};\hspace{-0.3mm} m \right\rangle(i)=\mathcal{L}_2^{(\ddot{\mathfrak{X}})}\hspace{-1.4mm}\left\langle h\right\rangle,
\end{aligned}
\end{equation}
where $\mathcal{L}_2^{(\ddot{\mathfrak{X}})}\hspace{-1.4mm}\left\langle h\right\rangle$ is an EDL. Moreover, Lemma \ref{subtraction_inequality} yields
\begin{equation}
  \mathcal{H}_2^{(\gamma;\psi)} \hspace{-1.4mm}\left\langle i{-}1 \hspace{-0.3mm};\hspace{-0.3mm} m{+}n{-}i \right\rangle \dot{\succeq} \mathcal{L}_2^{(\ddot{\mathfrak{X}})}\hspace{-1.4mm}\left\langle h\right\rangle,
\end{equation}
which conclude the proof of part 1.

\subsection{Part 2: the proofs of \eqref{pro4_3} and \eqref{pro4_4}}\label{part2Prop4}
In this section, we first present the proof of \eqref{pro4_3}. If $\mathcal{H}_1^{(\gamma;\psi)}\hspace{-1.4mm}\left\langle n \hspace{-0.3mm};\hspace{-0.3mm} m \right\rangle(k)<\mathcal{L}_1^{(\ddot{\mathfrak{X}})}\hspace{-1.4mm}\left\langle h\right\rangle(r)$, based on Proposition \ref{absorbency_props}, for $1\le k \le n+m$, we get
\begin{equation}
    \left(\mathcal{H}_1^{(\gamma;\psi)}\hspace{-1.4mm}\left\langle n \hspace{-0.3mm};\hspace{-0.3mm} m \right\rangle(k)-\mathcal{L}_1^{(\ddot{\mathfrak{X}})}\hspace{-1.4mm}\left\langle h\right\rangle(s)\right){\sim}\varphi.
\end{equation}
Moreover, based on Lemma \ref{subtraction_inequality}, for $1\le k< x\le n+m$, the following condition holds
\begin{equation}
 \hspace{-4mm}
\begin{aligned}
   \left(\mathcal{H}_1^{(\gamma;\psi)}\hspace{-1.4mm}\left\langle n \hspace{-0.3mm};\hspace{-0.3mm} m \right\rangle(k) \right.&\left.-\mathcal{L}_1^{(\ddot{\mathfrak{X}})}\hspace{-1.4mm}\left\langle h\right\rangle(r)\right) \dot{\preceq} \left(\mathcal{H}_1^{(\gamma;\psi)}\hspace{-1.4mm}\left\langle n \hspace{-0.3mm};\hspace{-0.3mm} m \right\rangle(x)-\mathcal{L}_1^{(\ddot{\mathfrak{X}})}\hspace{-1.4mm}\left\langle h\right\rangle(r)\right).
\end{aligned}
\hspace{-4mm}
\end{equation}
Accordingly, we get
\begin{equation}
\begin{aligned}
  \mathcal{H}_1^{(\gamma;\psi)}\hspace{-1.4mm}\left\langle n \hspace{-0.3mm};\hspace{-0.3mm} m \right\rangle &\circleddash \mathcal{L}_1^{(\ddot{\mathfrak{X}})}\hspace{-1.4mm}\left\langle h\right\rangle(r)= \mathcal{L}_3^{(\varphi)}\hspace{-1.4mm}\left\langle m+n\right\rangle,
\end{aligned}
\end{equation}
which concludes the proof of \eqref{pro4_3}.\par
We next present the proof of \eqref{pro4_4}. Based on Definition \ref{event_dynamic_list}, for $1\le r,s\le h$ and $r\ne s$, we have
\begin{equation}
  \mathcal{L}_1^{(\ddot{\mathfrak{X}})}\hspace{-1.4mm}\left\langle h\right\rangle(s)\dot{\preceq}\mathcal{L}_1^{(\ddot{\mathfrak{X}})}\hspace{-1.4mm}\left\langle h\right\rangle(r).
\end{equation}
If $\mathcal{L}_1^{(\ddot{\mathfrak{X}})}\hspace{-1.4mm}\left\langle h\right\rangle(r)<\mathcal{L}_1^{(\ddot{\mathfrak{X}})}\hspace{-1.4mm}\left\langle h\right\rangle(s)$, based on Proposition \ref{absorbency_props}, for $1\le s \le h$, we have
\begin{equation}
    \left(\mathcal{L}_1^{(\ddot{\mathfrak{X}})}\hspace{-1.4mm}\left\langle h\right\rangle(s)-\mathcal{L}_1^{(\ddot{\mathfrak{X}})}\hspace{-1.4mm}\left\langle h\right\rangle(r)\right){\sim}\dot{\mathfrak{X}}.
\end{equation}
Furthermore, for $1\le s< x\le h$, Lemma \ref{subtraction_inequality} yields
\begin{equation}
\begin{aligned}
   \left(\mathcal{L}_1^{(\ddot{\mathfrak{X}})}\hspace{-1.4mm}\left\langle h\right\rangle(s)\right.&\left.-\mathcal{L}_1^{(\ddot{\mathfrak{X}})}\hspace{-1.4mm}\left\langle h\right\rangle(r)\right) \dot{\preceq} \left(\mathcal{L}_1^{(\ddot{\mathfrak{X}})}\hspace{-1.4mm}\left\langle h\right\rangle(x)-\mathcal{L}_1^{(\ddot{\mathfrak{X}})}\hspace{-1.4mm}\left\langle h\right\rangle(r)\right).
\end{aligned}
\end{equation}
Accordingly, we have
\begin{equation}
\begin{aligned}
  \mathcal{L}_1^{(\ddot{\mathfrak{X}})}\hspace{-1.4mm}\left\langle h\right\rangle \circleddash \mathcal{L}_1^{(\ddot{\mathfrak{X}})}\hspace{-1.4mm}\left\langle h\right\rangle(r) =\mathcal{L}_4^{(\dot{\mathfrak{X}})}\hspace{-1.4mm}\left\langle h-1\right\rangle.
\end{aligned}
\end{equation}
Moreover, Lemma \ref{subtraction_inequality} implies that
\begin{equation}
  \mathcal{L}_3^{(\varphi)}\hspace{-1.4mm}\left\langle m+n\right\rangle \dot{\succeq} \mathcal{L}_4^{(\dot{\mathfrak{X}})}\hspace{-1.4mm}\left\langle h-1\right\rangle,
\end{equation}
which concludes the proof.

\newpage
\section{Proof of Theorem \ref{decompositionTheorem}}\label{decompositionTheorem_proof}
\noindent We prove the theorem by induction.  The proof is presented in the following steps:
\begin{enumerate}
    \item \textbf{Initial case (base case).} We prove that Decomposition Theorem holds for the initial states $S_0$ and $S_1$ of $\beta$-SMP.
    \item \textbf{Induction step.} We prove if Decomposition Theorem holds for state $S_{n-1}$, it holds for state $S_{n
    }$.
\end{enumerate}

\subsection{Initial case}
Consider $\beta$-SMP presented in Fig. \ref{j2nMarkov1}. Let {\small$\dot{\mathcal{C}}_i$} denote the set of $i$ vehicles processing $i$ different groups, where one of the vehicles of each group has departed the VC (i.e., each group is being processed by only one vehicle).
Therefore, each vehicle {\small$C_\ell\in \dot{\mathcal{C}}_i$} is a recruiter vehicle. Also, let $\ddot{\mathcal{C}}_{2n-2i}$ denote the set of $2n-2i$ vehicles processing $n-i$ different groups, where each group is being processed by two vehicles. Moreover, let $X\in\mathcal{S}$ and $X'\in\mathcal{S}$ denote the current state and the next state of $\beta$-SMP, respectively. Further, let $e_X\in\zeta$ denote the event that occurred at current state $X$. The process starts from state $S_0{=}\dot{\mathcal{C}}_0 \cup \ddot{\mathcal{C}}_{2n}$. Since none of the vehicles has departed the VC yet and there are no recruiters in this state. Decomposition Theorem implies that
\begin{equation}\label{S00Prov0}
 S_0\equiv
    \begin{cases}
     S_{0,0}& \text{initial state}.\\
    \end{cases}
\end{equation}
\begin{equation}\label{S00Prov}
 X'{=}S_0\equiv
    \begin{cases}
     S_{0,1} & \text{if  } X{=}S_{1},e_X{=}e^{\mathsf{R}}(C_{\ell}),~ C_{\ell} {\in} \dot{\mathcal{C}}_{1},
    \end{cases}
\end{equation}
and
\begin{equation}\label{S01Prov}
   X'{=}S_1\equiv
    \begin{cases}
        S_{1,0} & \text{if  } X{=}S_{0},e_X{=}e^{\mathsf{D}}(C_{\ell}),~ \forall C_{\ell} {\in} \ddot{\mathcal{C}}_{2n}. \\
        S_{1,1} & \text{if  } X{=}S_{2},e_X{=}e^{\mathsf{R}}(C_{\ell}),~ O(C_{\ell}){=}1, C_{\ell} {\in} \dot{\mathcal{C}}_{2}\\
        S_{1,2} &\text{if  } X{=}S_{2},e_X{=}e^{\mathsf{R}}(C_{\ell}),~ O(C_{\ell}){=}2, C_{\ell} {\in} \dot{\mathcal{C}}_{2},
    \end{cases}
\end{equation}
where\footnote{$[]$ denotes an empty list.}
\begin{equation}\label{S00Def}
    S_{0,0}=\left\{[Z_1,Z_2,\dots,Z_{2n}], [\,] \right\},
\end{equation}
\begin{equation}\label{S01Def}
    S_{0,1}=\left\{\mathcal{L}_2^{(\ddot{\mathfrak{X}})}\hspace{-1.4mm}\left\langle 2n\right\rangle,[\,]\right\},
\end{equation}
\begin{equation}\label{S10Def}
      S_{1,0}=\left\{\mathcal{L}_1^{(\dot{\mathfrak{X}})}\hspace{-1.4mm}\left\langle {2n{-}1}\right\rangle,\mathcal{H}_1^{(\varphi;\mathfrak{R})}\hspace{-1.4mm}\left\langle {0} \hspace{-0.3mm};\hspace{-0.3mm} 1 \right\rangle\right\},
\end{equation}
\begin{equation}\label{S11Def}
      S_{1,1}=\left\{\mathcal{L}_2^{(\ddot{\mathfrak{X}})}\hspace{-1.4mm}\left\langle 2n{-}1\right\rangle,\mathcal{H}_2^{(\gamma;\psi)}\hspace{-1.4mm}\left\langle 0 \hspace{-0.3mm};\hspace{-0.3mm} 1 \right\rangle\right\},
\end{equation}
and
\begin{equation}\label{S12Def}
      S_{1,2}=\left\{\mathcal{L}_2^{(\ddot{\mathfrak{X}})}\hspace{-1.4mm}\left\langle 2n{-}1\right\rangle,\mathcal{H}_2^{(\gamma;\psi)}\hspace{-1.4mm}\left\langle 1 \hspace{-0.3mm};\hspace{-0.3mm} 0 \right\rangle\right\}.
\end{equation}
Accordingly, we aim to prove that \eqref{S00Prov} and \eqref{S01Prov} hold. To this end, we must prove following cases:
\begin{enumerate}[label={(I-\arabic*)}]
    \item \label{I1} $\beta$-SMP will transit to state $X'\in \{S_1\equiv S_{1,0}\}$, if it is in the initial state $X\in \{S_0\equiv S_{0,0}\}$ and event $e_X=e^{\mathsf{D}}(C_{\ell_k})$ occurs, where $C_{\ell_k} {\in} \ddot{\mathcal{C}}_{2n}$.
    \item \label{I2}  $\beta$-SMP will transit to state $X'\in \{S_2 \equiv S_{2,0}\}$, if it is in state $X\in \{S_1 \equiv S_{1,0}\}$ and event $e_X=e^{\mathsf{D}}(C_{\ell_k})$ occurs, where $C_{\ell_k} {\in} \ddot{\mathcal{C}}_{2n-2}$.
    \item \label{I3} $\beta$-SMP will transit to state $X'\in\{S_1 \equiv S_{1,1}, S_1 \equiv S_{1,2}\}$, if it is in state $X\in \{S_2 \equiv S_{2,0}\}$ and event $e_X=e^{\mathsf{R}}(C_{\ell'_k})$ occurs, where $C_{\ell'_k} {\in} \dot{\mathcal{C}}_{2}$.
    \item \label{I4} $\beta$-SMP will transit to state $X'\in \{S_2 \equiv S_{2,0}\}$, if it is in state $X\in \{S_1 \equiv S_{1,1},S_1 \equiv S_{1,2}\}$ and event $e_X=e^{\mathsf{D}}(C_{\ell_k})$ occurs, where $C_{\ell_k} {\in} \ddot{\mathcal{C}}_{2n-2}$.
    \item \label{I5} $\beta$-SMP will transit to state $X'\in \{S_0 \equiv S_{0,1}\}$, if it is in state $X\in \{S_1\equiv S_{1,0}, S_1 \equiv S_{1,1},S_1 \equiv S_{1,2}\}$ and event $e_X=e^{\mathsf{R}}(C_{\ell'_k})$ occurs, where $C_{\ell'_k} {\in} \dot{\mathcal{C}}_{1}$.
    \item \label{I6} $\beta$-SMP will transit to state $X'\in\{S_1 \equiv S_{1,0}\}$, if it is in state $X\in \{S_0 \equiv S_{0,1}\}$ and event $e_X=e^{\mathsf{D}}(C_{\ell_k})$ occurs, where $C_{\ell_k} {\in} \ddot{\mathcal{C}}_{2n}$.
\end{enumerate}
In the following, we prove the above six cases one by one.\par
\textbf{Proof of case \ref{I1}.}
It is evident that when the process is in the initial state, we have
\begin{equation}\label{eqS01_144}
    S_0\equiv S_{0,0}.
\end{equation}
Let $\mathbb{Z}=\left[Z_{\ell_1},Z_{\ell_2},\dots\right]$ be the list of all random variables denoting the sojourn times of the vehicles. Moreover, let $\mathbb{Z}'=\left[Z'_{\ell_1},Z'_{\ell_2},\dots\right]$ be the set of all random variables denoting the residual sojourn times of the vehicles. Further, let list $\widehat{C}$ denote all of the candidate vehicles which are in the VC when the processing of application $\mathcal{A}$ has been started. As mentioned in Sec.~\ref{SeSVC}, we assume $\widehat{C}$ is large enough such that the number of vehicles is larger than the number of existing task groups in the system. Initially, we have $Z'_{\ell_x}{=}Z_{\ell_x}$. Let {\small$C_{\ell_k}$} and {\small$C_{\ell'_k}$} be two vehicles assigned to a group {\small $G_{k}$}.
In state $S_0$, the departure of a vehicle $C_{\ell_k} {\in} \ddot{\mathcal{C}}_{2n}$ is the only possible event that can occur. Assume that event $e_X=e^{\mathsf{D}}(C_{\ell_k})$ occurs. $\beta$-SMP transits to state $X'{=}S_1{=}\dot{\mathcal{C}}_1 \cup \ddot{\mathcal{C}}_{2n-2}$ and {\small$C_{\ell'_k}$} starts recruiting a new vehicle. Consequently, $Z_{\ell_k}$ (i.e., the sojourn time of {\small$C_{\ell_k}$}) should be subtracted from the residual sojourn times of other vehicles since $Z_{\ell_k}$ units of time have passed from the start of the processing $\mathcal{A}$. Hence, $Z'_{\ell_x}=Z_{\ell_x}-Z_{\ell_k}$, where ${\ell_x} \neq {\ell_k}$, and $Z_{\ell_x}>Z_{\ell_k}$. Therefore, we have
\begin{equation}
\begin{aligned}
 \mathbb{Z}'{=}\left[Z'_{\ell_1}{=}Z_{\ell_1}{-}Z_{\ell_k}, \dots, Z'_{\ell_x}{=}Z_{\ell_x}{-}Z_{\ell_k},\dots\right],
\end{aligned}
\end{equation}
where $|\mathbb{Z}'|=2n-1$ and
\begin{equation}
\mathbb{Z}'(i)\dot{\prec} \mathbb{Z}'(j), 1\le i<j\le 2n-1.
\end{equation}
Accordingly, based on Definition \ref{expdist}, $\mathbb{Z}'= \mathcal{L}_2^{(\dot{\mathfrak{X}})}\hspace{-1.4mm}\left\langle {2n{-}1}\right\rangle$. Similarly, regarding $C_{\ell_y}\in\widehat{C}$, we can show the residual sojourn time of vehicles of $\widehat{C}$ by EDL $\mathcal{L}_{\widehat{C}}^{(\dot{\mathfrak{X}})}\hspace{-1.4mm}\left\langle {\infty}\right\rangle$, where $\infty$ means that the list is large enough, capturing the assumption that the number of vehicles is larger than the number of task groups.
Furthermore, the residual recruitment time of the only recruiter is $U'_{\ell'_k}=U_{\ell'_k}$, where $U_{\ell'_k}$ is the recruitment duration of vehicle $C_{\ell'_k}$ following general distribution $\mathfrak{R}$. Let $\mathcal{H}_1^{(\varphi;\mathfrak{R})}\hspace{-1.4mm}\left\langle {0} \hspace{-0.3mm};\hspace{-0.3mm} 0 \right\rangle$ be an empty H-EDL. Considering Definition \ref{concatentationDef}, we can write
\begin{equation}
    \mathcal{H}_2^{(\varphi;\mathfrak{R})}\hspace{-1.4mm}\left\langle {0} \hspace{-0.1mm};\hspace{-0.3mm} 1 \right\rangle=\mathcal{H}_1^{(\varphi;\mathfrak{R})}\hspace{-1.4mm}\left\langle {0} \hspace{-0.1mm};\hspace{-0.3mm} 0 \right\rangle \odot U'_{\ell'_k}.
\end{equation}
Consequently, if $X{=}S_0$ and $e_X{=}e^{\mathsf{D}}(C_{\ell_k})$, where $C_{\ell_k} {\in} \ddot{\mathcal{C}}_{2n}$, we have
\begin{equation}\label{S10_148}
\begin{aligned}
    X'{=}S_1\equiv S_{1,0}=\Big\{\mathcal{L}_2^{(\dot{\mathfrak{X}})}\hspace{-1.4mm}\left\langle {2n{-}1}\right\rangle,\mathcal{H}_2^{(\varphi;\mathfrak{R})}\hspace{-1.4mm}\left\langle {0} \hspace{-0.1mm};\hspace{-0.3mm} 1 \right\rangle\Big\}.
\end{aligned}
\end{equation}

In the following, we generalize the technique conducted above to conclude the proof of the initial case of induction. Let $S_{i,j}^{\mathsf{Soj}}$ and $S_{i,j}^{\mathsf{Rec}}$ refer to the first and second elements of $S_{i,j}$, respectively. Regarding \eqref{S10_148}, we have $S_{1,0}^{\mathsf{Soj}}=\mathcal{L}_2^{(\ddot{\mathfrak{X}})}\hspace{-1.4mm}\left\langle 2n-1\right\rangle$ and $S_{1,0}^{\mathsf{Rec}}=\mathcal{H}_1^{(\varphi;\mathfrak{R})}\hspace{-1.4mm}\left\langle {0} \hspace{-0.3mm};\hspace{-0.3mm} 1 \right\rangle$. Accordingly, let $C_{\ell_k}$ denote the vehicle whose residual sojourn time is $S_{i,j}^{\mathsf{Soj}}(\ell_k)$. Also, let $C_{\ell'_k}$ denote the recruiter vehicle whose residual recruitment duration is $S_{i,j}^{\mathsf{Rec}}(k)$. Based on Definition \ref{event_dynamic_list},
\begin{equation}
    S_{i,j}^{\mathsf{Rec}}(r)\dot{\preceq} S_{i,j}^{\mathsf{Rec}}(k), r < k.
\end{equation}
As a result, the order of recruiter $C_{\ell'_k}$ is equal to $k$. Mathematically, we have
\begin{equation}\label{eqOrder}
    O(C_{\ell'_k})=\sum_{j=1, j\neq k}^{i} \mathds{1}_{C_{\ell_j} \preceq C_{\ell'_k}}=k.
\end{equation}
Accordingly, we take the following steps to prove the initial case of induction.\par
\textbf{Proof of case \ref{I2}.} Let current state be $X=S_1$. If event $e_X=e^{\mathsf{D}}(C_{\ell_k})$ occurs, $\beta$-SMP will transit to state $X'{=}S_2{=}\dot{\mathcal{C}}_2 \cup \ddot{\mathcal{C}}_{2n-4}$, and {\small$C_{\ell'_k}$} starts recruiting a new vehicle. Therefore, $\mathcal{L}_2^{(\dot{\mathfrak{X}})}\hspace{-1.4mm}\left\langle 2n{-}1\right\rangle(\ell_k)$ (the residual sojourn time of $C_{\ell_k}$) must be subtracted from other random variables (i.e., $\mathcal{L}_2^{(\dot{\mathfrak{X}})}\hspace{-1.4mm}\left\langle 2n{-}1\right\rangle$, $\mathcal{L}_{\widehat{C}}^{(\dot{\mathfrak{X}})}\hspace{-1.4mm}\left\langle {\infty}\right\rangle$, and $\mathcal{H}_2^{(\varphi;\mathfrak{R})}\hspace{-1.4mm}\left\langle {0} \hspace{-0.3mm};\hspace{-0.3mm} 1 \right\rangle$). To this end, using Proposition \ref{proposition3}, we have
\begin{equation}
    \mathcal{L}_3^{(\dot{\mathfrak{X}})}\hspace{-1.4mm}\left\langle 2n{-}2\right\rangle = \mathcal{L}_2^{(\dot{\mathfrak{X}})}\hspace{-1.4mm}\left\langle 2n{-}1\right\rangle \circleddash \mathcal{L}_2^{(\dot{\mathfrak{X}})}\hspace{-1.4mm}\left\langle 2n{-}1\right\rangle(\ell_k),
\end{equation}
and
\begin{equation}
    \mathcal{L}_{\widehat{C}}^{(\dot{\mathfrak{X}})}\hspace{-1.4mm}\left\langle {\infty}\right\rangle = \mathcal{L}_{\widehat{C}}^{(\dot{\mathfrak{X}})}\hspace{-1.4mm}\left\langle {\infty}\right\rangle \circleddash \mathcal{L}_2^{(\dot{\mathfrak{X}})}\hspace{-1.4mm}\left\langle 2n{-}1\right\rangle(\ell_k),
\end{equation}
For simplicity, from here onward, we omit writing the variation of residual sojourn time of vehicles in $\widehat{C}$ (i.e, $\mathcal{L}_{\widehat{C}}^{(\dot{\mathfrak{X}})}\hspace{-1.4mm}\left\langle {\infty}\right\rangle$) since they have the same behavior of recruited vehicles. Further, Proposition~\ref{proposition3} yields
\begin{equation}
    \mathcal{H}_3^{(\varphi;\mathfrak{R})}\hspace{-1.4mm}\left\langle 1 \hspace{-0.3mm};\hspace{-0.3mm} 1 \right\rangle=\left(\mathcal{H}_2^{(\varphi;\mathfrak{R})}\hspace{-1.4mm}\left\langle {0} \hspace{-0.3mm};\hspace{-0.3mm} 1 \right\rangle \circleddash \mathcal{L}_2^{(\dot{\mathfrak{X}})}\hspace{-1.4mm}\left\langle 2n{-}1\right\rangle(\ell_k)\right)\odot U'_{\ell'_k}.
\end{equation}
where, $U'_{\ell'_k}$ is the recruitment time of recruiter $C_{\ell'_k}$. Further, using Proposition \ref{proposition3}, we have
\begin{equation}
   \mathcal{L}_3^{(\dot{\mathfrak{X}})}\hspace{-1.4mm}\left\langle 2n{-}2\right\rangle \dot{\preceq} \mathcal{H}_3^{(\varphi;\mathfrak{R})}\hspace{-1.4mm}\left\langle 1 \hspace{-0.3mm};\hspace{-0.3mm} 1 \right\rangle.
\end{equation}
Consequently, we have the following result:
\begin{equation}\label{eqS20}
\begin{aligned}
    X'{=}S_2\equiv S_{2,0}=\Big\{\mathcal{L}_3^{(\dot{\mathfrak{X}})}\hspace{-1.4mm}\left\langle {2n{-}2}\right\rangle,\mathcal{H}_3^{(\varphi;\mathfrak{R})}\hspace{-1.4mm}\left\langle 1 \hspace{-0.3mm};\hspace{-0.3mm} 1 \right\rangle\Big\}.
\end{aligned}
\end{equation}
Lets assume that current state of $\beta$-SMP is $X{=}S_2\equiv S_{2,0}=\Big\{\mathcal{L}_3^{(\dot{\mathfrak{X}})}\hspace{-1.4mm}\left\langle {2n{-}2}\right\rangle,\mathcal{H}_3^{(\varphi;\mathfrak{R})}\hspace{-1.4mm}\left\langle 1 \hspace{-0.3mm};\hspace{-0.3mm} 1 \right\rangle\Big\}$. Two possible set of events can occur in this state:
\begin{enumerate}
    \item $e_X{=}e^{\mathsf{D}}(C_{\ell_k})$, where $C_{\ell_k} {\in} \ddot{\mathcal{C}}_{2n-4}$.
    \item $e_X{=}e^{\mathsf{R}}(C_{\ell'_k})$, where $C_{\ell'_k} {\in} \dot{\mathcal{C}}_{2}$.
\end{enumerate}
We omit $e_X{=}e^{\mathsf{D}}(C_{\ell_k})$ since it is not a part of the initial case of our induction-based proof. We next investigate the occurrence of events $e_X{=}e^{\mathsf{R}}(C_{\ell'_1})$ and $e_X{=}e^{\mathsf{R}}(C_{\ell'_2})$.\par
\textbf{Proof of case \ref{I3}.} Let $X{=}S_2\equiv S_{2,0}=\Big\{\mathcal{L}_3^{(\dot{\mathfrak{X}})}\hspace{-1.4mm}\left\langle {2n{-}2}\right\rangle,\mathcal{H}_3^{(\varphi;\mathfrak{R})}\hspace{-1.4mm}\left\langle 1 \hspace{-0.3mm};\hspace{-0.3mm} 1 \right\rangle\Big\}$. If event $e_X=e^{\mathsf{R}}(C_{\ell'_s})$ occurs, where $s\in\{1,2\}$, vehicle $C_{\ell'_s}$ successfully recruits a new vehicle $C_{\ell_y}\in \widehat{C}$. Hence, $\beta$-SMP will transit to state $X'=S_1{=}\dot{\mathcal{C}}_1 \cup \ddot{\mathcal{C}}_{2n-2}$. Since $\mathcal{H}_3^{(\varphi;\mathfrak{R})}\hspace{-1.4mm}\left\langle 1 \hspace{-0.3mm};\hspace{-0.3mm} 1 \right\rangle(s)$ units of time have passed, $\mathcal{H}_3^{(\varphi;\mathfrak{R})}\hspace{-1.4mm}\left\langle 1 \hspace{-0.3mm};\hspace{-0.3mm} 1 \right\rangle(s)$ (i.e., the recruitment time of {\small$C_{\ell'_s}$}) should be subtracted from the residual times of other random variables. Specifically, if $X{=}S_2$ and $e_X{=}e^{\mathsf{R}}(C_{\ell'_1})$, where $O(C_{\ell'_1}){=}1$ (see \eqref{eqOrder}), using Proposition~\ref{proposition3}, we have
\begin{equation}
 \hspace{-4mm}
    \mathcal{L}_4^{(\ddot{\mathfrak{X}})}\hspace{-1.4mm}\left\langle {2n{-}1}\right\rangle{=}\left(\hspace{-0.5mm}\mathcal{L}_3^{(\dot{\mathfrak{X}})}\hspace{-1.4mm}\left\langle {2n{-}2}\right\rangle {\circleddash} \mathcal{H}_3^{(\varphi;\mathfrak{R})}\hspace{-1.4mm}\left\langle 1 \hspace{-0.3mm};\hspace{-0.3mm} 1 \right\rangle(1)\hspace{-0.5mm}\right){\odot} Z'_{\ell_y},
 \hspace{-4mm}
\end{equation}
where $Z'_{\ell_y}$ refers to the residual sojourn time of the recruited vehicle $C_{\ell_y}$. Moreover, Proposition \ref{proposition3} yields
\begin{equation}
 \hspace{-4mm}
    \mathcal{H}_4^{(\gamma;\psi)}\hspace{-1.4mm}\left\langle 0 \hspace{-0.3mm};\hspace{-0.3mm} 1 \right\rangle{=}\mathcal{H}_3^{(\varphi;\mathfrak{R})}\hspace{-1.4mm}\left\langle 1 \hspace{-0.3mm};\hspace{-0.3mm} 1 \right\rangle \circleddash \mathcal{H}_3^{(\varphi;\mathfrak{R})}\hspace{-1.4mm}\left\langle 1 \hspace{-0.3mm};\hspace{-0.3mm} 1 \right\rangle(1),
\hspace{-4mm}
\end{equation}
Accordingly, if $X{=}S_2$ and $e_X{=}e^{\mathsf{R}}(C_{\ell'_1})$, where $O(C_{\ell'_1}){=}1$ (see \eqref{eqOrder}), we have
\begin{equation}\label{S10_158}
    X'{=}S_1\equiv S_{1,1}=\left\{\mathcal{L}_4^{(\ddot{\mathfrak{X}})}\hspace{-1.4mm}\left\langle {2n-1}\right\rangle ,\mathcal{H}_4^{(\gamma;\psi)}\hspace{-1.4mm}\left\langle 0 \hspace{-0.1mm};\hspace{-0.3mm} 1 \right\rangle\right\},
\end{equation}
where, using Proposition \ref{proposition3}, we have
\begin{equation}
    \mathcal{L}_4^{(\ddot{\mathfrak{X}})}\hspace{-1.4mm}\left\langle {2n-1}\right\rangle \dot{\preceq}   \mathcal{H}_4^{(\gamma;\psi)}\hspace{-1.4mm}\left\langle 0 \hspace{-0.1mm};\hspace{-0.3mm} 1 \right\rangle.
\end{equation}
Likewise, if $X{=}S_2$ and $e_X{=}e^{\mathsf{R}}(C_{\ell'_2})$, where $O(C_{\ell'_2}){=}2$ (see \eqref{eqOrder}), we have
\begin{equation}
 \hspace{-4mm}
    \mathcal{L}_5^{(\ddot{\mathfrak{X}})}\hspace{-1.4mm}\left\langle {2n{-}1}\right\rangle{=}\left(\hspace{-0.5mm}\mathcal{L}_3^{(\dot{\mathfrak{X}})}\hspace{-1.4mm}\left\langle {2n{-}2}\right\rangle {\circleddash} \mathcal{H}_3^{(\varphi;\mathfrak{R})}\hspace{-1.4mm}\left\langle 1 \hspace{-0.3mm};\hspace{-0.3mm} 1 \right\rangle(2)\hspace{-0.5mm}\right){\odot} Z'_{\ell_y},
 \hspace{-4mm}
\end{equation}
where $Z'_{\ell_y}$ refers to the residual sojourn time of the recruited vehicle $C_{\ell_y}$. Moreover, Proposition \ref{proposition3} yields
\begin{equation}
    \mathcal{H}_5^{(\gamma;\psi)}\hspace{-1.4mm}\left\langle 1 \hspace{-0.3mm};\hspace{-0.3mm} 0 \right\rangle=\mathcal{H}_3^{(\varphi;\mathfrak{R})}\hspace{-1.4mm}\left\langle 1 \hspace{-0.3mm};\hspace{-0.3mm} 1 \right\rangle \circleddash \mathcal{H}_3^{(\varphi;\mathfrak{R})}\hspace{-1.4mm}\left\langle 1 \hspace{-0.3mm};\hspace{-0.3mm} 1 \right\rangle(2).
\end{equation}
Accordingly, we get
\begin{equation}\label{S10_159}
    X'{=}S_1\equiv S_{1,2}=\left\{\mathcal{L}_5^{(\ddot{\mathfrak{X}})}\hspace{-1.4mm}\left\langle {2n-1}\right\rangle ,\mathcal{H}_5^{(\gamma;\psi)}\hspace{-1.4mm}\left\langle 1 \hspace{-0.1mm};\hspace{-0.3mm} 0 \right\rangle\right\}.
\end{equation}
where, using Proposition \ref{proposition3}, we have
\begin{equation}
    \mathcal{L}_5^{(\ddot{\mathfrak{X}})}\hspace{-1.4mm}\left\langle {2n-1}\right\rangle \dot{\preceq} \mathcal{H}_5^{(\gamma;\psi)}\hspace{-1.4mm}\left\langle 1 \hspace{-0.1mm};\hspace{-0.3mm} 0 \right\rangle.
\end{equation}
Next, we aim to prove case \ref{I4}. Assume the current state of $\beta$-SMP is $X=S_1$. According to \eqref{S10_148}, \eqref{S10_158}, and \eqref{S10_159}, $S_1$ is equivalent to three different H-states $S_{1,0}$, $S_{1,1}$, and $S_{1,2}$. Therefore, if event $e_X{=}e^{\mathsf{D}}(C_{\ell_k})$ occurs,  $\beta$-SMP will transit to state $X'{=}S_2{=}\dot{\mathcal{C}}_2 \cup \ddot{\mathcal{C}}_{2n-4}\equiv S_{2,0}$, and {\small$C_{\ell'_k}$} starts recruiting a new vehicle.

So far, we have proved that (i) if $X=S_1\equiv S_{1,0}$, $\beta$-SMP will transit to $X'=S_2\equiv S_{2,0}$ and (ii) if $X=S_2\equiv S_{2,0}$, $\beta$-SMP will transit to $X'=S_1\equiv S_{1,1}$ or $X'=S_1\equiv S_{1,2}$. In the following, utilizing the same technique conducted above, we show that if $X=S_1\equiv S_{i,j}$, where $i=1$ and $1\le j\le 2$, and $e^{\mathsf{D}}(C_{\ell_k})$ occurs, $\beta$-SMP will still transit to $X'=S_2\equiv S_{2,0}$.\par
\textbf{Proof of case \ref{I4}.} Let the current state of $\beta$-SMP be $X=S_1\equiv S_{1,1}=\left\{\mathcal{L}_4^{(\ddot{\mathfrak{X}})}\hspace{-1.4mm}\left\langle {2n-1}\right\rangle ,\mathcal{H}_4^{(\gamma;\psi)}\hspace{-1.4mm}\left\langle 0 \hspace{-0.1mm};\hspace{-0.3mm} 1 \right\rangle\right\}$. If event $e_X{=}e^{\mathsf{D}}(C_{\ell_k})$ occurs, $\mathcal{L}_4^{(\ddot{\mathfrak{X}})}\hspace{-1.4mm}\left\langle {2n-1}\right\rangle(\ell_k)$ (i.e., residual sojourn time of vehicle $C_{\ell_k}$) should be subtracted from $\mathcal{H}_4^{(\gamma;\psi)}\hspace{-1.4mm}\left\langle 0 \hspace{-0.1mm};\hspace{-0.3mm} 1 \right\rangle$ and $\mathcal{L}_4^{(\ddot{\mathfrak{X}})}\hspace{-1.4mm}\left\langle 2n{-}1\right\rangle$. Using Proposition~\ref{proposition4}, we have
\begin{equation}
\hspace{-3mm}
    \mathcal{H}_6^{(\varphi;\mathfrak{R})}\hspace{-1.4mm}\left\langle 1 \hspace{-0.3mm};\hspace{-0.3mm} 1 \right\rangle \hspace{-1mm}= \left(\mathcal{H}_4^{(\gamma;\psi)}\hspace{-1.4mm}\left\langle 0 \hspace{-0.1mm};\hspace{-0.3mm} 1 \right\rangle \circleddash \mathcal{L}_4^{(\ddot{\mathfrak{X}})}\hspace{-1.4mm}\left\langle {2n-1}\right\rangle(\ell_k)\right)\odot U'_{\ell'_k}.
\end{equation}
where $U'_{\ell'_k}$ is the recruitment time of recruiter $C_{\ell'_k}$ since $C_{\ell'_k}$ starts its recruitment operation. Likewise, using Proposition \ref{proposition4}, we get
\begin{equation}
    \mathcal{L}_6^{(\dot{\mathfrak{X}})}\hspace{-1.4mm}\left\langle 2n{-}2\right\rangle = \mathcal{L}_4^{(\ddot{\mathfrak{X}})}\hspace{-1.4mm}\left\langle {2n-1}\right\rangle \circleddash \mathcal{L}_4^{(\ddot{\mathfrak{X}})}\hspace{-1.4mm}\left\langle {2n-1}\right\rangle(\ell_k).
\end{equation}
Consequently, we have the following result
\begin{equation}\label{eqS20}
\begin{aligned}
    X'{=}S_2\equiv S_{2,0}=\Big\{\mathcal{L}_6^{(\dot{\mathfrak{X}})}\hspace{-1.4mm}\left\langle {2n{-}2}\right\rangle,\mathcal{H}_6^{(\varphi;\mathfrak{R})}\hspace{-1.4mm}\left\langle 1 \hspace{-0.3mm};\hspace{-0.3mm} 1 \right\rangle\Big\}.
\end{aligned}
\end{equation}
where, using Proposition \ref{proposition4}, we have
\begin{equation}
    \mathcal{L}_6^{(\dot{\mathfrak{X}})}\hspace{-1.4mm}\left\langle {2n{-}2}\right\rangle \dot{\preceq} \mathcal{H}_6^{(\varphi;\mathfrak{R})}\hspace{-1.4mm}\left\langle 1 \hspace{-0.3mm};\hspace{-0.3mm} 1 \right\rangle.
\end{equation}

Likewise, let the current state of $\beta$-SMP be $X=S_1\equiv S_{1,2}=\left\{\mathcal{L}_5^{(\ddot{\mathfrak{X}})}\hspace{-1.4mm}\left\langle {2n-1}\right\rangle ,\mathcal{H}_5^{(\gamma;\psi)}\hspace{-1.4mm}\left\langle 1 \hspace{-0.1mm};\hspace{-0.3mm} 0 \right\rangle\right\}$. If event $e_X{=}e^{\mathsf{D}}(C_{\ell_k})$ occurs, $\mathcal{L}_5^{(\ddot{\mathfrak{X}})}\hspace{-1.4mm}\left\langle {2n-1}\right\rangle(\ell_k)$ (i.e., residual sojourn time of vehicle $C_{\ell_k}$) should be subtracted from $\mathcal{H}_5^{(\gamma;\psi)}\hspace{-1.4mm}\left\langle 1 \hspace{-0.1mm};\hspace{-0.3mm} 0 \right\rangle$ and $\mathcal{L}_5^{(\ddot{\mathfrak{X}})}\hspace{-1.4mm}\left\langle 2n{-}1\right\rangle$. Using Proposition~\ref{proposition4}, we have
\begin{equation}
\hspace{-3mm}
    \mathcal{H}_7^{(\varphi;\mathfrak{R})}\hspace{-1.4mm}\left\langle 1 \hspace{-0.3mm};\hspace{-0.3mm} 1 \right\rangle \hspace{-1mm}= \left(\mathcal{H}_5^{(\gamma;\psi)}\hspace{-1.4mm}\left\langle 1 \hspace{-0.1mm};\hspace{-0.3mm} 0 \right\rangle \circleddash \mathcal{L}_5^{(\ddot{\mathfrak{X}})}\hspace{-1.4mm}\left\langle {2n-1}\right\rangle(\ell_k)\right)\odot U'_{\ell'_k}.
\end{equation}
where $U'_{\ell'_k}$ is the recruitment time of recruiter $C_{\ell'_k}$ since $C_{\ell'_k}$ starts its recruitment operation. Similarly, using Proposition \ref{proposition4}, we get
\begin{equation}
    \mathcal{L}_7^{(\dot{\mathfrak{X}})}\hspace{-1.4mm}\left\langle 2n{-}2\right\rangle = \mathcal{L}_5^{(\ddot{\mathfrak{X}})}\hspace{-1.4mm}\left\langle {2n-1}\right\rangle \circleddash \mathcal{L}_5^{(\ddot{\mathfrak{X}})}\hspace{-1.4mm}\left\langle {2n-1}\right\rangle(\ell_k).
\end{equation}
Consequently, we have the following result
\begin{equation}\label{eqS20}
\begin{aligned}
    X'{=}S_2\equiv S_{2,0}=\Big\{\mathcal{L}_7^{(\dot{\mathfrak{X}})}\hspace{-1.4mm}\left\langle {2n{-}2}\right\rangle,\mathcal{H}_7^{(\varphi;\mathfrak{R})}\hspace{-1.4mm}\left\langle 1 \hspace{-0.3mm};\hspace{-0.3mm} 1 \right\rangle\Big\}.
\end{aligned}
\end{equation}
where, using Proposition \ref{proposition4}, we have
\begin{equation}
    \mathcal{L}_7^{(\dot{\mathfrak{X}})}\hspace{-1.4mm}\left\langle {2n{-}2}\right\rangle \dot{\preceq} \mathcal{H}_7^{(\varphi;\mathfrak{R})}\hspace{-1.4mm}\left\langle 1 \hspace{-0.3mm};\hspace{-0.3mm} 1 \right\rangle.
\end{equation}

In the following we show that if $\beta$-SMP is in state $X\in \{S_1\equiv S_{1,0}, S_1\equiv S_{1,1}, S_1\equiv S_{1,2}\}$ and event $e_X=e^{\mathsf{R}}(C_{\ell'_1})$ occurs, $\beta$-SMP will transit to state $X'=S_0\equiv S_{0,1}$.\par

\textbf{Proof of case \ref{I5}.}
From \eqref{S10_148}, \eqref{S10_158}, and \eqref{S10_159}, we know that $S_1$ is equivalent to three different H-states $S_{1,0}$, $S_{1,1}$, and $S_{1,2}$. First consider the case that $X{=}S_1 \equiv S_{1,0}$. If event $e_X=e^{\mathsf{R}}(C_{\ell'_1})$ occurs, $\beta$-SMP will transit to state $X'=S_0{=}\dot{\mathcal{C}}_0 \cup \ddot{\mathcal{C}}_{2n}$ and vehicle $C_{\ell'_1}$ successfully recruits a new vehicle $C_{\ell_y} \in \widehat{C}$. Because $\mathcal{H}_2^{(\varphi;\mathfrak{R})}\hspace{-1.4mm}\left\langle {0} \hspace{-0.3mm};\hspace{-0.3mm} 1 \right\rangle(1)$ units of time have passed, $\mathcal{H}_2^{(\varphi;\mathfrak{R})}\hspace{-1.4mm}\left\langle {0} \hspace{-0.3mm};\hspace{-0.3mm} 1 \right\rangle(1)$ should be subtracted from the residual times of other random variables. Therefore, using Proposition~\ref{proposition3}, we have
\begin{equation}
    \mathcal{L}_8^{(\ddot{\mathfrak{X}})}\hspace{-1.4mm}\left\langle {2n}\right\rangle = \left(\mathcal{L}_2^{(\dot{\mathfrak{X}})}\hspace{-1.4mm}\left\langle {2n{-}1}\right\rangle \circleddash \mathcal{H}_2^{(\varphi;\mathfrak{R})}\hspace{-1.4mm}\left\langle {0} \hspace{-0.3mm};\hspace{-0.3mm} 1 \right\rangle(1)\right)\odot Z'_{\ell_y},
\end{equation}
where $Z'_{\ell_y}$ refers to the residual sojourn time of the recruited vehicle $C_{\ell_y}$. Further, $\mathcal{H}_2^{(\varphi;\mathfrak{R})}\hspace{-1.4mm}\left\langle {0} \hspace{-0.3mm};\hspace{-0.3mm} 1 \right\rangle \circleddash \mathcal{H}_2^{(\varphi;\mathfrak{R})}\hspace{-1.4mm}\left\langle {0} \hspace{-0.3mm};\hspace{-0.3mm} 1 \right\rangle(1)$ leads to an empty list. Accordingly, if $X{=}S_1 \equiv S_{1,0}$ and $e_X{=}e^{\mathsf{R}}(C_{\ell'_k})$, where $C_{\ell'_k} {\in} \dot{\mathcal{C}}_{1}$, we have
\begin{equation}\label{eqS01_165}
    X'=S_0\equiv S_{0,1}{=}\left\{\mathcal{L}_8^{(\ddot{\mathfrak{X}})}\hspace{-1.4mm}\left\langle 2n\right\rangle,[\,]\right\}.
\end{equation}
We next show that if $X=S_1\equiv S_{1,1}$ and event $e_X=e^{\mathsf{R}}(C_{\ell'_1})$ occurs, $\beta$-SMP will transit to state $X'=S_0\equiv S_{0,1}$. If event $e_X=e^{\mathsf{R}}(C_{\ell'_1})$ occurs, vehicle $C_{\ell'_1}$ successfully recruits a new vehicle $C_{\ell_y}\in \widehat{C}$ and thus $\beta$-SMP will transit to state $X'=S_0{=}\dot{\mathcal{C}}_0 \cup \ddot{\mathcal{C}}_{2n}$. Because $\mathcal{H}_4^{(\gamma;\psi)}\hspace{-1.4mm}\left\langle 0 \hspace{-0.1mm};\hspace{-0.3mm} 1 \right\rangle(1)$ units of time have passed, $\mathcal{H}_4^{(\gamma;\psi)}\hspace{-1.4mm}\left\langle 0 \hspace{-0.1mm};\hspace{-0.3mm} 1 \right\rangle(1)$ should be subtracted from the residual times of other random variables. Therefore, using Proposition \ref{proposition4}, we have
\begin{equation}
\begin{aligned}
    \mathcal{L}_9^{(\ddot{\mathfrak{X}})}\hspace{-1.4mm}\left\langle {2n}\right\rangle = &\Big(\mathcal{L}_4^{(\ddot{\mathfrak{X}})}\hspace{-1.4mm}\left\langle {2n{-}1}\right\rangle \circleddash \mathcal{H}_4^{(\gamma;\psi)}\hspace{-1.4mm}\left\langle 0 \hspace{-0.1mm};\hspace{-0.3mm} 1 \right\rangle(1)\Big)\odot Z'_{\ell_y},
\end{aligned}
\end{equation}
where $Z'_{\ell_y}$ refers to the residual sojourn time of the recruited vehicle $C_{\ell_y}$. Further, $\mathcal{H}_4^{(\gamma;\psi)}\hspace{-1.4mm}\left\langle 0 \hspace{-0.1mm};\hspace{-0.3mm} 1 \right\rangle \circleddash \mathcal{H}_4^{(\gamma;\psi)}\hspace{-1.4mm}\left\langle 0 \hspace{-0.1mm};\hspace{-0.3mm} 1 \right\rangle(1)$ leads to an empty list. Accordingly, if $X{=}S_1 \equiv S_{1,1}$, and $e_X{=}e^{\mathsf{R}}(C_{\ell'_k})$, where $C_{\ell'_k} {\in} \dot{\mathcal{C}}_{1}$, we have
\begin{equation}\label{eqS01_167}
    X'=S_0\equiv S_{0,1}{=}\left\{\mathcal{L}_9^{(\ddot{\mathfrak{X}})}\hspace{-1.4mm}\left\langle 2n\right\rangle,[\,]\right\}.
\end{equation}
Likewise, in the following, we show that if $X=S_1\equiv S_{1,2}$ and event $e_X=e^{\mathsf{R}}(C_{\ell'_1})$ occurs, $\beta$-SMP will transit to state $X'=S_0\equiv S_{0,1}$. As mentioned above, if event $e_X=e^{\mathsf{R}}(C_{\ell'_1})$ occurs, vehicle $C_{\ell'_1}$ successfully recruits a new vehicle $C_{\ell_y}\in \widehat{C}$ and thus $\beta$-SMP will transit to state $X'=S_0{=}\dot{\mathcal{C}}_0 \cup \ddot{\mathcal{C}}_{2n}$. Because $\mathcal{H}_5^{(\gamma;\psi)}\hspace{-1.4mm}\left\langle 1 \hspace{-0.1mm};\hspace{-0.3mm} 0 \right\rangle(1)$ units of time have passed, $\mathcal{H}_5^{(\gamma;\psi)}\hspace{-1.4mm}\left\langle 1 \hspace{-0.1mm};\hspace{-0.3mm} 0 \right\rangle(1)$ should be subtracted from the residual times of other random variables. Therefore, using Proposition \ref{proposition4}, we have
\begin{equation}
\begin{aligned}
    \mathcal{L}_{10}^{(\ddot{\mathfrak{X}})}\hspace{-1.4mm}\left\langle {2n}\right\rangle = &\Big(\mathcal{L}_5^{(\ddot{\mathfrak{X}})}\hspace{-1.4mm}\left\langle {2n{-}1}\right\rangle \circleddash \mathcal{H}_5^{(\gamma;\psi)}\hspace{-1.4mm}\left\langle 1 \hspace{-0.1mm};\hspace{-0.3mm} 0 \right\rangle(1)\Big)\odot Z'_{\ell_y},
\end{aligned}
\end{equation}
where $Z'_{\ell_y}$ refers to the residual sojourn time of the recruited vehicle $C_{\ell_y}$. Further, $\mathcal{H}_5^{(\gamma;\psi)}\hspace{-1.4mm}\left\langle 1 \hspace{-0.1mm};\hspace{-0.3mm} 0 \right\rangle \circleddash \mathcal{H}_5^{(\gamma;\psi)}\hspace{-1.4mm}\left\langle 1 \hspace{-0.1mm};\hspace{-0.3mm} 0 \right\rangle(1)$ leads to an empty list. Accordingly, if $X{=}S_1 \equiv S_{1,2}$, and $e_X{=}e^{\mathsf{R}}(C_{\ell'_k})$, where $C_{\ell'_k} {\in} \dot{\mathcal{C}}_{1}$, we have
\begin{equation}\label{eqS01_167}
    X'=S_0\equiv S_{0,1}{=}\left\{\mathcal{L}_{10}^{(\ddot{\mathfrak{X}})}\hspace{-1.4mm}\left\langle 2n\right\rangle,[\,]\right\}.
\end{equation}
We next show that if $\beta$-SMP is in state $X=S_0 \equiv S_{0,1}$ and event $e_X=e^{\mathsf{D}}(C_{\ell_k})$ occurs, where $C_{\ell_k} {\in} \ddot{\mathcal{C}}_{2n}$, $\beta$-SMP will transit to state $X'=S_1 \equiv S_{1,0}$.\par
\textbf{Proof of case \ref{I6}.} Let the current state of $\beta$-SMP be $X=S_0 \equiv S_{0,1}{=}\left\{\mathcal{L}_{10}^{(\ddot{\mathfrak{X}})}\hspace{-1.4mm}\left\langle 2n\right\rangle,[\,]\right\}$. If event $e_X=e^{\mathsf{D}}(C_{\ell_k})$ occurs, $\beta$-SMP will transit to state $X'{=}S_1{=}\dot{\mathcal{C}}_1 \cup \ddot{\mathcal{C}}_{2n-2}$, and {\small$C_{\ell'_k}$} starts recruiting a new vehicle. Therefore, $\mathcal{L}_{10}^{(\dot{\mathfrak{X}})}\hspace{-1.4mm}\left\langle 2n\right\rangle(\ell_k)$ (the residual sojourn time of $C_{\ell_k}$) must be subtracted from residual sojourn time of other vehicles. To this end, using Proposition \ref{proposition3}, we have
\begin{equation}
    \mathcal{L}_{11}^{(\dot{\mathfrak{X}})}\hspace{-1.4mm}\left\langle 2n{-}1\right\rangle = \mathcal{L}_{10}^{(\dot{\mathfrak{X}})}\hspace{-1.4mm}\left\langle 2n\right\rangle \circleddash \mathcal{L}_{10}^{(\dot{\mathfrak{X}})}\hspace{-1.4mm}\left\langle 2n\right\rangle(\ell_k),
\end{equation}
Moreover, let $\hat{\mathcal{H}}^{(\varphi;\mathfrak{R})}\hspace{-1.4mm}\left\langle 0 \hspace{-0.3mm};\hspace{-0.3mm} 0 \right\rangle$ denote an empty EDL. The residual recruitment time of recruiter $C_{\ell'_k}$ (i.e., $U'_{\ell'_k}$) can be added to $\hat{\mathcal{H}}^{(\varphi;\mathfrak{R})}\hspace{-1.4mm}\left\langle 0 \hspace{-0.3mm};\hspace{-0.3mm} 0 \right\rangle$. Mathematically, we have
\begin{equation}
    \mathcal{H}_{11}^{(\varphi;\mathfrak{R})}\hspace{-1.4mm}\left\langle 0 \hspace{-0.3mm};\hspace{-0.3mm} 1 \right\rangle=\hat{\mathcal{H}}^{(\varphi;\mathfrak{R})}\hspace{-1.4mm}\left\langle 0\hspace{-0.3mm};\hspace{-0.3mm} 0 \right\rangle \odot U'_{\ell'_k}.
\end{equation}
As a result, we have
\begin{equation}\label{eqS20}
\begin{aligned}
    X'{=}S_1\equiv S_{1,0}=\Big\{\mathcal{L}_{11}^{(\dot{\mathfrak{X}})}\hspace{-1.4mm}\left\langle {2n{-}1}\right\rangle,\mathcal{H}_{11}^{(\varphi;\mathfrak{R})}\hspace{-1.4mm}\left\langle 0 \hspace{-0.3mm};\hspace{-0.3mm} 1 \right\rangle\Big\}.
\end{aligned}
\end{equation}
where, using Proposition \ref{proposition3}, we have
\begin{equation}
    \mathcal{L}_{11}^{(\dot{\mathfrak{X}})}\hspace{-1.4mm}\left\langle {2n{-}1}\right\rangle \dot{\preceq} \mathcal{H}_{11}^{(\varphi;\mathfrak{R})}\hspace{-1.4mm}\left\langle 0 \hspace{-0.3mm};\hspace{-0.3mm} 1 \right\rangle.
\end{equation}

Consequently, considering \eqref{eqS01_144}, \eqref{eqS01_165}, and \eqref{eqS01_167}, we have
\begin{equation}
 X'{=}S_0\equiv
    \begin{cases}
     S_{0,0}& \text{initial state}.\\
     S_{0,1} & \text{if  } X{=}S_{1},e_X{=}e^{\mathsf{R}}(C_{\ell}),~C_{\ell} {\in} \dot{\mathcal{C}}_{1}.
    \end{cases}
\end{equation}
Further, from \eqref{S10_148}, \eqref{S10_158}, and \eqref{S10_159}, we get
\begin{equation}
   X'{=}S_1\equiv
    \begin{cases}
        S_{1,0} & \text{if  } X=S_{0},e_x{=}e^{\mathsf{D}}(C_{\ell}),~\forall C_{\ell} {\in} \ddot{\mathcal{C}}_{2n}. \\
        S_{1,1} & \text{if  } X=S_{2},e_X{=}e^{\mathsf{R}}(C_{\ell}),~O(C_{\ell})=1, C_{\ell} {\in} \dot{\mathcal{C}}_{2}\\
        S_{1,2} &\text{if  } X=S_{2},e_X{=}e^{\mathsf{R}}(C_{\ell}),~ O(C_{\ell})=2, C_{\ell} {\in} \dot{\mathcal{C}}_{2}
    \end{cases}
\end{equation}
which concludes the proof of the initial case of our induction-based proof. We next aim to prove the induction step.
\subsection{Induction Step}
In this section, we prove if Decomposition Theorem holds for state $S_{n-1}$, it holds for state $S_{n}$. Decomposition Theorem states that
\begin{equation}\label{sn_1}
   X'{=}S_{n{-}1}\equiv
    \begin{cases}
        S_{n{-}1,0} & \text{if  } X=S_{n{-}2},e_X{=}e^{\mathsf{D}}(C_{\ell}),~ \forall C_{\ell} {\in} \ddot{\mathcal{C}}_{4}. \\
        S_{n{-}1,1} & \text{if  } X=S_{n},e_X{=}e^{\mathsf{R}}(C_{\ell}),~ O(C_{\ell})=1, \forall C_{\ell} {\in} \dot{\mathcal{C}}_{n}\\
        \dots &\dots\\
        S_{n{-}1,j} &\text{if  } X=S_{n},e_X{=}e^{\mathsf{R}}(C_{\ell}),~ O(C_{\ell})=j, \forall C_{\ell} {\in} \dot{\mathcal{C}}_{n}\\
        \dots & \dots\\
        S_{n{-}1,n} &\text{if  } X=S_{n},e_X{=}e^{\mathsf{R}}(C_{\ell}),~ O(C_{\ell})=n,\forall C_{\ell} {\in} \dot{\mathcal{C}}_{n}
    \end{cases}
\end{equation}
and
  \begin{equation}
 X'{=}S_n\equiv
    \begin{cases}
    S_{n,0} & \text{if  } X=S_{n-1},e_X{=}e^{\mathsf{D}}(C_{\ell}),~ \forall C_{\ell} {\in} \ddot{\mathcal{C}}_{2} \\
    \end{cases}
\end{equation}
To prove the induction step, we must prove the following two cases:
\begin{enumerate}[label={(I-\arabic*)}]
    \item \label{I21} $\beta$-SMP will transit to state $X'\in S_n\equiv S_{n,0}$, when it is in state $X\in \{S_{n-1}\equiv S_{n{-}1,0}, S_{n-1}\equiv S_{n{-}1,1},\dots, S_{n-1}\equiv S_{n{-}1,n}\}$ and event $e_X=e^{\mathsf{D}}(C_{\ell_k})$ occurs, where $C_{\ell_k} {\in} \ddot{\mathcal{C}}_{2}$.
    \item \label{I22} $\beta$-SMP will transit to state $X'=S_{n-1}\equiv S_{n{-}1,k}$, when it is in state $X\in S_n\equiv S_{n,0}$, event $e_X=e^{\mathsf{R}}(C_{\ell'_j})$ occurs, and $ O(C_{\ell'_k})=j$, where $C_{\ell'_j} {\in} \ddot{\mathcal{C}}_{n}$.
\end{enumerate}
In the following, we aim to prove above two cases.\par
\textbf{Proof of case \ref{I21}.} Assume that the current state of $\beta$-SMP is $X=S_{n-1}$. According to \eqref{sn_1}, $S_{n-1}$ is equivalent to $n+1$ different H-states $\{S_{n{-}1,0}, S_{n{-}1,1},\dots, S_{n{-}1,n}\}$. If event $e_X{=}e^{\mathsf{D}}(C_{\ell_k})$ occurs, where $C_{\ell_k} {\in} \ddot{\mathcal{C}}_{2}$, $\beta$-SMP will transit to state $X'{=}S_n{=}\dot{\mathcal{C}}_n \cup \ddot{\mathcal{C}}_{0}$, and {\small$C_{\ell'_k}$} starts recruiting a new vehicle. If $S_{n-1} \equiv S_{n{-}1,0}=\{\mathcal{L}_1^{(\dot{\mathfrak{X}})}\hspace{-1.4mm}\left\langle n{+}1\right\rangle,\mathcal{H}_1^{(\varphi;\mathfrak{R})}\hspace{-1.4mm}\left\langle n{-}2 \hspace{-0.3mm};\hspace{-0.3mm} 1 \right\rangle\}$, $\mathcal{L}_1^{(\dot{\mathfrak{X}})}\hspace{-1.4mm}\left\langle n{+}1\right\rangle(\ell_k)$ should be subtracted from $\mathcal{H}_1^{(\varphi;\mathfrak{R})}\hspace{-1.4mm}\left\langle n{-}2 \hspace{-0.3mm};\hspace{-0.3mm} 1 \right\rangle$ and $ \mathcal{L}_1^{(\dot{\mathfrak{X}})}\hspace{-1.4mm}\left\langle n{+}1\right\rangle$. Using Proposition~\ref{proposition3}, we have
\begin{equation}
    \mathcal{H}_2^{(\varphi;\mathfrak{R})}\hspace{-1.4mm}\left\langle n{-}1 \hspace{-0.3mm};\hspace{-0.3mm} 0 \right\rangle=\left(\mathcal{H}_1^{(\varphi;\mathfrak{R})}\hspace{-1.4mm}\left\langle n{-}2 \hspace{-0.3mm};\hspace{-0.3mm} 1 \right\rangle \circleddash \mathcal{L}_1^{(\dot{\mathfrak{X}})}\hspace{-1.4mm}\left\langle n{+}1\right\rangle(\ell_k)\right) \odot U'_{\ell'_k},
\end{equation}
where $U'_{\ell'_k}$ is the recruitment time of recruiter $C_{\ell'_k}$. Moreover, using Proposition \ref{proposition3}, we get
\begin{equation}
    \mathcal{L}_2^{(\dot{\mathfrak{X}})}\hspace{-1.4mm}\left\langle n\right\rangle = \mathcal{L}_1^{(\dot{\mathfrak{X}})}\hspace{-1.4mm}\left\langle n{+}1\right\rangle \circleddash \mathcal{L}_1^{(\dot{\mathfrak{X}})}\hspace{-1.4mm}\left\langle n{+}1\right\rangle(\ell_k).
\end{equation}
Consequently, we have the following result:
\begin{equation}\label{eqS20}
\begin{aligned}
    X'{=}S_{n}\equiv S_{n,0}=\Big\{\mathcal{L}_2^{(\dot{\mathfrak{X}})}\hspace{-1.4mm}\left\langle {n}\right\rangle,\mathcal{H}_2^{(\varphi;\mathfrak{R})}\hspace{-1.4mm}\left\langle n{-}1 \hspace{-0.3mm};\hspace{-0.3mm} 1 \right\rangle\Big\}.
\end{aligned}
\end{equation}
where, using Proposition \ref{proposition3}, we have
\begin{equation}
    \mathcal{L}_2^{(\dot{\mathfrak{X}})}\hspace{-1.4mm}\left\langle {n}\right\rangle \dot{\preceq} \mathcal{H}_2^{(\varphi;\mathfrak{R})}\hspace{-1.4mm}\left\langle n{-}1 \hspace{-0.3mm};\hspace{-0.3mm} 1 \right\rangle.
\end{equation}

So far, we have proved that if $X=S_{n-1}\equiv S_{n-1,0}$, $\beta$-SMP will transit to $X'=S_n$. In the following, utilizing the same technique conducted above, we show that if $X=S_{n-1}\equiv S_{i,j}$, where $i=n-1$ and $1\le j\le n$, $\beta$-SMP will still transits to $X'=S_n\equiv S_{n,0}$. Assume that event $e_X{=}e^{\mathsf{D}}(C_{\ell_k})$ occurs, where $C_{\ell_k} {\in} \ddot{\mathcal{C}}_{2}$. $\beta$-SMP transits to state $X'{=}S_n{=}\dot{\mathcal{C}}_n \cup \ddot{\mathcal{C}}_{0}$ and $\mathcal{L}_1^{(\dot{\mathfrak{X}})}\hspace{-1.4mm}\left\langle n{+}1\right\rangle(\ell_k)$ should be subtracted from $\mathcal{H}_1^{(\gamma;\psi)}\hspace{-1.4mm}\left\langle j{-}1 \hspace{-0.3mm};\hspace{-0.3mm} i{-}(j{-}1) \right\rangle$ and $\mathcal{L}_1^{(\ddot{\mathfrak{X}})}\hspace{-1.4mm}\left\langle n{+}1\right\rangle$. Using Proposition~\ref{proposition4}, we have
\begin{equation}
    \mathcal{H}_3^{(\varphi;\mathfrak{R})}\hspace{-1.4mm}\left\langle n{-}1 \hspace{-0.3mm};\hspace{-0.3mm} 1 \right\rangle=\left(\mathcal{H}_1^{(\gamma;\psi)}\hspace{-1.4mm}\left\langle j{-}1 \hspace{-0.3mm};\hspace{-0.3mm} i{-}(j{-}1) \right\rangle \circleddash \mathcal{L}_1^{(\dot{\mathfrak{X}})}\hspace{-1.4mm}\left\langle n{+}1\right\rangle(\ell_k)\right)\odot U'_{\ell'_k}.
\end{equation}
where $U'_{\ell'_k}$ is the residual recruitment time of recruiter $C_{\ell'_k}$. Moreover, using Proposition \ref{proposition4}, we get
\begin{equation}
    \mathcal{L}_3^{(\dot{\mathfrak{X}})}\hspace{-1.4mm}\left\langle n\right\rangle = \mathcal{L}_1^{(\ddot{\mathfrak{X}})}\hspace{-1.4mm}\left\langle n{+}1\right\rangle \circleddash \mathcal{L}_1^{(\ddot{\mathfrak{X}})}\hspace{-1.4mm}\left\langle n{+}1\right\rangle(\ell_k).
\end{equation}
Consequently, we have the following result:
\begin{equation}\label{eqS20}
\begin{aligned}
    X'{=}S_n\equiv S_{n,0}=\Big\{\mathcal{L}_3^{(\dot{\mathfrak{X}})}\hspace{-1.4mm}\left\langle {n}\right\rangle,\mathcal{H}_3^{(\varphi;\mathfrak{R})}\hspace{-1.4mm}\left\langle n{-}1 \hspace{-0.3mm};\hspace{-0.3mm} 1 \right\rangle\Big\}.
\end{aligned}
\end{equation}
where, using Proposition \ref{proposition4}, we have
\begin{equation}
    \mathcal{L}_3^{(\dot{\mathfrak{X}})}\hspace{-1.4mm}\left\langle {n}\right\rangle \dot{\preceq} \mathcal{H}_3^{(\varphi;\mathfrak{R})}\hspace{-1.4mm}\left\langle n{-}1 \hspace{-0.3mm};\hspace{-0.3mm} 1 \right\rangle.
\end{equation}
We next present the proof of \ref{I22}.\par
\textbf{Proof of case \ref{I22}.} Assume $\beta$-SMP is in state $X=S_n\equiv S_{n,0}=\left\{\mathcal{L}_3^{(\dot{\mathfrak{X}})}\hspace{-1.4mm}\left\langle {n}\right\rangle,\mathcal{H}_3^{(\varphi;\mathfrak{R})}\hspace{-1.4mm}\left\langle n{-}1 \hspace{-0.3mm};\hspace{-0.3mm} 1 \right\rangle\right\}$. If event $e_X=e^{\mathsf{R}}(C_{\ell'_j})$ occurs, where $j\in\{1,2,\dots n\}$, vehicle $C_{\ell'_j}$ successfully recruits a new vehicle $C_{\ell_y}$ from the vehicles that were in the VC at the start time of processing of application $\mathcal{A}$. Hence, $\beta$-SMP will transit to state $X'=S_{n-1}{=}\dot{\mathcal{C}}_{n-1} \cup \ddot{\mathcal{C}}_{2}$. Consequently, because $\mathcal{H}_3^{(\varphi;\mathfrak{R})}\hspace{-1.4mm}\left\langle n{-}1 \hspace{-0.3mm};\hspace{-0.3mm} 1 \right\rangle(j)$ units of time have passed, $\mathcal{H}_3^{(\varphi;\mathfrak{R})}\hspace{-1.4mm}\left\langle n{-}1 \hspace{-0.3mm};\hspace{-0.3mm} 1 \right\rangle(j)$ should be subtracted from the residual times of other random variables. Therefore, using Proposition \ref{proposition3}, we have
\begin{equation}
\hspace{-4mm}
    \mathcal{L}_4^{(\ddot{\mathfrak{X}})}\hspace{-1.4mm}\left\langle {n{+}1}\right\rangle = \left(\mathcal{L}_3^{(\dot{\mathfrak{X}})}\hspace{-1.4mm}\left\langle {n}\right\rangle \circleddash \mathcal{H}_3^{(\varphi;\mathfrak{R})}\hspace{-1.4mm}\left\langle n{-}1 \hspace{-0.3mm};\hspace{-0.3mm} 1 \right\rangle(j)\right)\odot Z'_{\ell_y},
\hspace{-4mm}
\end{equation}
where $Z'_{\ell_y}$ refers to the residual sojourn time of the recruited vehicle $C_{\ell_y}$. Moreover, Proposition \ref{proposition3} yields
\begin{equation}
\hspace{-4mm}
    \mathcal{H}_4^{(\gamma;\psi)}\hspace{-1.4mm}\left\langle j{-}1 \hspace{-0.3mm};\hspace{-0.3mm} n{-}j \right\rangle{=}\mathcal{H}_3^{(\varphi;\mathfrak{R})}\hspace{-1.4mm}\left\langle n{-}1 \hspace{-0.3mm};\hspace{-0.3mm} 1 \right\rangle \circleddash \mathcal{H}_3^{(\varphi;\mathfrak{R})}\hspace{-1.4mm}\left\langle n{-}1 \hspace{-0.3mm};\hspace{-0.3mm} 1 \right\rangle(j).
\hspace{-4mm}
\end{equation}
Accordingly, if $X{=}S_n$ and $e_X{=}e^{\mathsf{R}}(C_{\ell'_j})$, where $O(C_{\ell'_j}){=}s$ (see \eqref{eqOrder}), we have
\hspace{-5mm}
\begin{equation}
\hspace{-4mm}
    X'{=}S_{n{-}1}{\equiv} S_{n{-}1,j}{=}\left\{\mathcal{L}_4^{(\ddot{\mathfrak{X}})}\hspace{-1.4mm}\left\langle {n{+}1}\right\rangle, \mathcal{H}_4^{(\gamma;\psi)}\hspace{-1.4mm}\left\langle j{-}1 \hspace{-0.3mm};\hspace{-0.3mm} n{-}j \right\rangle\right\},
\hspace{-4mm}
\end{equation}
where, using Proposition \ref{proposition3}, we have
\begin{equation}
    \mathcal{L}_4^{(\ddot{\mathfrak{X}})}\hspace{-1.4mm}\left\langle {n{+}1}\right\rangle \dot{\preceq} \mathcal{H}_4^{(\gamma;\psi)}\hspace{-1.4mm}\left\langle j{-}1 \hspace{-0.3mm};\hspace{-0.3mm} n{-}j \right\rangle,
\end{equation}
which concludes the proof of Decomposition Theorem (Theorem \ref{decompositionTheorem}).

\newpage
\section{Proof of Lemma \ref{expecteTimeToFailure}}\label{expecteTimeToFailure_proof}
\noindent For states $S_{0,0}$ and $S_{0,1}$, we have
\begin{equation}\label{exptected00}
    \mathbb{E}\left[\widehat{\mathrm{Q}}_{0,0}\right] {=}\mathbb{E}\left[\widehat{\mathrm{W}}_{0,0}\right]{+}\mathbb{E}\left[\widehat{\mathrm{Q}}_{1,0}\right],
\end{equation}
and
\begin{equation}\label{exptected01}
    \mathbb{E}\left[\widehat{\mathrm{Q}}_{0,1}\right] {=}\mathbb{E}\left[\widehat{\mathrm{W}}_{0,1}\right]{+}\mathbb{E}\left[\widehat{\mathrm{Q}}_{1,0}\right].
\end{equation}
\eqref{exptected00} and \eqref{exptected01} mean that the expected value of the time until a failure from H-states $S_{0,0}$ and $S_{0,1}$ are equal to the time that D-SMP is in H-states $S_{0,0}$ or $S_{0,1}$ and transits to H-state $S_{1,0}$ and then transits to failure state $F$. \par
If D-SMP is in H-state $S_{i,j}\in\hat{\mathbf{S}}_i$, where $1\le i \le n$, then $\mathbb{E}\left[\widehat{\mathrm{Q}}_{i,j}\right]$ is the sum of the following duration times: (i) the duration time that D-SMP is in H-state $S_{i,j}$ and then transits to the failure state + (ii) the duration time that D-SMP is in H-state $S_{i,j}$ and transits to H-state $S_{i+1,0}$ and then transits to the failure state + (iii) the duration time that D-SMP is in H-state $S_{i,j}$ and transits to states $S_{i-1,j}$ and then goes to the failure state.\par
Accordingly, for each H-state $S_{i,j}\in\hat{\mathbf{S}}_i$, where $0\le i \le n$, we have
\begin{equation}
\begin{aligned}
\hspace{-2mm}
\mathbb{E}\left[\widehat{\mathrm{Q}}_{i,j}\right] & =\mathbb{E}\left[\widehat{\mathrm{W}}_{i,j}\right] b_{i,j} +\Bigg[\mathbb{E}\left[\widehat{\mathrm{W}}_{i,j}\right] +\mathbb{E}\left[\widehat{\mathrm{Q}}_{i+1,0}\right]\Bigg] q_{i,j} +\boldsymbol{\sum} _{k=1}^{i}\Bigg[\mathbb{E}\left[\widehat{\mathrm{W}}_{i,j}\right] +\mathbb{E}\left[\widehat{\mathrm{Q}}_{i-1,k}\right]\Bigg]p_{i,j}^{k}\\
 & =\mathbb{E}\left[\widehat{\mathrm{W}}_{i,j}\right] b_{i,j} +\mathbb{E}\left[\widehat{\mathrm{W}}_{i,j}\right] q_{i,j} +\mathbb{E}\left[\widehat{\mathrm{Q}}_{i+1,0}\right] q_{i,j} +\mathbb{E}\left[\widehat{\mathrm{W}}_{i,j}\right]\sum_{k=1}^{i}p_{i,j}^{k} +\boldsymbol{\sum} _{k=1}^{i}\mathbb{E}\left[\widehat{\mathrm{Q}}_{i-1,k}\right]p_{i,j}^{k}.
 \hspace{-4mm}
\end{aligned}
\end{equation}
Since $\sum_{k=1}^{i}p_{i,j}^{k}+ b_{i,j} +q_{i,j}=1$, then
\begin{equation}
\begin{aligned}
\mathbb{E}\left[\widehat{\mathrm{Q}}_{i,j}\right] =\mathbb{E}\left[\widehat{\mathrm{W}}_{i,j}\right] &+\mathbb{E}\left[\widehat{\mathrm{Q}}_{i+1,0}\right] q_{i,j}+\boldsymbol{\sum} _{k=1}^{i}\mathbb{E}\left[\widehat{\mathrm{Q}}_{i-1,k}\right]p_{i,j}^{k},
\end{aligned}
\end{equation}
which concludes the proof.

\newpage
\section{Proof of Theorem \ref{MTTFJ2dn}} \label{MTTFJ2dn_proof}
\noindent Based on Lemma \ref{expecteTimeToFailure}, $\mathbb{E}\left[\widehat{\mathrm{Q}}_{i,j}\right]$, for each state $S_{i,j}\in\hat{\mathbf{S}}_i$, is calculated as follows. Regarding $\widehat{\mathrm{Q}}_{0,0}$ and $\widehat{\mathrm{Q}}_{0,0}$ we have
\begin{equation}
\begin{aligned}
\mathbb{E}\left[\widehat{\mathrm{Q}}_{0,0}\right] & =\mathbb{E}\left[\widehat{\mathrm{W}}_{0,0}\right] +\mathbb{E}\left[\widehat{\mathrm{Q}}_{1,0}\right],\\
\mathbb{E}\left[\widehat{\mathrm{Q}}_{0,1}\right] & =\mathbb{E}\left[\widehat{\mathrm{W}}_{0,1}\right] +\mathbb{E}\left[\widehat{\mathrm{Q}}_{1,0}\right].\\
\end{aligned}
\end{equation}
In terms of $\widehat{\mathrm{Q}}_{1,0}$, $\widehat{\mathrm{Q}}_{1,1}$ and $\widehat{\mathrm{Q}}_{1,2}$, the results are
\begin{equation}
\begin{aligned}
\mathbb{E}\left[\widehat{\mathrm{Q}}_{1,0}\right] & =\mathbb{E}\left[\widehat{\mathrm{W}}_{1,0}\right] +\mathbb{E}\left[\widehat{\mathrm{Q}}_{2,0}\right] q_{1,0} +\mathbb{E}\left[\widehat{\mathrm{Q}}_{0,1}\right]p_{1,0}^{1},\\
\mathbb{E}\left[\widehat{\mathrm{Q}}_{1,1}\right] & =\mathbb{E}\left[\widehat{\mathrm{W}}_{1,1}\right] +\mathbb{E}\left[\widehat{\mathrm{Q}}_{2,0}\right] q_{1,1} +\mathbb{E}\left[\widehat{\mathrm{Q}}_{0,1}\right]p_{1,1}^{1},\\
\mathbb{E}\left[\widehat{\mathrm{Q}}_{1,2}\right] & =\mathbb{E}\left[\widehat{\mathrm{W}}_{1,2}\right] +\mathbb{E}\left[\widehat{\mathrm{Q}}_{2,0}\right] q_{1,2} +\mathbb{E}\left[\widehat{\mathrm{Q}}_{0,1}\right]p_{1,2}^{1}.
\end{aligned}
\end{equation}
Similarly, for $\widehat{\mathrm{Q}}_{n-2,0}$ to $\widehat{\mathrm{Q}}_{n-2,n-1}$, we have
\begin{equation}\label{qn_2_0}
\begin{aligned}
\mathbb{E}\left[\widehat{\mathrm{Q}}_{n-2,0}\right]  =\mathbb{E}\left[\widehat{\mathrm{W}}_{n-2,0}\right] &+\mathbb{E}\left[\widehat{\mathrm{Q}}_{n-1,0}\right] q_{n-2,0}+\boldsymbol{\sum} _{l=1}^{n-2}\mathbb{E}\left[\widehat{\mathrm{Q}}_{n-3,l}\right]p_{n-2,0}^{l},\\
 & \, \, \, \, \, \, \, \, \, \, \, \, \, \, \, \, \, \, \vdots \\
\mathbb{E}\left[\widehat{\mathrm{Q}}_{n-2,j}\right]  =\mathbb{E}\left[\widehat{\mathrm{W}}_{n-2,j}\right] &+\mathbb{E}\left[\widehat{\mathrm{Q}}_{n-1,0}\right] q_{n-2,j}+\boldsymbol{\sum} _{l=1}^{n-2}\mathbb{E}\left[\widehat{\mathrm{Q}}_{n-3,l}\right]p_{n-2,j}^{l},\\
 & \, \, \, \, \, \, \, \, \, \, \, \, \, \, \, \, \, \, \vdots \\
\mathbb{E}\left[\widehat{\mathrm{Q}}_{n-2,n-1}\right]  =\mathbb{E}\left[\widehat{\mathrm{W}}_{n-2,n-1}\right] &+\mathbb{E}\left[\widehat{\mathrm{Q}}_{n-1,0}\right] q_{n-2,n-1}+\boldsymbol{\sum} _{l=1}^{n-2}\mathbb{E}\left[\widehat{\mathrm{Q}}_{n-3,l}\right]p_{n-2,n-1}^{l}.
 \end{aligned}
\end{equation}

Moreover, for $\widehat{\mathrm{Q}}_{n-1,0}$ to $\widehat{\mathrm{Q}}_{n-1,n}$, we get
\begin{equation}\label{qn_1_0}
\begin{aligned}
\mathbb{E}\left[\widehat{\mathrm{Q}}_{n-1,0}\right]  =\mathbb{E}\left[\widehat{\mathrm{W}}_{n-1,0}\right] &+\mathbb{E}\left[\widehat{\mathrm{Q}}_{n,0}\right] q_{n-1,0}+\boldsymbol{\sum} _{l=1}^{n-1}\mathbb{E}\left[\widehat{\mathrm{Q}}_{n-2,l}\right]p_{n-1,0}^{l},\\
 & \, \, \, \, \, \, \, \, \, \, \, \, \, \, \, \, \, \, \vdots \\
\mathbb{E}\left[\widehat{\mathrm{Q}}_{n-1,j}\right]  =\mathbb{E}\left[\widehat{\mathrm{W}}_{n-1,j}\right] &+\mathbb{E}\left[\widehat{\mathrm{Q}}_{n,0}\right] q_{n-1,j}+\boldsymbol{\sum} _{l=1}^{n-1}\mathbb{E}\left[\widehat{\mathrm{Q}}_{n-2,l}\right]p_{n-1,j}^{l},\\
 & \, \, \, \, \, \, \, \, \, \, \, \, \, \, \, \, \, \, \vdots \\
\mathbb{E}\left[\widehat{\mathrm{Q}}_{n-1,n}\right]  =\mathbb{E}\left[\widehat{\mathrm{W}}_{n-1,n}\right] &+\mathbb{E}\left[\widehat{\mathrm{Q}}_{n,0}\right] q_{n-1,n}+\boldsymbol{\sum} _{l=1}^{n-1}\mathbb{E}\left[\widehat{\mathrm{Q}}_{n-2,l}\right]p_{n-1,n}^{l}.
 \end{aligned}
\end{equation}
Finally, the following relationship is concluded for $\mathbb{E}\left[\widehat{\mathrm{Q}}_{n,0}\right]$.
\begin{equation}\label{qn0}
\begin{aligned}
\mathbb{E}\left[\widehat{\mathrm{Q}}_{n,0}\right]  =\mathbb{E}\left[\widehat{\mathrm{W}}_{n,0}\right] +\mathbb{E}\left[\widehat{\mathrm{Q}}_{n+1,0}\right] q_{n,0}+\boldsymbol{\sum} _{l=1}^{n}\mathbb{E}\left[\widehat{\mathrm{Q}}_{n-1,l}\right]p_{n,0}^{l}.
\end{aligned}
\end{equation}
According to Lemma \ref{expecteTimeToFailure}, $\mathbb{E}\left[\widehat{\mathrm{Q}}_{n+1,0}\right]=0$. Hence, \eqref{qn0} can be rewritten as follows:
\begin{equation}\label{eq78}
\begin{aligned}
\mathbb{E}\left[\widehat{\mathrm{Q}}_{n,0}\right] & =\mathbb{E}\left[\widehat{\mathrm{W}}_{n,0}\right]+\boldsymbol{\sum} _{j=1}^{n}\mathbb{E}\left[\widehat{\mathrm{Q}}_{n-1,j}\right]p_{n,0}^{j}.
\end{aligned}
\end{equation}
We first focus on expanding term $\mathbb{E}\left[\widehat{\mathrm{Q}}_{n-1,j}\right]$. To this end, multiplying $\mathbb{E}\left[\widehat{\mathrm{Q}}_{n-1,j}\right]$ from \eqref{qn_1_0} by $p_{n,0}^j$, for $1\le j\le n$, and taking the summation yields
\begin{equation}\label{eq79}
\hspace{-4mm}
\begin{aligned}
 \sum_{j=1}^{n} p_{n,0}^j&\mathbb{E}\left[\widehat{\mathrm{Q}}_{n-1,j}\right] =\sum_{j=1}^{n}p_{n,0}^j\mathbb{E}\left[\widehat{\mathrm{W}}_{n-1,j}\right]+\sum_{j=1}^{n}p_{n,0}^j\mathbb{E}\left[\widehat{\mathrm{Q}}_{n,0}\right] q_{n-1,j}+\sum_{j=1}^{n}p_{n,0}^j\boldsymbol{\sum} _{l=1}^{n-1}\mathbb{E}\left[\widehat{\mathrm{Q}}_{n-2,l}\right]p_{n-1,j}^{l}\\
 &=\sum_{j=1}^{n}p_{n,0}^j\mathbb{E}\left[\widehat{\mathrm{W}}_{n-1,j}\right]+\sum_{j=1}^{n}p_{n,0}^j\mathbb{E}\left[\widehat{\mathrm{Q}}_{n,0}\right] q_{n-1,j}+\left(\boldsymbol{\sum} _{l=1}^{n-1}\mathbb{E}\left[\widehat{\mathrm{Q}}_{n-2,l}\right]\right) \left(\sum_{j=1}^{n}p_{n,0}^jp_{n-1,j}^{l}\right).
\end{aligned}
\hspace{-4mm}
\end{equation}
Replacing $ \sum_{j=1}^{n} p_{n,0}^j\mathbb{E}\left[\widehat{\mathrm{Q}}_{n-1,j}\right] $ from \eqref{eq79} back in \eqref{eq78} leads to
\begin{equation}\label{eq80}
\hspace{-4mm}
\begin{aligned}
\mathbb{E}\left[\widehat{\mathrm{Q}}_{n,0}\right] & =\mathbb{E}\left[\widehat{\mathrm{W}}_{n,0}\right]+
 \sum_{j=1}^{n}p_{n,0}^j\mathbb{E}\left[\widehat{\mathrm{W}}_{n-1,j}\right] +\sum_{j=1}^{n}p_{n,0}^j\mathbb{E}\left[\widehat{\mathrm{Q}}_{n,0}\right] q_{n-1,j} +\left(\boldsymbol{\sum} _{l=1}^{n-1}\mathbb{E}\left[\widehat{\mathrm{Q}}_{n-2,l}\right]\right) \left(\sum_{j=1}^{n}p_{n,0}^jp_{n-1,j}^{l}\right).
\end{aligned}
\hspace{-4mm}
\end{equation}
Adding $-\sum_{j=1}^{n}p_{n,0}^j\mathbb{E}\left[\widehat{\mathrm{Q}}_{n,0}\right] q_{n-1,j}$ to both sides of \eqref{eq80} and performing some algebraic manipulations result in
\begin{equation}\label{eq81}
\hspace{-4mm}
\begin{aligned}
\mathbb{E}\left[\widehat{\mathrm{Q}}_{n,0}\right]&\left(1-\sum_{j=1}^{n}p_{n,0}^jq_{n-1,j}\right)=\mathbb{E}\left[\widehat{\mathrm{W}}_{n,0}\right]+
 \sum_{j=1}^{n}p_{n,0}^j\mathbb{E}\left[\widehat{\mathrm{W}}_{n-1,j}\right] +\left(\boldsymbol{\sum} _{l=1}^{n-1}\mathbb{E}\left[\widehat{\mathrm{Q}}_{n-2,l}\right]\right) \left(\sum_{j=1}^{n}p_{n,0}^jp_{n-1,j}^{l}\right).\\
\end{aligned}
\hspace{-4mm}
\end{equation}
Dividing both sides of \eqref{eq81} by $\left(1 -\sum_{j=1}^{n}p_{n,0}^jq_{n-1,j}\right)$ yields
\begin{equation}\label{eq82}
\begin{aligned}
\mathbb{E}&\left[\widehat{\mathrm{Q}}_{n,0}\right]  =\frac{1}{1-\left(\boldsymbol{\sum} _{j=1}^{n} q_{n-1,j}p_{n,0}^{j}\right)}\Bigg(\mathbb{E}\left[\widehat{\mathrm{W}}_{n,0}\right] +\boldsymbol{\sum} _{j=1}^{n}\mathbb{E}\left[\widehat{\mathrm{W}}_{n-1,j}\right]p_{n,0}^{j} +\left(\boldsymbol{\sum} _{l=1}^{n-1}\mathbb{E}\left[\widehat{\mathrm{Q}}_{n-2,1}\right]\right) \left(\boldsymbol{\sum} _{j=1}^{n}p_{n-1,j}^{l}p_{n,0}^{j}\right)\Bigg).
\end{aligned}
\end{equation}
We next aim to calculate $\mathbb{E}[\widehat{\mathrm{Q}}_{n-1}^0]$ from \eqref{qn_1_0}. First, we define following auxiliary function:
\begin{equation}\label{alphak}
  \alpha(k)= \frac{q_{k-1,0}}{1-\left(\boldsymbol{\sum} _{j=1}^{k} q_{k-1,j} A(k+1,j)\right)}.
\end{equation}
In \eqref{alphak}, function $A$ is calculated as follows:
\begin{equation}\label{eq85}
\displaystyle A(i,k) =\alpha(i)\boldsymbol{\sum} _{j=1}^{i}p_{i-1,j}^{k} A(i+1,j) +p_{i-1,0}^{k},
\end{equation}
where $A(n+1,k)=p_{n,0}^k$.\par
Multiplying both sides of \eqref{eq82} by $q_{n-1,0}$ leads to
\begin{equation}\label{eq135}
\hspace{-4mm}
\begin{aligned}
\mathbb{E}&\left[\widehat{\mathrm{Q}}_{n,0}\right]q_{n-1,0}  =\frac{q_{n-1,0}}{1-\left(\boldsymbol{\sum} _{j=1}^{n} q_{n-1,j}p_{n,0}^{j}\right)}\left(\mathbb{E}\left[\widehat{\mathrm{W}}_{n,0}\right] +\boldsymbol{\sum} _{j=1}^{n}\mathbb{E}\left[\widehat{\mathrm{W}}_{n-1,j}\right]p_{n,0}^{j} +\left(\boldsymbol{\sum} _{l=1}^{n-1}\mathbb{E}\left[\widehat{\mathrm{Q}}_{n-2,1}\right]\right) \left(\boldsymbol{\sum} _{j=1}^{n}p_{n-1,j}^{l}p_{n,0}^{j}\right)\right).
\end{aligned}
\hspace{-4mm}
\end{equation}
Considering \eqref{alphak}, we can rewrite \eqref{eq135} as follows:
\begin{equation}\label{eq86}
\begin{aligned}
\mathbb{E}&\left[\widehat{\mathrm{Q}}_{n,0}\right]q_{n-1,0}  =\alpha(n)\left(\mathbb{E}\left[\widehat{\mathrm{W}}_{n,0}\right]+\boldsymbol{\sum} _{j=1}^{n}\mathbb{E}\left[\widehat{\mathrm{W}}_{n-1,j}\right]p_{n,0}^{j} +\left(\boldsymbol{\sum} _{l=1}^{n-1}\mathbb{E}\left[\widehat{\mathrm{Q}}_{n-2,1}\right]\right) \left(\boldsymbol{\sum} _{j=1}^{n}p_{n-1,j}^{l}p_{n,0}^{j}\right)\right).
\end{aligned}
\end{equation}
By replacing $\mathbb{E}[\widehat{\mathrm{Q}}_{n,0}]q_{n-1,0}$ from \eqref{eq86} in $\mathbb{E}[\widehat{\mathrm{Q}}_{n-1,0}]$ from \eqref{qn_1_0}, and performing simplification, we get
\begin{equation}\label{eq87}
\begin{aligned}
\mathbb{E}&\left[\widehat{\mathrm{Q}}_{n-1,0}\right]  =\mathbb{E}\left[\widehat{\mathrm{W}}_{n-1,0}\right]+\left(\alpha(n)\left[\mathbb{E}\left[\widehat{\mathrm{W}}_{n,0}\right]
 +\boldsymbol{\sum}_{j=1}^{n}\mathbb{E} \left[\widehat{\mathrm{W}}_{n-1,j}\right]p_{n,0}^{j}\right]\right)\\
 & +\boldsymbol{\sum} _{l=1}^{n-1}\mathbb{E}\left[\widehat{\mathrm{Q}}_{n-2,1}\right]\times\left(\frac{q_{n-1,0}\left(\boldsymbol{\sum} _{j=1}^{n}p_{n-1,j}^{l}p_{n,0}^{j}\right)}{1-\left(\boldsymbol{\sum} _{j=1}^{n} q_{n-1,j}p_{n,0}^{j}\right)} +p_{n-1,0}^{l}\right)\\
 &=\mathbb{E}\left[\widehat{\mathrm{W}}_{n-1,0}\right]+\left(\alpha(n)\left[\mathbb{E}\left[\widehat{\mathrm{W}}_{n,0}\right]
 +\boldsymbol{\sum}_{j=1}^{n}\mathbb{E} \left[\widehat{\mathrm{W}}_{n-1,j}\right]p_{n,0}^{j}\right]\right)+\boldsymbol{\sum} _{j=1}^{n-1}\mathbb{E}\left[\widehat{\mathrm{Q}}_{n-2,1}\right] A(n,j).
\end{aligned}
\end{equation}
Multiplying $\mathbb{E}\left[\widehat{\mathrm{Q}}_{n-2,j}\right]$ from \eqref{qn_2_0} by $A(n,j)$, for $1\le j\le n-1$, and adding them all result in:
\begin{equation}\label{eq88}
\begin{aligned}
 \sum_{j=1}^{n-1}&\mathbb{E}\left[\widehat{\mathrm{Q}}_{n-2,j}\right]A(n,j) =\sum_{j=1}^{n-1}A(n,j)\mathbb{E}\left[\widehat{\mathrm{W}}_{n-2,j}\right]+\sum_{j=1}^{n-1}A(n,j)\mathbb{E}\left[\widehat{\mathrm{Q}}_{n-1,0}\right] q_{n-2,j}+\sum_{j=1}^{n-1}A(n,j)\boldsymbol{\sum} _{l=1}^{n-2}\mathbb{E}\left[\widehat{\mathrm{Q}}_{n-3,l}\right]p_{n-2,j}^{l}\\
 &=\sum_{j=1}^{n-1}A(n,j)\mathbb{E}\left[\widehat{\mathrm{W}}_{n-2,j}\right]+\sum_{j=1}^{n-1}A(n,j)\mathbb{E}\left[\widehat{\mathrm{Q}}_{n-1,0}\right] q_{n-2,j}+\left(\boldsymbol{\sum} _{l=1}^{n-2}\mathbb{E}\left[\widehat{\mathrm{Q}}_{n-3,l}\right]\right) \left(\sum_{j=1}^{n-1}A(n,j)p_{n-2,j}^{l}\right).
\end{aligned}
\end{equation}
Therefore, by replacing $\sum_{j=1}^{n-1}\mathbb{E}\left[\widehat{\mathrm{Q}}_{n-2,j}\right]A(n,j)$ from \eqref{eq88} in \eqref{eq87} and conducting simplifications, the following result is concluded
\begin{equation}
\hspace{-5mm}
\begin{aligned}
\mathbb{E}\left[\widehat{\mathrm{Q}}_{n-1,0}\right]  &=\frac{1}{1-\left(\boldsymbol{\sum} _{j=1}^{n-1} q_{n-2,j} A_{n}^{j}\right)}\vast[\mathbb{E}\left[\widehat{\mathrm{W}}_{n-1,0}\right] +\Bigg(\alpha(n-1)\Bigg[\mathbb{E}\left[\widehat{\mathrm{W}}_{n,0}\right]
 +\boldsymbol{\sum}_{j=1}^{n}\mathbb{E} \left[\widehat{\mathrm{W}}_{n-1,j}\right]p_{n,0}^{j}\Bigg]\Bigg)\\
 & ~~~~~~~~~+\boldsymbol{\sum}_{j=0}^{n-1}\mathbb{E} \left[\widehat{\mathrm{W}}_{n-2,j}\right] A(n,j) +\boldsymbol{\sum} _{l=1}^{n-1}\mathbb{E}\left[\widehat{\mathrm{Q}}_{n-2,1}\right]\left(\boldsymbol{\sum} _{j=1}^{n-1}p_{n-2,j}^{l} A(n,j)\right)\vast].
\end{aligned}
\hspace{-3mm}
\end{equation}
By repeating the above operations up to $\mathbb{E}\left[\widehat{\mathrm{Q}}_{0,0}\right]$, as well as conducting simplifications, the result is as follows:
\begin{equation}
\begin{aligned}
\mathbb{E}\left[\widehat{\mathrm{Q}}_{0,0}\right]  &=\Bigg(\mathbb{E}\left[\widehat{\mathrm{W}}_{0,0}\right] +\alpha(1)\mathbb{E}\left[\widehat{\mathrm{W}}_{1,0}\right]+\alpha(1)\alpha(2)\mathbb{E}\left[\widehat{\mathrm{W}}_{2,0}\right] +\dots+\left(\boldsymbol{\prod} _{k=1}^{n} \alpha(k)\right)\mathbb{E}\left[\widehat{\mathrm{W}}_{n,0}\right]\Bigg)\\
 & +\Bigg(\alpha(1) \boldsymbol{\sum} _{j=1}^{1}\mathbb{E}\left[\widehat{\mathrm{W}}_{0,j}\right] A(2,j) +\alpha(1)\alpha(2)\boldsymbol{\sum} _{j=1}^{2}\mathbb{E}\left[\widehat{\mathrm{W}}_{1,j}\right] A(3,j)+\\
 &~~~~~~~~~~ \cdots+\left(\boldsymbol{\prod} _{k=1}^{n}\alpha(k)\right)\boldsymbol{\sum} _{j=1}^{n}\mathbb{E}\left[\widehat{\mathrm{W}}_{n-1,j}\right] A(n+1,j)\Bigg).
\end{aligned}
\end{equation}
Further, for the sake of simplicity in notations, we define function $\beta(i)$ as follows:
\begin{equation}
  \beta(i)=\boldsymbol{\prod} _{k=1}^{i}\alpha(k).
\end{equation}
Finally, $\mathbb{E}\left[\widehat{\mathrm{Q}}_{0,0}\right]$ is obtained as follows:
\begin{equation}
\hspace{-4mm}
\begin{aligned}
\mathbb{E}\left[\widehat{\mathrm{Q}}_{0,0}\right]  &=\left(\mathbb{E}\left[\widehat{\mathrm{W}}_{0,0}\right] +\sum_{i=1}^{n}\left(\boldsymbol{\prod} _{k=1}^{i} \alpha(k)\right)\mathbb{E}\left[\widehat{\mathrm{W}}_{i,0}\right]\right) +\left(\sum_{i=1}^{n}\left(\boldsymbol{\prod} _{k=1}^{i}\alpha(k)\right)\boldsymbol{\sum} _{j=1}^{i}\mathbb{E}\left[\widehat{\mathrm{W}}_{i-1,j}\right] A(i+1,j)\right)\\
 &=\mathbb{E}\left[\widehat{\mathrm{W}}_{0,0}\right]
 +\sum_{i=1}^{n}\boldsymbol{\sum} _{j=1}^{i}\beta(i)\mathbb{E}\left[\widehat{\mathrm{W}}_{i-1,j}\right] A(i+1,j)+\sum_{i=1}^{n}\beta(i)\mathbb{E}\left[\widehat{\mathrm{W}}_{i,0}\right],
\end{aligned}
\hspace{-4mm}
\end{equation}
which concludes the proof.

\newpage
\section{Proof of corollary \ref{expj2nCorollary}}\label{expj2nCorollary_proof}
\noindent First, consider following technical result.
\begin{lemma}[\hspace{-.1mm}\cite{29}]\label{expLem}
Let independent random variables $Z$ and $U$ follow an exponential probability distribution with parameter $\lambda_z$ and a general distribution $R_U(u)$, respectively. EDV $V = Z-U$ follows the exponential distribution with parameter $\lambda_z$.
\end{lemma}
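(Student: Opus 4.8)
The plan is to reduce the claim to the memoryless property of the exponential distribution. Since an EDV is by definition a \emph{positive} difference (Definition~\ref{event_dynamic_variable}), the assertion should be read as: conditioned on the event $\{Z>U\}$ (equivalently $V=Z-U>0$), the variable $V$ follows $\mathrm{Exp}(\lambda_z)$. I would therefore compute the conditional survival function $\Pr[V>t\mid V>0]$ for $t\ge 0$ and show that it equals $e^{-\lambda_z t}$, which characterizes $\mathrm{Exp}(\lambda_z)$.

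First I would note that for $t\ge 0$ the event $\{Z-U>t\}$ is contained in $\{Z>U\}$, so
\begin{equation}
\Pr[V>t\mid V>0]=\frac{\Pr[Z-U>t]}{\Pr[Z>U]}.
\end{equation}
Then I would evaluate each probability by conditioning on $U$ and using the exponential tail $\Pr[Z>s]=e^{-\lambda_z s}$ together with the independence of $Z$ and $U$:
\begin{equation}
\Pr[Z-U>t]=\mathbb{E}_U\!\left[e^{-\lambda_z(U+t)}\right]=e^{-\lambda_z t}\,\mathbb{E}_U\!\left[e^{-\lambda_z U}\right],
\end{equation}
and likewise $\Pr[Z>U]=\mathbb{E}_U\!\left[e^{-\lambda_z U}\right]$.

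The decisive step is that the common factor $\mathbb{E}_U[e^{-\lambda_z U}]$ cancels in the ratio, leaving $\Pr[V>t\mid V>0]=e^{-\lambda_z t}$ regardless of the law $R_U$ of $U$. Hence $V\sim\mathrm{Exp}(\lambda_z)$, as claimed. The only point requiring care — and what I expect to be the main (if minor) obstacle — is the bookkeeping around the positivity conditioning: one must confirm that the inclusion $\{Z-U>t\}\subseteq\{Z>U\}$ justifies rewriting the numerator as $\Pr[Z-U>t]$, and that $\mathbb{E}_U[e^{-\lambda_z U}]$ is finite and strictly positive (immediate, since $e^{-\lambda_z U}\in(0,1]$ almost surely) so that the cancellation is legitimate. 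Intuitively the result is nothing but the lack-of-memory property applied at each value of $U$: however long $U$ forces the wait, the overshoot of $Z$ beyond $U$ is always a fresh $\mathrm{Exp}(\lambda_z)$ draw, which is precisely why the distribution of the residual is invariant under the arbitrary shift $U$.
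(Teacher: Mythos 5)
Your proof is correct. Note that the paper itself does not prove this lemma at all --- it is imported verbatim from reference \cite{29} and simply stated before being used in the proof of Corollary~\ref{expj2nCorollary} --- so there is no in-paper argument to compare against. Your computation of the conditional survival function, $\Pr[V>t\mid V>0]=\Pr[Z-U>t]/\Pr[Z>U]=e^{-\lambda_z t}$ via conditioning on $U$ and cancelling the common factor $\mathbb{E}_U[e^{-\lambda_z U}]$, is the standard memoryless-property argument, and you correctly identify that the EDV definition (Definition~\ref{event_dynamic_variable}) makes the positivity conditioning implicit, which is the one point where a careless reading of the statement could go wrong.
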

Accordingly, we next aim to prove Corollary \ref{expj2nCorollary}. Based on Decomposition Theorem (Theorem \ref{decompositionTheorem}), the remaining sojourn times of the vehicles and the remaining recruitment times of the recruiters in state $S_{i,j}$ can follow five different e-distributions $\dot{\mathfrak{X}}$, $\ddot{\mathfrak{X}}$, $\varphi$, $\gamma$, and $\psi$. Without loss of generality, consider the following EDVs:
\begin{equation}
    \begin{cases}
        V_1{\sim}\dot{\mathfrak{X}}{=}Z_1{-}Z_2,~\hat{\Gamma}(V_1){=}Z_1,\\
        V_2{\sim}\ddot{\mathfrak{X}}{=}Z_1{-}Z_2{-}U_1,~\hat{\Gamma}(V_2){=}Z_1,\\
        V_3{\sim}\varphi{=}U_1{-}Z_2{+}Z_1,~\hat{\Gamma}(V_3){=}U_1,\\
        V_4{\sim}\psi{=}U_1{-}(U_2{-}Z_1{+}Z_2),~\hat{\Gamma}(V_4){=}U_1,\\
        V_5{\sim}\gamma{=}(U_1{-}Z_2{+}Z_1){-}U_2,~\hat{\Gamma}(V_5){=}U_1.\\
    \end{cases}
\end{equation}
If $Z_1$ and $Z_2$ are i.i.d exponential random variables with parameter $\lambda_z$, and $U_1$ and $U_2$ are i.i.d exponential random variables with parameter $\lambda_u$, the distribution of the pivot of $V_1$, $V_2$, $V_3$, $V_4$, and $V_5$ are as follows.
\begin{enumerate}
    \item $\hat{\Gamma}(V_1)$ and $\hat{\Gamma}(V_2)$ follow exponential distribution with parameter $\lambda_z$.
    \item $\hat{\Gamma}(V_3)$, $\hat{\Gamma}(V_4)$, and $\hat{\Gamma}(V_5)$ follow exponential distribution with parameter $\lambda_u$.
\end{enumerate}
Using Definition \ref{event_dynamic_variable},
we know that $\hat{\Gamma}(V')>\hat{\delta}(V')-\hat{\xi}(V')$, where $V'\in \{V_1,V_2,V_3,V_4,V_5\}$. Based on Lemma \ref{expLem}, referring to memoryless property, $V_1$ and $V_2$ follow exponential distribution with parameters $\lambda_z$, and $V_3$, $V_4$, and $V_5$ follow exponential distribution with parameter $\lambda_u$, which concludes the proof.

\newpage
\section{Proof of Theorem \ref{mttfEXPTheorem}} \label{mttfEXPTheorem_proof}
\noindent Consider the $MTTF(${\tt RD-VC$_n$}) presented in Theorem \ref{MTTFJ2dn}. We have
\begin{equation}\label{mttfGen}
\begin{aligned}
\mathbb{E}\left[\widehat{\mathrm{Q}}_{0,0}\right]
  &=\mathbb{E}\left[\widehat{\mathrm{W}}_{0,0}\right]+\underbrace{\boldsymbol{\sum} _{i=1}^{n}\beta(i) \mathbb{E}\left[\widehat{\mathrm{W}}_{i,0}\right]}_{a}+\underbrace{\boldsymbol{\sum} _{i=1}^{n} \boldsymbol{\sum} _{j=1}^{i}\beta(i)\mathbb{E}\left[\widehat{\mathrm{W}}_{i-1,j}\right] A(i+1,j)}_{b}.
\end{aligned}
\end{equation}
If we extract the $n^{th}$ term from $(a)$ and the first term from $(b)$, the following result is concluded
\begin{equation}\label{eqq149}
\begin{aligned}
\mathbb{E}\left[\widehat{\mathrm{Q}}_{0,0}\right] & =\mathbb{E}\left[\widehat{\mathrm{W}}_{0,0}\right]
 +\beta(1)\mathbb{E}\left[\widehat{\mathrm{W}}_{0,1}\right] A(2,1)+\sum_{i=2}^{n}\boldsymbol{\sum} _{j=1}^{i}\beta(i)\mathbb{E}\left[\widehat{\mathrm{W}}_{i-1,j}\right] A(i+1,j)+\sum_{i=1}^{n-1}\beta(i)\mathbb{E}\left[\widehat{\mathrm{W}}_{i,0}\right] +\beta(n)\mathbb{E}\left[\widehat{\mathrm{W}}_{n,0}\right]\\
 &=\mathbb{E}\left[\widehat{\mathrm{W}}_{0,0}\right]
 +\beta(1)\mathbb{E}\left[\widehat{\mathrm{W}}_{0,1}\right] A(2,1)+\underbrace{\sum_{i=1}^{n-1}\boldsymbol{\sum} _{j=1}^{i+1}\beta(i+1)\mathbb{E}\left[\widehat{\mathrm{W}}_{i,j}\right] A(i+2,j)}_{c}+\underbrace{\sum_{i=1}^{n-1}\beta(i)\mathbb{E}\left[\widehat{\mathrm{W}}_{i,0}\right] +\beta(n)\mathbb{E}\left[\widehat{\mathrm{W}}_{n,0}\right]}_{d}.
\end{aligned}
\end{equation}
By merging $(c)$ and $(d)$ from \eqref{eqq149}, we get
\begin{equation}\label{eq94}
\begin{aligned}
\mathbb{E}\left[\widehat{\mathrm{Q}}_{0,0}\right] & =\mathbb{E}\left[\widehat{\mathrm{W}}_{0,0}\right]
 +\beta(1)\mathbb{E}\left[\widehat{\mathrm{W}}_{0,1}\right] A(2,1)+\sum_{i=1}^{n-1}\left(\boldsymbol{\sum} _{j=1}^{i+1}\beta(i+1)\mathbb{E}\left[\widehat{\mathrm{W}}_{i,j}\right] A(i+2,j)+\beta(i)\mathbb{E}\left[\widehat{\mathrm{W}}_{i,0}\right]\right) +\beta(n)\mathbb{E}\left[\widehat{\mathrm{W}}_{n,0}\right].
\end{aligned}
\end{equation}
Using Corollary \ref{expj2nCorollary}, it can be inferred that all of the states of $\hat{\mathbf{S}}_i$ are equal. As a result, $\mathbb{E}\left[\widehat{\mathrm{W}}_{i,j}\right]=\mathbb{E}\left[\widehat{\mathrm{W}}_{i,0}\right]$. Consequently, replacing $\mathbb{E}\left[\widehat{\mathrm{W}}_{i,j}\right]$ from  \eqref{eq94}, for $1\le j \le i+1$, with $\mathbb{E}\left[\widehat{\mathrm{W}}_{i,0}\right]$ yields
\begin{equation}
\begin{aligned}
\mathbb{E}\left[\widehat{\mathrm{Q}}_{0,0}\right]&=\mathbb{E}\left[\widehat{\mathrm{W}}_{0,0}\right]
 +\beta(1)\mathbb{E}\left[\widehat{\mathrm{W}}_{0,0}\right] A(2,1)+\sum_{i=1}^{n-1}\left(\boldsymbol{\sum} _{j=1}^{i+1}\beta(i+1)\mathbb{E}\left[\widehat{\mathrm{W}}_{i,0}\right] A(i+2,j)+\beta(i)\mathbb{E}\left[\widehat{\mathrm{W}}_{i,0}\right]\right) +\beta(n)\mathbb{E}\left[\widehat{\mathrm{W}}_{n,0}\right]\\
 &= \mathbb{E}\left[\widehat{\mathrm{W}}_{0,0}\right] \left(1+\beta(1)A(2,1)\right)+\sum_{i=1}^{n-1}\mathbb{E}\left[\widehat{\mathrm{W}}_{i,0}\right] \left(\boldsymbol{\sum} _{j=1}^{i+1}\beta(i+1) A(i+2,j)+\beta(i)\right)+\beta(n)\mathbb{E}\left[\widehat{\mathrm{W}}_{n,0}\right].\\
\end{aligned}
\end{equation}
Moreover, considering Corollary \ref{expj2nCorollary}, $q_{k,j}=q_{k,0}$ and $p_{i,j}^k=p_{i,0}^1$. Therefore, function $\alpha(k)$, $A(i,k)$, and $\beta(i)$ can be rewritten as follows with new names $\alpha'(k)$, $A'(i,k)$, and $\beta'(i)$, respectively. By replacing $q_{k,j}$ and $p_{i,j}^k$ from \eqref{alphaEq} and \eqref{eqq85} with $q_{k,0}$ and $p_{i,0}^1$, respectively, the following results are concluded:
\begin{equation}\label{eqq96}
\begin{aligned}
  \alpha'(k)&= \frac{q_{k-1,0}}{1-\left(\boldsymbol{\sum} _{j=1}^{k} q_{k-1,0} A'(k+1,j)\right)}\\
  &=\frac{q_{k-1,0}}{1-q_{k-1,0}\left(\boldsymbol{\sum} _{j=1}^{k} A'(k+1,j)\right)},
\end{aligned}
\end{equation}
and
\begin{equation}\label{eq97}
\begin{aligned}
\displaystyle A'(i,k)&=\alpha'(i)\boldsymbol{\sum} _{j=1}^{i}p_{i-1,0}^{1} A'(i+1,j) +p_{i-1,0}^{1}\\
&=\alpha'(i) \times p_{i-1,0}^{1}\boldsymbol{\sum} _{j=1}^{i}A'(i+1,j) +p_{i-1,0}^{1}.
\end{aligned}
\end{equation}
Replacing $\alpha'(i)$ from (\ref{eqq96}) in (\ref{eq97}) yields
\begin{equation}
\hspace{-4mm}
\begin{aligned}
\displaystyle A'(i,k)&=\frac{q_{i-1,0} {\times} p_{i-1,0}^{1}\boldsymbol{\sum} _{j=1}^{i}A'(i+1,j)}{1-q_{i-1,0}\left(\boldsymbol{\sum} _{j=1}^{i} A'(i+1,j)\right)} +p_{i-1,0}^{1}\\
&=\frac{p_{i-1,0}^{1}}{1-q_{i-1,0}\left(\boldsymbol{\sum} _{j=1}^{i} A'(i+1,j)\right)}.\\
\end{aligned}
\hspace{-4mm}
\end{equation}
Since $\displaystyle A'(i,k)$ does not depend on $k$ we can remove $k$ from $A'(i,k)$ leading to a simpler notion $A'(i)$. Thus, $A'(i)$ can be further simplified as follows:
\begin{equation}
\begin{aligned}
A'(i)=\frac{p_{i-1,0}^{1}}{1-i\times q_{i-1,0}A'(i+1)},\\
\end{aligned}
\end{equation}
where $A'(n+1)=p_{n,0}^1$. Accordingly, $\beta(i)$ can be rewritten with new name $\beta'(i)$ as follows:
\begin{equation}
  \beta'(i)=\boldsymbol{\prod} _{k=1}^{i}\alpha'(k).
\end{equation}
Moreover, $\alpha'(k)$ can be further simplified as follows:
\begin{equation}
\begin{aligned}
  \alpha'(k)=\frac{q_{k-1,0}}{1-k\times q_{k-1,0}\times A'(k+1)}.
\end{aligned}
\end{equation}
To conclude the proof, $\mathbb{E}\left[\widehat{\mathrm{Q}}_{0,0}\right]$ is obtained as follows:
\begin{equation}
\begin{aligned}
&\mathbb{E}\left[\widehat{\mathrm{Q}}_{0,0}\right] = \mathbb{E}\left[\widehat{\mathrm{W}}_{0,0}\right] \left(1+\beta'(1)A'(2)\right)+\sum_{i=1}^{n-1}\mathbb{E}\left[\widehat{\mathrm{W}}_{i,0}\right] \left((i+1)\beta'(i+1) A'(i+2)+\beta'(i)\right)+\beta'(n)\mathbb{E}\left[\widehat{\mathrm{W}}_{n,0}\right].\\
\end{aligned}
\end{equation}

\newpage
\section{Proof of Corollary \ref{expected_sojourn_time}}\label{expected_sojourn_time_proof}
\noindent Let
\begin{equation}
 M^{(Z)}_{i,0}=\min_{1\le r \le 2n-i}\left\{S_{i,0}^{\mathsf{Soj}}(r)\right\},
\end{equation}
and
\begin{equation}
M^{(U)}_{i,0}=\min_{\substack{1\le h \le i}}\left\{S_{i,0}^{\mathsf{Rec}}(h)\right\}.
\end{equation}
Based on \eqref{ephi1_main} and taking into account that the considered random variables are independent, we have
\begin{equation}
\begin{aligned}
\mathbb{E}\left[\widehat{\mathrm{W}}_{i,0}\right] &=\boldsymbol{\int}_{0}^{\infty } \textrm{Pr}\Bigg[\min\Big\{M^{(Z)}_{i,0},M^{(U)}_{i,0}\Big\}>t\Big] dt\\
&=\boldsymbol{\int} _{0}^{\infty } \textrm{Pr}\left[M^{(Z)}_{i,0}>t\right]\textrm{Pr}\left[M^{(U)}_{i,0}>t\right] dt\\
&=\boldsymbol{\int} _{0}^{\infty } \left(\prod_{k=1}^{2n-i}\textrm{Pr}\left[S_{i,0}^{\mathsf{Soj}}(k)>t\right] \prod_{r=1}^{i}\textrm{Pr}\left[S_{i,0}^{\mathsf{Rec}}(r)>t\right]\right) dt\\
&=\boldsymbol{\int} _{0}^{\infty } e^{-(( 2n-i) \lambda _{c} +i\lambda _{u}) t} dt\\
 & =\frac{1}{( 2n-i) \lambda _{c} +i\lambda _{u}},
\end{aligned}
\end{equation}
which concludes the proof.

\newpage
\section{Proof of Corollary \ref{transitionProbCorollary}}\label{transitionProbCorollary_proof}
\noindent Let $U_k=S_{i,j}^{\mathsf{Rec}}(k)$ and
\begin{equation}
\hspace{-6mm}
V_{i,j,k}^{\mathsf{Min}}=\min\Big\{\hspace{-.5mm}\min_{1\le r \le 2n-i}\left\{S_{i,j}^{\mathsf{Soj}}(r)\right\},\min_{\substack{1\le h \le i,\\h\neq k}}\left\{S_{i,j}^{\mathsf{Rec}}(h)\right\}\hspace{-1mm}\Big\}.  \hspace{-6mm}
\end{equation}
Considering general expression of $p_{i,j}^k$, presented in \ref{pijk_main}, we have
\begin{equation}
\begin{aligned}
p_{i,j}^{k} {=}\textrm{Pr}\Bigg[U_k {<}V_{i,j,k}^{\mathsf{Min}}\Bigg].
\end{aligned}
\end{equation}
Accordingly, $p_{i,0}^1$, for $0\le i \le n$, is calculated as follows:
\begin{equation}\label{pijk_app}
\begin{aligned}
\displaystyle p_{i,0}^1 &=\textrm{Pr}\left[ U_1 <V_{i,0,1}^{\mathsf{Min}} \right]\\
& =\boldsymbol{\int} _{0}^{\infty }\textrm{Pr}\left[ U_{0}< V_{i,0,1}^{\mathsf{Min}} |U_1 =u\right] \lambda _{u} e^{-\lambda _{u}U_1} du\\
 & =\boldsymbol{\int} _{0}^{\infty } \lambda _{u} e^{-(( 2n-i) \lambda _{c} +i\lambda _{u}) u} du\\
 & =\frac{\lambda _{u}}{( 2n-i) \lambda _{c} +i\lambda _{u}},
\end{aligned}
\end{equation}
which concludes the proof.

\newpage
\section{Generalized Transition Probabilities and Expected Sojourn Times}\label{app:generalization}
\noindent In this paper, Decomposition Theorem (Theorem \ref{decompositionTheorem}) decomposes a non-trivial SMP into a decomposed SMP (D-SMP). The actual Mean Time to Failure (MTTF) of task processing is computed for D-SMP in Theorem \ref{MTTFJ2dn}. The above two theorems can be applied to any general distributions. However, in this paper, we assumed that residual sojourn time and recruitment duration of vehicles follow the exponential distribution to obtain the final results of \eqref{pijk_main} and \eqref{ephi1_main}, and then replaced their expressions back in Theorem \ref{MTTFJ2dn} to obtain MTTF under exponential distribution assumption. To generalize the results, in the following, we present Theorem \ref{transition_prob_theorem} and Theorem \ref{expected_sojourn_time_theorem} to obtained \eqref{pijk_main} and \eqref{ephi1_main} for the case where the distribution of residual sojourn time and recruitment duration of vehicles can follow arbitrary distributions.

It is worth mentioning that the final results of the below theorems require finding the joint distribution of the sum of several random variables, which can be derived through analytical approaches such as probability characteristic function and convolution for the exact (if it is possible) and approximation solutions \cite{54}.

\begin{theorem}\label{transition_prob_theorem}
Let $S_{i,j}^{\mathsf{Rec}}(k)$, $S_{i,j}^{\mathsf{Rec}}(h)$, and $S_{i,j}^{\mathsf{Soj}}(r-i)$ refer to the residual \underline{rec}ruitment time of vehicle $k$, residual recruitment time of vehicle $h$, and residual sojourn/\underline{soj}ourn time of vehicle $r$, respectively. When vehicles' sojourn time and recruitment duration follow general distributions $\mathfrak{D}_Z(z)$ and $\mathfrak{R}_U(u)$, respectively, the probability of transition from $S_{i,j}$ to $S_{i-1,k}$, denoted by $p_{i,j}^{k}$, is given by the following three cases:

\textbullet\hspace{.5mm} \textbf{Case 1 (computation of $p_{i,j}^{k}$ when $j=0$, $i\in \{0,\cdots,n\}$, and $k=i$):} In this case, we have
\begin{equation}\label{transition_prob1}
\begin{aligned}
p_{i,0}^{i} &{=}\textrm{Pr}\Bigg[S_{i,0}^{\mathsf{Rec}}(i) {<} \min\Big\{\min_{i+1 \le r \le 2n}\left\{S_{i,0}^{\mathsf{Soj}}(r-i)\right\},\min_{\substack{1 \le h \le i-1}}\left\{S_{i,0}^{\mathsf{Rec}}(h)\right\}\Big\}\Bigg]\\
&=\left(\textrm{Pr}\Bigg[U_{x_{i}}{+}Z_{\ell'_{1}} {<}Z_a\big|Z_{a}>Z_{\ell'_{1}}\Bigg]\right)^{2n{-}i} \left(\textrm{Pr}\Bigg[U_{x_{i}}{+}Z_{\ell'_{1}} {<}U_b{+}Z_b\Bigg]\right)^{i{-}1},
\end{aligned}
\end{equation}
where $Z_{\ell_i}$, $Z_{\ell'_1}$, $Z_a$, $Z_b$, $U_{x_{i}}$, and $U_b$ are independent random variables, from which $Z_{\ell_i}$, $Z_{\ell'_1}$, $Z_a$, and $Z_b$ follow general distribution $\mathfrak{D}_Z(z)$, and $U_{x_{i}}$ and $U_b$ follow general distribution $\mathfrak{R}_U(u)$.

\textbullet\hspace{.5mm} \textbf{Case 2 (computation of $p_{i,j}^{k}$ when $j=0$, $i\in \{0,\cdots,n\}$, and $1 \le k \le i-1$):} In this case, we have
\begin{equation}\label{transition_prob2}
\begin{aligned}
p_{i,0}^{k} &{=}\textrm{Pr}\Bigg[S_{i,0}^{\mathsf{Rec}}(k) {<} \min\Big\{\min_{i+1 \le r \le 2n}\left\{S_{i,0}^{\mathsf{Soj}}(r-i)\right\},\min_{\substack{1 \le h \le i\\h\neq k}}\left\{S_{i,0}^{\mathsf{Rec}}(h)\right\}\Big\}\Bigg]\\
&=\left(\textrm{Pr}\Bigg[U_{x_{k}}{+}Z_{\ell_{k}} {<}Z_a\big|Z_{\ell_a}>Z_{\ell_{k}}\Bigg]\right)^{2n{-}i} \left(\textrm{Pr}\Bigg[U_{x_{k}}{+}Z_{\ell_{k}} {<}U_b{+}Z_b\Bigg]\right)^{i{-}2}\\
&~~~~~~~~~~~~~~~\times\textrm{Pr}\Bigg[U_{x_{k}}{-}Z_{\ell'_1}{+}Z_{\ell_{k}} {<} U_{x_i}\big|Z_{\ell'_1}>Z_{\ell_k},~U_{x_k}>Z_{\ell'_1}-Z_{\ell_k}\Big],
\end{aligned}
\end{equation}
where $Z_{\ell_k}$, $Z_{\ell'_1}$, $Z_a$, $Z_b$, $U_{x_{k}}$, and $U_b$ are independent random variables, from which $Z_{\ell_k}$, $Z_{\ell'_1}$, $Z_a$, and $Z_b$ follow general distribution $\mathfrak{D}_Z(z)$, and $U_{x_{k}}$ and $U_b$ follow general distribution $\mathfrak{R}_U(u)$.

\textbullet\hspace{.5mm} \textbf{Case 3 (computation of $p_{i,j}^{k}$ when $1 \le i \le n-1$, $1 \le j \le i+1$, and $1 \le k \le i$):} In this case, we have

\begin{equation}\label{transition_prob3}
\begin{aligned}
p_{i,j}^{k} &{=}\textrm{Pr}\Bigg[S_{i,j}^{\mathsf{Rec}}(k) {<} \min\Big\{\min_{i+1 \le r \le 2n}\left\{S_{i,j}^{\mathsf{Soj}}(r-i)\right\},\min_{\substack{1\le h\le j-1\\h\neq k}}\left\{S_{i,j}^{\mathsf{Rec}}(h)\right\},\min_{\substack{j \le g \le i \\ g\neq k}}\left\{S_{i,j}^{\mathsf{Rec}}(g)\right\}\Big\}\Bigg]\\
&=\left(\textrm{Pr}\Bigg[U_{x_{k}}{+}Z_{\ell_{k}} {<}Z_{\ell_a}\Bigg]\right)^{2n-i} \left(\textrm{Pr}\Bigg[U_{x_{k}}{+}Z_{\ell_{k}} {<}U_{x_b}+Z_b\Bigg]\right)^{i-1},
\end{aligned}
\end{equation}
where $Z_a$, $Z_b$, and $U_b$ are three independent random variables, from which $Z_a$ and $Z_b$ follow general distributions $\mathfrak{D}_Z(z)$ and $U_b$ follow general distribution $\mathfrak{R}_U(u)$

\end{theorem}

\begin{proof} The probability of transition from $S_{i,j}$ to $S_{i-1,k}$ (i.e., $p_{i,j}^{k}$) is equal to the probability that the recruiter $k$ completes its recruitment operation before any other events (i.e., departure and recruitment events of other vehicles). Mathematically, we get
\begin{equation}\label{pijk}
\begin{aligned}
p_{i,j}^{k} {=}\textrm{Pr}\Bigg[S_{i,j}^{\mathsf{Rec}}(k) {<} \min\Big\{\min_{i+1 \le r \le 2n}\left\{S_{i,j}^{\mathsf{Soj}}(r-i)\right\},\min_{\substack{1 \le h \le i\\ h \neq k}}\left\{S_{i,j}^{\mathsf{Rec}}(h)\right\}\Big\}\Bigg].
\end{aligned}
\end{equation}
Moreover, from Decomposition Theorem (Theorem \ref{decompositionTheorem}), we have (i) $S_{i,0}^{\mathsf{Soj}}{=}\mathcal{L}_1^{(\dot{\mathfrak{X}})}\hspace{-1.4mm}\left\langle {2n{-}i}\right\rangle$ and $S_{i,0}^{\mathsf{Rec}}=\mathcal{H}_1^{(\varphi;\mathfrak{R})}\hspace{-1.4mm}\left\langle {i{-}1} \hspace{-0.3mm};\hspace{-0.3mm} 1 \right\rangle$, and (ii) $S_{i,j}^{\mathsf{Soj}}=\mathcal{L}_2^{(\ddot{\mathfrak{X}})}\hspace{-1.4mm}\left\langle 2n{-}i\right\rangle$ and $S_{i,j}^{\mathsf{Rec}}{=}\mathcal{H}_2^{(\gamma;\psi)}\hspace{-1.4mm}\left\langle j{-}1 \hspace{-0.3mm};\hspace{-0.3mm} i{-}(j{-}1) \right\rangle$. To prove the above-mentioned three cases, we first expand (i) and (ii) and rewrite them in terms of their random variables below.

\textbf{(i):} According to Definition \ref{def19} and Definition \ref{phidist}, we have
\begin{equation}
    \mathcal{H}_1^{(\varphi;\mathfrak{R})}\hspace{-1.4mm}\left\langle {i{-}1} \hspace{-0.3mm};\hspace{-0.3mm} 1 \right\rangle = \left[U_{x_1}-Z_{\ell'_1}+Z_{\ell_1},U_{x_2}-Z_{\ell'_2}+Z_{\ell_2},\cdots,U_{x_{i{-}1}}-Z_{\ell'_{i{-}1}}+Z_{\ell_{i{-}1}}, U_{x_i}\right],
\end{equation}
where for the two last elements of the list, we have $U_{x_{i{-}1}}-Z_{\ell'_{i{-}1}}+Z_{\ell_{i{-}1}} {\dot{\preceq}} U_{x_i}$, and for the rest of the elements $1 \le h \le i-2$, we have $U_{x_h}-Z_{\ell'_h}+Z_{\ell_h} {\dot{\preceq}} U_{x_{h+1}}-Z_{\ell'_{h+1}}+Z_{\ell_{h+1}}$. Likewise, according to Definition \ref{def19} and Definition \ref{expdist}, we have
\begin{equation}
    \mathcal{L}_1^{(\dot{\mathfrak{X}})}\hspace{-1.4mm}\left\langle {2n{-}i}\right\rangle = \left[Z_{\ell_{i{+}1}}-Z_{\ell'_{i{+}1}},Z_{\ell_{i{+}2}}-Z_{\ell'_{i{+}2}},\cdots,Z_{\ell_{2n}}-Z_{\ell'_{2n}}\right],
\end{equation}
where, for all $i+1 \le r \le 2n-1$, it holds that $Z_{\ell_r}-Z_{\ell'_r} {\dot{\preceq}} Z_{\ell_{r{+}1}}-Z_{\ell'_{r{+}1}}$. From Decomposition Theorem, we also have $\mathcal{L}_1^{(\dot{\mathfrak{X}})}\hspace{-1.4mm}\left\langle {2n{-}i}\right\rangle {\dot{\preceq}} \mathcal{H}_1^{(\varphi;\mathfrak{R})}\hspace{-1.4mm}\left\langle {i{-}1} \hspace{-0.3mm};\hspace{-0.3mm} 1 \right\rangle$. Further, from Definition \ref{reducer}, we have $\hat{\mathbf{\delta}}(U_{x_h}-Z_{\ell'_h}+Z_{\ell_h})=Z_{\ell'_h}$ and $\hat{\mathbf{\delta}}(Z_{\ell_r}-Z_{\ell'_r})=Z_{\ell'_r}$. Moreover, according to Definition \ref{CG}, if relation ${\dot{\preceq}}$ holds between two EDVs $V_1$ and $V_2$, we have $\hat{\mathbf{\delta}}(V_1)=\hat{\mathbf{\delta}}(V_2)$. Subsequently, from Definition \ref{reducer} and Definition \ref{CG}, we have
\begin{equation}
\begin{aligned}
\hat{\mathbf{\delta}}(U_{x_h}-Z_{\ell'_h}+Z_{\ell_h})=\hat{\mathbf{\delta}}(U_{x_{1}}-Z_{\ell'_1}+Z_{\ell_{1}}) => Z_{\ell'_h}=Z_{\ell'_1}, 1 \le h \le i-1.
\end{aligned}
\end{equation}
Likewise, from Definition \ref{reducer}, Definition \ref{CG}, and Decomposition Theorem, we get
\begin{equation}
\begin{aligned}
\hat{\mathbf{\delta}}(Z_{\ell_{r}}-Z_{\ell'_r})=\hat{\mathbf{\delta}}(U_{x_{1}}-Z_{\ell'_1}+Z_{\ell_{1}}) => Z_{\ell'_r}=Z_{\ell'_1}, i+1 \le r \le 2n.
\end{aligned}
\end{equation}
Consequently, we can expand  $S_{i,0}^{\mathsf{Soj}}{=}\mathcal{L}_1^{(\dot{\mathfrak{X}})}\hspace{-1.4mm}\left\langle {2n{-}i}\right\rangle$ and $S_{i,0}^{\mathsf{Rec}}=\mathcal{H}_1^{(\varphi;\mathfrak{R})}\hspace{-1.4mm}\left\langle {i{-}1} \hspace{-0.3mm};\hspace{-0.3mm} 1 \right\rangle$ as follows
\begin{equation}\label{Us}
    \mathcal{H}_1^{(\varphi;\mathfrak{R})}\hspace{-1.4mm}\left\langle {i{-}1} \hspace{-0.3mm};\hspace{-0.3mm} 1 \right\rangle = \left[U_{x_1}-Z_{\ell'_1}+Z_{\ell_1},\cdots,U_{x_h}-Z_{\ell'_1}+Z_{\ell_h},\cdots,U_{x_{i{-}1}}-Z_{\ell'_1}+Z_{\ell_{i{-}1}}, U_{x_i}\right]
\end{equation}
and
\begin{equation}\label{Zs}
    \mathcal{L}_1^{(\dot{\mathfrak{X}})}\hspace{-1.4mm}\left\langle {2n{-}i}\right\rangle = \left[Z_{\ell_{i{+}1}}-Z_{\ell'_1},\cdots,Z_{\ell_{r}}-Z_{\ell'_1},\cdots,Z_{\ell_{2n}}-Z_{\ell'_1}\right].
\end{equation}
Moreover, for the random variables used in~\eqref{Us} and \eqref{Zs},  from Definition \ref{expdist} and Definition \ref{phidist}, the following conditions hold (c1) $Z_{\ell'_1}>Z_{\ell_h},~U_{x_h}>Z_{\ell'_1}-Z_{\ell_h},~ 1 \le h \le i-1$ and (c2) $Z_{\ell_{r}}>Z_{\ell'_1},~ i+1 \le r \le 2n$. Since we will use these conditions later, we compact them through two sets as follows:
\begin{equation}\label{}
   \varphi^{\mathsf{C}}_{h}=\{Z_{\ell'_1}>Z_{\ell_h},~U_{x_h}>Z_{\ell'_1}-Z_{\ell_h}\},~ 1 \le h \le i-1
\end{equation}
and
\begin{equation}
   \dot{\mathfrak{X}}^{\mathsf{C}}_{r}=\{Z_{\ell_{r}}>Z_{\ell'_1}\}~ i+1 \le r \le 2n.
\end{equation}

\textbf{(ii):} We initially expand $\mathcal{H}_2^{(\gamma;\psi)}\hspace{-1.4mm}\left\langle j{-}1 \hspace{-0.3mm};\hspace{-0.3mm} i{-}(j{-}1) \right\rangle$ according to Definition \ref{def19}, Definition \ref{psidist}, and Definition \ref{gammadist}  as
\begin{equation}
\begin{aligned}
    \mathcal{H}_2^{(\gamma;\psi)}\hspace{-1.4mm}\left\langle j{-}1 \hspace{-0.3mm};\hspace{-0.3mm} i{-}(j{-}1) \right\rangle = &\Big[(U_{x_1}-Z_{\ell'_1}+Z_{\ell_{1}})-U_{x'_1},\cdots,(U_{x_{j-1}}-Z_{\ell'_{j-1}}+Z_{\ell_{j-1}})-U_{x'_{j-1}}\\
    &,U_{x_j}-(U_{x'_j}-Z_{\ell_j}+Z_{\ell'_j}),\cdots,U_{x_i}-(U_{x'_{i}}-Z_{\ell_i}+Z_{\ell'_{i}})\Big],
\end{aligned}
\end{equation}
where $(U_{x_h}-Z_{\ell'_h}+Z_{\ell_h})-U_{x'_h} {\dot{\preceq}} (U_{x_{h{+}1}}-Z_{\ell'_{h{+}1}}+Z_{\ell_{h{+}1}})-U_{x'_{h{+}1}}$ for all $1 \le h \le j-1$, $(U_{x_{j-1}}-Z_{\ell'_{j-1}}+Z_{\ell_{j-1}})-U_{x'_{j-1}} {\dot{\preceq}} U_{x_j}-(U_{x'_j}-Z_{\ell_j}+Z_{\ell'_j})$, and $U_{x_g}-(U_{x'_g}-Z_{\ell_g}+Z_{\ell'_g}) {\dot{\preceq}} (U_{x_{g{+}1}}-Z_{\ell'_{g{+}1}}+Z_{\ell_{g{+}1}})-U_{x'_{g{+}1}}$ for all $j \le g \le i$. Likewise, for $\mathcal{L}_2^{(\ddot{\mathfrak{X}})}\hspace{-1.4mm}\left\langle 2n{-}i\right\rangle$, according to Definition \ref{def19} and Definition \ref{expdist}, we get
\begin{equation}
    \mathcal{L}_2^{(\ddot{\mathfrak{X}})}\hspace{-1.4mm}\left\langle 2n{-}i\right\rangle = \left[Z_{\ell_{i{+}1}}-U_{x'_{i{+}1}}-Z_{\ell'_{i{+}1}},Z_{\ell_{i{+}2}}-U_{x'_{i{+}2}}-Z_{\ell'_{i{+}2}},\cdots,Z_{\ell_{2n}}-U_{x'_{2n}}-Z_{\ell'_{2n}}\right],
\end{equation}
where $Z_{\ell_r}{-}U_{x'_r}{-}Z_{\ell'_r} {\dot{\preceq}} Z_{\ell_{r{+}1}}{-}U_{x'_{r{+}1}}{-}Z_{\ell'_{r{+}1}}$ for all $i+1 \le r \le 2n$. From Decomposition Theorem, we also have $\mathcal{L}_2^{(\ddot{\mathfrak{X}})}\hspace{-1.4mm}\left\langle 2n{-}i\right\rangle {\dot{\preceq}} \mathcal{H}_2^{(\gamma;\psi)}\hspace{-1.4mm}\left\langle j{-}1 \hspace{-0.3mm};\hspace{-0.3mm} i{-}(j{-}1) \right\rangle$. Further, from Definition \ref{reducer}, we have $\hat{\mathbf{\delta}}((U_{x_h}-Z_{\ell'_h}+Z_{\ell_h})-U_{x'_h})=U_{x'_h}+Z_{\ell'_h}$, $\hat{\mathbf{\delta}}(U_{x_g}-(U_{x'_g}-Z_{\ell_g}+Z_{\ell'_g}))=U_{x'_g}+Z_{\ell'_g}$, and  $\hat{\mathbf{\delta}}(Z_{\ell_r}{-}U_{x'_r}{-}Z_{\ell'_r})=U_{x'_r}+Z_{\ell'_r}$. Moreover, according to Definition \ref{CG}, if relation ${\dot{\preceq}}$ holds between two EDVs $V_1$ and $V_2$, we have $\hat{\mathbf{\delta}}(V_1){=}\hat{\mathbf{\delta}}(V_2)$. Therefore, from Definition \ref{reducer} and Definition \ref{CG}, we have
\begin{equation}
\hat{\mathbf{\delta}}((U_{x_h}-Z_{\ell'_h}+Z_{\ell_h})-U_{x'_h})=\hat{\mathbf{\delta}}((U_{x_1}-Z_{\ell'_1}+Z_{\ell_{1}})-U_{x'_1})=> U_{x'_h}+Z_{\ell'_h}=U_{x'_1}+Z_{\ell'_1}, 1 \le h \le j-1,
\end{equation}
and
\begin{equation}
\hat{\mathbf{\delta}}(U_{x_g}-(U_{x'_g}-Z_{\ell_g}+Z_{\ell'_g}))=\hat{\mathbf{\delta}}((U_{x_1}-Z_{\ell'_1}+Z_{\ell_{1}})-U_{x'_1}) => U_{x'_g}+Z_{\ell'_g}=U_{x'_1}+Z_{\ell'_1}, j \le g \le i.
\end{equation}
Likewise, from Definition \ref{reducer}, Definition \ref{CG}, and Decomposition Theorem, we have
\begin{equation}
\hat{\mathbf{\delta}}(Z_{\ell_r}{-}U_{x'_r}{-}Z_{\ell'_r})=\hat{\mathbf{\delta}}((U_{x_1}-Z_{\ell'_1}+Z_{\ell_{1}})-U_{x'_1})=> U_{x'_r}+Z_{\ell'_r}=U_{x'_1}+Z_{\ell'_1}, i+1 \le r \le 2n.
\end{equation}
Consequently, we can finally expand   $\mathcal{H}_2^{(\gamma;\psi)}\hspace{-1.4mm}\left\langle j{-}1 \hspace{-0.3mm};\hspace{-0.3mm} i{-}(j{-}1) \right\rangle$ as
\begin{equation}\label{Us2}
\begin{aligned}
    \mathcal{H}_2^{(\gamma;\psi)}\hspace{-1.4mm}\left\langle j{-}1 \hspace{-0.3mm};\hspace{-0.3mm} i{-}(j{-}1) \right\rangle = &\Big[(U_{x_1}-Z_{\ell'_1}+Z_{\ell_{1}})-U_{x'_1},\cdots,(U_{x_{j-1}}-Z_{\ell'_{1}}+Z_{\ell_{j-1}})-U_{x'_{1}}\\
    &,U_{x_j}-(U_{x'_1}-Z_{\ell_j}+Z_{\ell'_1}),\cdots,U_{x_i}-(U_{x'_{1}}-Z_{\ell_i}+Z_{\ell'_{1}})\Bigg].
\end{aligned}
\end{equation}
and $\mathcal{L}_2^{(\ddot{\mathfrak{X}})}\hspace{-1.4mm}\left\langle 2n{-}i\right\rangle $ as
\begin{equation}\label{Zs2}
    \mathcal{L}_2^{(\ddot{\mathfrak{X}})}\hspace{-1.4mm}\left\langle 2n{-}i\right\rangle = \left[Z_{\ell_{i{+}1}}-U_{x'_1}-Z_{\ell'_1},\cdots,Z_{\ell_{r}}-U_{x'_1}-Z_{\ell'_1},\cdots,Z_{\ell_{2n}}-U_{x'_1}-Z_{\ell'_1}\right].
\end{equation}
Moreover, for random variables in the lists, considering Definition \ref{psidist} and Definition \ref{gammadist}, the following conditions hold  (c1) $Z_{\ell'_1}-Z_{\ell_{h}},~U_{x_h}>Z_{\ell'_1}-Z_{\ell_{h}},~(U_{x_h}-Z_{\ell'_1}+Z_{\ell_{h}})>U_{x'_1},~1 \le h \le j-1$, (c2) $Z_{\ell_g}>Z_{\ell'_1},~ U_{x'_1}>Z_{\ell_g}-Z_{\ell'_1} ,U_{x_g}>(U_{x'_1}-Z_{\ell_g}+Z_{\ell'_1}),~ j \le g \le i$, and (c3) $Z_{\ell_{r}}>U_{x'_1}+Z_{\ell'_1}~ i+1 \le r \le 2n$.
Since we will use these conditions later, we compact them through three sets as follows:

\begin{equation}
   \gamma^{\mathsf{C}}_{h}=\{Z_{\ell'_1}-Z_{\ell_{h}},~U_{x_h}>Z_{\ell'_1}-Z_{\ell_{h}},~(U_{x_h}-Z_{\ell'_1}+Z_{\ell_{h}})>U_{x'_1}\},~ 1 \le h \le j-1,
\end{equation}
\begin{equation}
   \psi^{\mathsf{C}}_{g}=\{Z_{\ell_g}>Z_{\ell'_1},~ U_{x'_1}>Z_{\ell_g}-Z_{\ell'_1} ,U_{x_g}>(U_{x'_1}-Z_{\ell_g}+Z_{\ell'_1})\},~ j \le g \le i,
\end{equation}
and
\begin{equation}
   \ddot{\mathfrak{X}}^{\mathsf{C}}_{r}=\{Z_{\ell_{r}}>U_{x'_1}+Z_{\ell'_1}\}~ i+1 \le r \le 2n.
\end{equation}
Subsequently, we prove Theorem \ref{transition_prob_theorem} for the following three cases:

\textbullet\hspace{.5mm} \textbf{Case 1 (computation of $p_{i,j}^{k}$ when $j=0$, $i\in \{0,\cdots,n\}$, and $k=i$):}
According to \eqref{pijk}, the transition probability of D-SMP from state $S_{i,0}$ to $S_{i-1,k}$, where $1 \le i \le n$ and $k=i$, is as follows:
\begin{equation}\label{pijk_1}
\begin{aligned}
&p_{i,0}^{i}{=}\textrm{Pr}\Bigg[S_{i,0}^{\mathsf{Rec}}(i) {<} \min\Big\{\min_{i+1 \le r \le 2n}\left\{S_{i,0}^{\mathsf{Soj}}(r-i)\right\},\min_{\substack{1 \le h \le i-1}}\left\{S_{i,0}^{\mathsf{Rec}}(h)\right\}\Big\}\Bigg]\\
&=\textrm{Pr}\Bigg[\mathcal{H}_1^{(\varphi;\mathfrak{R})}\hspace{-1.4mm}\left\langle {i{-}1} \hspace{-0.3mm};\hspace{-0.3mm} 1 \right\rangle(i) {<} \min\Big\{\min_{i+1 \le r \le 2n}\left\{\mathcal{L}_1^{(\dot{\mathfrak{X}})}\hspace{-1.4mm}\left\langle {2n{-}i}\right\rangle(r-i)\right\},\min_{\substack{1 \le h \le i-1}}\left\{\mathcal{H}_1^{(\varphi;\mathfrak{R})}\hspace{-1.4mm}\left\langle {i{-}1} \hspace{-0.3mm};\hspace{-0.3mm} 1 \right\rangle(h)\right\}\Big\}\Bigg].
\end{aligned}
\end{equation}
For all $1\le h \le i-1$ and $i+1 \le r \le 2n$, substituting \eqref{Us} and \eqref{Zs} in \eqref{pijk_1}, and conditioning on $\varphi^{\mathsf{C}}_{h}$ and $\dot{\mathfrak{X}}^{\mathsf{C}}_{r}$, yield
\begin{align}\label{pijk_2}
p_{i,0}^{i} &{=}\textrm{Pr}\Bigg[U_{x_i} {<} \min\Big\{\min\left[Z_{\ell_{i{+}1}}{-}Z_{\ell'_1},\cdots,Z_{\ell_r}{-}Z_{\ell'_1},\cdots,Z_{\ell_{2n}}{-}Z_{\ell'_1}\right]\nonumber\\
&~~~~~~~~~~~~~~~~~~~~~,\min\left[U_{x_1}{-}Z_{\ell'_1}{+}Z_{\ell_{1}},\cdots,U_{x_h}{-}Z_{\ell'_1}{+}Z_{\ell_h},\cdots,U_{i{{-}}1}{-}Z_{\ell'_1}{+}Z_{\ell_{i{-}1}}\right]\Big\}\nonumber\\
&~~~~~~~~~~~~~~~~~~~~\Big|\left\{\dot{\mathfrak{X}}^{\mathsf{C}}_{r}|i+1 \le r \le 2n\right\},\left\{\varphi^{\mathsf{C}}_{h}|1 \le h \le i-1\right\}\Bigg]\nonumber\\
&{=}\textrm{Pr}\Bigg[U_{x_i} {<} \min\Big[Z_{\ell_{i{+}1}}{-}Z_{\ell'_1},\cdots,Z_{\ell_r}{-}Z_{\ell'_1},\cdots,Z_{\ell_{2n}}{-}Z_{\ell'_1}\nonumber\\
&~~~~~~~~~~~~~~~~~~~~~~~~~~~~~,U_{x_1}{-}Z_{\ell'_1}{+}Z_{\ell_{1}},\cdots,U_{x_h}{-}Z_{\ell'_1}{+}Z_{\ell_h},\cdots,U_{x_{i{-}1}}{-}Z_{\ell'_1}{+}Z_{\ell_{i{-}1}}\Bigg]\nonumber\\
&~~~~~~~~~~~~~~~~\Big|\left\{\dot{\mathfrak{X}}^{\mathsf{C}}_{r}|i+1 \le r \le 2n\right\},\left\{\varphi^{\mathsf{C}}_{h}|1 \le h \le i-1\right\}\Bigg]\nonumber\\
&\overset{(e_1)}{=}\textrm{Pr}\Bigg[U_{x_i} {<} \min\Big[Z_{\ell_{i{+}1}}{-}Z_{\ell'_1},\cdots,Z_{\ell_r}{-}Z_{\ell'_1},\cdots,Z_{\ell_{2n}}{-}Z_{\ell'_1}\nonumber\\
&~~~~~~~~~~~~~~~~~~~~~~~~~~~~~,U_{x_1}{-}Z_{\ell'_1}{+}Z_{\ell_{1}},\cdots,U_{x_h}{-}Z_{\ell'_1}{+}Z_{\ell_h},\cdots,U_{x_{i{-}1}}{-}Z_{\ell'_1}{+}Z_{\ell_{i{-}1}}\Bigg]\nonumber\\
&~~~~~~~~~~~~~\Big|\left\{\dot{\mathfrak{X}}^{\mathsf{C}}_{r}|i+1 \le r \le 2n\right\},\left\{\varphi^{\mathsf{C}}_{h}|1 \le h \le i-1\right\}\Bigg] \nonumber\\
&=\textrm{Pr}\Bigg[U_{x_i} {<} \min\Big[Z_{\ell_{i{+}1}},\cdots,Z_{\ell_r},\cdots,Z_{\ell_{2n}}\nonumber\\
&~~~~~~~~~~~~~~~~~~~~~~~~~~~~~,U_{x_1}{+}Z_{\ell_1},\cdots,U_{x_h}{+}Z_h,\cdots,U_{x_{i{-}1}}{+}Z_{\ell_{i{-}1}}\Big]{-}Z_{\ell'_1}\nonumber\\
&~~~~~~~~~~~~~~~\Big|\left\{\dot{\mathfrak{X}}^{\mathsf{C}}_{r}|i+1 \le r \le 2n\right\},\left\{\varphi^{\mathsf{C}}_{h}|1 \le h \le i-1\right\}\Bigg]\nonumber\\
&=\textrm{Pr}\Bigg[U_{x_{i}}{+}Z_{\ell'_{1}} {<} \min\Big[Z_{\ell_{i{+}1}},\cdots,Z_{\ell_r},\cdots,Z_{\ell_{2n}},U_{x_1}{+}Z_{\ell_1},\cdots,U_{x_h}{+}Z_{\ell_h},\cdots,U_{x_{i{-}1}}{+}Z_{\ell_{i{-}1}}\Big]\Bigg] \nonumber\\
&\overset{(e_2)}{=}\prod_{r=i{+}1}^{2n}\textrm{Pr}\Bigg[U_{x_{i}}{+}Z_{\ell'_{1}} {<}Z_{\ell_r}\big|Z_{\ell_r}>Z_{\ell'_{1}}\Bigg] \prod_{\substack{h=1}}^{i{-}1}\textrm{Pr}\Bigg[U_{x_{i}}{+}Z_{\ell'_{1}} {<}U_{x_h}{+}Z_{\ell_h}\Bigg]\nonumber\\
&\overset{(e_3)}{=}\left(\textrm{Pr}\Bigg[U_{x_{i}}{+}Z_{\ell'_{1}} {<}Z_a\big|Z_{a}>Z_{\ell'_{1}}\Bigg]\right)^{2n{-}i} \left(\textrm{Pr}\Bigg[U_{x_{i}}{+}Z_{\ell'_{1}} {<}U_b{+}Z_b\Bigg]\right)^{i{-}1},
\end{align}
where $Z_{\ell_i}$, $Z_{\ell'_1}$, $Z_a$, $Z_b$, $U_{x_{i}}$, and $U_b$ are independent random variables, from which $Z_{\ell_i}$, $Z_{\ell'_1}$, $Z_a$, and $Z_b$ follow general distribution $\mathfrak{D}_Z(z)$, and $U_{x_{i}}$ and $U_b$ follow general distribution $\mathfrak{R}_U(u)$. Further $(e_1)$ and $(e_2)$ are the results of the independency of the random variables assumed in Sec. IV-B. Moreover, $(e_3)$ is due the fact that $Z_{\ell_r}$ are i.i.d random variables for all $i+1 \le r \le 2n $ and $U_{x_h}$ are i.i.d random variables for all $1 \le h \le i$.

\textbullet\hspace{.5mm} \textbf{Case 2 (computation of $p_{i,j}^{k}$ when $j=0$, $i\in \{0,\cdots,n\}$, and $1 \le k \le i-1$):} According to \eqref{pijk}, the transition probability of D-SMP from state $S_{i,0}$ to $S_{i-1,k}$, where $1 \le i \le n$ and $1 \le k \le i-1$, is as follows:
\begin{equation}\label{pijk_145}
\begin{aligned}
&p_{i,0}^{k}{=}\textrm{Pr}\Bigg[S_{i,0}^{\mathsf{Rec}}(k) {<} \min\Big\{\min_{i+1 \le r \le 2n}\left\{S_{i,0}^{\mathsf{Soj}}(r-i)\right\},\min_{\substack{1 \le h \le i\\h\neq k}}\left\{S_{i,0}^{\mathsf{Rec}}(h)\right\}\Big\}\Bigg]\\
&=\textrm{Pr}\Bigg[\mathcal{H}_1^{(\varphi;\mathfrak{R})}\hspace{-1.4mm}\left\langle {i{-}1} \hspace{-0.3mm};\hspace{-0.3mm} 1 \right\rangle(k) {<} \min\Big\{\min_{i+1 \le r \le 2n}\left\{\mathcal{L}_1^{(\dot{\mathfrak{X}})}\hspace{-1.4mm}\left\langle {2n{-}i}\right\rangle(r-i)\right\},\min_{\substack{1 \le h \le i\\h\neq k}}\left\{\mathcal{H}_1^{(\varphi;\mathfrak{R})}\hspace{-1.4mm}\left\langle {i{-}1} \hspace{-0.3mm};\hspace{-0.3mm} 1 \right\rangle(h)\right\}\Big\}\Bigg].
\end{aligned}
\end{equation}
For all $1 \le h \le i-1,1 \le k \le i-1$, where $h\neq k$, and for all $i+1 \le r \le 2n$, replacing \eqref{Us} and \eqref{Zs} in \eqref{pijk_145}, and conditioning on $\varphi^{\mathsf{C}}_{h}$, $\varphi^{\mathsf{C}}_{k}$, and $\dot{\mathfrak{X}}^{\mathsf{C}}_{r}$, results in

\begin{align}\label{pijk_3}
p_{i,0}^{k} &{=}\textrm{Pr}\Bigg[U_{x_{k}}{-}Z_{\ell'_1}{+}Z_{\ell_{k}} {<} \min\Big\{\min\left[Z_{\ell_{i{+}1}}{-}Z_{\ell'_1},\cdots,Z_{\ell_r}{-}Z_{\ell'_1},\cdots,Z_{\ell_{2n}}{-}Z_{\ell'_1}\right]\nonumber\\
&~~~~~~~~~~~~~~~~~~~~~,\min\left[U_{x_1}{-}Z_{\ell'_1}{+}Z_{\ell_{1}},\cdots,U_{x_h}{-}Z_{\ell'_1}{+}Z_{\ell_h},\cdots,U_{i{{-}}1}{-}Z_{\ell'_1}{+}Z_{\ell_{i{-}1}}, U_{x_i}\right]\Big\}\nonumber\\
&~~~~~~~~~~~~~~~~~~~~\Big|\left\{\varphi^{\mathsf{C}}_{h}|1 \le h \le i-1,  h\neq k\right\}, \left\{\varphi^{\mathsf{C}}_{k}\right\},\left\{\dot{\mathfrak{X}}^{\mathsf{C}}_{r}|i+1 \le r \le 2n\right\}\Bigg]\nonumber\\
&{=}\textrm{Pr}\Bigg[U_{x_{k}}{-}Z_{\ell'_1}{+}Z_{\ell_{k}} {<} \min\Big[Z_{\ell_{i{+}1}}{-}Z_{\ell'_1},\cdots,Z_{\ell_r}{-}Z_{\ell'_1},\cdots,Z_{\ell_{2n}}{-}Z_{\ell'_1}\nonumber\\
&~~~~~~~~~~~~~~~~~~~~~~~~~~~~~,U_{x_1}{-}Z_{\ell'_1}{+}Z_{\ell_{1}},\cdots,U_{x_h}{-}Z_{\ell'_1}{+}Z_{\ell_h},\cdots,U_{x_{i{-}1}}{-}Z_{\ell'_1}{+}Z_{\ell_{i{-}1}}, U_{x_i}\Bigg]\nonumber\\
&~~~~~~~~~~~~~~~~\Big|\left\{\varphi^{\mathsf{C}}_{h}|1 \le h \le i-1,  h\neq k\right\}, \left\{\varphi^{\mathsf{C}}_{k}\right\},\left\{\dot{\mathfrak{X}}^{\mathsf{C}}_{r}|i+1 \le r \le 2n\right\}\Bigg]\nonumber\\
&\overset{(e_1)}{=}\textrm{Pr}\Bigg[U_{x_{k}}{-}Z_{\ell'_1}{+}Z_{\ell_{k}} {<} \min\Big[Z_{\ell_{i{+}1}}{-}Z_{\ell'_1},\cdots,Z_{\ell_r}{-}Z_{\ell'_1},\cdots,Z_{\ell_{2n}}{-}Z_{\ell'_1}\nonumber\\
&~~~~~~~~~~~~~~~~~~~~~~~~~~~~~,U_{x_1}{-}Z_{\ell'_1}{+}Z_{\ell_{1}},\cdots,U_{x_h}{-}Z_{\ell'_1}{+}Z_{\ell_h},\cdots,U_{x_{i{-}1}}{-}Z_{\ell'_1}{+}Z_{\ell_{i{-}1}}\Bigg]\nonumber\\
&~~~~~~~~~~~~~\Big|\left\{\varphi^{\mathsf{C}}_{h}|1 \le h \le i-1,  h\neq k\right\}, \left\{\varphi^{\mathsf{C}}_{k}\right\},\left\{\dot{\mathfrak{X}}^{\mathsf{C}}_{r}|i+1 \le r \le 2n\right\}\Bigg]\nonumber\\
&~~~~~~~~~~~~~~~~~\times\textrm{Pr}\Bigg[U_{x_{k}}{-}Z_{\ell'_1}{+}Z_{\ell_{k}} {<} U_{x_i}\big|Z_{\ell'_1}>Z_{\ell_k},~U_{x_k}>Z_{\ell'_1}-Z_{\ell_k}\Bigg] \nonumber\\
&=\textrm{Pr}\Bigg[U_{x_{k}}{-}Z_{\ell'_1}{+}Z_{\ell_{k}} {<} \min\Big[Z_{\ell_{i{+}1}},\cdots,Z_{\ell_r},\cdots,Z_{\ell_{2n}}\nonumber\\
&~~~~~~~~~~~~~~~~~~~~~~~~~~~~~,U_{x_1}{+}Z_{\ell_1},\cdots,U_{x_h}{+}Z_h,\cdots,U_{x_{i{-}1}}{+}Z_{\ell_{i{-}1}}\Big]{-}Z_{\ell'_1}\nonumber\\
&~~~~~~~~~~~~~~~\Big|\left\{\varphi^{\mathsf{C}}_{h}|1 \le h \le i-1,  h\neq k\right\}, \left\{\varphi^{\mathsf{C}}_{k}\right\},\left\{\dot{\mathfrak{X}}^{\mathsf{C}}_{r}|i+1 \le r \le 2n\right\}\Bigg]\nonumber\\
&~~~~~~~~~~~~~~~~~~~~~~~~~~~~~\times\textrm{Pr}\Bigg[U_{x_{k}}{-}Z_{\ell'_1}{+}Z_{\ell_{k}} {<} U_{x_i}\big|Z_{\ell'_1}>Z_{\ell_k},~U_{x_k}>Z_{\ell'_1}-Z_{\ell_k}\Bigg] \nonumber\\
&=\textrm{Pr}\Bigg[U_{x_{k}}{+}Z_{\ell_{k}} {<} \min\Big[Z_{\ell_{i{+}1}},\cdots,Z_{\ell_r},\cdots,Z_{\ell_{2n}},U_{x_1}{+}Z_{\ell_1},\cdots,U_{x_h}{+}Z_{\ell_h},\cdots,U_{x_{i{-}1}}{+}Z_{\ell_{i{-}1}}\Big]\Bigg] \nonumber\\
&~~~~~~~~~~~~~~~~~~~~~~~~~~~~~\times\textrm{Pr}\Bigg[U_{x_{k}}{-}Z_{\ell'_1}{+}Z_{\ell_{k}} {<} U_{x_i}\big|Z_{\ell'_1}>Z_{\ell_k},~U_{x_k}>Z_{\ell'_1}-Z_{\ell_k}\Bigg] \nonumber\\
&\overset{(e_2)}{=}\prod_{r=i{+}1}^{2n}\textrm{Pr}\Bigg[U_{x_{k}}{+}Z_{\ell_{k}} {<}Z_{\ell_r}\big|Z_{\ell_r}>Z_{\ell_{k}}\Bigg] \prod_{\substack{h=1,\\h\neq k}}^{i{-}1}\textrm{Pr}\Bigg[U_{x_{k}}{+}Z_{\ell_{k}} {<}U_{x_h}{+}Z_{\ell_h}\Bigg]\nonumber\\
&~~~~~~~~~~~~~~~~~~~~~~~~~\times\textrm{Pr}\Bigg[U_{x_{k}}{-}Z_{\ell'_1}{+}Z_{\ell_{k}} {<} U_{x_i}\big|Z_{\ell'_1}>Z_{\ell_k},~U_{x_k}>Z_{\ell'_1}-Z_{\ell_k}\Bigg] \nonumber\\
&\overset{(e_3)}{=}\left(\textrm{Pr}\Bigg[U_{x_{k}}{+}Z_{\ell_{k}} {<}Z_a\big|Z_{a}>Z_{\ell_{k}}\Bigg]\right)^{2n{-}i} \left(\textrm{Pr}\Bigg[U_{x_{k}}{+}Z_{\ell_{k}} {<}U_b{+}Z_b\Bigg]\right)^{i{-}2}\nonumber\\
&~~~~~~~~~~~~~~~~~~~~~~~~~\times\textrm{Pr}\Bigg[U_{x_{k}}{-}Z_{\ell'_1}{+}Z_{\ell_{k}} {<} U_{x_i}\big|Z_{\ell'_1}>Z_{\ell_k},~U_{x_k}>Z_{\ell'_1}-Z_{\ell_k}\Big],
\end{align}
where $Z_{\ell_k}$, $Z_{\ell'_k}$, $Z_a$, $Z_b$, $U_{x_{k}}$, and $U_b$ are independent random variables, from which $Z_{\ell_k}$, $Z_{\ell'_k}$, $Z_a$, and $Z_b$ follow general distribution $\mathfrak{D}_Z(z)$, and $U_{x_{k}}$ and $U_b$ follow general distribution $\mathfrak{R}_U(u)$. Further $(e_1)$ and $(e_2)$ are the results of the independence assumption of the random variables. Moreover, $(e_3)$ is due the fact that $Z_{\ell_r}$ are i.i.d random variables  for all $i+1 \le r \le 2n $, and $U_{x_h}$ are i.i.d random variables for all $1 \le h \le i$.

\textbullet\hspace{.5mm} \textbf{Case 3 (computation of $p_{i,j}^{k}$ when $1 \le i \le n-1$, $1 \le j \le i+1$, and $1 \le k \le i$):} According to \eqref{pijk}, the transition probability of D-SMP from state $S_{i,j}$ to $S_{i-1,k}$, where $1 \le i \le n-1$, $1 \le j \le i+1$, and $1 \le k \le i$, is as follows:
\begin{align}\label{pijk_21}
&p_{i,j}^{k} {=}\textrm{Pr}\Bigg[S_{i,j}^{\mathsf{Rec}}(k) {<} \min\Big\{\min_{i+1 \le r \le 2n}\left\{S_{i,j}^{\mathsf{Soj}}(r-i)\right\},\min_{\substack{1 \le h \le j-1\\ h\neq k}}\left\{S_{i,j}^{\mathsf{Rec}}(h)\right\}, \min_{\substack{j \le g \le i \\ g\neq k}}\left\{S_{i,j}^{\mathsf{Rec}}(g)\right\}\Big\}\Bigg]\nonumber\\
&=\textrm{Pr}\Bigg[\mathcal{H}_2^{(\gamma;\psi)}\hspace{-1.4mm}\left\langle j{-}1 \hspace{-0.3mm};\hspace{-0.3mm} i{-}(j{-}1) \right\rangle(k) {<} \min\Big\{\min_{i+1 \le r \le 2n}\left\{\mathcal{L}_2^{(\ddot{\mathfrak{X}})}\hspace{-1.4mm}\left\langle 2n{-}i\right\rangle(r)\right\},\min_{\substack{1\le h\le j-1\\h\neq k}}\left\{\mathcal{H}_2^{(\gamma;\psi)}\hspace{-1.4mm}\left\langle j{-}1 \hspace{-0.3mm};\hspace{-0.3mm} i{-}(j{-}1) \right\rangle(h)\right\}\nonumber\\
&~~~~~~~~~~~~~~~~~~~~~~~~~~~~~~~~~~~~~~~~~~~~~~~~~~~~,\min_{\substack{j\le g \le i\\g\neq k}}\left\{\mathcal{H}_2^{(\gamma;\psi)}\hspace{-1.4mm}\left\langle j{-}1 \hspace{-0.3mm};\hspace{-0.3mm} i{-}(j{-}1) \right\rangle(g)\right\}\Big\}\Bigg].
\end{align}

We break down the proof of case 3 into the following two sub-cases:\par
\textbullet\hspace{.5mm} \textbf{Sub-case [3-a] (when $1 \le k \le j-1$):} Substituting \eqref{Us2} and \eqref{Zs2} in \eqref{pijk_21}, and conditioning on $\ddot{\mathfrak{X}}^{\mathsf{C}}_{r}$, $\gamma^{\mathsf{C}}_{h}$, and $\psi^{\mathsf{C}}_{g}$, for all $i+1 \le r \le 2n$, $1 \le h \le j-1$, and $j \le g \le i$, results in
\begin{align}\label{pijk_22}
p_{i,j}^{k} &{=}\textrm{Pr}\Bigg[(U_{x_{k}}{-}Z_{\ell'_1}{+}Z_{\ell_{k}}){-}U_{x'_1} {<} \min\Big\{\nonumber\\
&~~~~~~~\min\left[Z_{\ell_{i{+}1}}{-}U_{x'_1}{-}Z_{\ell'_1},\cdots,Z_{\ell_{r}}{-}U_{x'_1}{-}Z_{\ell'_1},\cdots,Z_{\ell_{2n}}{-}U_{x'_1}{-}Z_{\ell'_1}\right]\nonumber\\
&~~~~~~~,\min\Big[(U_{x_1}{-}Z_{\ell'_{1}}{+}Z_{\ell_1}){-}U_{x'_{1}},\cdots,(U_{x_h}{-}Z_{\ell'_{1}}{+}Z_{\ell_h}){-}U_{x'_{1}},\cdots,(U_{x_{j-1}}-Z_{\ell'_{1}}+Z_{\ell_{j-1}})-U_{x'_{1}}\nonumber\\
&~~~~~~~,U_{x_{j}}{-}(U_{x'_1}{-}Z_{\ell_{j}}+Z_{\ell'_1}),\cdots,U_{x_{g}}{-}(U_{x'_1}{-}Z_{\ell_{g}}+Z_{\ell'_1}),\cdots,U_{x_{i}}{-}(U_{x'_{1}}{-}Z_{\ell_{i}}+Z_{\ell'_{1}})\Big]\Big\}\nonumber\\
&~~~~~~~~~~~~~~~~~~~~~~~~\Big|\left\{\ddot{\mathfrak{X}}^{\mathsf{C}}_{r}|i+1 \le r \le 2n\right\},\left\{\gamma^{\mathsf{C}}_{h}|1 \le h \le j-1, k\neq h\right\}, \left\{\gamma^{\mathsf{C}}_{k}\right\} ,\left\{\psi^{\mathsf{C}}_{g}|j \le g \le i\right\}\Bigg]\nonumber\\
&{=}\textrm{Pr}\Bigg[U_{x_{k}}{-}Z_{\ell'_1}+Z_{\ell_{k}}{-}U_{x'_1} {<} \min\Big[Z_{\ell_{i{+}1}}{-}U_{x'_1}{-}Z_{\ell'_1},\cdots,Z_{\ell_{r}}{-}U_{x'_1}{-}Z_{\ell'_1},\cdots,Z_{\ell_{2n}}{-}U_{x'_1}{-}Z_{\ell'_1}\nonumber\\
&~~~~~~~~,(U_{x_1}{-}Z_{\ell'_1}+Z_{\ell_1}){-}U_{x'_1},\cdots,(U_{x_h}{-}Z_{\ell'_{1}}{+}Z_{\ell_h}){-}U_{x'_{1}},\cdots,(U_{x_{j-1}}{-}Z_{\ell'_{1}}+Z_{\ell_{j-1}}){-}U_{x'_{1}}\nonumber\\
&~~~~~~~~,U_{x_{j}}{-}(U_{x'_1}{-}Z_{\ell_{j}}+Z_{\ell'_1}),\cdots,U_{x_{j}}{-}(U_{x'_1}{-}Z_{\ell_{j}}+Z_{\ell'_1}),\cdots,U_{x_{i}}{-}(U_{x'_{1}}{-}Z_{\ell_{i}}+Z_{\ell'_{1}})\Bigg]\nonumber\\
&~~~~~~~~~~~~~~~~~~~~~~~~\Big|\left\{\ddot{\mathfrak{X}}^{\mathsf{C}}_{r}|i+1 \le r \le 2n\right\},\left\{\gamma^{\mathsf{C}}_{h}|1 \le h \le j-1, k\neq h\right\}, \left\{\gamma^{\mathsf{C}}_{k}\right\} ,\left\{\psi^{\mathsf{C}}_{g}|j \le g \le i\right\}\Bigg]\nonumber\\
&{=}\textrm{Pr}\Bigg[U_{x_{k}}{-}Z_{\ell'_1}{+}Z_{\ell_{k}}{-}U_{x'_1} {<} \min\Big[Z_{\ell_{i{+}1}},\cdots,Z_{\ell_{r}},\cdots,Z_{\ell_{2n}}\nonumber\\
&~~~~~~~~~~~~~~~~~~~~~~~~,U_{x_1}{+}Z_{\ell_1},\cdots,U_{x_h}{+}Z_{\ell_h},\cdots,U_{j{-}1}{+}Z_{j{-}1}\nonumber\\
&~~~~~~~~~~~~~~~~~~~~~~~~,U_{x_{j}}{+}Z_{\ell_{j}},\cdots,U_{x_{g}}{+}Z_{\ell_{g}},\cdots,U_{x_{i}}{+}Z_{\ell_{i}}\Big]{-}U_{x'_1}{-}Z_{\ell'_1}\nonumber\\
&~~~~~~~~~~~~~~~~~~~~~~~~\Big|\left\{\ddot{\mathfrak{X}}^{\mathsf{C}}_{r}|i+1 \le r \le 2n\right\},\left\{\gamma^{\mathsf{C}}_{h}|1 \le h \le j-1, k\neq h\right\}, \left\{\gamma^{\mathsf{C}}_{k}\right\} ,\left\{\psi^{\mathsf{C}}_{g}|j \le g \le i\right\}\Bigg]\nonumber\\
&{=}\textrm{Pr}\Bigg[U_{x_{k}}{+}Z_{\ell_{k}} {<} \min\Big[Z_{\ell_{i{+}1}},\cdots,Z_{\ell_{r}},\cdots,Z_{\ell_{2n}}\nonumber\\
&~~~~~~~~~~~~~~~~~~~~~~~~,U_{x_1}{+}Z_{\ell_1},\cdots,U_{x_h}{+}Z_{\ell_h},\cdots,U_{j{-}1}{+}Z_{j{-}1}\nonumber\\
&~~~~~~~~~~~~~~~~~~~~~~~~,U_{x_{j}}{+}Z_{\ell_{j}},\cdots,U_{x_{g}}{+}Z_{\ell_{g}},\cdots,U_{x_{i}}{+}Z_{\ell_{i}}\Big]\Bigg]\nonumber\\
&\overset{(e_1)}{=}\prod_{r=i{+}1}^{2n}\textrm{Pr}\Bigg[U_{x_{k}}{+}Z_{\ell_{k}} {<}Z_{\ell_r}\Bigg] \prod_{\substack{h=1,\\h\neq k}}^{i}\textrm{Pr}\Bigg[U_{x_{k}}{+}Z_{\ell_{k}} {<}U_{x_h}+Z_h\Bigg]\nonumber\\
&=\left(\textrm{Pr}\Bigg[U_{x_{k}}{+}Z_{\ell_{k}} {<}Z_{\ell_a}\Bigg]\right)^{2n-i} \left(\textrm{Pr}\Bigg[U_{x_{k}}{+}Z_{\ell_{k}} {<}U_{x_b}+Z_b\Bigg]\right)^{i-1},
\end{align}
where $Z_a$, $Z_b$, and $U_b$ are three independent random variables, from which $Z_a$ and $Z_b$ follow general distributions $\mathfrak{D}_Z(z)$ and $U_b$ follow general distribution $\mathfrak{R}_U(u)$. Further $(e_1)$ is the result of the independence assumption of the random variables.

\textbullet\hspace{.5mm} \textbf{Sub-case [3-b] (when $j \le k \le i$):} Substituting \eqref{Us2} and \eqref{Zs2} in \eqref{pijk_21}, and conditioning on $\ddot{\mathfrak{X}}^{\mathsf{C}}_{r}$, $\gamma^{\mathsf{C}}_{h}$, and $\psi^{\mathsf{C}}_{g}$, for all $i+1 \le r \le 2n$, $1 \le h \le j-1$, and $j \le g \le i$, results in
\begin{align}\label{pijk_23}
p_{i,j}^{k} &{=}\textrm{Pr}\Bigg[U_{x_{k}}{-}(U_{x'_1}{-}Z_{\ell_{k}}{+}Z_{\ell'_1}) {<} \min\Big\{\nonumber\\
&~~~~~~~\min\left[Z_{\ell_{i{+}1}}{-}U_{x'_1}{-}Z_{\ell'_1},\cdots,Z_{\ell_{r}}{-}U_{x'_1}{-}Z_{\ell'_1},\cdots,Z_{\ell_{2n}}{-}U_{x'_1}{-}Z_{\ell'_1}\right]\nonumber\\
&~~~~~~~,\min\Big[(U_{x_1}{-}Z_{\ell'_{1}}{+}Z_{\ell_1}){-}U_{x'_{1}},\cdots,(U_{x_h}{-}Z_{\ell'_{1}}{+}Z_{\ell_h}){-}U_{x'_{1}},\cdots,(U_{x_{j-1}}-Z_{\ell'_{1}}+Z_{\ell_{j-1}})-U_{x'_{1}}\nonumber\\
&~~~~~~~,U_{x_{j}}{-}(U_{x'_1}{-}Z_{\ell_{j}}+Z_{\ell'_1}),\cdots,U_{x_{g}}{-}(U_{x'_1}{-}Z_{\ell_{g}}+Z_{\ell'_1}),\cdots,U_{x_{i}}{-}(U_{x'_{1}}{-}Z_{\ell_{i}}+Z_{\ell'_{1}})\Big]\Big\}\nonumber\\
&~~~~~~~~~~~~~~~~~~~~~~~~\Big|\left\{\ddot{\mathfrak{X}}^{\mathsf{C}}_{r}|i+1 \le r \le 2n\right\},\left\{\gamma^{\mathsf{C}}_{h}|1 \le h \le j-1\right\}, \left\{\psi^{\mathsf{C}}_{k}\right\} ,\left\{\psi^{\mathsf{C}}_{g}|j \le g \le i, k\neq g\right\}\Bigg]\nonumber\\
&{=}\textrm{Pr}\Bigg[U_{x_{k}}{-}Z_{\ell'_1}+Z_{\ell_{k}}{-}U_{x'_1} {<} \min\Big[Z_{\ell_{i{+}1}}{-}U_{x'_1}{-}Z_{\ell'_1},\cdots,Z_{\ell_{r}}{-}U_{x'_1}{-}Z_{\ell'_1},\cdots,Z_{\ell_{2n}}{-}U_{x'_1}{-}Z_{\ell'_1}\nonumber\\
&~~~~~~~~,(U_{x_1}{-}Z_{\ell'_1}+Z_{\ell_1}){-}U_{x'_1},\cdots,(U_{x_h}{-}Z_{\ell'_{1}}{+}Z_{\ell_h}){-}U_{x'_{1}},\cdots,(U_{x_{j-1}}{-}Z_{\ell'_{1}}+Z_{\ell_{j-1}}){-}U_{x'_{1}}\nonumber\\
&~~~~~~~~,U_{x_{j}}{-}(U_{x'_1}{-}Z_{\ell_{j}}+Z_{\ell'_1}),\cdots,U_{x_{j}}{-}(U_{x'_1}{-}Z_{\ell_{j}}+Z_{\ell'_1}),\cdots,U_{x_{i}}{-}(U_{x'_{1}}{-}Z_{\ell_{i}}+Z_{\ell'_{1}})\Bigg]\nonumber\\
&~~~~~~~~~~~~~~~~~~~~~~~~\Big|\left\{\ddot{\mathfrak{X}}^{\mathsf{C}}_{r}|i+1 \le r \le 2n\right\},\left\{\gamma^{\mathsf{C}}_{h}|1 \le h \le j-1\right\}, \left\{\psi^{\mathsf{C}}_{k}\right\} ,\left\{\psi^{\mathsf{C}}_{g}|j \le g \le i, k\neq g\right\}\Bigg]\nonumber\\
&{=}\textrm{Pr}\Bigg[U_{x_{k}}{-}Z_{\ell'_1}{+}Z_{\ell_{k}}{-}U_{x'_1} {<} \min\Big[Z_{\ell_{i{+}1}},\cdots,Z_{\ell_{r}},\cdots,Z_{\ell_{2n}}\nonumber\\
&~~~~~~~~~~~~~~~~~~~~~~~~,U_{x_1}{+}Z_{\ell_1},\cdots,U_{x_h}{+}Z_{\ell_h},\cdots,U_{j{-}1}{+}Z_{j{-}1}\nonumber\\
&~~~~~~~~~~~~~~~~~~~~~~~~,U_{x_{j}}{+}Z_{\ell_{j}},\cdots,U_{x_{g}}{+}Z_{\ell_{g}},\cdots,U_{x_{i}}{+}Z_{\ell_{i}}\Big]{-}U_{x'_1}{-}Z_{\ell'_1}\nonumber\\
&~~~~~~~~~~~~~~~~~~~~~~~~\Big|\left\{\ddot{\mathfrak{X}}^{\mathsf{C}}_{r}|i+1 \le r \le 2n\right\},\left\{\gamma^{\mathsf{C}}_{h}|1 \le h \le j-1\right\}, \left\{\psi^{\mathsf{C}}_{k}\right\} ,\left\{\psi^{\mathsf{C}}_{g}|j \le g \le i, k\neq g\right\}\Bigg]\nonumber\\
&{=}\textrm{Pr}\Bigg[U_{x_{k}}{+}Z_{\ell_{k}} {<} \min\Big[Z_{\ell_{i{+}1}},\cdots,Z_{\ell_{r}},\cdots,Z_{\ell_{2n}}\nonumber\\
&~~~~~~~~~~~~~~~~~~~~~~~~,U_{x_1}{+}Z_{\ell_1},\cdots,U_{x_h}{+}Z_{\ell_h},\cdots,U_{j{-}1}{+}Z_{j{-}1}\nonumber\\
&~~~~~~~~~~~~~~~~~~~~~~~~,U_{x_{j}}{+}Z_{\ell_{j}},\cdots,U_{x_{g}}{+}Z_{\ell_{g}},\cdots,U_{x_{i}}{+}Z_{\ell_{i}}\Big]\Bigg]\nonumber\\
&\overset{(e_1)}{=}\prod_{r=i{+}1}^{2n}\textrm{Pr}\Bigg[U_{x_{k}}{+}Z_{\ell_{k}} {<}Z_{\ell_r}\Bigg] \prod_{\substack{h=1,\\h\neq k}}^{i}\textrm{Pr}\Bigg[U_{x_{k}}{+}Z_{\ell_{k}} {<}U_{x_h}+Z_h\Bigg]\nonumber\\
&=\left(\textrm{Pr}\Bigg[U_{x_{k}}{+}Z_{\ell_{k}} {<}Z_{\ell_a}\Bigg]\right)^{2n-i} \left(\textrm{Pr}\Bigg[U_{x_{k}}{+}Z_{\ell_{k}} {<}U_{x_b}+Z_b\Bigg]\right)^{i-1},
\end{align}
where $Z_a$, $Z_b$, and $U_b$ are three independent random variables, from which $Z_a$ and $Z_b$ follow general distributions $\mathfrak{D}_Z(z)$ and $U_b$ follow general distribution $\mathfrak{R}_U(u)$. Further $(e_1)$ is the result of the independence assumption of the random variables.

Inspecting the results in \eqref{pijk_2} and \eqref{pijk_3} verifies the results in \eqref{transition_prob1} and \eqref{transition_prob2}. Similarly, the results of \eqref{pijk_22} and \eqref{pijk_23} verify the result in \eqref{transition_prob3}. This concludes the proof of the theorem.
\end{proof}

\begin{theorem}\label{expected_sojourn_time_theorem}
Let $S_{i,j}^{\mathsf{Rec}}(h)$ and $S_{i,j}^{\mathsf{Soj}}(r-i)$ refer to the residual \underline{rec}ruitment time of vehicle $h$ and the residual sojourn/\underline{soj}ourn time of vehicle $r$, respectively. When vehicles' sojourn time and recruitment duration follow general distributions $\mathfrak{D}_Z(z)$ and $\mathfrak{R}_U(u)$, respectively, the expected sojourn time of D-SMP in state $S_{i,j}$ can be represented through the following two cases:

\textbullet\hspace{.5mm} \textbf{Case 1 (computation of $p_{i,j}^{k}$ when $j=0$ and $i\in \{0,\cdots,n\}$):}
\begin{align}\label{expected_value1}
\mathbb{E}\left[\widehat{\mathrm{W}}_{i,0}\right] &=\boldsymbol{\int}_{0}^{\infty } \textrm{Pr}\Bigg[\min\Big\{\min_{i+1 \le r \le 2n}\left\{S_{i,j}^{\mathsf{Soj}}(r-i)\right\},\min_{\substack{1 \le h \le i}}\left\{S_{i,j}^{\mathsf{Rec}}(h)\right\}\Big\}>t\Big] dt\\
&=\boldsymbol{\int}_{0}^{\infty } \Bigg(\Big(\textrm{Pr} \left[Z_{a}-Z_{b}>t\big|Z_{a}>Z_{b}\right]\Big)^{2n-i} \nonumber\\
&~~~~~~~~~~~~\times\Big(\textrm{Pr}\left[U_{c}+Z_{c}-Z_{b}>t\big| Z_{b}>Z_{c}, U_{c}>Z_{b}-Z_{c}\right]\Big)^{i-1} \textrm{Pr}\left[ U_{d}>t\right]\Bigg) dt,
\end{align}
where $Z_{a}$, $Z_{b}$, $Z_{c}$, $U_{c}$, and $U_{d}$ are independent random variables, from which $Z_{a}$, $Z_{b}$, and $Z_{c}$ follow general distributions $\mathfrak{D}_Z(z)$, and $U_{c}$ and $U_{d}$ follow general distribution $\mathfrak{R}_U(u)$.

\textbullet\hspace{.5mm} \textbf{Case 2 (computation of $p_{i,j}^{k}$ when $1 \le i \le n-1$ and $1 \le j \le i+1$):}
\begin{align}\label{expected_value2}
\mathbb{E}&\left[\widehat{\mathrm{W}}_{i,j}\right] =\boldsymbol{\int}_{0}^{\infty } \textrm{Pr}\Bigg[\min\Big\{\min_{i+1 \le r \le 2n}\left\{S_{i,j}^{\mathsf{Soj}}(r-i)\right\},\min_{\substack{1\le h\le j-1}}\left\{S_{i,j}^{\mathsf{Rec}}(h)\right\}, \min_{\substack{j\le g \le i}}\left\{S_{i,j}^{\mathsf{Rec}}(g)\right\}\Big\}>t\Big] dt\nonumber\\
&=\boldsymbol{\int}_{0}^{\infty } \Bigg(\Big(\textrm{Pr} \left[Z_{a}-U_{b}-Z_{b}>t\Big|Z_{a}>U_{b}+Z_{b}\right]\Big)^{2n-i}\nonumber\\
&~~~~~~~~~~~~~\times \Big(\textrm{Pr}\left[(U_{c}{-}Z_{b}{+}Z_{c}){-}U_{b}>t\Big|Z_{b}>Z_{c},U_{c}>Z_{b}{-}Z_{c},(U_{c}{-}Z_{b}{+}Z_{c})>U_{b}\right]\Big)^{j{-}1}\nonumber\\
&~~~~~~~~~~~~~\times \Big(\textrm{Pr}\left[U_{d}{-}(U_{b}{-}Z_{d}{+}Z_{b})>t\Big| Z_{d}>Z_{b}, U_{b}>Z_{d}{-}Z_{b}, U_{d}>(U_{b}{-}Z_{d}{+}Z_{b})\right]\Big)^{i{-}(j{-}1)}\Bigg) dt,
\end{align}
where $Z_{a}$, $Z_{b}$, $Z_{c}$, $Z_{d}$, $U_{b}$, $U_{c}$, and $U_{d}$ are independent random variables, from which $Z_{a}$, $Z_{b}$, $Z_{c}$, and $Z_{d}$ follow general distributions $\mathfrak{D}_Z(z)$, and $U_{b}$, $U_{c}$, and $U_{d}$ follow general distribution $\mathfrak{R}_U(u)$.
\end{theorem}

\begin{proof} The expected time that D-SMP resides in state $S_{i,j}\in\hat{\mathbf{S}}_i$ is equivalent to the expected value of {\small$\min\Big\{\min_{i+1 \le r \le 2n}\left\{S_{i,j}^{\mathsf{Soj}}(r-i)\right\},\min_{\substack{h \in \{1,\cdots,i\}}}\left\{S_{i,j}^{\mathsf{Rec}}(h)\right\}\Big\}$}, capturing a situation where a vehicle departs the VC or a recruiter completes its recruitment. Mathematically, we have
\begin{equation}\label{ephi1}
\begin{aligned}
\mathbb{E}\left[\widehat{\mathrm{W}}_{i,j}\right] =\boldsymbol{\int}_{0}^{\infty } \textrm{Pr}\Bigg[\min\Big\{\min_{i+1 \le r \le 2n}\left\{S_{i,j}^{\mathsf{Soj}}(r-i)\right\},\min_{\substack{h \in \{1,\cdots,i\}}}\left\{S_{i,j}^{\mathsf{Rec}}(h)\right\}\Big\}>t\Bigg] dt.
\end{aligned}
\end{equation}
Subsequently, we prove case 1 and case 2 as follows:

\textbullet\hspace{.5mm} \textbf{Case 1 (computation of $p_{i,j}^{k}$ when $j=0$ and $i\in \{0,\cdots,n\}$):}
According to \eqref{ephi1}, the expected sojourn time of D-SMP in state $S_{i,0}$, where {\small$i \in\{0,\cdots,n\}$}, is as follows:
\begin{equation}\label{ephi1_22}
\begin{aligned}
\mathbb{E}\left[\widehat{\mathrm{W}}_{i,0}\right] =\boldsymbol{\int}_{0}^{\infty } \textrm{Pr}\Bigg[\min\Big\{\min_{i+1 \le r \le 2n}\left\{\mathcal{L}_1^{(\dot{\mathfrak{X}})}\hspace{-1.4mm}\left\langle {2n{-}i}\right\rangle(r)\right\},\min_{\substack{1 \le h \le i}}\left\{\mathcal{H}_1^{(\varphi;\mathfrak{R})}\hspace{-1.4mm}\left\langle {i{-}1} \hspace{-0.3mm};\hspace{-0.3mm} 1 \right\rangle(h)\right\}\Big\}>t\Bigg] dt.
\end{aligned}
\end{equation}
Substituting \eqref{Us} and \eqref{Zs} in \eqref{ephi1_22}, and conditioning on $\dot{\mathfrak{X}}^{\mathsf{C}}_{r}$ and $\varphi^{\mathsf{C}}_{h}$, for all $i+1 \le r \le 2n$ and $1\le h \le i-1$, yield
\begin{align}\label{ephi1_2}
\mathbb{E}\left[\widehat{\mathrm{W}}_{i,0}\right] &=\boldsymbol{\int}_{0}^{\infty } \textrm{Pr}\Bigg[\min\Big\{\min\left[Z_{\ell_{i{+}1}}-Z_{\ell'_1},\cdots,Z_{\ell_{r}}-Z_{\ell'_1},\cdots,Z_{\ell_{2n}}-Z_{\ell'_1}\right]\nonumber\\
&~~~~~~~~~~~~~~~~~~~~~~,\min\left[U_{x_1}-Z_{\ell'_1}+Z_{\ell_1},\cdots,U_{x_h}-Z_{\ell'_1}+Z_{\ell_h},\cdots,U_{x_{i{-}1}}-Z_{\ell'_1}+Z_{\ell_{i{-}1}}, U_{x_i}\right]\Big\}>t\nonumber\\
&~~~~~~~~~~~~~~~~~~~~~~~~\Big|\left\{\dot{\mathfrak{X}}^{\mathsf{C}}_{r}|i+1 \le r \le 2n\right\},\left\{\varphi^{\mathsf{C}}_{h}|1 \le h \le i-1\right\}\Big] dt\nonumber\\
&=\boldsymbol{\int}_{0}^{\infty } \textrm{Pr}\Bigg[\min\Big[Z_{\ell_{i{+}1}}-Z_{\ell'_1},\cdots,Z_{\ell_{r}}-Z_{\ell'_1},\cdots,Z_{\ell_{2n}}-Z_{\ell'_1}\nonumber\\
&~~~~~~~~~~~~~~~~~~~~~~,U_{x_1}-Z_{\ell'_1}+Z_{\ell_1},\cdots,U_{x_h}-Z_{\ell'_1}+Z_{\ell_h},\cdots,U_{x_{i{-}1}}-Z_{\ell'_1}+Z_{\ell_{i{-}1}}, U_{x_i}\Big]>t\nonumber\\
&~~~~~~~~~~~~~~~~~~~~~~~~\Big|\left\{\dot{\mathfrak{X}}^{\mathsf{C}}_{r}|i+1 \le r \le 2n\right\},\left\{\varphi^{\mathsf{C}}_{h}|1 \le h \le i-1\right\}\Big] dt\nonumber\\
&=\boldsymbol{\int}_{0}^{\infty } \textrm{Pr}\Bigg[\min\Big[Z_{\ell_{i{+}1}}-Z_{\ell'_1},\cdots,Z_{\ell_{r}}-Z_{\ell'_1},\cdots,Z_{\ell_{2n}}-Z_{\ell'_1}\nonumber\\
&~~~~~~~~~~~~~~~~~~~~~~,U_{x_1}-Z_{\ell'_1}+Z_{\ell_1},\cdots,U_{x_h}-Z_{\ell'_1}+Z_{\ell_h},\cdots,U_{x_{i{-}1}}-Z_{\ell'_1}+Z_{\ell_{i{-}1}}\Big]>t\nonumber\\
&~~~~~~~~~~~~~~~~~~~~~~~~\Big|\left\{\dot{\mathfrak{X}}^{\mathsf{C}}_{r}|i+1 \le r \le 2n\right\},\left\{\varphi^{\mathsf{C}}_{h}|1 \le h \le i-1\right\}\Big] \textrm{Pr}\left[ U_{x_i}>t\right] dt\nonumber\\
&=\boldsymbol{\int}_{0}^{\infty } \textrm{Pr}\Bigg[\min\Big[Z_{\ell_{i{+}1}},\cdots,Z_{\ell_{2n}},U_{x_1}+Z_{\ell_1},U_{x_2}+Z_{\ell_2},\cdots,U_{x_{i{-}1}}+Z_{\ell_{i{-}1}}\Big]-Z_{\ell'_1}>t\nonumber\\
&~~~~~~~~~~~~~~~~~~~~~~~~\Big|\left\{\dot{\mathfrak{X}}^{\mathsf{C}}_{r}|i+1 \le r \le 2n\right\},\left\{\varphi^{\mathsf{C}}_{h}|1 \le h \le i-1\right\}\Big] \textrm{Pr}\left[ U_{x_i}>t\right] dt\nonumber\\
&=\boldsymbol{\int}_{0}^{\infty } \left(\prod_{r=i{+}1}^{2n} \textrm{Pr} \left[Z_{\ell_r}-Z_{\ell'_1}>t\big| \dot{\mathfrak{X}}^{\mathsf{C}}_{r}\right] \times \prod_{\substack{h=1}}^{i{-}1} \textrm{Pr}\left[U_{x_h}+Z_{\ell_h}-Z_{\ell'_{1}}>t\big|\varphi^{\mathsf{C}}_{h}\right]\times \textrm{Pr}\left[U_{x_i}>t\right]\right) dt\nonumber\\
&=\boldsymbol{\int}_{0}^{\infty } \Bigg(\Big(\textrm{Pr} \left[Z_{a}{-}Z_{b}{>}t\big|Z_{a}{>}Z_{b}\right]\Big)^{2n-i}   \Big(\textrm{Pr}\left[U_{c}{+}Z_{c}{-}Z_{b}{>}t\big| Z_{b}{>}Z_{c},U_{c}{>}Z_{b}{-}Z_{c}\right]\Big)^{i-1} \textrm{Pr}\left[ U_{d}>t\right]\Bigg) dt,
\end{align}
where $Z_{a}$, $Z_{b}$, $Z_{c}$, $U_{c}$, and $U_{d}$ are independent random variables, from which $Z_{a}$, $Z_{b}$, and $Z_{c}$ follow general distributions $\mathfrak{D}_Z(z)$, and $U_{c}$ and $U_{d}$ follow general distribution $\mathfrak{R}_U(u)$.

\textbullet\hspace{.5mm} \textbf{Case 2 (computation of $p_{i,j}^{k}$ when $1 \le i \le n-1$ and $1 \le j \le i+1$):}
According to \eqref{ephi1}, the expected sojourn time of D-SMP in state $S_{i,j}$, where $1 \le i \le n-1$ and $1 \le j \le i+1$, is as follows:
\begin{align}\label{ephi2_1}
\mathbb{E}\left[\widehat{\mathrm{W}}_{i,j}\right] &=\boldsymbol{\int}_{0}^{\infty } \textrm{Pr}\Bigg[\min\Bigg\{\min_{i+1 \le r \le 2n}\left\{\mathcal{L}_2^{(\ddot{\mathfrak{X}})}\hspace{-1.4mm}\left\langle 2n{-}i\right\rangle(r-i)\right\},\min_{\substack{1\le h\le j-1}}\left\{\mathcal{H}_2^{(\gamma;\psi)}\hspace{-1.4mm}\left\langle j{-}1 \hspace{-0.3mm};\hspace{-0.3mm} i{-}(j{-}1) \right\rangle(h)\right\}\nonumber\\
&~~~~~~~~~~~~~~~~~~~~~~~~~~~~~~~~~~~~~~~~,\min_{\substack{j\le g \le i}}\left\{\mathcal{H}_2^{(\gamma;\psi)}\hspace{-1.4mm}\left\langle j{-}1 \hspace{-0.3mm};\hspace{-0.3mm} i{-}(j{-}1) \right\rangle(g)\right\}\Bigg\}>t\Bigg] dt.
\end{align}
Substituting \eqref{Us2} and \eqref{Zs2} in \eqref{ephi2_1}, and conditioning on $\ddot{\mathfrak{X}}^{\mathsf{C}}_{r}$, $\gamma^{\mathsf{C}}_{h}$, and $\psi^{\mathsf{C}}_{g}$, for all $i+1 \le r \le 2n$, $1 \le h \le j-1$, and $j \le g \le i$, result in
\begin{align}\label{ephi2_2}
\mathbb{E}\left[\widehat{\mathrm{W}}_{i,j}\right] &=\boldsymbol{\int}_{0}^{\infty } \textrm{Pr}\Bigg[\min\Big[Z_{\ell_{i{+}1}}-U_{x'_1}-Z_{\ell'_1},\cdots,Z_{\ell_{r}}-U_{x'_1}-Z_{\ell'_1},\cdots,Z_{\ell_{2n}}-U_{x'_1}-Z_{\ell'_1}\nonumber\\
&~~~~~~~~~~~~,(U_{x_1}-Z_{\ell'_1}+Z_{\ell_{1}})-U_{x'_1},\cdots,(U_{x_h}-Z_{\ell'_1}+Z_{\ell_h})-U_{x'_1},\cdots,(U_{x_{j-1}}-Z_{\ell'_{1}}+Z_{\ell_{j-1}})-U_{x'_{1}}\nonumber\\
&~~~~~~~~~~~~,U_{x_j}-(U_{x'_1}-Z_{\ell_j}+Z_{\ell'_1}),\cdots,U_{x_g}-(U_{x'_1}-Z_{\ell_g}+Z_{\ell'_1}),\cdots,U_{x_i}-(U_{x'_{1}}-Z_{\ell_i}+Z_{\ell'_{1}})\Big]>t\nonumber\\
&~~~~~~~~~~~~~~~~~~~~~~~~\Big|\left\{\ddot{\mathfrak{X}}^{\mathsf{C}}_{r}|i+1 \le r \le 2n\right\},\left\{\gamma^{\mathsf{C}}_{h}|1 \le h \le j-1\right\},\left\{\psi^{\mathsf{C}}_{g}|j \le g \le i\right\}\Big] dt\nonumber\\
&=\boldsymbol{\int}_{0}^{\infty } \textrm{Pr}\Bigg[\min\Big[Z_{\ell_{i{+}1}},\cdots,Z_{\ell_{r}},\cdots,Z_{\ell_{2n}},U_{x_1}+Z_{\ell_{1}},\cdots,U_{x_h}+Z_{\ell_{h}},\cdots,U_{x_{j-1}}+Z_{\ell_{j-1}}\nonumber\\
&~~~~~~~~~~~~~~~~~~~~~~~~~~~~~,U_{x_j}+Z_{\ell_j},\cdots,U_{x_g}+Z_{\ell_g},\cdots,U_{x_i}+Z_{\ell_i}\Big]-U_{x'_1}-Z_{\ell'_1}>t\nonumber\\
&~~~~~~~~~~~~~~~~~~~~~~~~\Big|\left\{\ddot{\mathfrak{X}}^{\mathsf{C}}_{r}|i+1 \le r \le 2n\right\},\left\{\gamma^{\mathsf{C}}_{h}|1 \le h \le j-1\right\},\left\{\psi^{\mathsf{C}}_{g}|j \le g \le i\right\}\Big] dt\nonumber\\
&=\boldsymbol{\int}_{0}^{\infty } \Bigg(\prod_{r=i{+}1}^{2n} \textrm{Pr} \left[Z_{\ell_r}-U_{x'_1}-Z_{\ell'_1}>t\big|\ddot{\mathfrak{X}}^{\mathsf{C}}_{r}\right] \prod_{\substack{h=1}}^{j{-}1} \textrm{Pr}\left[(U_{x_h}-Z_{\ell'_1}+Z_{\ell_{h}})-U_{x'_1}>t\big|\gamma^{\mathsf{C}}_{h}\right]\nonumber\\
&~~~~~~~~~~~~~~~~~~~~~~~~~~~~~~~~~~~~~~~~~~~~~~~~~~~~~~~\times \prod_{\substack{g=j}}^{i} \textrm{Pr}\left[U_{x_g}-(U_{x'_1}-Z_{\ell_g}+Z_{\ell'_1})>t\big|\psi^{\mathsf{C}}_{g}\right]\Bigg) dt\nonumber\\
&=\boldsymbol{\int}_{0}^{\infty } \Bigg(\Big(\textrm{Pr} \left[Z_{a}-U_{b}-Z_{b}>t\Big|Z_{a}>U_{b}+Z_{b}\right]\Big)^{2n-i}\nonumber\\
&~~~~~~~~~~~~~\times \Big(\textrm{Pr}\left[(U_{c}{-}Z_{b}{+}Z_{c}){-}U_{b}>t\Big|Z_{b}>Z_{c},U_{c}>Z_{b}{-}Z_{c},(U_{c}{-}Z_{b}{+}Z_{c})>U_{b}\right]\Big)^{j{-}1}\nonumber\\
&~~~~~~~~~~~~~\times \Big(\textrm{Pr}\left[U_{d}{-}(U_{b}{-}Z_{d}{+}Z_{b})>t\Big| Z_{d}>Z_{b}, U_{b}>Z_{d}{-}Z_{b}, U_{d}>(U_{b}{-}Z_{d}{+}Z_{b})\right]\Big)^{i{-}(j{-}1)}\Bigg) dt,
\end{align}
where $Z_{a}$, $Z_{b}$, $Z_{c}$, $Z_{d}$, $U_{b}$, $U_{c}$, and $U_{d}$ are independent random variables, from which $Z_{a}$, $Z_{b}$, $Z_{c}$, and $Z_{d}$ follow general distributions $\mathfrak{D}_Z(z)$, and $U_{b}$, $U_{c}$, and $U_{d}$ follow general distribution $\mathfrak{R}_U(u)$.

Inspecting the results in \eqref{ephi1_2} and \eqref{ephi2_2} verifies the results in \eqref{expected_value1} and \eqref{expected_value2}. This concludes the proof of the theorem.
\end{proof}

\newpage

\section{Case Study of {\tt RP-VC$_n$}}\label{app:case_study}
In this appendix, we aim to clarify the following critical points of {\tt RP-VC$_n$} framework during the execution of a DAG application.
\begin{enumerate}[label={(P-\arabic*)}]
    \item \label{p1} The dynamics of the system, i.e., the transition between different states of decomposed semi-Markov process (D-SMP), discussed in Sec \ref{DSMPModel}, due to departure and recruitment events of vehicles.
    \item \label{p2} The time-dependent residual sojourn time and recruitment duration of vehicles.
\end{enumerate}

Real-world examples of DAG tasks are face recognition task, antivirus task, chess game task, and modified molecular dynamic code \cite{10100891,993206}. As an example, we plot the DAG structure of modified molecular dynamic code in Fig. \ref{fig:dag_real}.

To describe the two above-mentioned points of  {\tt RP-VC$_n$} through a case study, for the sake of simplicity of explanations, we consider a simple application $\mathcal{A}\hspace{-1mm}=\hspace{-1mm}\big(\mathcal{V},\mathcal{E}\big)$ with five dependent sub-tasks, where  {\small$\mathcal{V}\hspace{-1mm}=\hspace{-1mm}\{T_1,T_2,\dotsc,T_5\}$} and {\small$\mathcal{E}\hspace{-1mm}=\hspace{-1mm}\{(1,2),(1,3),(2,4),(3,5)\}$}. The DAG representation of application $\mathcal{A}$ is depicted in Fig. \ref{fig:dag1}.



Let {\small$\mathcal{A}$} be grouped by partition {\small$\mathcal{P}(\mathcal{A})\hspace{-1mm}=\hspace{-1mm}\big(\mathcal{G},\mathcal{E}\big)$} into three groups {\small$\mathcal{G}\hspace{-1mm}=\hspace{-1mm}\{G_1,G_2,G_3\}$}, where {\small$G_1\hspace{-1mm}=\hspace{-1mm}\{T_1,T_2\}$}, {\small$G_2\hspace{-1mm}=\hspace{-1mm}\{T_3,T_5\}$}, and {\small$G_3\hspace{-1mm}=\hspace{-1mm}\{T_4\}$}. Also, let $\mathcal{C}=\{C_1, C_2,\cdots\}$ refer to the set of all vehicles in VC at the start of the processing of application $\mathcal{A}$. Moreover, let $\mathcal{Z}{=}\{Z_1,Z_2,\cdots\}$ and $\mathcal{U}{=}\{U_1,U_2,\cdots\}$ be two sets of i.i.d random variables following two different general distributions $\mathfrak{D}_Z(z)$ and $\mathfrak{R}_U(u)$, respectively, where $Z_\ell\in\mathcal{Z}$ captures sojourn time of vehicle $C_\ell\in\mathcal{C}$, and $U_x \in \mathcal{U}$ refers to the recruitment duration of $C_x\in\mathcal{C}$. Furthermore, we refer to the \textit{residual} sojourn time of $C_\ell\in\mathcal{C}$ by $Z'_\ell$, and the \textit{residual} recruitment duration of vehicle $C_x\in\mathcal{C}$ by $U'_x$.
\begin{figure}[h]
\centering
\includegraphics[width=0.5\linewidth,trim=1 1 1 1,clip]{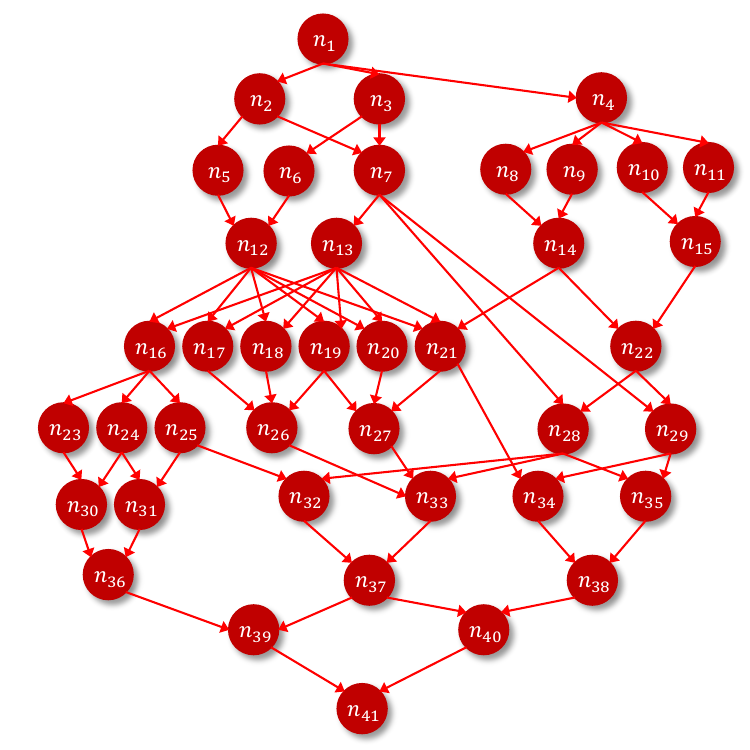}
\caption{The DAG of the molecular dynamics code \cite{10100891,993206}.}
\label{fig:dag_real}
\vspace{-4mm}
\end{figure}
\begin{figure}[h]
\centering
\includegraphics[width=0.3\linewidth,trim=1 1 1 1,clip]{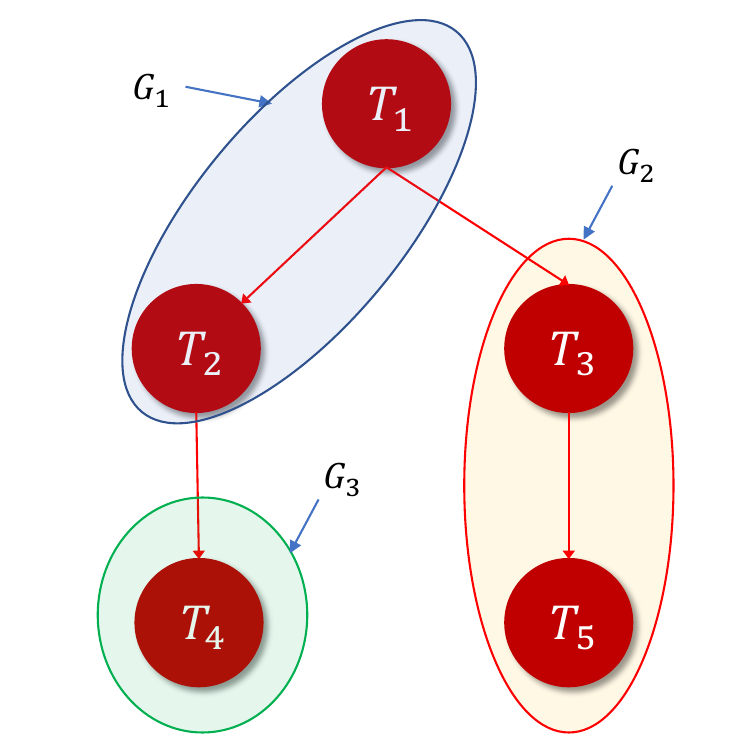}
\caption{\hspace{-0.8mm} The DAG representation of application $\mathcal{A}\hspace{-1mm}=\hspace{-1mm}\big(\mathcal{V},\mathcal{E}\big)$.}
\label{fig:dag1}
\vspace{-4mm}
\end{figure}
Fig. \ref{fig:exam} illustrates the dynamics of {\tt RP-VC$_n$} for $n=3$ (i.e., having three groups of sub-tasks) during the execution of application $\mathcal{A}$. The departure and recruitment events of vehicles, time instants, and the order of triggering events (depicted by dotted directed lines), are shown on the left side of the figure. For a better clarification, the figure depicts the active vehicles for each group (i.e., the vehicles that are processing a sub-task) at different time instants (under the curly bracket titled ``Deployment"). Furthermore, the variation of residual sojourn times and recruitment duration of vehicles over time are shown in the middle of the figure under the title of ``Residual times of random variables".

The figure also shows the relationship between $\beta$-inhomogeneous SMP ($\beta$-SMP), defined in Sec. \ref{BSMPR2n} and shown in Fig. \ref{j2nMarkov1} in the main text, and decomposed SMP (D-SMP), illustrated in Sec. \ref{DSMPModel}. Specifically, it is shown that each state of $\beta$-SMP (i.e., the right most column) is decomposed into several sub-states in D-SMP (i.e., the middle columns). For instance, state $S_1$ of $\beta$-SMP is decomposed into three sub-states $S_{1,0}$, $S_{1,1}$, and $S_{1,2}$. In addition, the number of vehicles of sets {\small$\dot{\mathcal{C}}_i$} and $\ddot{\mathcal{C}}_{2n-2i}$, where $n=3$ and $i\in\{0,1,2,3\}$, are also depicted.

Consider the column under the curly bracket ``Vehicles" in the figure. Recall from the main text that {\small$\dot{\mathcal{C}}_i$} denotes the set of $i$ vehicles processing $i$ different groups, where one of the vehicles of each group has departed the VC (i.e., each group is being processed by only one vehicle). Therefore, each vehicle {\small$C_\ell\in \dot{\mathcal{C}}_i$} is a recruiter vehicle. Also,  $\ddot{\mathcal{C}}_{2n-2i}$ denotes the set of $2n-2i$ vehicles processing $n-i$ different groups, where each group is being processed by two vehicles.

Consider the column under the curly bracket ``Deployment" in the figure. Initially, {\tt RP-VC$_3$} deploys $G_1$ on vehicles $\{C_0,C_1\}$, $G_2$ on vehicles $\{C_2, C_3\}$, and $G_3$ on vehicles $\{C_4,C_5\}$. Due to the departure and recruitment events of vehicles, D-SMP visits each state multiple times. In the following, we provide an in-depth study of the dynamic of {\tt RP-VC$_3$} for five representative time instants.

\textbullet\hspace{.5mm} \textbf{System's Initial State.} As can be seen from the figure, until time instant $t_1$, the process is in state $S_{0,0}$ of D-SMP (which is a sub-state of state $S_0$ of $\beta$-SMP), referring to the initial state of the process. In state $S_{0,0}$, residual sojourn times of vehicles are equivalent to their initial values (i.e., $Z'_\ell=Z_\ell,~ \forall C_\ell\in \ddot{\mathcal{C}}_{6}$). Note that in this state, there are no recruiter vehicles (i.e., $\dot{\mathcal{C}}_0=\emptyset$) since all of the groups of application $\mathcal{A}$ are being processed by two vehicles.

\textbullet\hspace{.5mm} \textbf{Time instant $\bm{t_1}$.} Assume that, at time $t_1$, event $e^{\mathsf{D}}(C_0)$ occurs and vehicle $C_0$ of group $G_1$ departs the VC. Consequently, D-SMP transits to state $S_{1,0}$ (which is a sub-state of state $S_1$ of $\beta$-SMP) and visits it for the first time. In this state, $G_2$ and $G_3$ are being processed by two vehicles, while $G_1$ is being processed only by one vehicle. Moreover, in state $S_{1,0}$, vehicle $C_1$, which is the only remaining  vehicle of group $G_1$, starts recruiting a new vehicle, the duration of which is $U_1$. Further, $Z_0$ should be subtracted from the sojourn times of other vehicles (i.e., $Z'_\ell=Z_\ell-Z_0$, where $Z_\ell>Z_0$) because $Z_{0}$ time units have passed since the start of processing of $\mathcal{A}$.

\textbullet\hspace{.5mm} \textbf{Time instant $\bm{t_2}$.} Assume that, at time $t_2$, $C_1$ completes its recruitment operation after $U_1$ units of time and recruits $C_6$ for group $G_1$. Therefore, D-SMP returns to state $S_{0,1}$ (which is also a sub-state of state $S_0$ of $\beta$-SMP). Moreover, in state $S_{0,1}$, $U_1$ is subtracted from residual times of random variables since $U_{1}$ units of time have passed since $t_1$, which results in $Z'_\ell=Z_\ell-U_1-Z_0$.

\textbullet\hspace{.5mm} \textbf{Time instant $\bm{t_3}$.} Assume that, at time $t_3$, event $e^\mathsf{D}(C_1)$ occurs and vehicle $C_1$ of group $G_1$ departs the VC. Therefore, D-SMP transits to $S_{1,0}$ again and visits it for the second time. In this state, $C_6$, which is the only vehicle processing group $G_1$, starts recruiting a new vehicle. Recall that the duration of recruitment operation of $C_6$ is captured by random variable $U_6$. In this visit of $S_{1,0}$, $Z_1-U_1-Z_0$ is subtracted from the \textit{residual} sojourn time of other vehicles since $Z_1-U_1-Z_0$ units of time have passed from $t_2$ leading to $Z'_\ell=Z_\ell-Z_1$.

\textbullet\hspace{.5mm} \textbf{Time instant $\bm{t_4}$.} Assume that, at time $t_4$, vehicle $C_2$ associated with group $G_2$ departs the VC and D-SMP transits to state $S_{2,0}$ (which is a sub-state of state $S_2$ of $\beta$-SMP) and visits this state for the first time. In this situation, the remaining vehicle of $G_2$, i.e., $C_3$, starts recruiting a new vehicle. Note that, in state $S_{2,0}$, two recruiters, i.e., $C_6$ and $C_3$, are simultaneously recruiting two different vehicles for two different groups $G_1$ and $G_2$, respectively. In this state, to calculate residual times, $Z_2-Z_1$ should be subtracted from residual sojourn time and recruitment duration of other vehicles since $Z_2-Z_1$ units of time have passed since $t_3$. Consequently, residual sojourn times of vehicles are $Z'_\ell=Z_\ell-Z_1-(Z_2-Z_1)=Z_\ell-Z_2,~ \forall C_\ell\in \ddot{\mathcal{C}}_{2}\cup \dot{\mathcal{C}}_{2}$, where $Z_\ell>Z_2$. Similarly, residual recruitment duration of $U'_6=U_6-Z_2+Z_1$ and $U'_3=U_3$, where $Z_2>Z_1$, and $U_6>(Z_2-Z_1)$.

\textbullet\hspace{.5mm} \textbf{Time instant $\bm{t_5}$.} Assume that, at time $t_5$, $C_3$ completes its recruitment operation by recruiting $C_7$ and thus D-SMP transits to state $S_{1,1}$ (which is also a sub-state of state $S_1$ of $\beta$-SMP). In this state, the residual times of random variables are $Z'_\ell=Z_\ell-U_3-Z_2$ and $U'_6=(U_6-Z_2+Z_1)-U_3$, where $Z_\ell>U_3+Z_2$, $U_6>Z_2-Z_1$, and $(U_6-Z_2+Z_1)>U_3$.

It can be seen that how the decomposed SMP (D-SMP) can capture the precise state of the system during the processing of a computation-intensive application. The transition between different states and residual times of random variables form a non-trivial process, which is characterized through Decomposition Theorem (Theorem \ref{decompositionTheorem}) in the main text. Furthermore, it can be construed that the first time that all vehicles that are processing the same group leave the VC a failure will happen, which is characterized by MTTF of the system obtained in Theorem \ref{MTTFJ2dn}.

\begin{figure*}
\centering
\includegraphics[width=\linewidth,trim=1 1 1 1,clip]{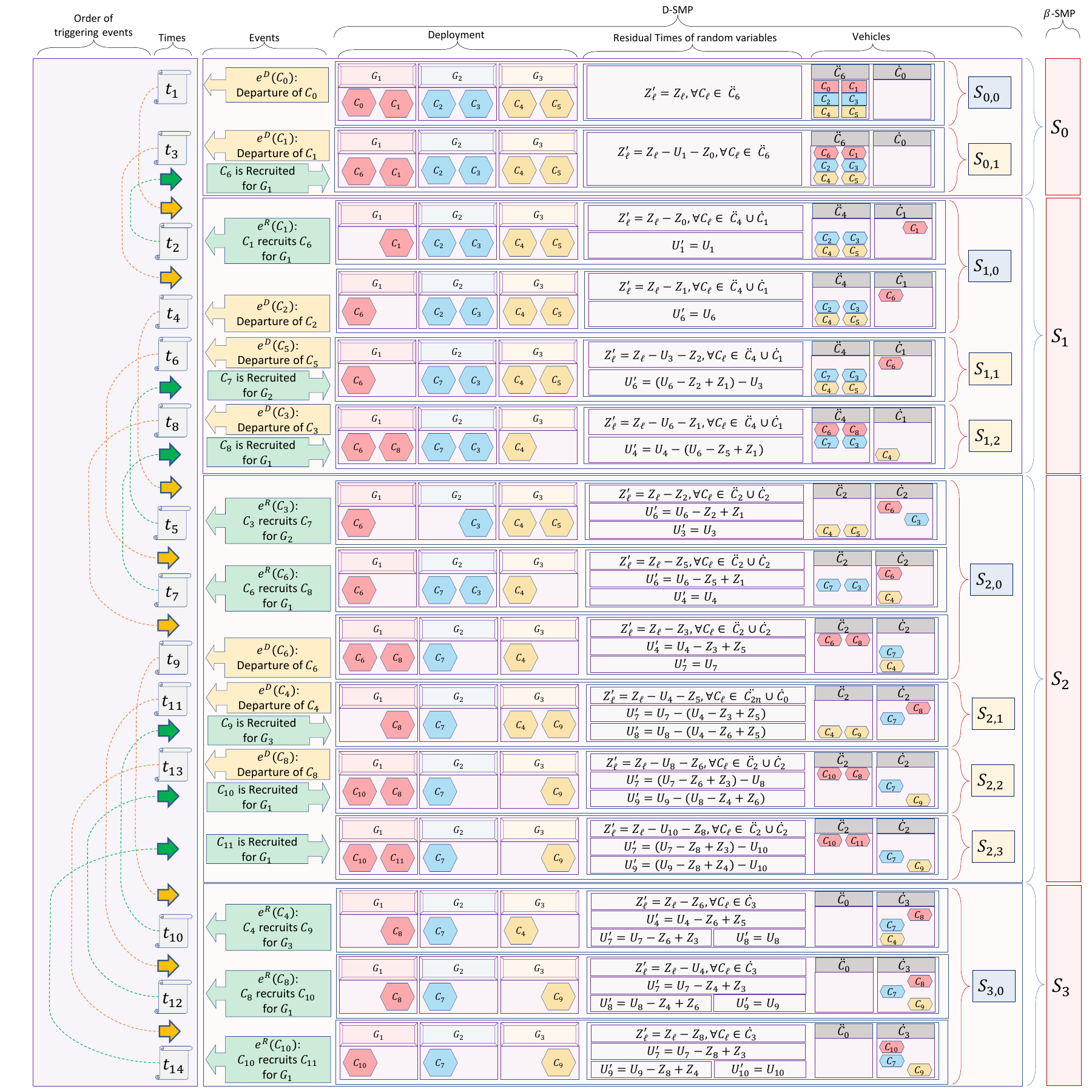}
\caption{A schematic of the dynamics of process of {\tt RP-VC$_n$}.}
\label{fig:exam}
\vspace{-4mm}
\end{figure*}

\end{document}